\documentclass{article}
\usepackage[sectionbib]{natbib}
\usepackage{array,epsfig,fancyheadings,rotating}
\usepackage[]{hyperref}  
\usepackage{sectsty, secdot}

\usepackage{graphicx} 
\usepackage{bm}
\usepackage{amsmath}
\usepackage{mathtools}
\usepackage{algorithm}
\usepackage{xcolor}
\usepackage{subfigure}
\usepackage{geometry}
\usepackage{color}
\usepackage{float}
\usepackage{amssymb}
\usepackage{algpseudocode}
\usepackage{amsthm}
\usepackage{verbatim}
\usepackage{booktabs}
\usepackage{url} 

\newtheorem{property}{Property}

\newtheorem{mydefinition}{Definition}
\newtheorem{myproposition}{Proposition}

\theoremstyle{remark}                  

\newcommand{\bw}{\bm{w}} 
\newcommand{\btheta}{\bm{\theta}} 
 
\newcommand{\bz}{\bm{z}} 
\newcommand{\iidsim}{\overset{iid}{\sim}}
\newcommand{\indsim}{\overset{ind}{\sim}}

\newcommand{\Cat}{\text{Cat}} 

\providecommand{\keywords}[1]
{
  \small	
  \textbf{\textit{Keywords:}} #1
}

\title{The generalized underlap coefficient with an application in clustering}
\author{Zhaoxi Zhang$^{1}$, Vanda In\'acio$^{1}$, Sara Wade$^{1}$\\ 
$^{1}$School of Mathematics, University of Edinburgh, Scotland, UK}
\date{}

\begin{document}
\maketitle
\footnotetext[1]{Zhaoxi Zhang, School of Mathematics, University of Edinburgh, Scotland, UK (Z.Zhang-156@sms.ed.ac.uk). Vanda In\'acio, School of Mathematics, University of Edinburgh, Scotland, UK (Vanda.Inacio@ed.ac.uk). Sara Wade, School of Mathematics, University of Edinburgh, Scotland, UK (sara.wade@ed.ac.uk).}

\maketitle

\begin{abstract}
Quantifying distributional separation across groups is fundamental in statistical learning and scientific discovery, yet most classical discrepancy measures are tailored to two-group comparisons. We generalize the underlap coefficient (UNL), a multi-group separation measure, to multivariate settings. We study its relationship with Bayes risk and mutual information, and further interpret the UNL as a measure of dependence between group labels and variables of interest. We propose an efficient importance sampling estimator of the UNL that can be combined with flexible density estimation methods. A key application is the assessment of partition-covariate dependence in clustering, where the UNL provides an interpretable measure of whether latent group structure can be explained by specific covariates.  The methodology is illustrated on two real-world datasets.
\end{abstract}

\keywords{Underlap coefficient, clustering, importance sampling, distributional separation, dependence measure.}

\section{Introduction}

Statistical analysis often requires quantifying distributional discrepancy across groups. This is essential in many tasks, such as quantifying differences across experimental conditions, evaluating classification performance, and assessing 
whether identified clusters are meaningful. 
In many settings, it is necessary to go beyond simple mean comparisons and account for heterogeneity in the entire shape, spread, and tail behavior of the distributions. 

For the two-group setting, a rich literature provides probability metrics and divergences, including total variation distance, Hellinger distance, and Kullback-Leibler (KL) divergence  \citep[see, e.g.,][for a review]{gibbs2002choosing}. Such measures play a central role in machine learning and statistics: for instance, variational inference minimizes a KL divergence to the posterior \citep{jordan1999introduction}, and classical GANs optimize an objective closely related to Jensen-Shannon divergence \citep{goodfellow2020generative}.



The overlap coefficient (OVL), defined as the common area under two univariate densities, is directly linked to total variation distance \citep{schmid2006nonparametric}. It was introduced in \cite{Weitzman1970} and has since been used as a non-directional alternative to ROC-based summaries for biomarker accuracy \citep[e.g.,][]{inacio2022bayesian}. The underlap coefficient (UNL) was recently proposed as a multi-group generalization of OVL and studied primarily in the three-group setting as a summary measure of biomarker discriminatory ability \citep{zhang2025underlap}.





Although the UNL was originally introduced for continuous univariate variables, many applications involve multivariate variables or distributions defined on non-Lebesgue supports. Our first contribution is to formulate the UNL with respect to a general $\sigma$-finite dominating measure, yielding a unified definition for multivariate continuous, discrete, and mixed-type variables. We then establish its theoretical properties and connections to total variation distance and Bayes risk. For continuous variables, the resulting definition coincides with the generalized $L^1$ distance among $K$ densities proposed by \cite{Pham-Gia01022008}. We also interpret the UNL as a measure of statistical dependence between a group label and variables of interest and compare it with mutual information, a widely used general purpose dependence measure.

Additionally, we develop an 
importance sampling algorithm to estimate the UNL. While \cite{zhang2025underlap} provides a grid-based numerical integration approach for estimating the UNL in the univariate case, such approaches become computationally prohibitive in higher dimensions. The proposed importance sampling estimator provides a scalable alternative and can be paired with Bayesian or frequentist density estimation procedures.

We demonstrate the utility of UNL in assessing partition-covariate dependence in cluster analysis. In many applications, the group structure is not observed and a common strategy is to discover latent homogeneous groups within a heterogeneous population via clustering. 
Once a partition has been obtained, an important question is whether the inferred cluster labels can be explained by additional covariates (e.g., genotypes, socioeconomic status, or environmental factors). The UNL provides a direct way to quantify this dependence by measuring the separation of covariate distributions across clusters. 

Recent advances in clustering allow covariate information to be incorporated, through formulations such as the mixture of experts \citep[MOE,][]{mu2025comprehensive}. When constructing the MOE architecture, a popular simplification assumes \textit{single weights}, under which cluster weights do not vary with covariates, implying that the inferred partition is independent of covariates. Violating this assumption can can substantially degrade predictive performance in some settings \citep{wade2025bayesian}. In such cases, the UNL provides a valuable complement to posterior predictive checks, offering a scalar diagnostic that can flag poor predictive performance of single-weight mixture models, and provide insight into where model adjustments are needed, particularly regarding whether mixture weights should depend on specific covariates.



In summary, our main contributions are: (i) a unified measure-theoretic formulation of the UNL for multivariate continuous, discrete, and mixed-type variables (Section~\ref{sec:UNL}); (ii) properties of the UNL and connections and comparisons to the total variation distance, Bayes risk, and mutual information (Section~\ref{sec:comparison}-\ref{sec:unl_proo}) (iii) a scalable importance sampling estimator (Section~\ref{sec:is}); and (iv) an application of the UNL to quantifying partition–covariate dependence in clustering (Section~\ref{sec:clustering}), illustrated using (i) a breast cancer genomic dataset and (ii) a dataset examining the influence of the pesticide DDT on the gestational age at delivery (Section~\ref{sec:app}).

\section{The generalized underlap coefficient}\label{sec:UNL}

The UNL was introduced in \cite{zhang2025underlap}  as a measure of discriminatory ability for continuous univariate biomarkers. More generally,  UNL can be viewed  as a separation metric for arbitrary collections of distributions. This interpretation extends naturally to multivariate settings and to continuous, discrete, and mixed-type variables, as formalized in Definitions \ref{def:UNL-continuous}–\ref{def:UNL-unified}.


\begin{mydefinition}[UNL for continuous variables]
\label{def:UNL-continuous}
For $K$ groups, let $X_1,\dots,X_K\in\mathrm{R}^{p}$ have probability densities
$f_1,\dots,f_K$.  
The (continuous) underlap coefficient is
\[
  \mathrm{UNL}(f_1,\dots,f_K)
  =
  \int_{\mathrm{R}^{p}}
      \max_{1\le k\le K} f_k(x)\;dx.
\]
\end{mydefinition}

\begin{mydefinition}[UNL for discrete variables]
\label{def:UNL-discrete}
For each group $k=1,\dots,K$, let $X_k=(X_{k1},\dots,X_{kp})$ take values in the finite or countably infinite product space $S = S_1\times \cdots \times S_p$, where $S_j$ is the state space of the $j$th categorical variable.
Define:
\[
p_k(x_1,\dots,x_p)
=
\Pr\{X_{k1}=x_1,\dots,X_{kp}=x_p\},
\qquad (x_1,\dots,x_p)\in S.
\]
The (discrete) underlap coefficient is
\[
  \mathrm{UNL}(p_1,\dots,p_K)
  =
  \sum_{(x_1,\dots,x_p)\in\mathcal{S}}
    \max_{1\le k\le K} p_k(x_1,\dots,x_p).
\]
\end{mydefinition}

\begin{mydefinition}[UNL for mixed type variables]
\label{def:UNL-mixed}
For each group $k=1,\dots,K$, $X_k=(X_k^{c},X_k^{d})$ with
$X_k^{c}\in\mathrm{R}^{p}$ and
$X_k^{d}\in S$, let the joint probability density/mass function be
\(
  f_k(x^{c},x^{d})
  = f_k^{c}(x^{c}| x^{d})\,
    p_k^{d}(x^{d}).
\)
The (mixed) underlap coefficient is
\[
  \mathrm{UNL}(f_1,\dots,f_K)
  =\sum_{x^d\in S}
     \int_{\mathrm{R}^{p}}
       \max_{1\le k\le K}
         f_k(x^{c},x^d)dx^{c}.
\]
\end{mydefinition}

\begin{mydefinition}[Measure theoretic formulation of UNL]
\label{def:UNL-unified}
Let $(\mathcal{X},\mathcal{B},\nu)$ be a $\sigma$–finite measure space and let
$P_1,\dots,P_K$ be probability measures absolutely continuous with
respect to $\nu$, with Radon-Nikodym derivatives
\(f_k(x)=dP_k/d\nu(x)\).
The (general) underlap coefficient is 
\[
  \mathrm{UNL}(f_1,\dots,f_K)
  =
  \int_{\mathcal{X}}
      \max_{1\le k\le K} f_k(x)\;d\nu(x).
\]
Choosing $\nu$ as (i) the Lebesgue measure,
(ii) a counting measure, or
(iii) the product of the Lebesgue measure and a counting measure
recovers Definitions \ref{def:UNL-continuous}-–\ref{def:UNL-mixed},
respectively.
\end{mydefinition}

The UNL is bounded between one and the number of groups $K$ and can be interpreted as the ``effective" number of distinguishable distributions among the $K$ groups. A value of $K$ indicates complete separation of $X$ across groups, signifying the presence of $K$ distinct ``effective" distributions without any overlap across groups. Conversely, a value of one suggests that only one ``effective" distribution exists, as all $K$ groups share one common distribution together. Intermediate values between one and $K$ correspond to partial separation, with higher values indicating a greater degree of separation.

\subsection{Comparing the underlap coefficient with prevalence-weighted measures} \label{sec:comparison}

The underlap coefficient is a prevalence-free measure of group separation by its definition. In this subsection, we compare the UNL with prevalence-weighted measures, namely, the Bayes risk as a measure of classification performance and mutual information as a dependence measure. We highlight the differences and discuss their practical implications.

For simplicity, we carry out the comparison under the assumption that $X$ is continuous (i.e. $X \in \mathrm{R}^p$). Let the group label $Z$ take values in $\{1,\dots,K\}$, and suppose $(Z,X)$ has joint density $p_{Z,X}(k,x)=\pi_k f_k(x)$,
where $\pi_k = \mathrm{P}(Z=k)$ and $f_k$ is the conditional density of $X$ given $Z=k$.

\subsubsection{Connection with Bayes risk}
The underlap coefficient is closely connected to the classical Bayes risk in multi-class classification \citep{devroye2013probabilistic}. The Bayes classifier is
\[
g^*(x)=\arg\max_{1\le k\le K} \pi_k f_k(x),
\]
and the Bayes risk is 
\[
\mathrm{BR}^*_{\pi}
=
1-
\int_{\mathrm{R}^p} \max_{1\le k\le K} \pi_k f_k(x)\,dx.
\]
In the special case of assuming equal group prevalence, \(\pi_k=1/K\), this reduces to
\[
\mathrm{BR}^*
=
1-\frac{1}{K}
\int_{\mathrm{R}^p}\max_{1\le k\le K} f_k(x)\,dx
=
1-\frac{1}{K}\mathrm{UNL}(f_1,\ldots,f_K).
\]
Equivalently,
\[
\mathrm{UNL}(f_1,\ldots,f_K)=K(1-\mathrm{BR}^*).
\]
Thus, under the assumption of equal group prevalences,  \(1-\operatorname{UNL}/K\) is the classical Bayes risk.

This connection provides a classification-based interpretation of the UNL. At the same time, the UNL differs from the Bayes risk in its treatment of group prevalence. The Bayes risk depends on the group prevalence, whereas the UNL is prevalence-free. 
This extends the perspective of Bayes risk from evaluating the performance limit of classification rules to quantifying the geometric separation among the group-conditional distributions. This difference is particularly important in settings where rare but well-separated groups are scientifically meaningful: such groups may contribute little to the prevalence-weighted Bayes risk, but they are represented equally in the UNL. Besides, estimating the Bayes risk requires specifying or estimating the target group prevalences, and therefore requires the pooled sample, or some external source of prevalence information, to be representative of the target population. By contrast, estimating the UNL only requires that the observations within each group are representative of their corresponding group-specific populations, which is a more relaxed assumption. 


\begin{myproposition}[Bayes risk prior-shift bound]
\label{prior-shift_bound}
Let \(\pi=(\pi_1,\ldots,\pi_K)\) and \(\rho=(1/K,\ldots,1/K)\) be the two group prevalence vectors, and let the associated Bayes risk be denoted by $\mathrm{BR}^*_{\pi}$ and $\mathrm{BR}^*_{\rho}$, respectively. Then
$
|\mathrm{BR}^*_{\pi}-\mathrm{BR}^*_{\rho}|
\le
\|\pi-\rho\|_1,
$
or equivalently, 
\[
\left|
\mathrm{BR}^*_{\pi}
-
\left\{
1-\frac1K\operatorname{UNL}(f_1,\ldots,f_K)
\right\}
\right|
\le
\left\|
\pi-\frac1K\mathbf 1
\right\|_1,
\]
where \(\mathbf 1=(1,\ldots,1)^\top\). Thus the difference between the
Bayes risk under group prevalences \(\pi\) and the balanced-prior Bayes
risk associated with the UNL is controlled by the size of the prior shift away from the uniform prior.
\end{myproposition}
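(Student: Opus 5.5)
The plan is to write both Bayes risks through the common functional $\Phi(w)=\int_{\mathrm{R}^p}\max_{1\le k\le K} w_k f_k(x)\,dx$. With this notation $\mathrm{BR}^*_{\pi}=1-\Phi(\pi)$ and $\mathrm{BR}^*_{\rho}=1-\Phi(\rho)$. Hence
\[
|\mathrm{BR}^*_{\pi}-\mathrm{BR}^*_{\rho}| = |\Phi(\pi)-\Phi(\rho)|,
\]
and the claim reduces to showing that $\Phi$ is $1$-Lipschitz with respect to the $\ell_1$ norm on weight vectors. The equivalent form in the statement then follows at once from the identity $\mathrm{BR}^*_{\rho}=1-\frac1K\mathrm{UNL}(f_1,\ldots,f_K)$ established just before the proposition.

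The key step is a pointwise inequality for the maximum. For real numbers $a_1,\dots,a_K$ and $b_1,\dots,b_K$ we have
\[
\Bigl|\max_k a_k-\max_k b_k\Bigr|\le \max_k |a_k-b_k|.
\]
To see this, let $k^*$ attain $\max_k a_k$. Then $\max_k a_k-\max_k b_k\le a_{k^*}-b_{k^*}\le\max_k|a_k-b_k|$, and the reverse direction follows by symmetry. Applying this with $a_k=\pi_k f_k(x)$ and $b_k=\rho_k f_k(x)$, and using $\max_k|c_k|\le\sum_k|c_k|$, we get for every $x$
\[
\Bigl|\max_k \pi_k f_k(x)-\max_k \rho_k f_k(x)\Bigr|\le \sum_{k=1}^K |\pi_k-\rho_k|\, f_k(x).
\]

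Finally, integrate over $x$. The quantities $\Phi(\pi)$ and $\Phi(\rho)$ are finite, since each integrand is bounded by $\sum_k f_k$, which integrates to $1$. Therefore $|\Phi(\pi)-\Phi(\rho)|$ is at most the integral of the absolute pointwise difference. Since each $f_k$ integrates to one, that integral is bounded by $\sum_k|\pi_k-\rho_k|=\|\pi-\rho\|_1$.

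No step poses a real obstacle; the only point requiring care is the elementary max-difference inequality. Two remarks follow from the argument. Because the bound was obtained by passing from $\max_k|\cdot|$ to $\sum_k|\cdot|$, the proof actually yields the slightly sharper intermediate bound $\int\max_k|\pi_k-\rho_k|f_k(x)\,dx$ on the difference. Also, the same argument works verbatim for the general dominating measure $\nu$ of Definition~\ref{def:UNL-unified}.
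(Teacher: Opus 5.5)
Your proposal is correct and follows essentially the same route as the paper's proof: bound the integrand pointwise using the $1$-Lipschitz property of the maximum, dominate $\max_k|(\pi_k-\rho_k)f_k(x)|$ by $\sum_k|\pi_k-\rho_k|f_k(x)$, and integrate using $\int f_k\,d\nu=1$. Your explicit proof of the max-difference inequality and your finiteness remark are small additions that the paper leaves implicit.
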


The proof of Proposition \ref{prior-shift_bound} is provided in the Appendices. The prior-shift bound clarifies the relationship between the UNL and the classical Bayes risk. 
When the groups are close to balanced, \(1-\mathrm{UNL}/K\) provides a close surrogate for the usual Bayes risk. When the groups are highly imbalanced, the two quantities may differ substantially, especially when there are rare but well-separated groups.

\subsubsection{Comparing with mutual information}
Mutual information (MI) is a classical information–theoretic measure of statistical dependence between two random variables which quantifies how much uncertainty about one variable is reduced by observing the other. It has been applied across many scientific domains as a general-purpose dependence measure, and is extensively used as criteria in a large family of feature selection methods \citep[see e.g.][]{vergara2014review}. 




The mutual information $I(Z;X)$ is the Kullback–Leibler divergence between the joint law and the product of its marginals:
\[
  I(Z;X)
  =
  \sum_{k=1}^{K}\int_{\mathrm{R}^p}
    p_{Z,X}(k,x)
    \log\frac{p_{Z,X}(k,x)}{\pi_k\,p_X(x)}dx,
  \qquad
  p_X(x)=\sum_{k=1}^{K}\pi_k\,f_k(x).
\]
Equivalently,
$I(Z;X)=\mathrm{E}_{(Z,X)}\Bigl[\log p_{Z,X}(Z,X)/(p_Z(Z)p_X(X))\Bigr]$, measured in nats when the natural logarithm is used (or in bits when
$\log_2$ is used).

Mutual information quantifies how far the joint distribution $p_{Z,X}$ deviates, on average, from the product of its marginals $p_Z p_X$. The quantity $I(Z;X)$ can be interpreted as the amount of information that $Z$ and $X$ share, accessible symmetrically from $Z$ to $X$ and from $X$ to $Z$, without privileging either direction. By contrast, the UNL is asymmetric in its treatment of $Z$ and $X$: it is defined through the group-conditional densities of $X\mid Z=k$, so their roles are not interchangeable. 



The normalized quantity \(\mathrm{MI}_Z = I(Z;X)/H(Z)\) also quantifies the dependence of \(Z\) on \(X\), by measuring the proportion of uncertainty in \(Z\) that is removed (or “explained”) by observing \(X\). Nonetheless, \(\mathrm{MI}_Z\) and UNL target different aspects of the dependence: \(\mathrm{MI}_Z\) reflects information flow, whereas the UNL reflects geometric separability of the distributions across groups. 

\subsubsection{Illustrative examples}

We use simple three-class Gaussian examples to illustrate how the UNL behaves relative to the two prevalence-weighted criteria: the Bayes risk and the normalized mutual information \(\mathrm{MI}_Z=I(Z;X)/H(Z)\). Consider three groups with
\[
X \mid Z=k \sim
\mathrm N(\mu_k,\,1), \qquad k \in \{1,2,3\}.
\]

We first consider a symmetric separation setting in Scenario A, where \(\mu_2=0\), \(\mu_1=-D\), and \(\mu_3=D\) for \(D\in[0,6]\). This setting yields the curves of UNL, \(\mathrm{MI}_Z\), and Bayes risk in the top row of Figure \ref{UNLMIbayes_plots}. We then consider Scenario B in an asymmetric separation setting, where only one rare group moves away from two nearly indistinguishable common groups. We fix $\mu_1=-0.1$, $\mu_2=0$, and vary \(\mu_3=D\) for \(D\in[0,6]\). 
This yields the curves of UNL, \(\mathrm{MI}_Z\), and Bayes risk in the bottom row of Figure \ref{UNLMIbayes_plots}. 
In both scenarios, we compare (i) a balanced prevalence scenario,
\[
\Pr(Z=1)=\Pr(Z=2)=\Pr(Z=3)=1/3,
\]
and (ii) a highly imbalanced scenario,
\begin{align*}
&\Pr(Z=1)=\Pr(Z=3)=0.01,\qquad \Pr(Z=2)=0.98, \quad \text{for Scenario A;}\\
&\Pr(Z=1)=\Pr(Z=2)=0.495,\qquad \Pr(Z=3)=0.01, \quad \text{for Scenario B.}
\end{align*}

From Figure \ref{UNLMIbayes_plots}, we see that \(\mathrm{UNL}\) and \(\mathrm{MI}_Z\) behave qualitatively similarly, whereas the Bayes risk moves in the opposite direction. When the three groups are identical, \(\mathrm{UNL}=1\), \(\mathrm{MI}_Z=0\), and, under balanced prevalence, \(\mathrm{BR}^*=2/3\). When the three groups are perfectly separated, \(\mathrm{UNL}=3\), \(\mathrm{MI}_Z=1\), and $\mathrm{BR}^*=0$, indicating that \(Z\) can be determined exactly from \(X\).

In Scenario A, as \(D\) increases, \(\mathrm{MI}_Z\) and UNL follow a similar trend, although the relationship between them is nonlinear. This nonlinearity becomes more pronounced as the group prevalence becomes imbalanced. In contrast, when the group prevalence is extremely imbalanced, the change  in the Bayes risk is negligible as the separation increases, whereas the UNL increases from \(1\) to \(3\). This illustrates the prevalence-weighted nature of the Bayes risk: a well-separated but rare group contributes little to the overall classification error, while the UNL still reflects its geometric separation from the other groups.

It is worth noting that, in the symmetric separation example, the rare groups can still have a substantial effect on \(\mathrm{MI}_Z\). This is because \(\mathrm{MI}_Z=I(Z;X)/H(Z)\) measures the proportion of uncertainty in the group label \(Z\) that is removed by observing \(X\). The influence of group prevalences on $\mathrm{MI}_Z$ is not linear. This contrasts with the Bayes risk, where rare groups are weighted linearly by prevalence, thus correctly classifying the two rare groups can only reduce the overall error in proportion to their small prevalences.


The difference between the UNL and \(\mathrm{MI}_Z\) becomes more pronounced in Scenario B in the asymmetric partial separation setting. As shown in the bottom row of Figure \ref{UNLMIbayes_plots}, when the third group is rare, the increase in \(\mathrm{MI}_Z\) and the decrease in the Bayes risk as \(\mu_3\) moves away from \(\mu_1\) and \(\mu_2\) are almost negligible. In most empirical contexts, such values of \(\mathrm{MI}_Z\) would be considered numerically small and difficult to distinguish from noise. By contrast, the UNL increases from approximately \(1.04\) to \(2.04\), reflecting the emergence of an additional geometrically separated group.

These examples illustrate the main distinction between the UNL and the two prevalence-weighted criteria. The Bayes risk measures the prevalence-weighted performance limit of classification rules, while mutual information measures the prevalence-weighted information shared between \(Z\) and \(X\). Rare groups therefore have limited influence on the Bayes risk, since their contribution to the overall classification error is weighted linearly by their prevalence. Their effect on mutual information is more nuanced. In the symmetric separation example, all groups eventually become distinguishable, so observing \(X\) can explain most of the uncertainty in \(Z\), and \(\mathrm{MI}_Z\) can increase substantially even under class imbalance. In the asymmetric partial separation example, however, only the rare group separates, while the two common groups remain nearly indistinguishable. Observing \(X\) therefore does little to resolve the uncertainty between the common labels, and the increase in \(\mathrm{MI}_Z\) remains small. In contrast, the UNL always treats all group-specific distributions equally and directly reflects the geometric separation of rare groups from the others, regardless of their prevalence.

This prevalence–free property of the UNL is particularly important in biomedical and financial applications where rare but distinct groups (e.g., rare toxicity phenotypes, rare cell states, defaulters) are of primary scientific interest. Moreover, as discussed above, the UNL may be preferable to Bayes risk or mutual information for evaluating discriminatory ability or dependence when the pooled sample proportions, or external prevalence information, may not reliably represent the target population prevalences.




\begin{figure}[!t]
\centering
\subfigure{
\centering
\includegraphics[width=0.38\textwidth]{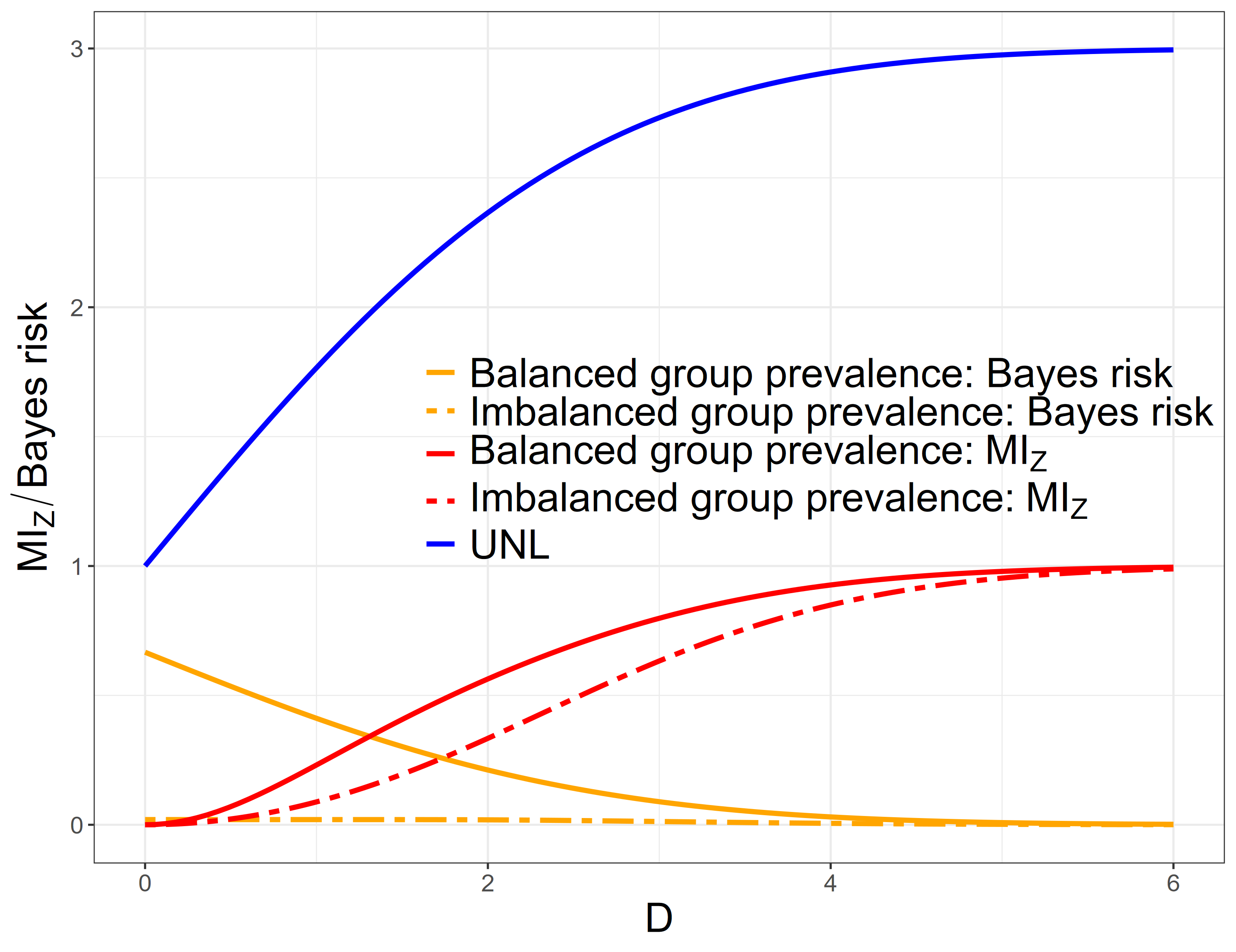}
}
\subfigure{
\centering
\includegraphics[width=0.38\textwidth]{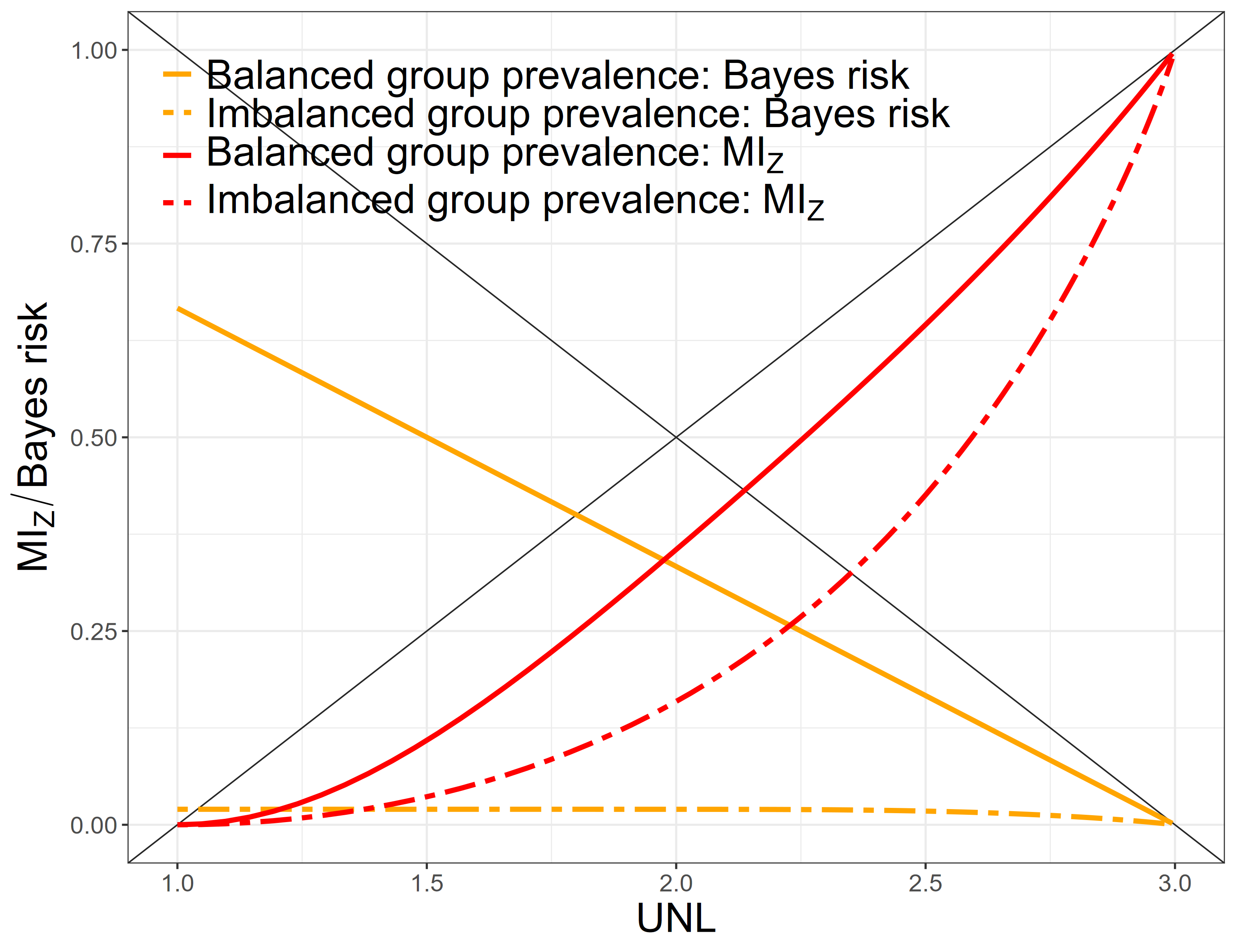}
}
\\
\centering
\subfigure{
\centering
\includegraphics[width=0.38\textwidth]{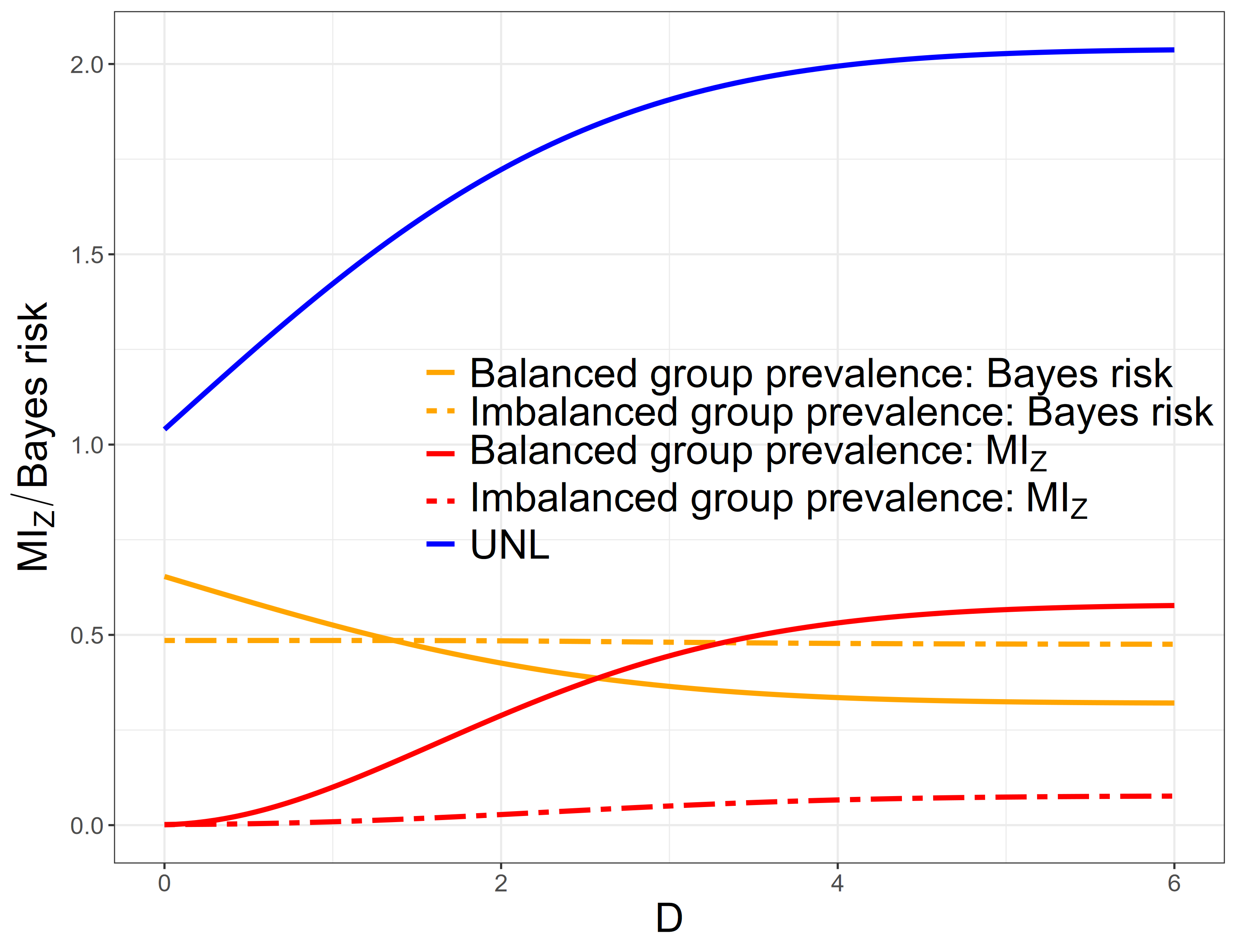}
}
\subfigure{
\centering
\includegraphics[width=0.38\textwidth]{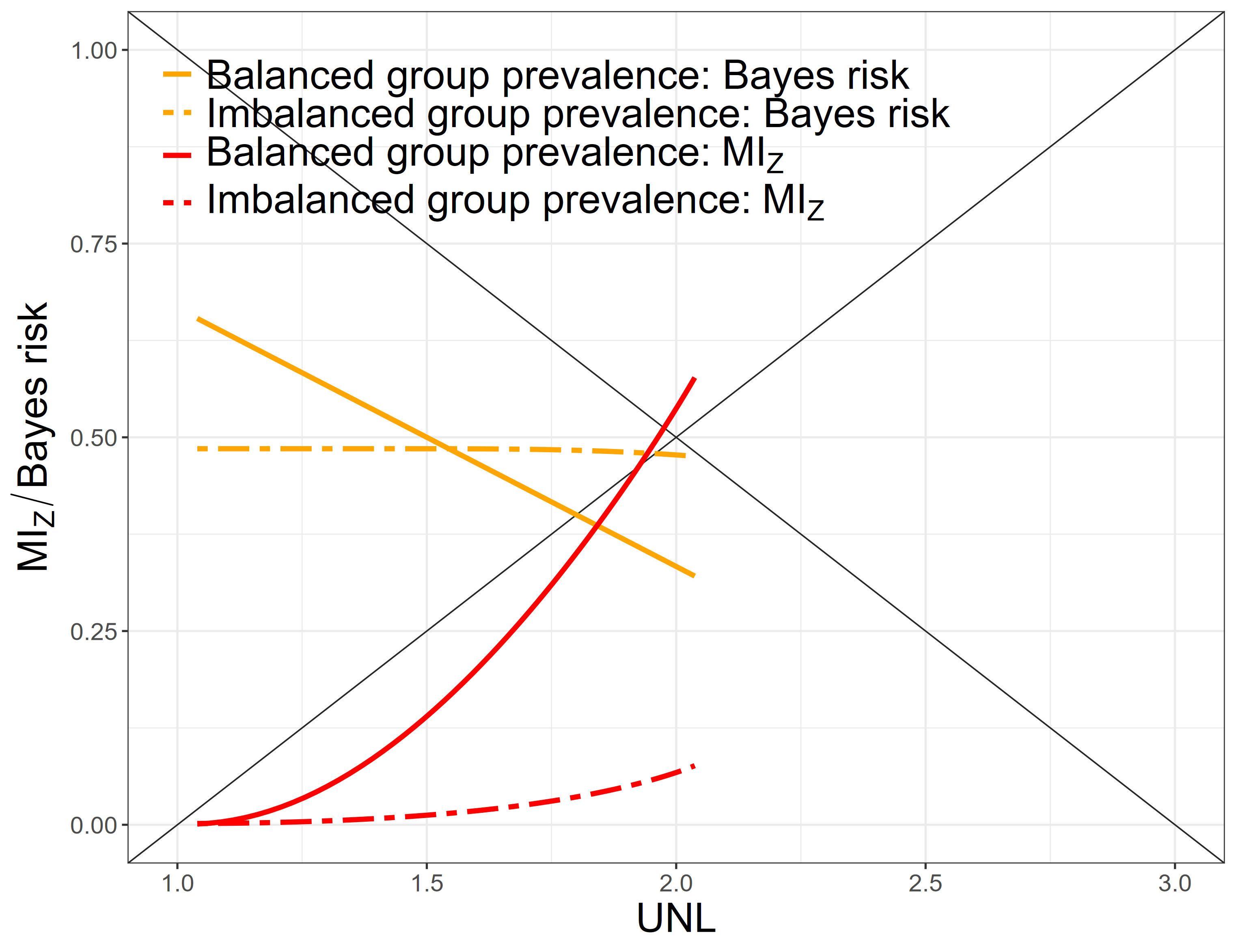}
}

\centering
\caption{Curves of UNL, \(\mathrm{MI}_Z\) and Bayes risk in the three-class Gaussian example, where \(Y \mid Z=k \sim \mathrm{N}(\mu_k,1)\) and $\mu_1=-D, \mu_2=0, \mu_3=D$ (top row), and  where \(Y \mid Z=k \sim \mathrm{N}(\mu_k,1)\) and $\mu_1=-0.1, \mu_2=0, \mu_3=D$ (bottom row).}
\label{UNLMIbayes_plots}
\end{figure}

\subsection{Connecting the underlap coefficient with total variation}
\label{subsec:underlap_tv}

The total variation distance is a canonical metric quantifying the discrepancy between two probability measures. Many other statistical distances and divergence metrics (such as Hellinger distance and various f-divergences) admit comparison inequalities with total variation or can be controlled in terms of it \citep{gibbs2002choosing}.

For two probability measures \(P_1\) and \(P_2\) on \((\mathcal{X}, \mathcal{B})\), the total variation distance is
$
\mathrm{TV}(P_1, P_2) = \sup_{A \in \mathcal{B}} \bigl|P_1(A) - P_2(A)\bigr|
$,
and can be interpreted as the largest possible difference in probabilities that the two measures assign to the same measurable event.

\begin{myproposition}[UNL's relationship with total variation distance when $K=2$]
\label{prop:UNL_TV_K2}
Suppose $P_1$ and $P_2$ are probability measures absolutely continuous with respect to $\nu$, with Radon-Nikodym derivatives $f_1$ and $f_2$, then $\text{UNL}(f_1,f_2)=1+\text{TV}(P_1,P_2)$.
\end{myproposition}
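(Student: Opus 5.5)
The plan is to prove the identity pointwise first and then integrate. For any two nonnegative reals $a,b$ we have $\max(a,b)=\tfrac12(a+b)+\tfrac12|a-b|$. Applying this with $a=f_1(x)$, $b=f_2(x)$ and integrating against $\nu$ (Definition~\ref{def:UNL-unified}), we will use that $\int f_1\,d\nu=\int f_2\,d\nu=1$. This gives
\[
\mathrm{UNL}(f_1,f_2)=1+\tfrac12\int_{\mathcal{X}}|f_1-f_2|\,d\nu .
\]
All integrands are nonnegative and measurable, so there are no integrability issues, and the split of the integral is legitimate since $\int|f_1-f_2|\,d\nu\le 2$.

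It then remains to show the standard identity $\mathrm{TV}(P_1,P_2)=\tfrac12\int|f_1-f_2|\,d\nu$ for the supremum definition used in the paper. We will prove the two inequalities separately.

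\emph{Lower bound.} Set $A^*=\{x: f_1(x)>f_2(x)\}$, which belongs to $\mathcal{B}$ since $f_1,f_2$ are measurable. Then $P_1(A^*)-P_2(A^*)=\int_{A^*}(f_1-f_2)\,d\nu$. Because $\int(f_1-f_2)\,d\nu=0$, the positive and negative parts of $f_1-f_2$ have equal integrals. Hence this quantity equals $\int(f_1-f_2)^+\,d\nu=\tfrac12\int|f_1-f_2|\,d\nu$.

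\emph{Upper bound.} For an arbitrary $A\in\mathcal{B}$, write $P_1(A)-P_2(A)=\int_A(f_1-f_2)\,d\nu$. This lies between $-\int(f_1-f_2)^-\,d\nu$ and $\int(f_1-f_2)^+\,d\nu$, and both of these bounds have absolute value $\tfrac12\int|f_1-f_2|\,d\nu$. So $|P_1(A)-P_2(A)|\le\tfrac12\int|f_1-f_2|\,d\nu$.

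Together the two bounds give equality, with the supremum attained at $A^*$. Combining this with the pointwise step yields $\mathrm{UNL}(f_1,f_2)=1+\mathrm{TV}(P_1,P_2)$.

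The only step needing any care is the identification of the supremum-based total variation with half the $L^1$ distance between the densities, which is the upper and lower bound argument above. The $\sigma$-finiteness of $\nu$ is only needed to guarantee existence of the Radon--Nikodym derivatives. The result is independent of the choice of dominating measure, since both sides are.
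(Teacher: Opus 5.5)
Your proof is correct and is essentially the paper's argument. The paper writes $\mathrm{UNL}(f_1,f_2)=2-\mathrm{OVL}(f_1,f_2)$ (integrating $\max+\min=f_1+f_2$) and then quotes $\mathrm{OVL}=1-\mathrm{TV}$ from the literature; you go through $\tfrac12\int|f_1-f_2|\,d\nu$ instead of the overlap, and you prove $\mathrm{TV}(P_1,P_2)=\tfrac12\int|f_1-f_2|\,d\nu$ directly using the set $\{f_1>f_2\}$, which makes your version self-contained and valid for any $\sigma$-finite dominating measure $\nu$, the generality the proposition claims.
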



More generally, define a vector-valued measure $\mu$ as a vector consisting $K$ probability measures:
$\mu:\mathcal B\;\longrightarrow\; \mathrm R^{K}$, $\mu(A)=\bigl(P_1(A),\dots,P_K(A)\bigr)$. The corresponding total variation norm of $\mu$ is:
$
\lVert\mu\rVert_{\mathrm{TV}}
=
\sup_{\pi}
\sum_{A\in\pi}
\lVert\mu(A)\rVert
$,
where the supremum is taken over all finite measurable partitions \(\pi\) of \(\mathcal{X}\). This construction extends the concept of total variation distance between two probability measures to total variation norm of a specific specification of a vector measure \citep{dinculeanu2014vector}. If adopting the $\ell^{\infty}$ norm 
$\lVert x\rVert_{\infty}=\max_{1\le k\le K}|x_k|$, the total variation norm is then: 
\[
\lVert\mu\rVert_{\mathrm{TV},\infty}
\;=\;
\sup_{\pi}\;
\sum_{A\in\pi}\,
\lVert\mu(A)\rVert_{\infty}
=
\sup_{\pi}\;
\sum_{A\in\pi}\,
\max_{1\le k\le K} P_k(A).
\]

\begin{myproposition}[UNL equals total variation norm of a vector-valued measure consisted of K probability measures]
\label{prop:UNL_equals_TV}
The underlap of the densities $f_1,\dots,f_K$ satisfies
\[
\mathrm{UNL}(f_1,\dots,f_K)
=
\int_{\mathcal X}\max_{1\le k\le K}f_k(x)d\nu(x)
=
\lVert\mu\rVert_{\mathrm{TV},\infty}.
\]
\end{myproposition}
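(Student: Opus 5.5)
We prove the two inequalities $\lVert\mu\rVert_{\mathrm{TV},\infty}\le \mathrm{UNL}$ and $\mathrm{UNL}\le \lVert\mu\rVert_{\mathrm{TV},\infty}$ separately. Write $M(x)=\max_{1\le k\le K} f_k(x)$. Then $M$ is $\mathcal B$-measurable, being a finite maximum of measurable functions, and nonnegative. Moreover $M\le \sum_k f_k$, so $\int M\,d\nu\le K<\infty$ and both sides of the identity are finite.

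\emph{Upper bound on the TV norm.} Fix any finite measurable partition $\pi$ of $\mathcal X$. For each $A\in\pi$ and each $k$ we have
\[
P_k(A)=\int_A f_k\,d\nu\le \int_A M\,d\nu,
\]
so $\max_k P_k(A)\le \int_A M\,d\nu$. Summing over the disjoint sets $A\in\pi$, which cover $\mathcal X$, gives
\[
\sum_{A\in\pi}\max_k P_k(A)\le \int_{\mathcal X} M\,d\nu=\mathrm{UNL}(f_1,\dots,f_K).
\]
Taking the supremum over $\pi$ yields $\lVert\mu\rVert_{\mathrm{TV},\infty}\le \mathrm{UNL}$.

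\emph{Lower bound, via an explicit partition.} Define the ``argmax'' sets with a tie-breaking rule that makes them disjoint:
\[
A_k=\{x: f_k(x)=M(x),\ f_j(x)<M(x)\ \text{for all } j<k\},\qquad k=1,\dots,K.
\]
Each $A_k$ is measurable, being defined by finitely many comparisons of measurable functions. The sets are pairwise disjoint, and their union is $\mathcal X$ because the maximum over a finite index set is always attained. So $\pi^*=\{A_1,\dots,A_K\}$ is a finite measurable partition; empty sets may be dropped or kept, since they contribute $0$. On $A_k$ we have $M=f_k$, so
\[
\int_{\mathcal X}M\,d\nu=\sum_k\int_{A_k}f_k\,d\nu=\sum_k P_k(A_k)\le\sum_k\max_j P_j(A_k)\le \lVert\mu\rVert_{\mathrm{TV},\infty}.
\]
Combining the two bounds gives the equality. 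Note also that the supremum is attained at $\pi^*$.

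No step is a real obstacle. The only points needing care are the measurability and disjointness of the argmax sets when ties occur, which the lexicographic tie-breaking handles, and the use of additivity of the integral over a finite disjoint union. As a consistency check, for $K=2$ this recovers Proposition~\ref{prop:UNL_TV_K2}. Indeed, $\max(P_1(A),P_2(A))=\tfrac12\{P_1(A)+P_2(A)+|P_1(A)-P_2(A)|\}$, so $\lVert\mu\rVert_{\mathrm{TV},\infty}=1+\tfrac12\sup_\pi\sum_{A\in\pi}|P_1(A)-P_2(A)|=1+\mathrm{TV}(P_1,P_2)$.
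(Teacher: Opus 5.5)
Your proof is correct and follows essentially the same route as the paper's: bound $\sum_{A\in\pi}\max_k P_k(A)$ by $\int_{\mathcal X}\max_k f_k\,d\nu$ for every finite partition, then attain equality with the smallest-index argmax partition. Your extra remarks on finiteness and measurability, and the $K=2$ consistency check, are correct but do not change the argument.
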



\subsection{Properties of the underlap coefficient}\label{sec:unl_proo}

We present three properties of the underlap coefficient: marginal monotonicity, transformation invariance, and monotonicity under dimension reduction. Since the UNL is a linear transformation of the balanced-prior Bayes risk, these results are consistent with the Bayes risk. We nevertheless state them directly in terms of the UNL for completeness and their implications for geometric separation across groups.

\begin{property}[Marginal monotonicity of UNL for continuous variables \citep{devroye2013probabilistic}]
\label{marginal_monotonicity_UNL_property_continuous}
Let $I\subseteq \{1,2,\dots,p\}$ index a subset of variables and $I^c$ denote its complement. For each class $k=1,\cdots,K$, $X_k = (X_{kI},X_{kI^c})$ is partitioned into two sets, taking values in $R^{|I|}$ and $R^{p-|I|}$, respectively. 
Define the marginal probability density function of $X_{kI}$ as:
\[
g_k(x_{I})
= \int_{R^{p-|I|}}
f_k(x)d x_{I^c},\qquad k=1,\cdots,K. 
\]
The UNL has the property:
$
\mathrm{UNL}(f_1,\dots,f_K)
\ge
\mathrm{UNL}(g_1,\dots,g_K).
$
\end{property}

The proof of Property \ref{marginal_monotonicity_UNL_property_continuous} is provided in the Appendices, along with the proofs of all other properties and propositions. Analogous results of the UNL for discrete and mixed type variables are also established there. This marginal monotonicity of the UNL confirms that it behaves in an intuitive way as a measure of separation across groups: including additional variables can only maintain or increase the degree of separation across groups and can never reduce it. Moreover, the property shows that, when the UNL is interpreted more broadly as an index of statistical dependence between the group label and the explanatory variable, it is information monotone: augmenting the variable set cannot decrease the measured dependence. In high–dimensional settings, where the UNL of the full joint density is difficult to estimate precisely, the value computed on a lower–dimensional subset of variables therefore provides a conservative yet informative lower bound benchmark. Moreover, the difference $\mathrm{UNL}(f_1,\dots,f_K)-\mathrm{UNL}(g_1,\dots,g_K)$ quantifies the marginal contribution of the omitted variables and can serve as a ranking criterion for variable selection procedures.


\begin{property}[Transformation invariance of UNL for continuous variables \citep{theisen2021evaluating}]
\label{prop:UNL-continuous}
Let $X_1,\dots,X_k\in\mathrm{R}^p$ have probability density functions $f_1,\dots,f_k$. Consider a continuously differentiable, invertible map \(\psi:\mathrm{R}^p\to\mathrm{R}^p\) with everywhere–positive Jacobian determinant
\(|J\psi(x)|>0\).  
The UNL satisfies
\[
  \text{UNL}(f_1,\dots,f_k)
  =
  \text{UNL}(f_1^{\psi},\dots,f_k^{\psi}),
\]
where 
$
  f_i^{\psi}(u)
  =
  f_i\bigl(\psi^{-1}(u)\bigr)/|J\psi^{-1}(u)|
$ 
is the density of $U_i=\psi(X_i)$.
\end{property}


This property extends the univariate result of \cite{zhang2025underlap} to multivariate settings with an arbitrary finite number of groups and applies generally to discrete, continuous, and mixed type variables. It should be noted that transforming a continuous response may have an advantage in modeling the densities in some cases, but transforming discrete responses usually does not have the same advantages for modeling purposes. Also, the invariance property is usually not preserved by the estimators we propose.

If the transformation is not invertible, for example, when performing dimension reduction on continuous variables (e.g., PCA), the UNL is monotone: projecting onto a lower-dimensional subspace can only decrease (or leave unchanged) UNL. This supports the common belief that dimension reduction typically entails information loss. 

\begin{property}[Monotonicity of UNL for continuous variables under linear dimension reduction \citep{devroye2013probabilistic_chapter2}]
\label{UNL-linear-reduction}
Let $A\in\mathrm{R}^{q\times p}$ have full row rank $q\le p$. For each $k=1,\dots,K$, let $f_k$ be the probability density
of $X_k$ on $\mathrm{R}^p$, and let $g_k$ be the probability density
of $AX_k$ on $\mathrm{R}^q$. Then
\[
  \mathrm{UNL}(f_1,\dots,f_K)\ge\mathrm{UNL}(g_1,\dots,g_K).
\]
\end{property}

The widely accepted manifold hypothesis posits that many high-dimensional datasets lie on a low-dimensional manifold embedded in a high-dimensional space \citep{cayton2005algorithms}. This has motivated a broad range of dimension-reduction methods. In practice, however, the choice of methods (and the target dimensions) is often subjective and depends strongly on the scientific objective. For group structured data, many commonly used dimension-reduction techniques are unsupervised and therefore do not explicitly incorporate group label information. If the primary goal is to preserve group separation as much as possible in the reduced representation, the UNL offers a natural criterion: it can be used to guide the choice of the number of reduced dimensions and to compare competing dimension reduction methods by assessing how well the resulting embedding retains distributional separability across groups (see e.g., the comparison between PCA and PCA+t-SNE in the adult planarian cell genomic example in \cite{whiteley2022statistical}).

\subsection{Estimation of UNL by importance sampling}\label{sec:is}
Estimating the UNL is essentially evaluating the non–overlapping probability mass across the group–conditional distributions. We propose a procedure based on importance sampling to estimate UNL. We favour importance sampling over other Monte Carlo approaches such as Markov chain Monte Carlo (MCMC) because the samples produced by an importance sampling scheme are independent, in contrast to the serially correlated draws generated by MCMC. This independence makes the computation amenable to parallel implementation, thereby improving computational efficiency.

The importance sampling procedure for estimating UNL is outlined in Algorithm \ref{alg:importance_sampling}. We describe the setting in which the densities of each group are obtained in a Bayesian framework, but the same strategy can be applied with alternative density estimators: one should simply plug the corresponding estimated densities into the importance sampling scheme.


\begin{algorithm}[!t]
\caption{Importance sampling procedure for estimating UNL for continuous variables}\label{alg:importance_sampling}
{\small
\setlength{\baselineskip}{0.5\baselineskip}
\begin{algorithmic}
    \State \textbf{Step 1: Density estimation:}
    \State Estimate the density of $X$ for every group, i.e., estimate $f_k(x)$, for $k \in \{1,\cdots,K\}$. In the Bayesian context, posterior realizations of the density estimate for group $k$ would be denoted as $\widehat{f}_k^{(s)}(x)$ for $s =1,2,\cdots,S $, where $S$ represents the number of samples (e.g., iterations of the MCMC sampler after burn-in, draws in predictive modelling \citep{fong2023martingale}).

    \For{$s=1,2,\ldots,S$}
    \State \textbf{Step 2: Draw samples from the proposal distribution $q^{(s)}(x)$:}
    \State Generate $M$ iid samples $x_{i}$ from $q^{(s)}(x)$.
    \For{$i=1,2,\ldots,M$}
        \State Compute the importance weights: 
        $
        w_i^{(s)} = \max_{1\le k\le K} \widehat{f}_k^{(s)}(x_i)/{q^{(s)}(x_i)}.
        $
    \EndFor
    
    \State \textbf{Step 3: Calculate the sample average of weights:}
    \State Compute the importance sampling estimate of the UNL:
    \[
    \widehat{\widehat{\text{UNL}}}^{(s)}_M = \frac{1}{M}\sum_{i=1}^M w_i^{(s)}.
    \]
    \EndFor
    \State Intervals of $\{\widehat{\widehat{\text{UNL}}}^{(s)}_M\}_{s=1}^S$ reflect uncertainty arising from modelling the densities given the observed data in each group.
\end{algorithmic}
}
\end{algorithm}

The accuracy and efficiency of importance sampling hinge on a wise choice of proposal density $q(x)$. It is crucial that $q(x)>0$ wherever $\max_{1\le k\le K} \widehat{f}_k(x)$ is nonzero to ensure that the ratio $\max_{1\le k\le K} \widehat{f}_k(x)/q(x)$ is well-defined. Ideally, $q(x)$ should be chosen such that the variance of the estimator is minimized. Here, we set the proposal distribution as the mixture of the $K$ group densities: \(q(x) = \sum_{k=1}^{K} \pi_k \, \widehat{f}_k(x)\).  
A convenient choice is the equally weighted mixture $q(x)=1/K\sum_{k=1}^{K} \widehat{f}_k(x),$ obtained by setting \(\pi_k=1/K\).





This equally-weighted mixture proposal has two attractive properties: (i) support matching: because \(q(x)\) is positive wherever any \(\widehat{f}_k(x)\) is positive, hence the importance ratio is well defined; (ii) bounded weights: for every \(x_i\),
      \[
        1
        \le
        \frac{\max_{1\le k\le K} \widehat{f}_k(x_i)}{q(x_i)}
        =\frac{\max_{1\le k\le K}\widehat{f}_k(x_i)}
               {1/K\sum_{k=1}^{K} \widehat{f}_k(x_i)}
        \le K,
      \]
      so the weights can never explode.      
It is worth noting that Algorithm \ref{alg:importance_sampling}, combined with the equal-weight mixture proposal relates to the 
multiple importance sampling scheme described in \cite{10.1214/18-STS668}.

The following properties give complementary consistency guarantees of the estimator, incorporating both the Monte Carlo error and the density-estimation error.

\begin{property}[Error bound from density estimation and plug-in consistency of the UNL]
\label{dense_unl_error}
Let \(f_1,\ldots,f_K\) be densities with respect to \(\nu\). For sample size \(n\), let
\(\widehat f_{1,n},\ldots,\widehat f_{K,n}\) be estimated densities with respect
to the same dominating measure \(\nu\). 
Then
\[
\left|
\operatorname{UNL}(\widehat f_{1,n},\ldots,\widehat f_{K,n})
-
\operatorname{UNL}(f_1,\ldots,f_K)
\right|
\le
\sum_{k=1}^K
\|\widehat f_{k,n}-f_k\|_1,
\]
where
\[
\|\widehat f_{k,n}-f_k\|_1
=
\int_{\mathcal X}|\widehat f_{k,n}(x)-f_k(x)|\,d\nu(x).
\]
Consequently, if $\widehat f_{k,n}$ is a consistent estimator of $f_k$ for $k=1,\cdots,K$: $\|\widehat f_{k,n}-f_k\|_1\xrightarrow{p}0$,
then $\mathrm{UNL}(\widehat f_{1,n},\ldots,\widehat f_{K,n})$ is a consistent estimator of $\mathrm{UNL}(f_1,\ldots,f_K)$: $\mathrm{UNL}(\widehat f_{1,n},\ldots,\widehat f_{K,n})
\xrightarrow{p}
\mathrm{UNL}(f_1,\ldots,f_K).$

\end{property}

\begin{property}[Monte Carlo error bound and consistency of the importance-sampling estimator]
\label{imp_unl_error}
Let \(f_{1},\ldots, f_{K}\) be densities with respect to \(\nu\). Based on the equal weighted mixture proposal, $q(x)=1/K\sum_{k=1}^K  f_{k}(x)$, let
\(X_1,\ldots,X_M\) be iid samples from \(q\); the importance sampling estimator of UNL is:
\[\widehat{\operatorname{UNL}}_M
=
\frac1M
\sum_{i=1}^M
\frac{
\max_{1\le k\le K} f_{k}( X_i)
}{
\ q( X_i)
}.\] 
\(\widehat{\operatorname{UNL}}_M\) is an unbiased estimator of
\(\operatorname{UNL}(f_{1},\cdots,f_{K})\). Moreover, for every \(\varepsilon>0\),
\[
\Pr\left(
\left|
\widehat{\operatorname{UNL}}_M
-
\operatorname{UNL}(f_{1},\cdots,f_{K})
\right|
\ge \varepsilon
\right)
\le
2\exp\left\{
-\frac{2M\varepsilon^2}{(K-1)^2}
\right\}.
\]
Equivalently, with probability at least \(1-\delta\),
\[
\left|
\widehat{\operatorname{UNL}}_M
-
\operatorname{UNL}(f_{1},\cdots,f_{K})
\right|
\le
(K-1)
\sqrt{\frac{\log(2/\delta)}{2M}}.
\]
Consequently, $\widehat{\operatorname{UNL}}_M$ is a consistent estimator of $\operatorname{UNL}(f_{1},\cdots,f_{K})$: $\widehat{\operatorname{UNL}}_M \xrightarrow{p} \operatorname{UNL}(f_{1},\cdots,f_{K})$ as $M\rightarrow \infty$.
\end{property}

\begin{property}[Monte Carlo variance bound for the importance sampling estimator]
\label{variance_bound_imp}
Let \(f_{1},\ldots, f_{K}\) be densities with respect to \(\nu\). Based on the equal weighted mixture proposal, $q(x)=1/K\sum_{k=1}^K  f_{k}(x)$, let
\(X_1,\ldots,X_M\) be iid samples from \(q\); the importance sampling estimator of UNL is:
\[\widehat{\operatorname{UNL}}_M
=
\frac1M
\sum_{i=1}^M
\frac{
\max_{1\le k\le K} f_{k}( X_i)
}{
\ q( X_i)
}.\]
It gives the variance bound:
\[
\operatorname{Var}\left(\widehat{\operatorname{UNL}}_M\right)
\le
\frac{\operatorname{UNL}(K-\operatorname{UNL})}{M}.
\]
\end{property}

\begin{property}[Combined density estimation and Monte Carlo error bound of UNL]
\label{combine_imp_error}
Let \(f_1,\ldots,f_K\) be densities with respect to \(\nu\). For sample size \(n\), let
\(\widehat f_{1,n},\ldots,\widehat f_{K,n}\) be estimated densities with respect to the same dominating measure \(\nu\). Based on the equal weighted mixture proposal, $q(x)=1/K\sum_{k=1}^K \widehat f_{k,n}(x)$, let
\(X_1,\ldots,X_M\) be iid samples from \(q\); the importance sampling estimator of UNL is:
\[\widehat {\widehat{\operatorname{UNL}}}_M
=
\frac1M
\sum_{i=1}^M
\frac{
\max_{1\le k\le K}\widehat f_{k,n}( X_i)
}{
\ q( X_i)
}.\]
Then, with probability at least \(1-\delta\),
\[
\left|
\widehat {\widehat{\operatorname{UNL}}}_M
-
\operatorname{UNL}(f_1,\ldots,f_K)
\right|
\le
(K-1)
\sqrt{\frac{\log(2/\delta)}{2M}}
+
\sum_{k=1}^K
\|\widehat f_{k,n}-f_k\|_1.
\] 
Consequently, given that \(\widehat f_{1,n},\ldots,\widehat f_{K,n}\) are consistent estimators of \( f_{1},\ldots, f_{K}\), $\widehat {\widehat{\operatorname{UNL}}}_M$ is a consistent estimator of $\operatorname{UNL}(f_1,\ldots,f_K)$: $\widehat {\widehat{\operatorname{UNL}}}_M \xrightarrow{p} \operatorname{UNL}(f_1,\ldots,f_K)$ as $M \rightarrow \infty$ and $n \rightarrow \infty$.
\end{property}



The Monte Carlo variance bound in Property \ref{variance_bound_imp} implies that when $\mathrm{UNL}=K$ (i.e., the $K$ groups are perfectly separated), $\mathrm{Var}\bigl(\widehat{\mathrm{UNL}}_M\bigr)=0$. The bound reaches its maximum when $\mathrm{UNL}=K/2$ (an  illustration of the variance bound for $K=5$ is provided in the Appendix), and when $\mathrm{UNL}=1$ (i.e., all groups are identical), it gives $\mathrm{Var}\bigl(\widehat{\mathrm{UNL}}_M\bigr)\le (K-1)/M$. 
However, when $\mathrm{UNL}=1$, the equal-weight mixture proposal $1/K\sum_{k=1}^{K} \widehat{f}_k(x)$ coincides with $\max_{1\le k\le K} \widehat{f}_k(x)$, so the importance sampling weights are constant and the estimator variance is in fact zero; the bound is therefore conservative in this regime. 

Nevertheless, either this variance bound or the bound of Property \ref{imp_unl_error} can be used to guide the choice of the Monte Carlo sample size $M$. We note that both bounds suggest $M = O(K^2)$. In the subsequent analyses, we focus on the former.  Specifically, if $\sigma_{\text{MC}}^2$ denotes a maximum acceptable Monte Carlo variance, then a conservative choice is $M = K^2/(4\sigma_{\text{MC}}^2)$
which ensures $\mathrm{Var}\bigl(\widehat{\mathrm{UNL}}_M\bigr)\le \sigma_{\text{MC}}^2$ under the bound of Property \ref{variance_bound_imp}. 

\section{UNL as a tool for discovering covariate dependence in cluster analysis}\label{sec:clustering}

Many scientific problems involve assessing how group structure relates to external variables. When group labels are observed, the UNL provides a direct way to quantify the dependence of these labels on other variables by measuring the degree of separation across groups. In many applications, however, the group structure is unobserved, and a common strategy is to discover latent group structure via clustering. 


A wide variety of clustering methods has been proposed \citep[see e.g.,][for a review]{SAXENA2017664}. In this paper, we focus on model-based clustering, with particular emphasis on Bayesian mixture models, although the proposed approach applies more broadly. We show how the UNL can be used to assess how an inferred partition relates to covariates. Depending on whether clustering is performed using only the response (the marginal approach) or via a mixture model that explicitly incorporates covariates (the conditional approach), the inferred partition may comprise clusters of individuals with similar responses or clusters of individuals with similar response–covariate relationships.


\subsection{Clustering based only on the response}
\subsubsection{The marginal approach of mixture models}
In mixture models, the observed data are modeled as rising from a mixture of simpler component distributions, with each component corresponding to a cluster \citep{Fraley2002}. If covariate information is not considered, mixture models assume the data $y_1,\dots,y_n$ (here we use $y$ to denote the data vector being clustered, to distinguish from covariates $x$ introduced later) are conditionally i.i.d. from a convex combination of parametric components \citep{wade2023bayesian}:
\begin{align}
	y_i  \iidsim \sum_{l=1}^L w_l K(\cdot \mid \theta_l) = \int K(\cdot \mid \theta) dH(\theta),
	\label{eq:mix}
\end{align}
where $K(y \mid \theta)$ is a fixed parametric density, often referred to as the kernel, with component-specific parameters contained in $\btheta = (\theta_1, \ldots, \theta_L)$. In the equivalent integral representation on the right-hand side of \eqref{eq:mix}, $H = \sum_{l=1}^L w_l \delta_{\theta_l} $ represents the mixing measure with the mixture weights $\bw = (w_1, \ldots, w_L)$ that are non-negative and sum to one. Yet another equivalent representation, useful for clustering, makes use of allocation variables $\bz = (z_1, \ldots , z_n)$:
\[
 	y_i |z_i = l, \theta_l \iidsim K(\cdot \mid \theta_l), \quad  z_i \iidsim \Cat(w_1,\ldots, w_L),
\]
where $\Cat(\cdot)$ represents the categorical distribution with parameter $\bw$.


In the Bayesian setting, the model in \eqref{eq:mix} is completed with a prior on the unknown parameters $\bw$ and $\btheta$ (or equivalently on the unknown mixing measure $H$). Additionally, Bayesian nonparametric models allow the number of clusters to be not fixed in advance, by assuming an infinite number of components and can be viewed as a limiting case of overfitted mixtures, with the Dirichlet process mixture (DPM) being the most widely used example \citep{wade2023bayesian}. The DPM model uses a Dirichlet process (DP) prior \citep{Ferguson1973} on the mixing measure ($H\sim\text{DP}(\alpha ,H_0)$). According to Sethuraman's stick-breaking representation \citep{Sethuraman1994}, $H$ can be expressed as:
\begin{align}
H = \sum_{l=1}^\infty w_l \delta_{\theta_l} 
\label{dp_prior_definition}
\end{align}
where $w_{1}=v_{1}$, $w_{l}=v_{l}\prod_{m=1}^{l-1} (1-v_{m})$, for $l\geq 2$, with $v_{l} \overset{\text{iid}}\sim \text{Beta}(1,\alpha)$, for $l\geq 1$.

Under \eqref{dp_prior_definition}, the probabilities assigned to each component decrease rapidly with the index $l$ for typical choices of $\alpha$. Consequently, the infinite mixture model can be reasonably approximated by a finite number of components. For this reason, posterior inference can be conducted using the blocked Gibbs sampler \citep{Ishwaran2001}, which truncates Sethuraman's representation in \eqref{dp_prior_definition} to a finite value, $L$, with $v_{L} = 1$ to ensure that the mixture weights sum to one. 
For example, the model structure and prior specification of a DPM mixture model with kernel as the product of multivariate normal and categorical distributions which is implemented in this paper are provided in the Appendices.



\subsubsection{Detecting dependence of the partition on covariates in the marginal approach}
In the marginal approach, clustering is performed solely on the response, thereby assigning observations with similar outcomes to the same cluster. To assess whether, and to what extent, covariates influence the clustering partition structure, we propose to use the UNL of covariates, as defined in Definitions \ref{def:UNL-continuous}–\ref{def:UNL-unified}, to quantify partition–covariate dependence by measuring the separation of covariate distributions across clusters. Low underlap values indicate substantial overlap in covariate distributions across clusters, suggesting weak dependence of the inferred partition on the covariates, whereas high underlap values reflect more distinct covariate distributions and hence stronger dependence of the partition structure on covariates. While it is difficult to visually assess covariate differences across clusters when the covariates are multivariate, the UNL provides a concise quantitative summary of how strongly a response-based partition depends on the covariates.

In a Bayesian setting, estimating the UNL for every posterior draw of the partition requires (i) estimating the density of covariates for each cluster in each sampled partition and (ii) calculating the UNL for every posterior draw of the density estimates for every partition, which is computationally expensive. We instead propose to estimate UNL for a single representative partition, specifically, the point estimate that minimizes the Jensen’s-inequality lower bound to the posterior expected variation of information \citep[VI][]{10.1214/17-BA1073}. This choice has the desirable properties of VI as a metric on partitions with an information-theoretic interpretation \citep{meilua2007comparing}, while remaining computationally tractable because the bound depends only on the posterior similarity matrix. Other choices of the representative partition, based on alternative loss functions, could also be considered, but we adopt the VI-based estimator for its benefits highlighted in \cite{10.1214/17-BA1073}.
To account for uncertainty while also retaining computational efficiency, a simple approach is to summarize the posterior with multiple representative partitions produced by the WASABI method \citep[][see the Appendices for details]{balocchi2025understanding}.

We illustrate the utility of the UNL in quantifying partition-covariate dependence through two simulated examples, which represent distinct data-generating mechanisms with different relationships between the response \(y\) and covariates \(x\). In both, the DPM is used for clustering and for estimating the covariate densities within each group, given the representative partition. The DPM  is fit using the specifications detailed in the Appendices. Posterior inference is based on 10,000 MCMC iterations after discarding an initial burn-in of 10,000 iterations. Given the density estimates, the UNL is computed via importance sampling (Algorithm \ref{alg:importance_sampling}) using a Monte Carlo sample size of $M=5000$, which yields variance upper bounds of $9/20000$ and $4/20000$ for the two examples, respectively.

\noindent\textbf{Example A (cluster assignments fully determined by a single covariate).}
We generate \(n=600\) observations according to $y_i=2\mathbf{1}(x_i \le -1)-5\mathbf{1}(x_i \ge 1)+ \epsilon_i$, $\epsilon_i \sim \mathrm{N}(0,0.1^2)$,
with \(x_i \sim \mathrm{Unif}(-3,3)\). 

\noindent\textbf{Example B (cluster assignments depend jointly on two covariates but not on either marginally).}
We generate \(n=600\) observations from
$y_i = m(x_i) + \epsilon_i$, $\epsilon_i \sim \mathrm{N}(0,0.1^2)$,
where \(x_i = (x_{i1},x_{i2})^T\), \(x_{ij} \sim \mathrm{Unif}(-2,2)\), and $m(x_i) =\mathbf{1}(\sin(x_{i1} x_{i2} \pi / 2) \le 0)-\mathbf{1}(\sin(x_{i1} x_{i2} \pi / 2) > 0)$. 


The representative partition structures obtained from the mixture model, together with the histograms of the estimated UNLs for Examples A and B, are shown in Figure \ref{example_AB_UNL_plots}. In Example A, the covariate \(x\) is almost perfectly separated across clusters, and consequently the UNL is very close to the number of inferred clusters (\(K=3\)). Example B, which is also considered in \cite{wade2025bayesian}, exhibits a different pattern: the estimated UNL for the joint distribution of \((x_1,x_2)\) is again close to the number of inferred clusters (\(K=2\)), indicating pronounced dependence of the inferred partition on \((x_1,x_2)\) jointly. By contrast, the UNL values for the marginal distributions of \(x_1\) and \(x_2\) are each near one, indicating little partition–covariate dependence when each covariate is considered marginally. This pattern illustrates a significant joint covariate effect without any significant marginal effect: neither covariate alone explains the partition, but their joint configuration does. It is also worth noting that, once the joint density has been estimated using a DPM of multivariate normals, the marginal densities are available analytically by restricting the fitted component parameters to the relevant coordinates. Consequently, the UNL of the marginals can be computed without refitting the DPM model separately for each covariate.

It is important to note that the UNL quantifies the dependence of a given partition on the covariates, rather than the dependence of the response on the covariates. Different partition structures will, in general, induce different partition–covariate relationships.



\begin{figure}[!t]
\centering
\subfigure{
\centering
\includegraphics[width=0.38\textwidth]{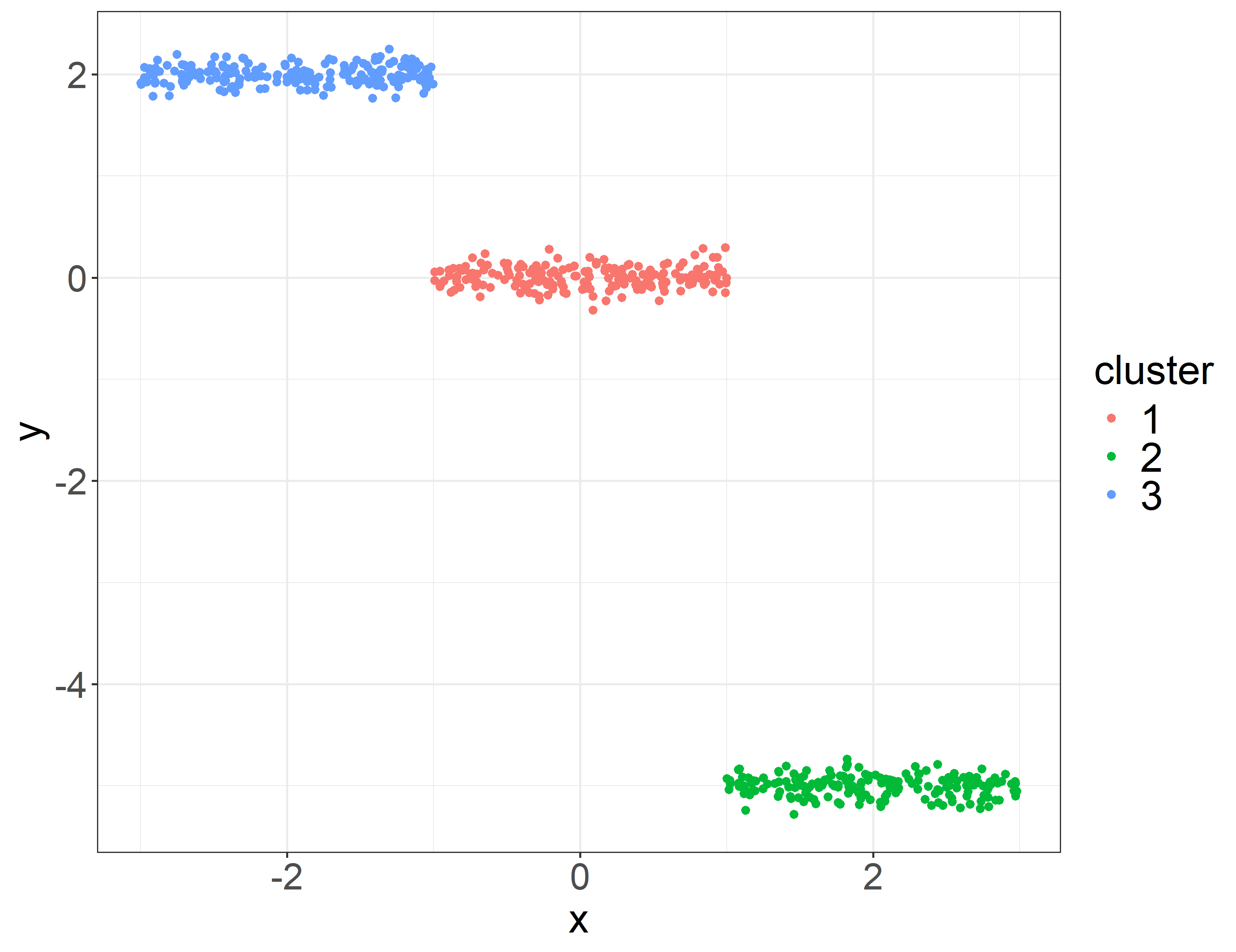}
}
\subfigure{
\centering
\includegraphics[width=0.38\textwidth]{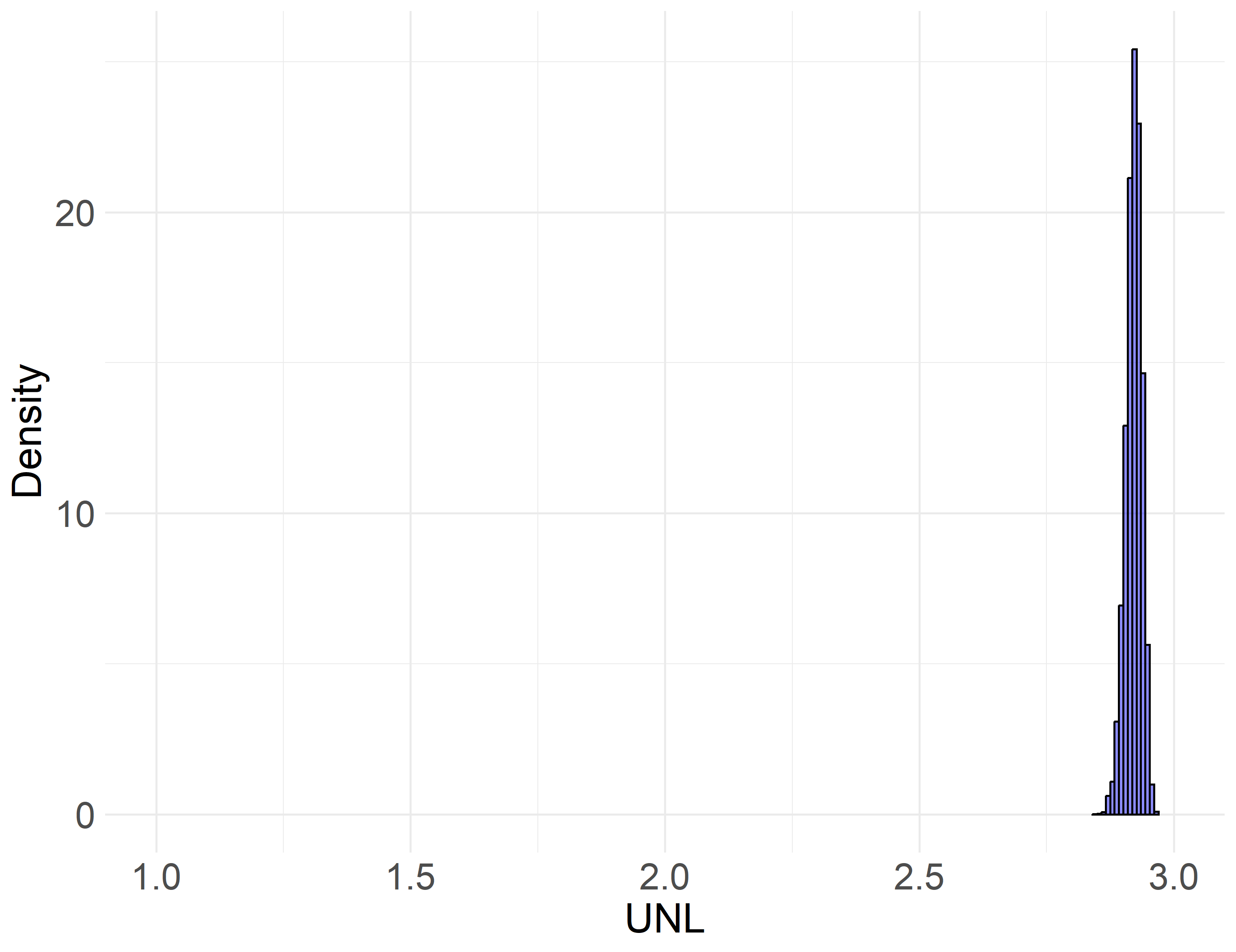}
}
\\
\centering
\subfigure{
\centering
\includegraphics[width=0.38\textwidth]{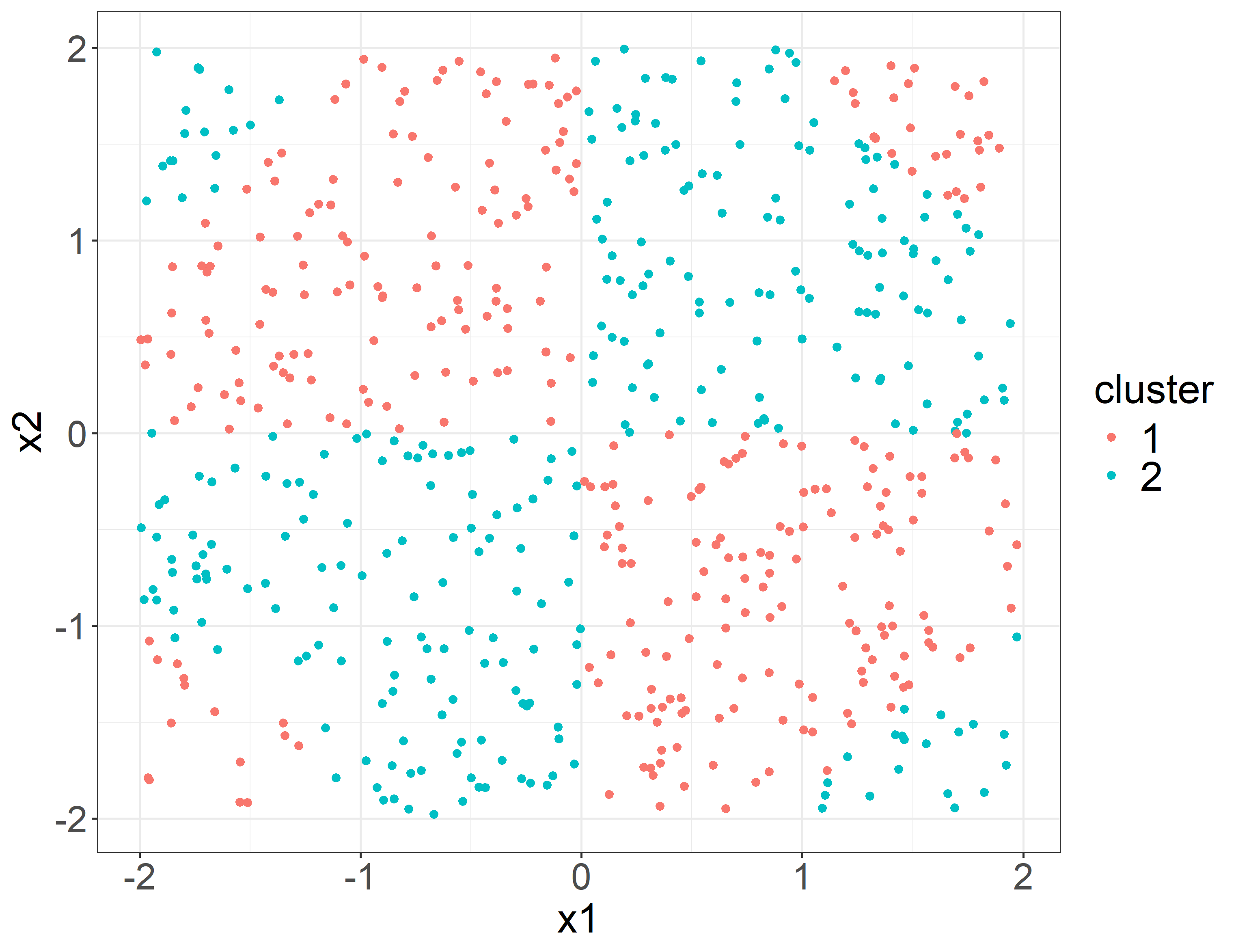}
}
\subfigure{
\centering
\includegraphics[width=0.38\textwidth]{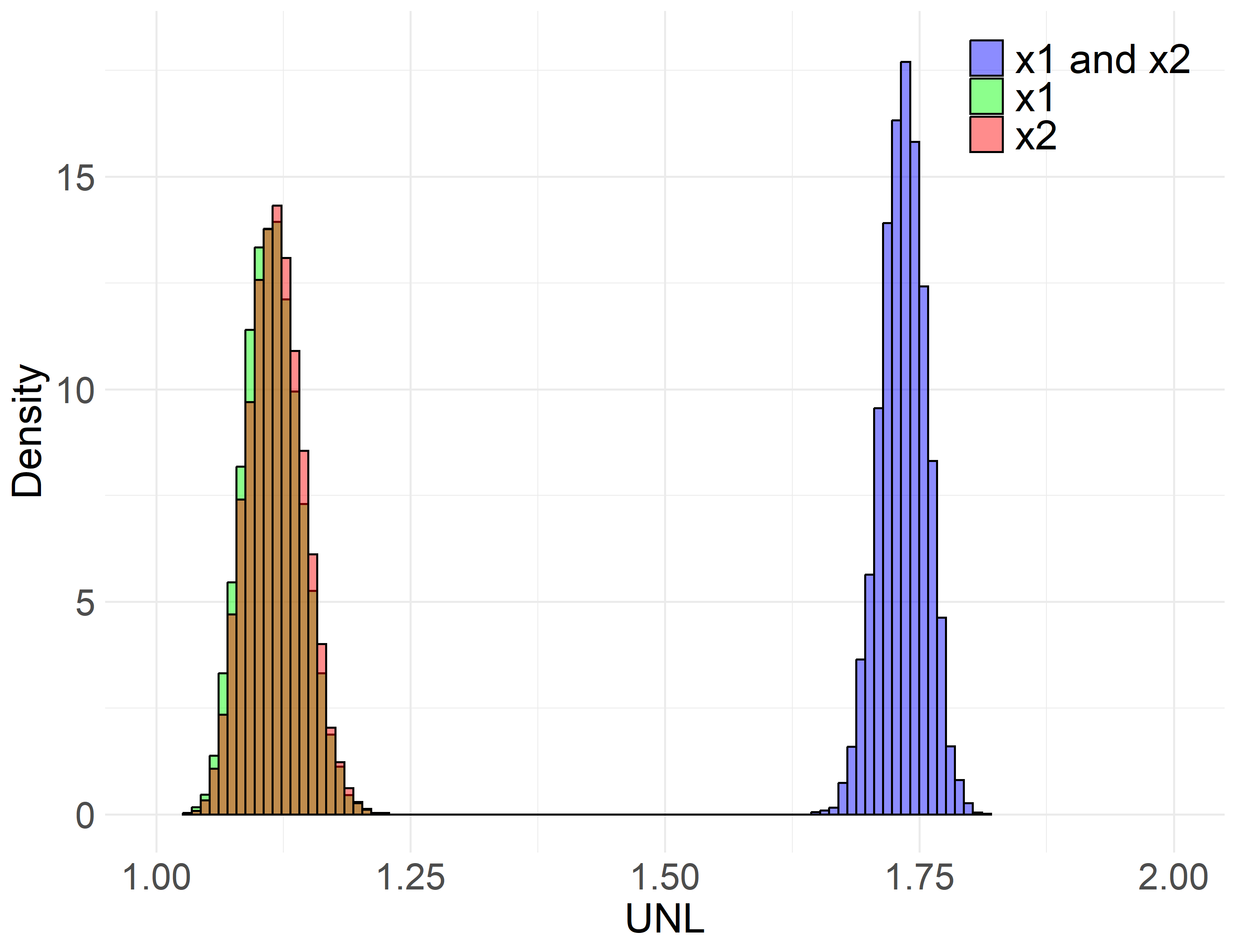}
}

\centering
\caption{Top row: Example A. Bottom row: Example B. Left: the representative partition inferred from the DPM. Right: histograms of the estimated UNL of the covariates.}
\label{example_AB_UNL_plots}
\end{figure}

\subsection{Incorporating covariate information in clustering}
\subsubsection{The conditional approach of mixture models}

Unlike in the marginal approach, the conditional approach of the mixture model explicitly models the covariate-specific density \(f(y_i | x_i)\), allowing the distribution of \(y_i\) to depend on \(x_i\) through covariate-dependent mixture weights, covariate-dependent component parameters, or both.

The general form of the conditional approach assumes that the data $y_i$ arise from a combination of covariate-specific parametric components for $i \in \{1,\cdots,n\}$:
\begin{align}
	y_i | x_i \sim \sum_{l=1}^L w_l(x_i) K(\cdot \mid \theta_l(x_i)) = \int K(\cdot | \theta) dH_{x_i}(\theta),
	\label{eq:mix_2}
\end{align}
where $K(y | \theta(x))$ is a fixed parametric density, with component-specific parameters contained in $\btheta(x) = (\theta_1(x), \ldots, \theta_L(x))$ and the mixture weights $\bw(x) = (w_1(x), \ldots, w_L(x))$ are non-negative and sum to one for every possible $x$ value. 
 The equivalent representation which makes use of latent group variables $\bz = (z_1, \ldots , z_n)$ is:
 \begin{align*}
 	y_i |x_i,z_i=l \indsim f(\cdot \mid \theta_l(x_i)), \quad  z_i \indsim \Cat(w_1(x_i),\ldots, w_L(x_i).
 \end{align*}


Similar to the marginal approach, priors on $\btheta(x)$ and $\bw(x)$ (or equivalently on $H_x$) are required for Bayesian inference of the modelling. The dependent DP (DDP) \citep{MacEachern1999} is a popular nonparametric prior for the random probability mixing measures $H_x$, which modifies the stick-breaking representation of the DP to accommodate
covariates and in full generality, and is specified as:
\begin{align}
H_x = \sum_{l=1}^\infty w_l(x), \delta_{\theta_l(x)} 
\label{ddp_prior_definition}
\end{align}
where $w_{1}(x)=v_{1}(x)$, $w_{l}(x)=v_{l}(x)\prod_{m=1}^{l-1} (1-v_{m}(x))$, for $l\geq 2$. $v_{l}(x)$ are independent stochastic processes with marginals following $\text{Beta}(1,\alpha(x))$, and $\theta_l(x)$ are also independent stochastic processes, for $l\geq 1$. Similarly to the DPM model, the number of components in \eqref{ddp_prior_definition} could be truncated to a finite value, $L$, with $v_{L}(x) = 1$, to facilitate posterior inference.

A frequently used class of DDPs is the single-weights DDP, motivated primarily by computational convenience: it reuses existing posterior sampling algorithms for DPMs by restricting the mixing weights to be covariate-invariant, i.e., $w_l(x)\equiv w_l$. For example, a popular single-weights DDP model which is often referred to as the linear dependent Dirichlet process (LDDP) mixture model \citep[see e.g.,][]{Quintana2022,wade2025bayesian} for modelling a continuous response takes the form:
\begin{align}
	f (y_i | x_i) = \sum_{l=1}^\infty w_l \phi(y_i \mid  \beta_l^T x_i,\sigma_l^2).
	\label{eq:single_weight_ddp}
\end{align}
where $\phi(\cdot)$ represents the normal density function, and $\beta_l^T x_i$ and $\sigma_l^2$ denote the mean function and the variance of component $l$ respectively. Although the model may appear highly flexible and is computationally attractive, the induced predictive mean and conditional density are in fact quite constrained. The single-weight LDDP mixture of normals yields a weighted combination of linear regression functions but lacks the local adaptation afforded by covariate-dependent weights. As a result, it may be insufficiently flexible for capturing complex regression relationships \citep{wade2025bayesian}.

The specific model structure and prior specification for the LDDP mixture model implemented in this paper are detailed in the Appendices.

\subsubsection{Detecting dependence of the partition on covariates in the conditional approach}
In the conditional approach, the mixture model groups individuals with similar response-covariate relationships. Similar to the unconditional approach, we propose to use the UNL of covariates 
to assess the influence of covariates on the inferred clustering partition.


This is particularly important for the single-weights mixture model, which posits that the inferred partition has no dependence on covariates, assuming that the UNL of the covariates should be one. However, when the regression kernel is not sufficiently flexible, the inferred partition structure may depend on the covariates. In such cases, the cluster-specific predictions are averaged regardless of covariate values, because the mixing weights do not account for the similarity between the new covariates \(x_{n+1}\) and the covariates within each cluster \citep{wade2025bayesian}. As a consequence, predictive inference for the regression function, density, and associated uncertainty under the single-weights mixture can be extremely poor.

We consider three examples to illustrate the utility of the UNL in quantifying the partition-covariate dependence in the conditional approach of mixture model. Examples C1 and C2 involve low-dimensional covariates, where dependence of the LDDP-induced partition on covariates can be visualized directly. Example D, in contrast, involves a higher-dimensional covariate space, where the dependence pattern of the partition on all covariates is difficult to observe visually, and here the UNL provides a systematic way to screen for and summarize partition–covariate dependence across many covariates.



\noindent\textbf{Example C1 (low-dimensional covariate space with distinct slopes of \(x^c\) across categories of \(x^d\)).}
We generate \(n=800\) observations according to
\[
y_i \mid x_i^c, x_{i}^d \sim
\begin{cases}
0.5\mathrm{N}(-2x_i^c, 0.4^2) + 0.5\mathrm{N}(2x_i^c, 0.4^2), & \text{if } x_{i}^d = 1,\\[4pt]
0.5\mathrm{N}(-12x_i^c + 80, 0.4^2) + 0.5\mathrm{N}(12x_i^c + 80, 0.4^2), & \text{if } x_{i}^d = 2,
\end{cases}
\]
where \(x_i^c \sim \mathrm{Unif}(-3,3)\) and \(\Pr(x_{i}^d=1)=\Pr(x_{i}^d=2)=0.5\).

\noindent\textbf{Example C2 (low-dimensional covariate space with common slopes of \(x^c\) across categories of \(x^d\)).}
We again generate \(n=800\) observations, now from
\[
y_i \mid x_i^c, x_{i}^d \sim
\begin{cases}
0.5\mathrm{N}(-12x_i^c, 0.4^2) + 0.5\mathrm{N}(12x_i^c, 0.4^2), & \text{if } x_{i}^d = 1,\\[4pt]
0.5\mathrm{N}(-12x_i^c + 80, 0.4^2) + 0.5\mathrm{N}(12x_i^c + 80, 0.4^2), & \text{if } x_{i}^d = 2,
\end{cases}
\]
with \(x_i^c \sim \mathrm{Unif}(-3,3)\) and \(\Pr(x_{i}^d=1)=\Pr(x_{i}^d=2)=0.5\).

The only difference between Examples C1 and C2 in the data generating mechanism is the slope of the \(x^c\) when \(x^d=1\), all other aspects are identical. This seemingly minor change leads to markedly different behavior under the LDDP mixture model: the representative partition inferred by LDDP contains four clusters in C1 but only two in C2 (see the left panel of Figure \ref{example_C12_UNL_plots}). In Example C1, clusters 3 and 4 consist exclusively of units with \(x^d=1\), whereas clusters 1 and 2 consist exclusively of units with \(x^d=2\). In Example C2, by contrast, \(x^d\) appears more evenly represented across the two clusters. This visible partition–covariate dependence in Example C1, and its absence in Example C2, is confirmed by the UNL values shown in the right panels of Figures \ref{example_C12_UNL_plots}.

In Example C1, the estimated UNL values for both the marginal distribution of \(x^d\) and the joint distribution \((x^c,x^d)\) are large, indicating substantial dependence of the inferred partition on \(x^d\) and thereby violating the single-weight assumption of the LDDP mixture model. By contrast, in Example C2 the UNL values 
are all close to one, indicating little partition–covariate dependence.

The difference in UNL values between the two examples has clear predictive implications. In the LDDP mixture model, predictions are averaged with mixing weights that do not vary with the covariates. In Example C1, the LDDP-induced partition exhibits moderate dependence on $x^d$; with covariate-constant weights, this leads to poor predictive performance. In Example C2, the partition shows little dependence on either covariate, thus covariate-constant weights do not degrade prediction. Heatmaps of the true and estimated density regression functions for Examples C1 and C2 conditional on $x^d=1$ are shown in Figure \ref{example_C12_heatmap_plots_cate1}, the corresponding plots for $x^d=2$ are provided in the Appendices.

\begin{figure}[!t]
\centering
\subfigure{
\centering
\includegraphics[width=0.38\textwidth]{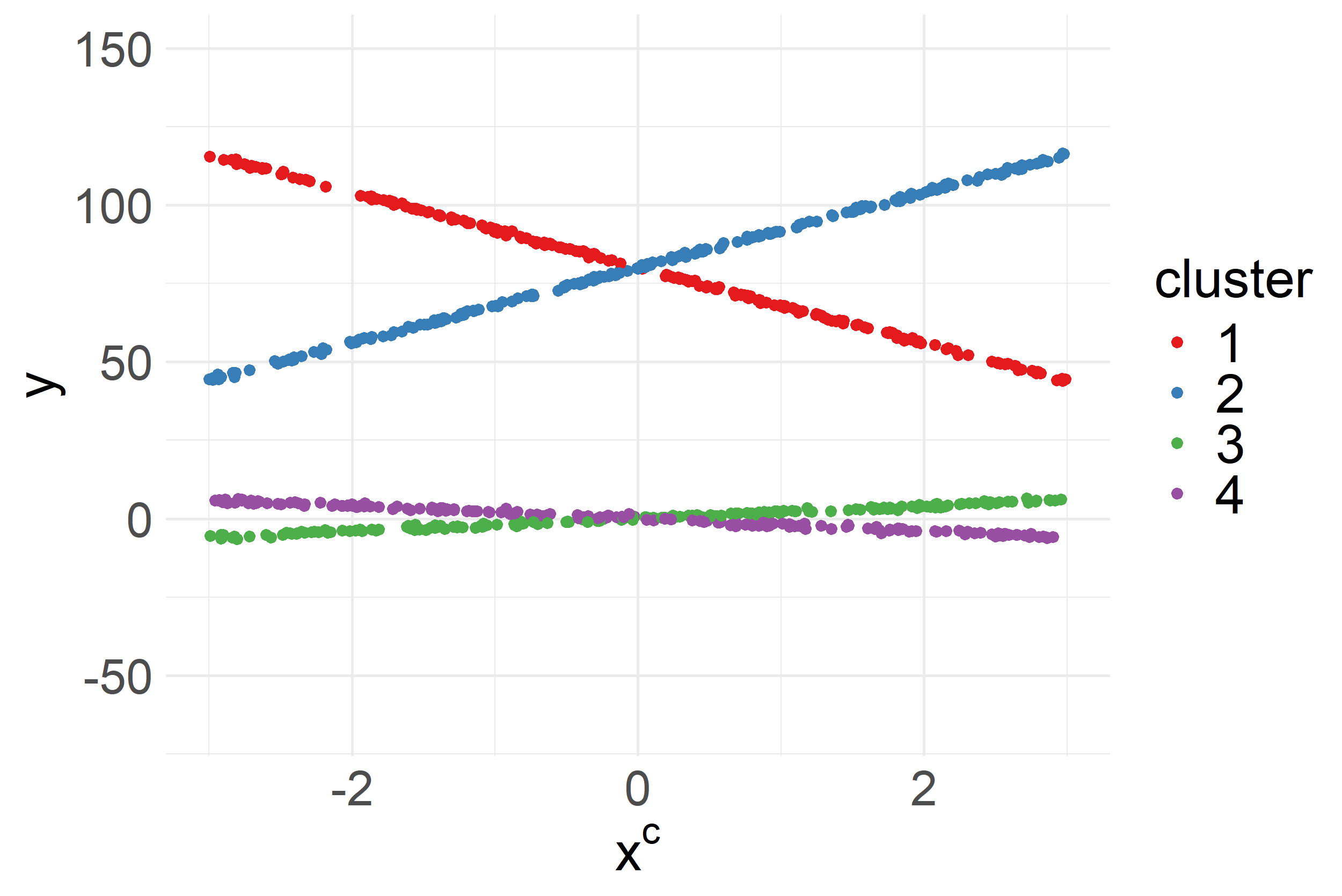}
}
\subfigure{
\centering
\includegraphics[width=0.38\textwidth]{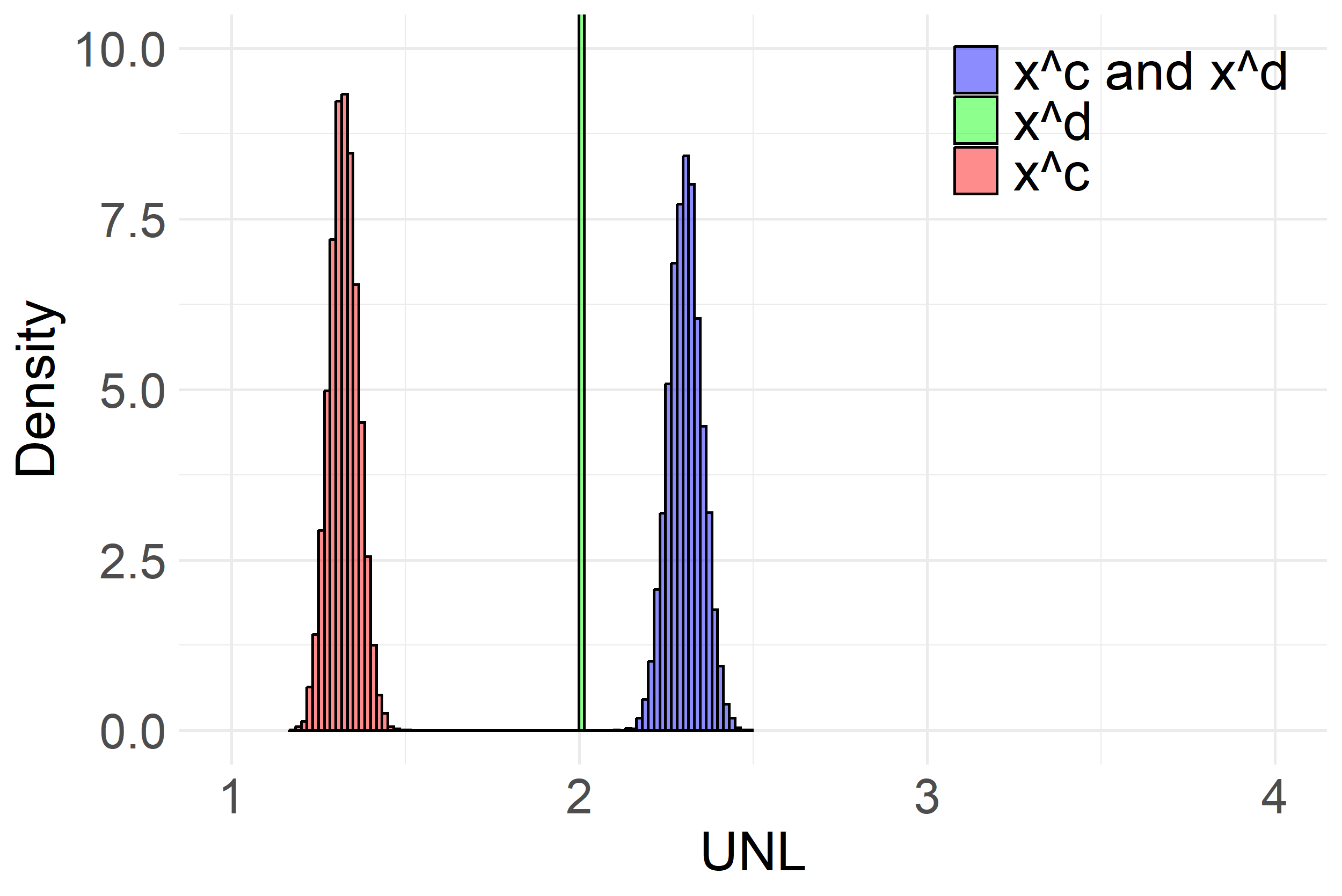}
}
\\
\subfigure{
\centering
\includegraphics[width=0.38\textwidth]{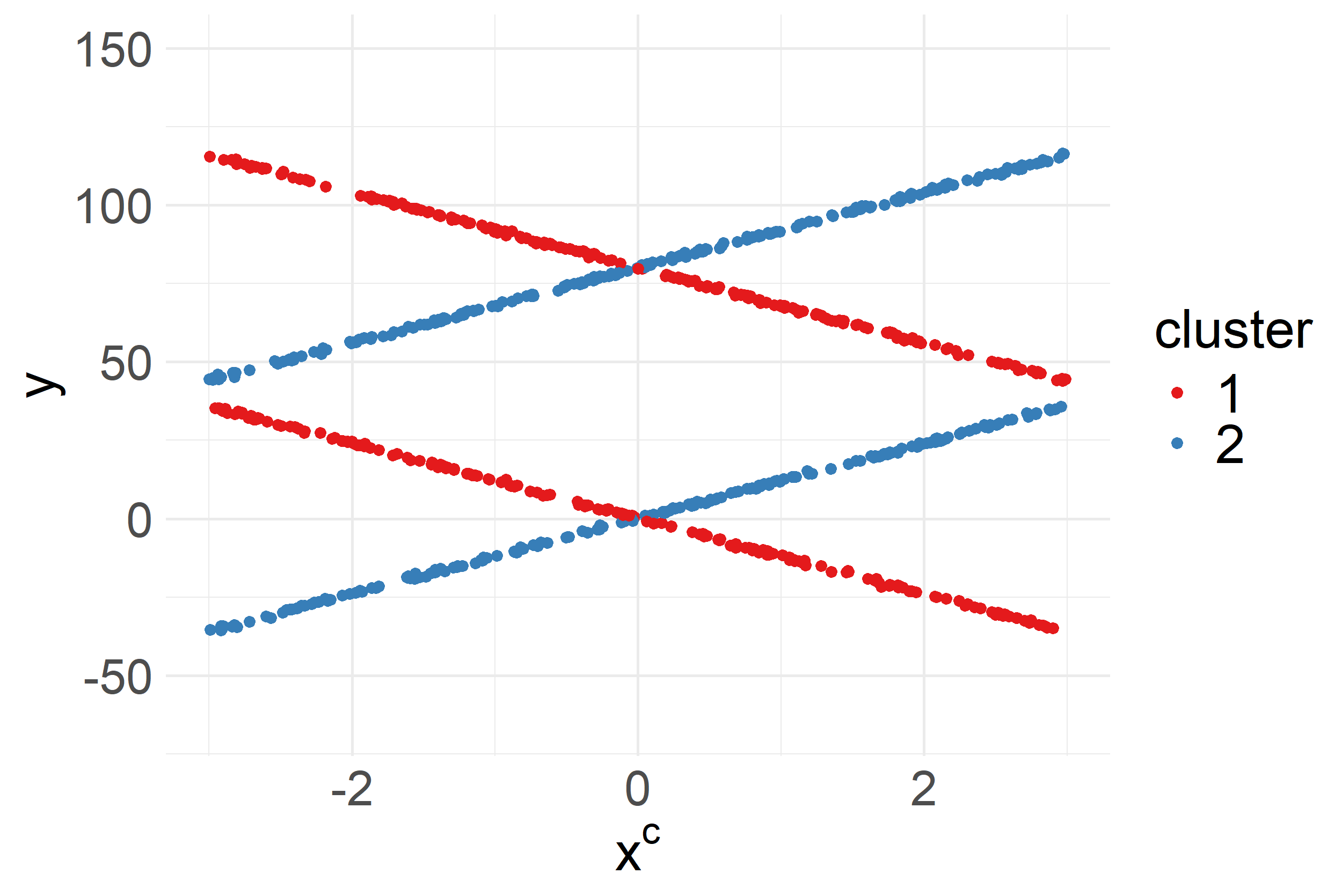}
}
\subfigure{
\centering
\includegraphics[width=0.38\textwidth]{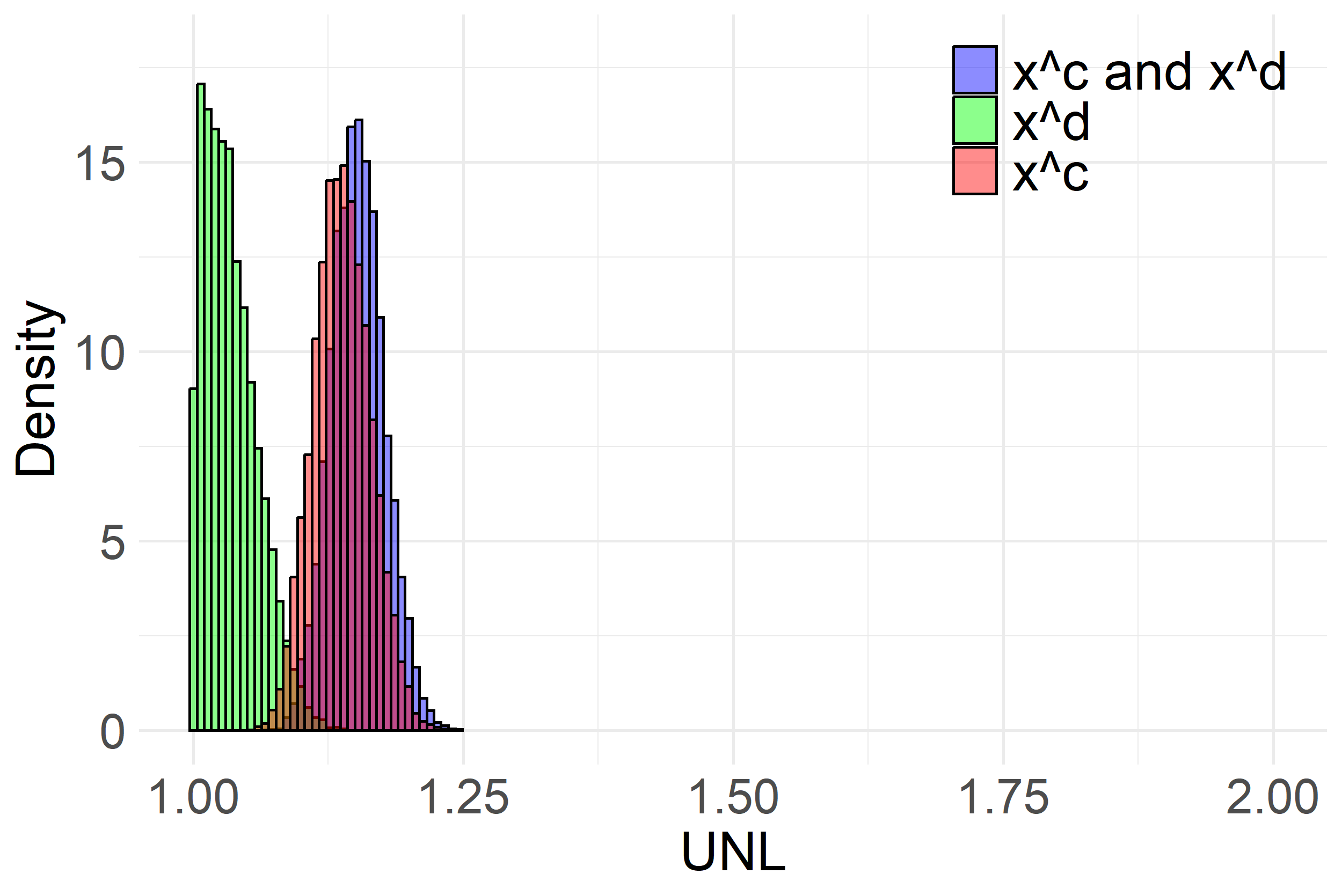}
}
\centering
\caption{Top row: Example C1. Bottom row: Example C2. Left: the representative partition inferred from the LDDP model. Right: the histograms of the estimated UNL of the covariates.}
\label{example_C12_UNL_plots}
\end{figure}

\begin{figure}[!t]
\centering
\subfigure{
\centering
\includegraphics[width=0.38\textwidth]{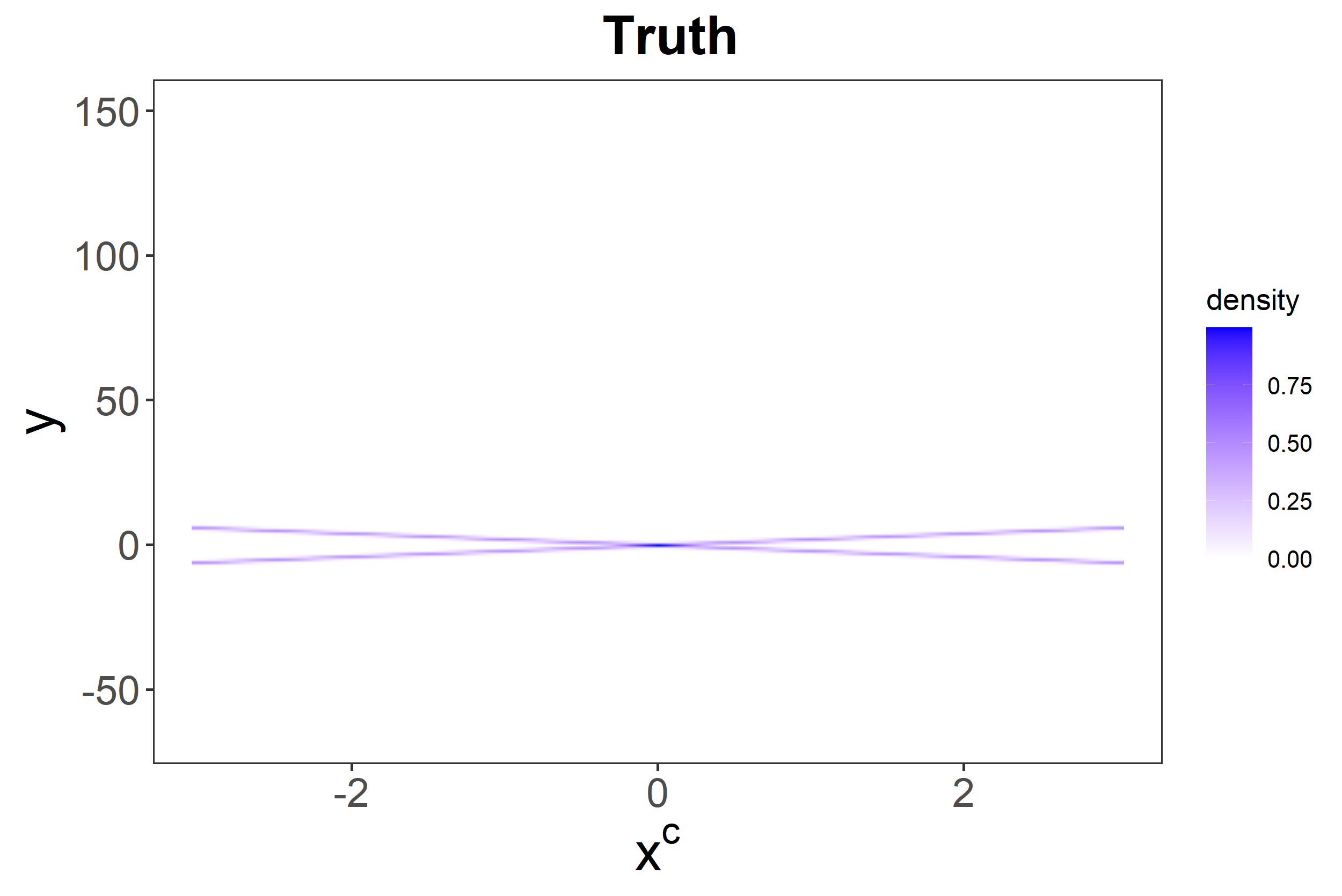}
}
\subfigure{
\centering
\includegraphics[width=0.38\textwidth]{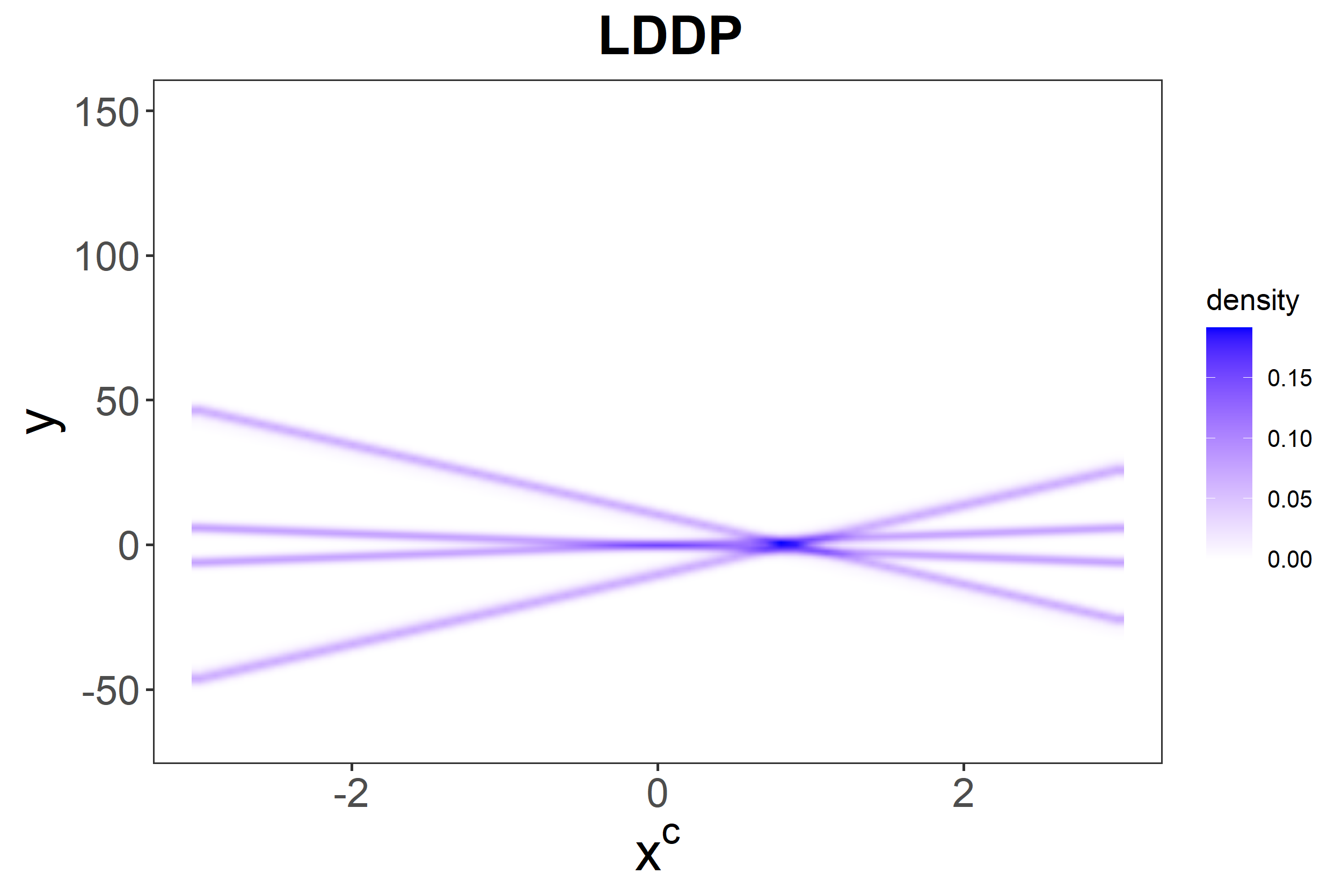}
}
\\
\subfigure{
\centering
\includegraphics[width=0.38\textwidth]{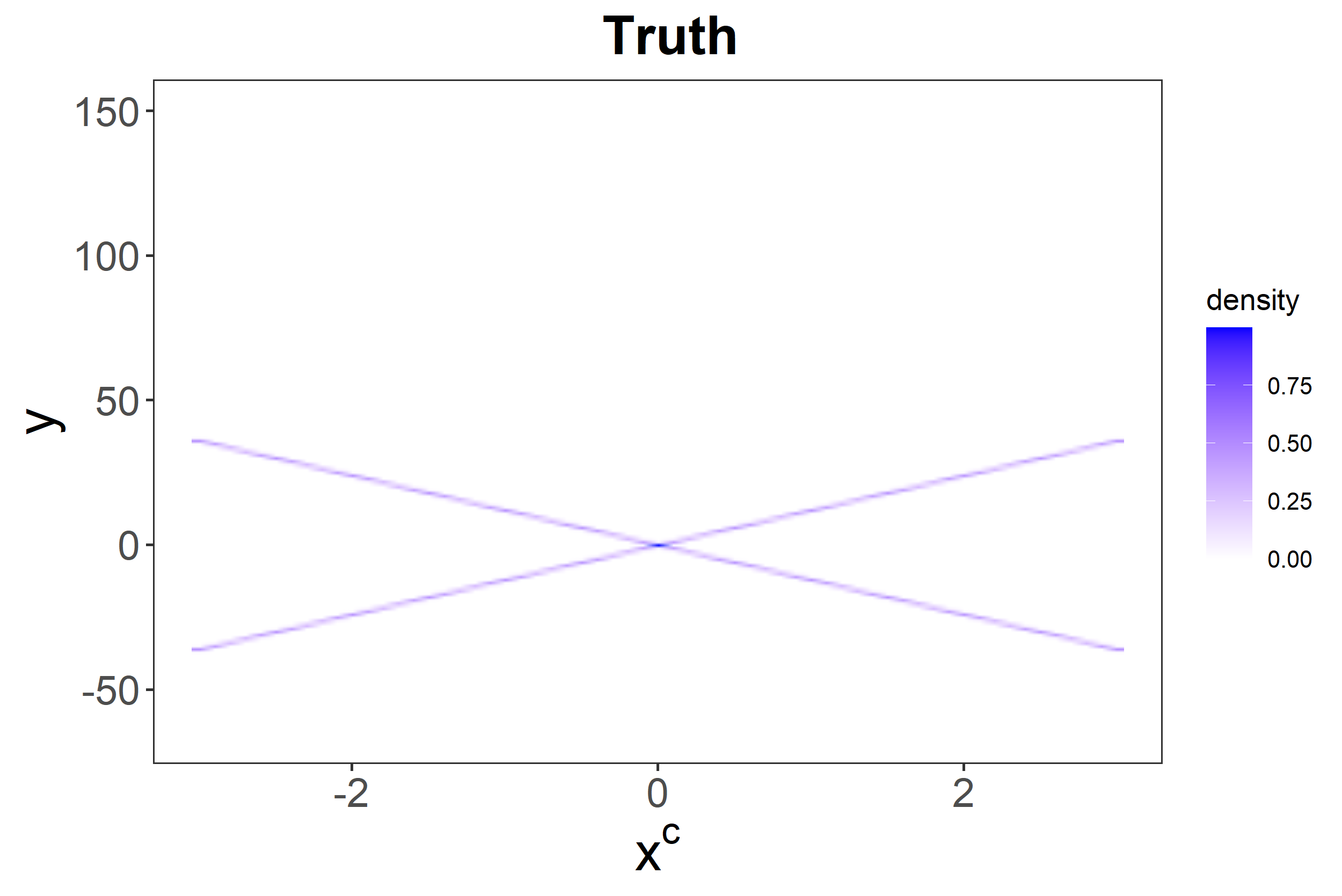}
}
\subfigure{
\centering
\includegraphics[width=0.38\textwidth]{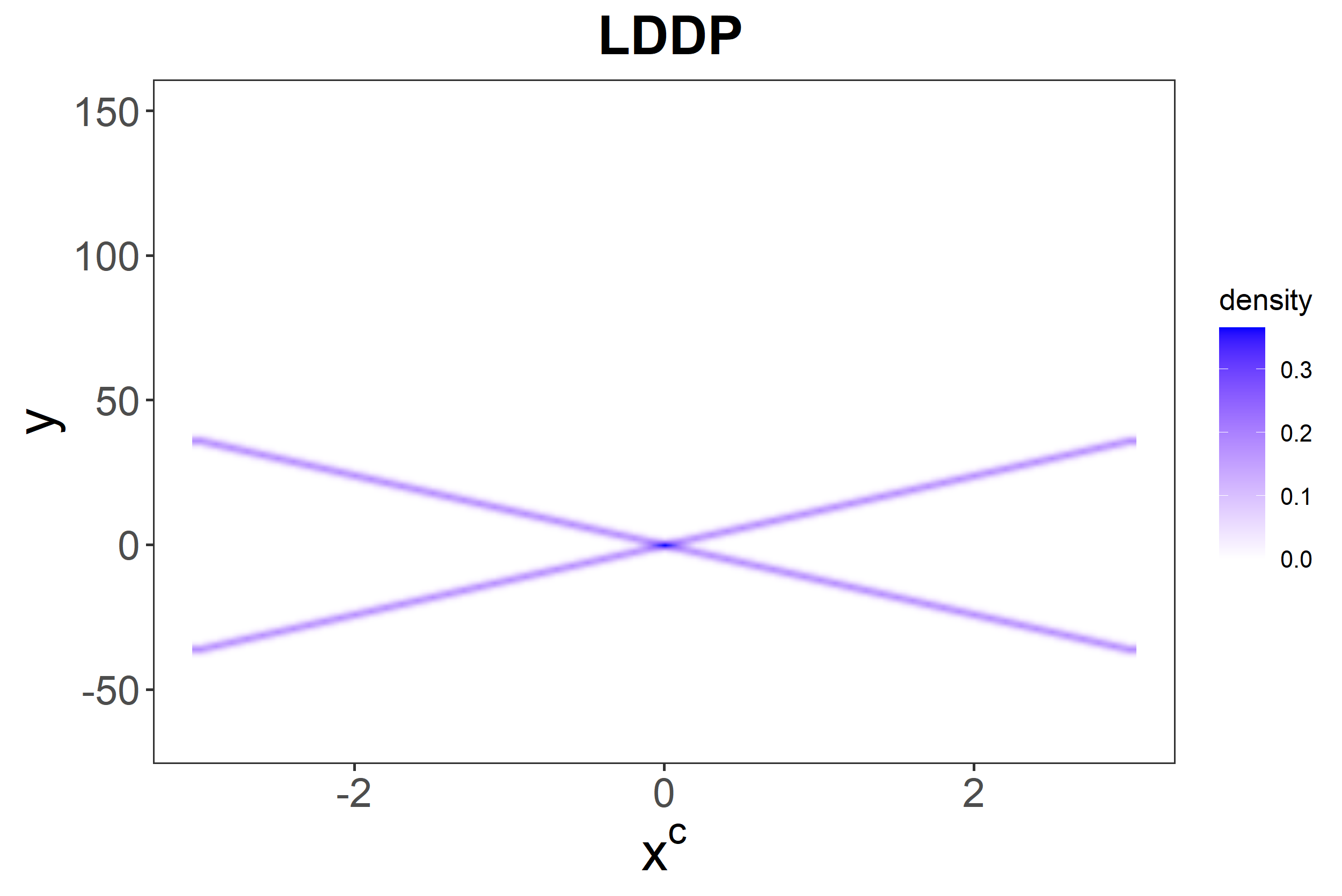}
}
\centering
\caption{Heatmap of the true and estimated density regression functions of Examples C1 and C2 conditioned on $x^d=1$. Top row: Example C1. Bottom row: Example C2.}
\label{example_C12_heatmap_plots_cate1}
\end{figure}

\noindent\textbf{Example D (high-dimensional covariate space).} We generate 1000 observations following:
$
y_i|x_i\overset{ind}{\sim}p(x_{i1})N(y_i|\beta_{1,0}+\beta_{1,1}x_{i1},\sigma_1^2)+(1-p(x_{i1}))N(y_i|\beta_{2,0}+\beta_{2,1}x_{i1},\sigma_2^2)
$,
where $p(x_{i,1})={\tau_1 exp(-\frac{\tau_1^2}{2}(x_{i,1}-\mu_1)^2 )}/{[\tau_1 exp(-{\tau_1^2}/{2}(x_{i,1}-\mu_1)^2 )+\tau_2 exp(-{\tau_2^2}/{2}(x_{i,1}-\mu_2)^2 )]},$ with $\beta_1=(0,1)^T$, $\sigma_1^2={1}/{16}$, $\beta_2=(4.5,0.1)^T$, $\sigma_1^2={1}/{8}$, $\mu_1=4$, $\mu_2=6$,$\tau_1=\tau_2=2$. Let $p=20$ and ${x}_i=(x_{i1},\dots,x_{ip})^\top \overset{\text{i.i.d.}}{\sim} \text{N}_p({\mu},{\Sigma})$, 
with mean ${\mu}=(4,\ldots,4)^{T}$ and covariance matrix ${\Sigma}$ specified as following. Let $\Sigma_{hh}=4$ for all $h$ and partition the covariate indices into the odd and even sets: 
$
\mathcal{O}=\{1,3,5,\dots,19\}$,
 $\mathcal{E}=\{2,4,6,\dots,20\}
$.
Within-group covariates are correlated with correlation $0.75$ while across-group covariates are independent.

\begin{figure}[!t]
\centering
\subfigure{
\centering
\includegraphics[width=0.38\textwidth]{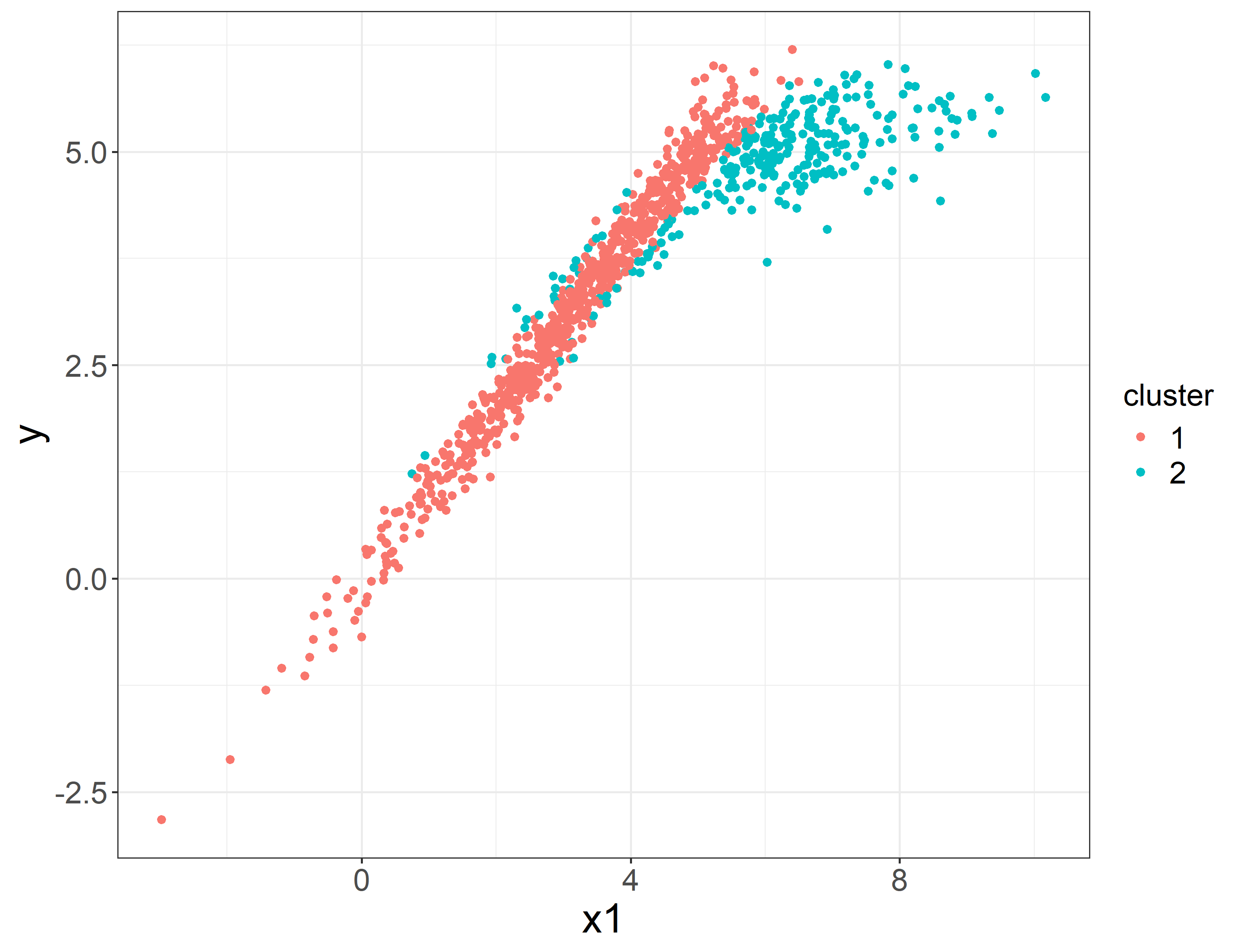}
}
\subfigure{
\centering
\includegraphics[width=0.38\textwidth]{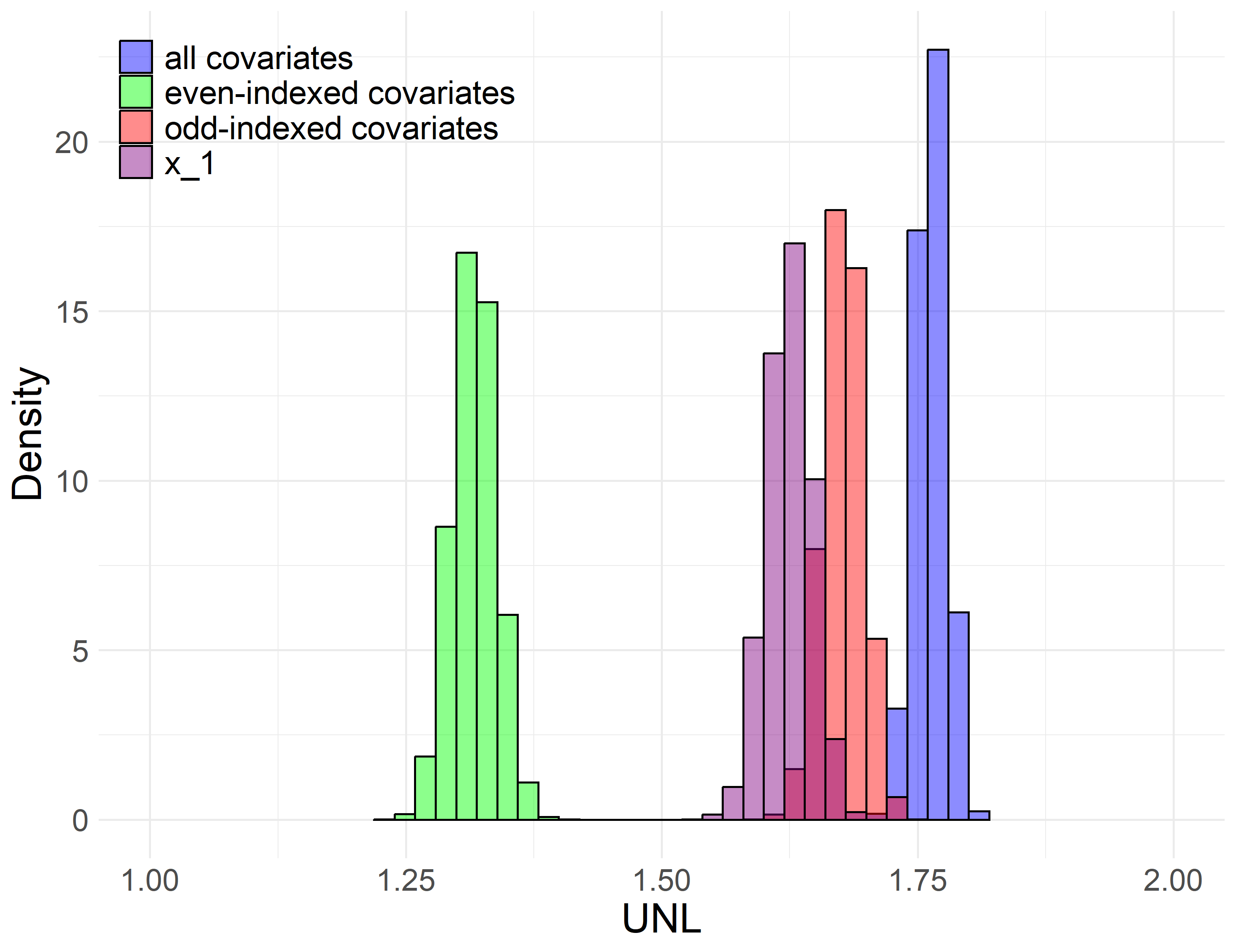}
}

\centering
\caption{Left: the representative partition of Example D inferred from the LDDP model. Right: the histograms of the estimated UNL of the covariates of Example D.}
\label{example_D_UNL_plots}
\end{figure}

\begin{figure}[!t]
\centering
\subfigure{
\centering
\includegraphics[width=0.22\textwidth]{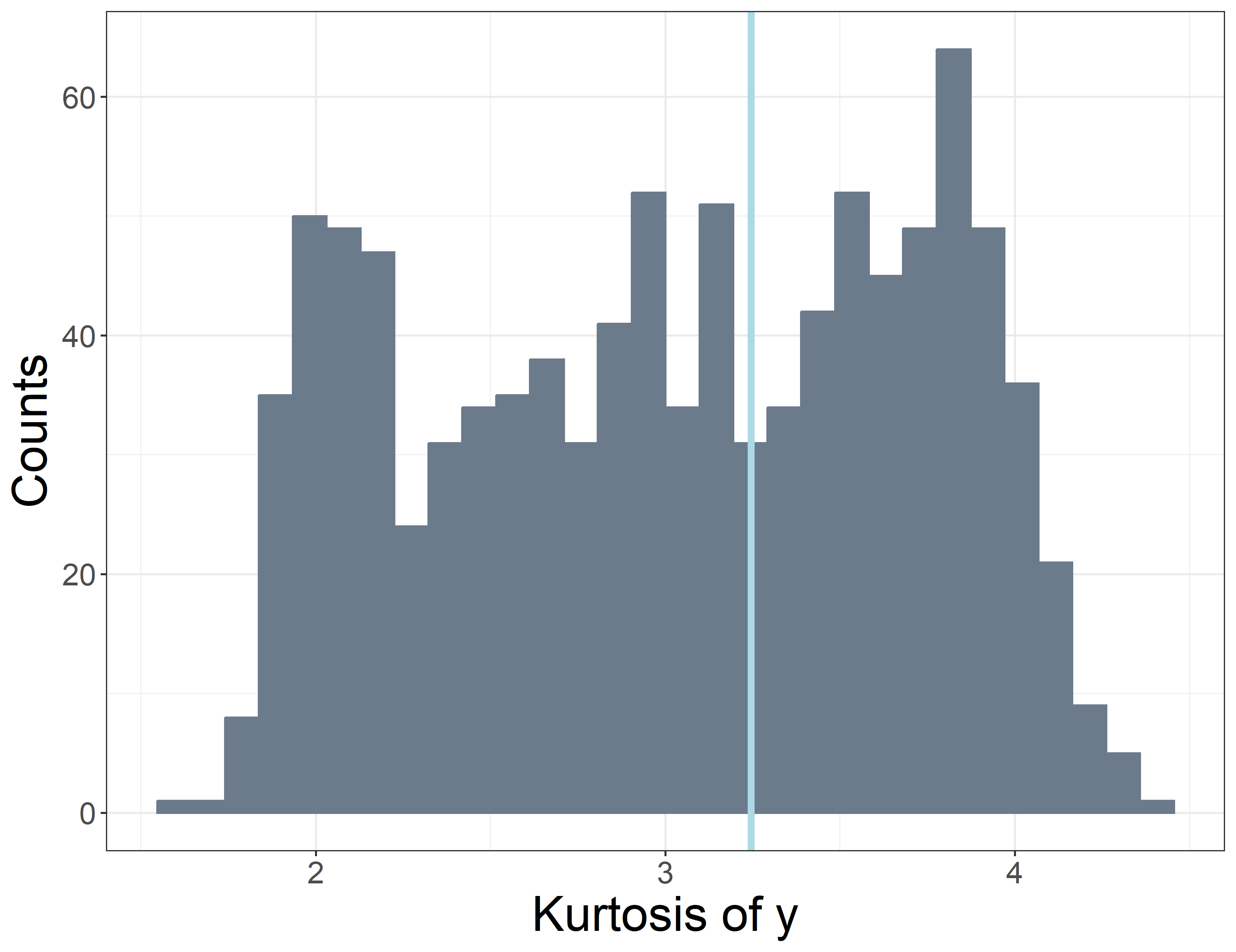}
}
\subfigure{
\centering
\includegraphics[width=0.22\textwidth]{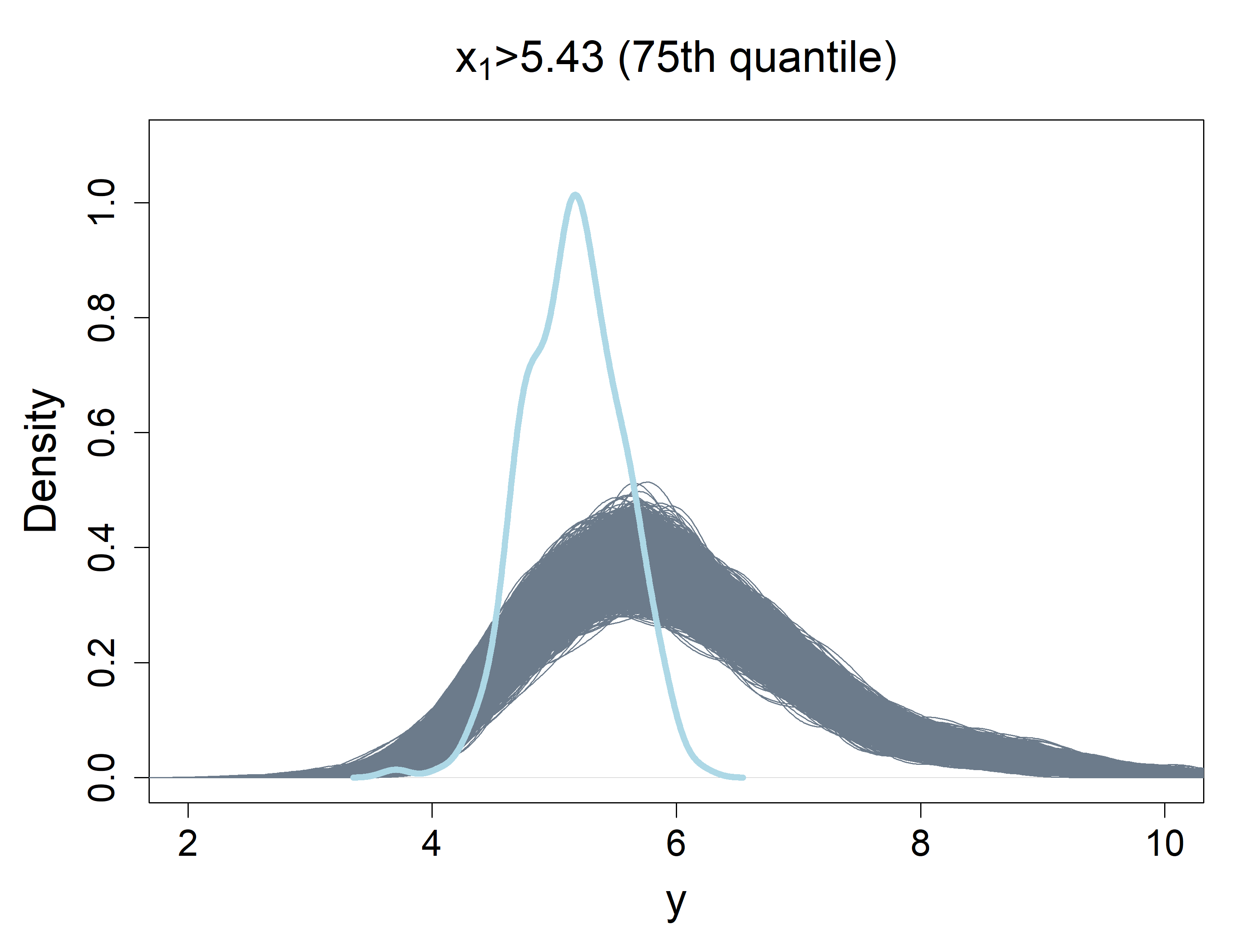}
}
\subfigure{
\centering
\includegraphics[width=0.22\textwidth]{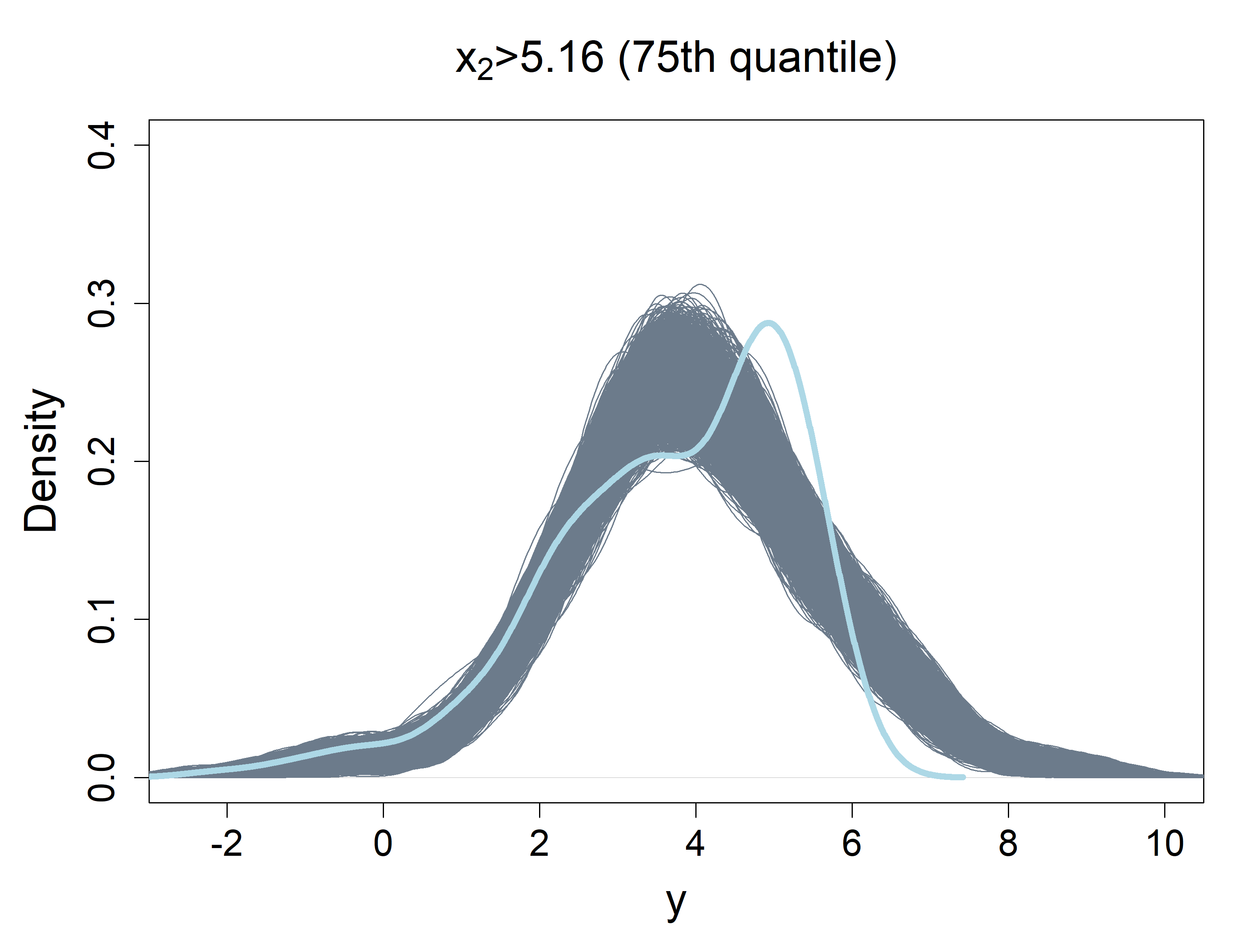}
}
\subfigure{
\centering
\includegraphics[width=0.22\textwidth]{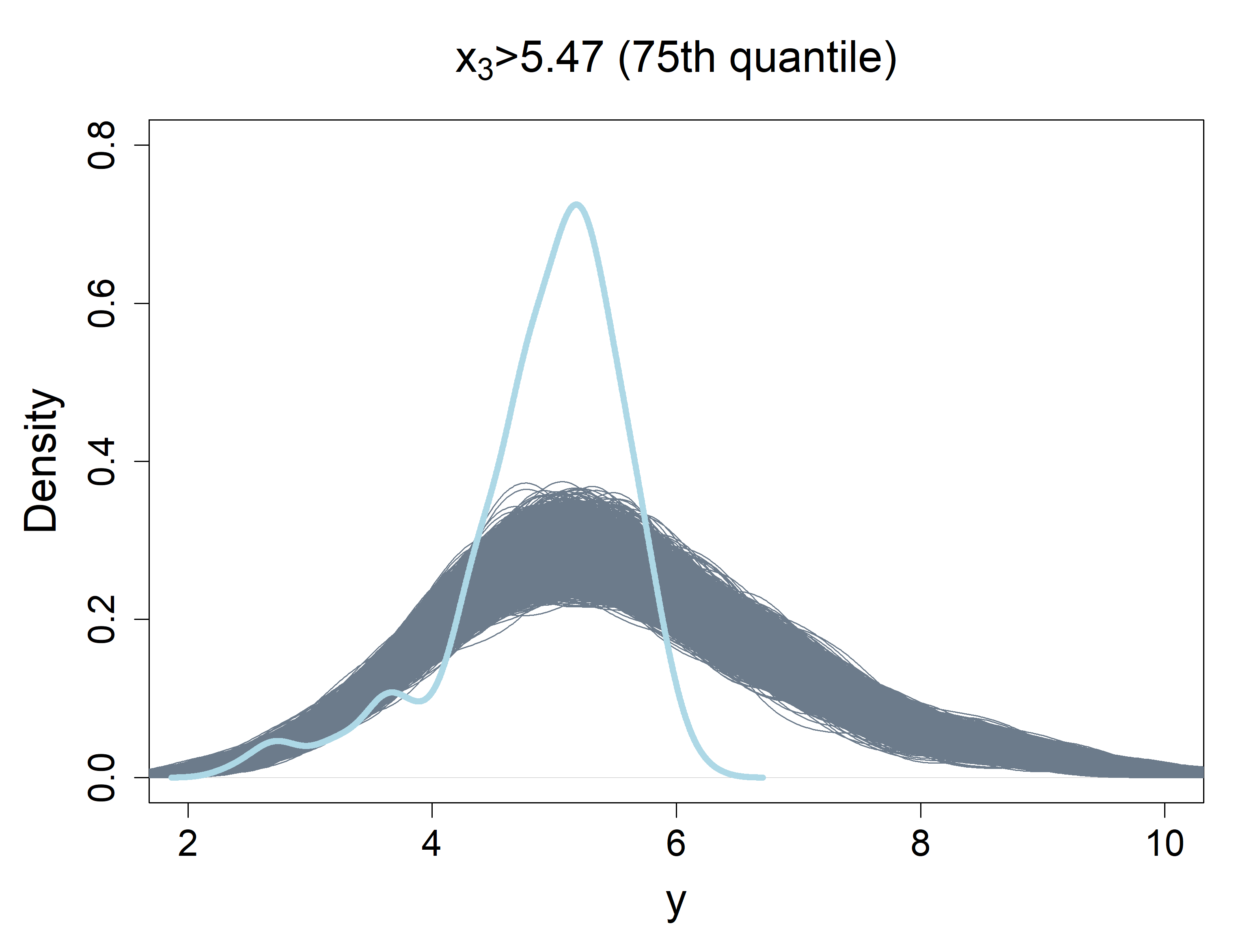}
}

\centering
\caption{Example D: posterior predictive checks for selected statistics (kurtosis and kernel density estimates), with the estimated statistics (light blue), shown alongside estimates from 5,000 datasets drawn from the posterior predictive distribution (grey).} 
\label{example_D_post_checks}
\end{figure}

\begin{figure}[!t]
\centering
\subfigure{
\centering
\includegraphics[width=0.38\textwidth]{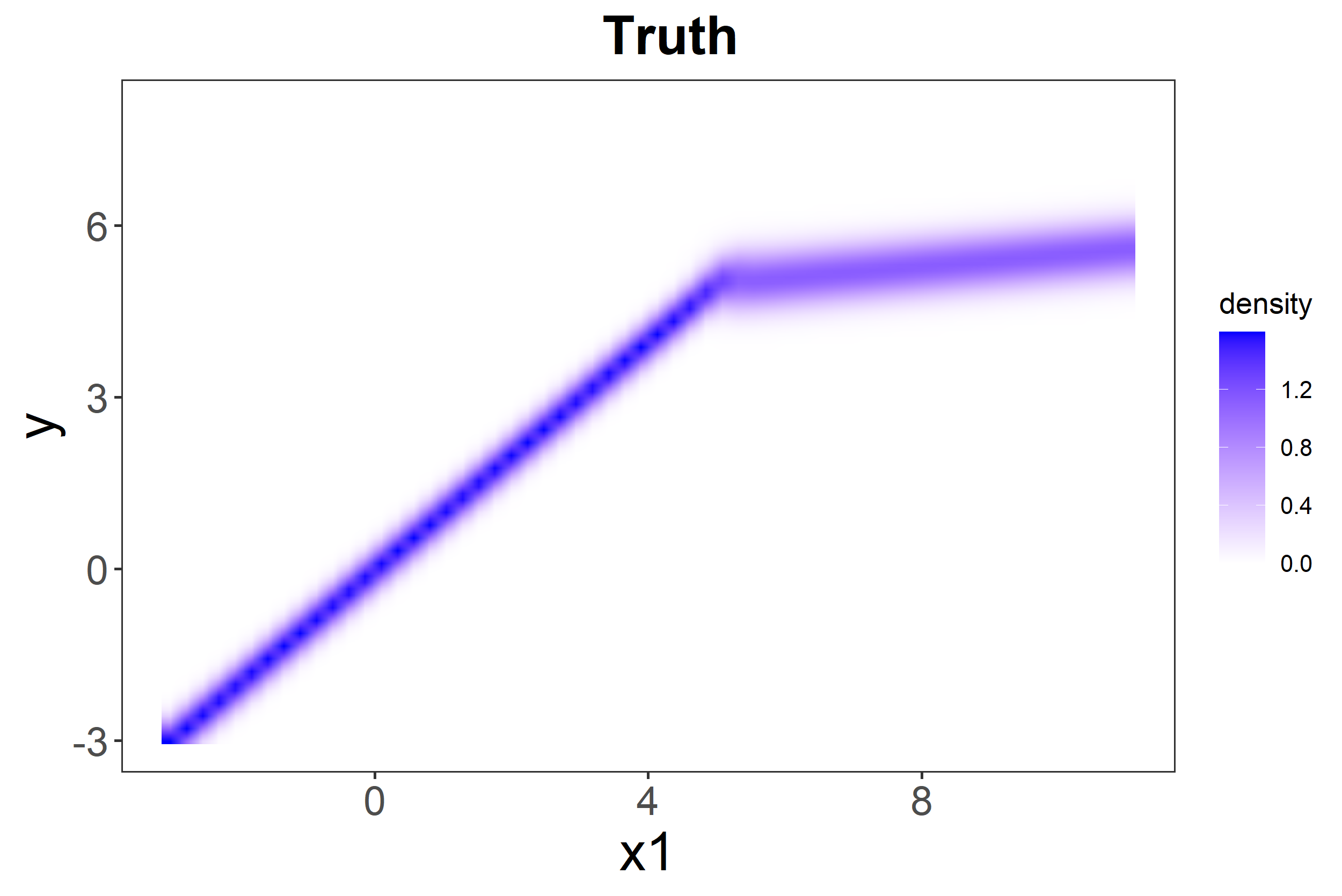}
}
\subfigure{
\centering
\includegraphics[width=0.38\textwidth]{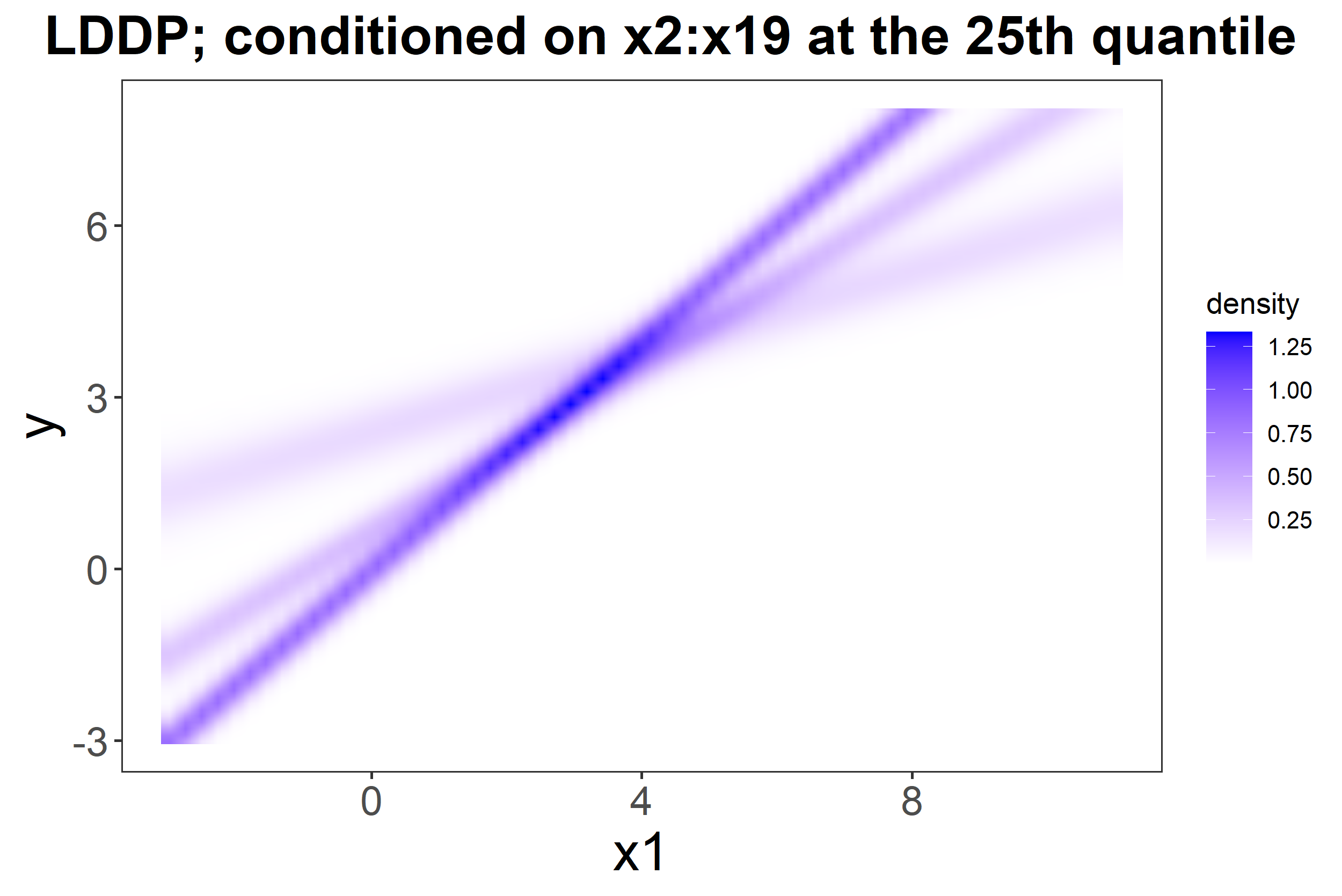}
}

\centering
\caption{Example D: heatmap of the true density regression function and the estimated density regression function conditioned on $x_2$ to $x_{19}$ at the $25\%$ quantiles.}
\label{example_D_heatmap_plots}
\end{figure}

As shown in Figure \ref{example_D_UNL_plots}, the LDDP mixture model identifies two clusters in the representative partition. 
Although one can visually detect some dependence of the partition on \(x_1\), it is difficult to assess how the clustering partition depends simultaneously on all covariates, or on the odd/even-indexed subsets of covariates. However, from the UNL values (Figure \ref{example_D_UNL_plots}), we can detect that the LDDP-induced partition exhibits much stronger dependence on the odd-indexed covariates than on the even-indexed ones. This correctly reflects the data-generating mechanism, in which only the odd-indexed covariates influence cluster allocation. Also, within the odd-indexed group, the UNL for \(x_1\) is only slightly smaller than the UNL for all odd covariates jointly, indicating that most of the partition–covariate dependence is driven by \(x_1\). This matches the construction of the example, where \(x_1\) is the only covariate that informs the cluster allocation probabilities.

In this case, the poor predictive performance of the LDDP model, which ignores covariates in the mixing weights, is therefore anticipated given the large UNL value. The discrepancy between the true conditional density and the posterior predictive conditional density is evident in Figure \ref{example_D_heatmap_plots}. Although several summary statistics of the posterior predictive distribution do not deviate severely from the observed values, the lack of fit becomes clear in the kernel density estimates of the posterior predictive distributions conditional on several covariate intervals, particularly for conditioning on \(x_1\) or \(x_3\) above their 75th percentiles. Some selected posterior predictive checks are presented in Figure \ref{example_D_post_checks} while the others are shown in the Appendices. It might be easy to overlook the lack of fit if not choosing the correct posterior check statistics, while the UNL can flag this. 

While posterior predictive checks of conditional densities may reveal such model inadequacies, the UNL offers insights into where to make model adjustments, particularly whether or not to allow weights to depend on specific covariates. Moreover, in real-world applications with high-dimensional covariates, kernel density estimates within conditioning intervals can become unstable when the sample size is small. In these settings, UNL provides a valuable complement to posterior predictive checks, offering a scalar diagnostic that can flag poor predictive performance of the single-weight mixture model when the partition depends strongly on covariates.

\section{Real data illustrations}\label{sec:app}
This section illustrates the broad applicability of the UNL for assessing partition-covariate dependence in model-based clustering, through two real-data applications. In all cases, posterior inference relies on 10000 iterations, after discarding the first 10000 as a burn-in. To evaluate the underlap via importance sampling (Algorithm \ref{alg:importance_sampling}), we use a Monte Carlo sample size of $M=5000$.

\subsection{Breast cancer genomic analysis}
Breast cancer is a heterogeneous disease whose development and progression is complex and not fully understood \citep{yersal2014biological}. Characterizing this heterogeneity is essential for improved personalized treatment, and clustering methods provide a natural framework for uncovering the latent structure.

A substantial literature has investigated prognostic and therapeutic targets in breast cancer, numerous gene expression signatures were suggested as contributing to disease severity and prognosis \citep[e.g.,][]{kallah2025breast,shokoohi2025genetic}. In this application, we focus on how the heterogeneity structure in breast cancer prognosis can be explained by the selected gene expression groups. As our clinical endpoint, we characterize the prognosis by the three clinicopathologic tumor characteristics which are used to calculate the Nottingham prognostic index (NPI) \citep{haybittle1982prognostic}: histological grade (varying from 1–3 with 3 as the most abnormal grade), lymph node stage (varying from 1–3 with 3 as the stage with the most positive lymph nodes), and tumor size (mm).

We applied our methods to the METABRIC dataset, which provides clinical and genomic data on breast tumors from five different hospitals/ research centers in the United Kingdom and Canada \citep{pereira2016somatic}. Guided by prior evidence on prognostic relevance \citep{ayoub2024impact,tang2015decreased}, we focus on two gene groups: \{MET, ESR1, ESR2\} and \{BRCA1, BECN1\}. Of the 2509 patients in the dataset, 2040 had complete data on the tumor characteristics required for the NPI; among these, 1815 also had complete mRNA expression log intensity values for the selected genes. Model-based clustering is performed on the 2040 cases with complete NPI tumor characteristics, whereas estimation of the UNL for the selected genes is restricted to the 1815 cases with complete gene-expression data.

Although the NPI is widely used in clinical practice for prognostic stratification following breast cancer surgery, it is a derived score. Clustering directly on the NPI may therefore reflect the near discrete nature of the score construction rather than its genuine latent structure (see the bottom right panel in Figure \ref{cluster_chara_tumor}). To avoid this, we fit a DPM model to the three tumor characteristics used to compute the NPI. It is an example of the marginal approach of model-based clustering as described in Section 3. The kernel of DPM model takes the product form: $
K(y;\theta)\;=\; \mathrm{N}\bigl(y_1;\mu,\sigma^{2}\bigr)\,
\mathrm{Cat}\bigl(y_2;p_{\text{g1}},p_{\text{g2}},p_{\text{g3}}\bigr)\,
\mathrm{Cat}\bigl(y_3;p_{\text{n1}},p_{\text{n2}},p_{\text{n3}}\bigr)$,
where $y_1$, $y_2$ and $y_3$ represent tumor size, histological grade and lymph node stage, respectively.

\begin{figure}[!t]
\centering
\subfigure{
\centering
\includegraphics[width=0.38\textwidth]{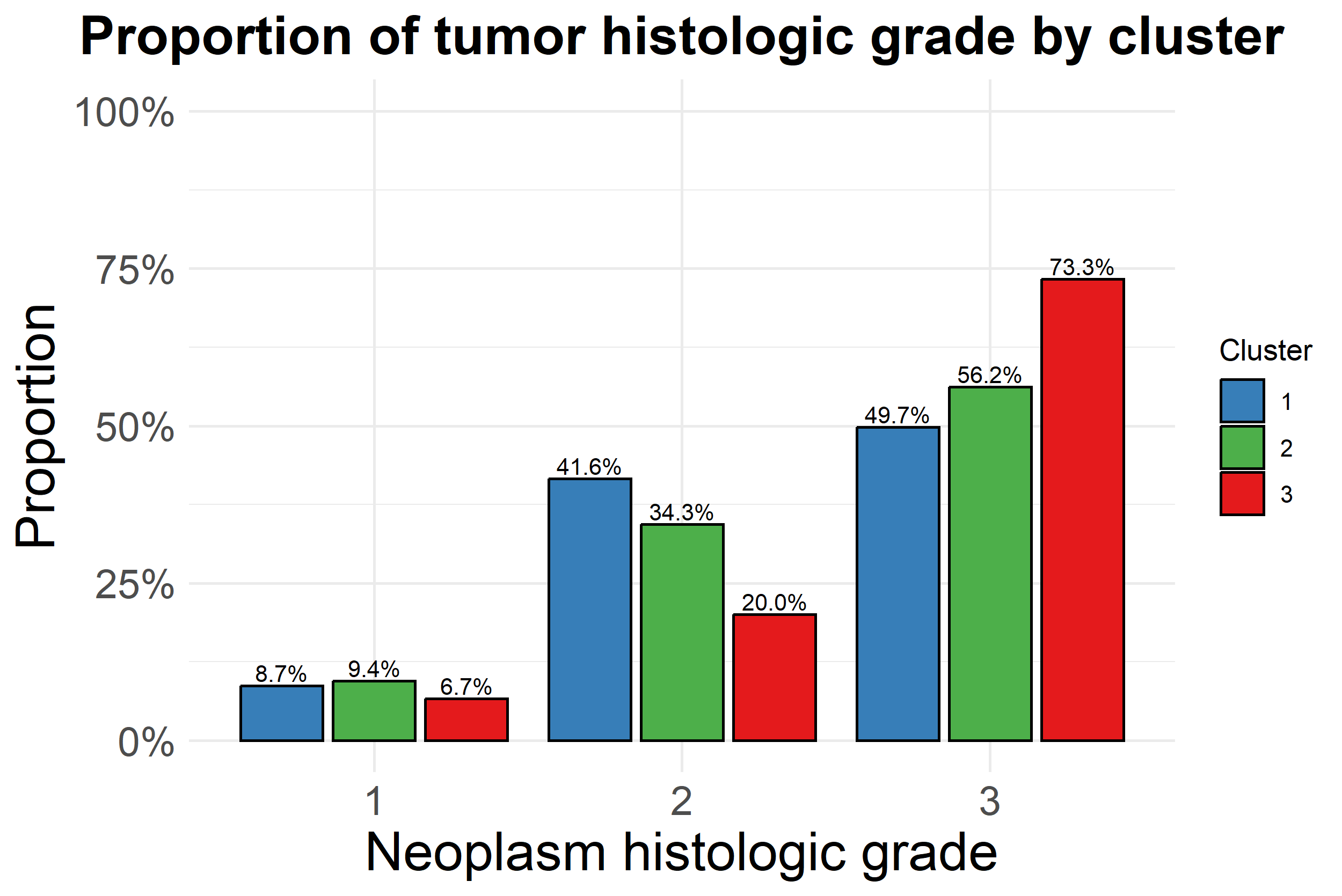}
}
\subfigure{
\centering
\includegraphics[width=0.38\textwidth]{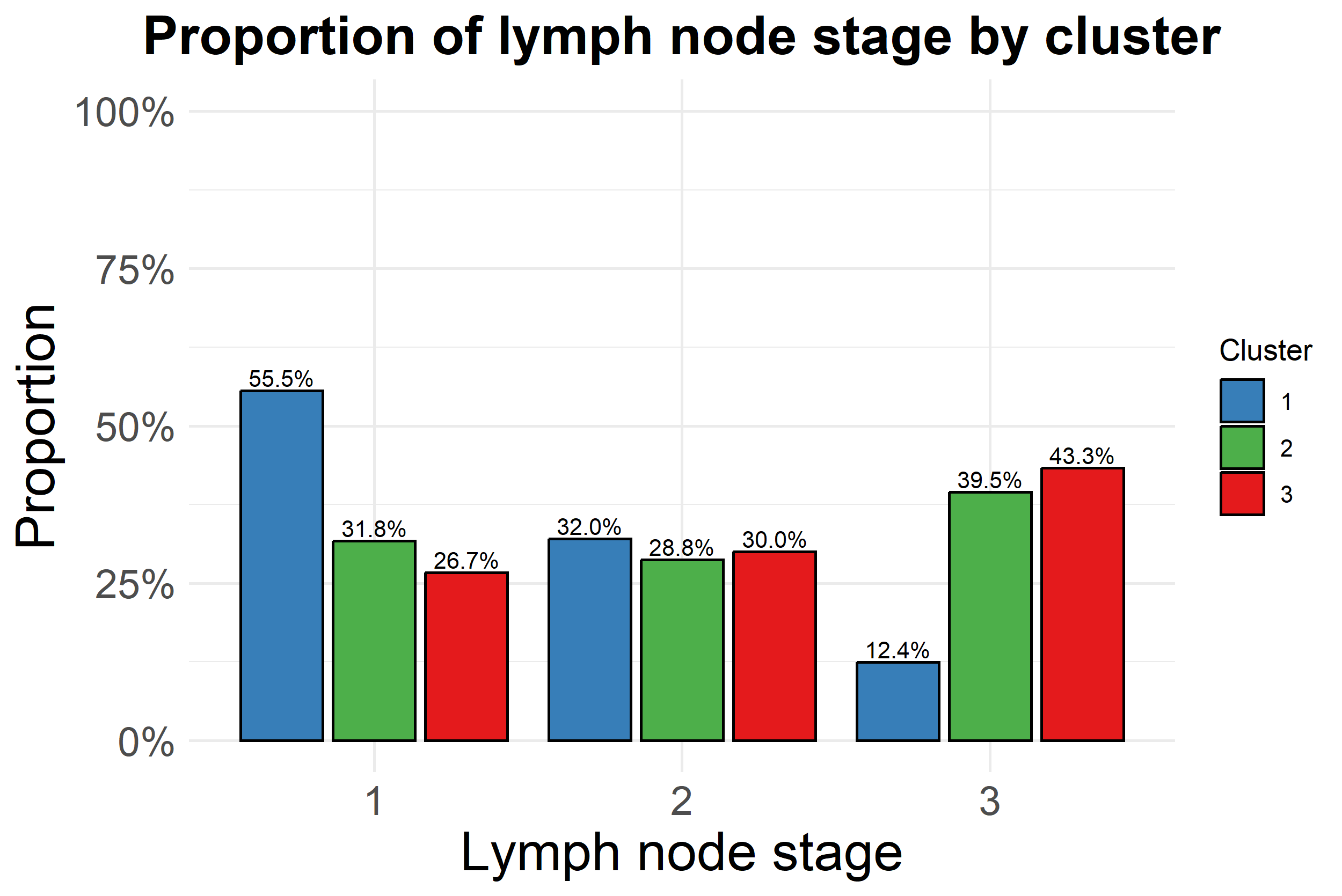}
}\\
\centering
\subfigure{
\centering
\includegraphics[width=0.38\textwidth]{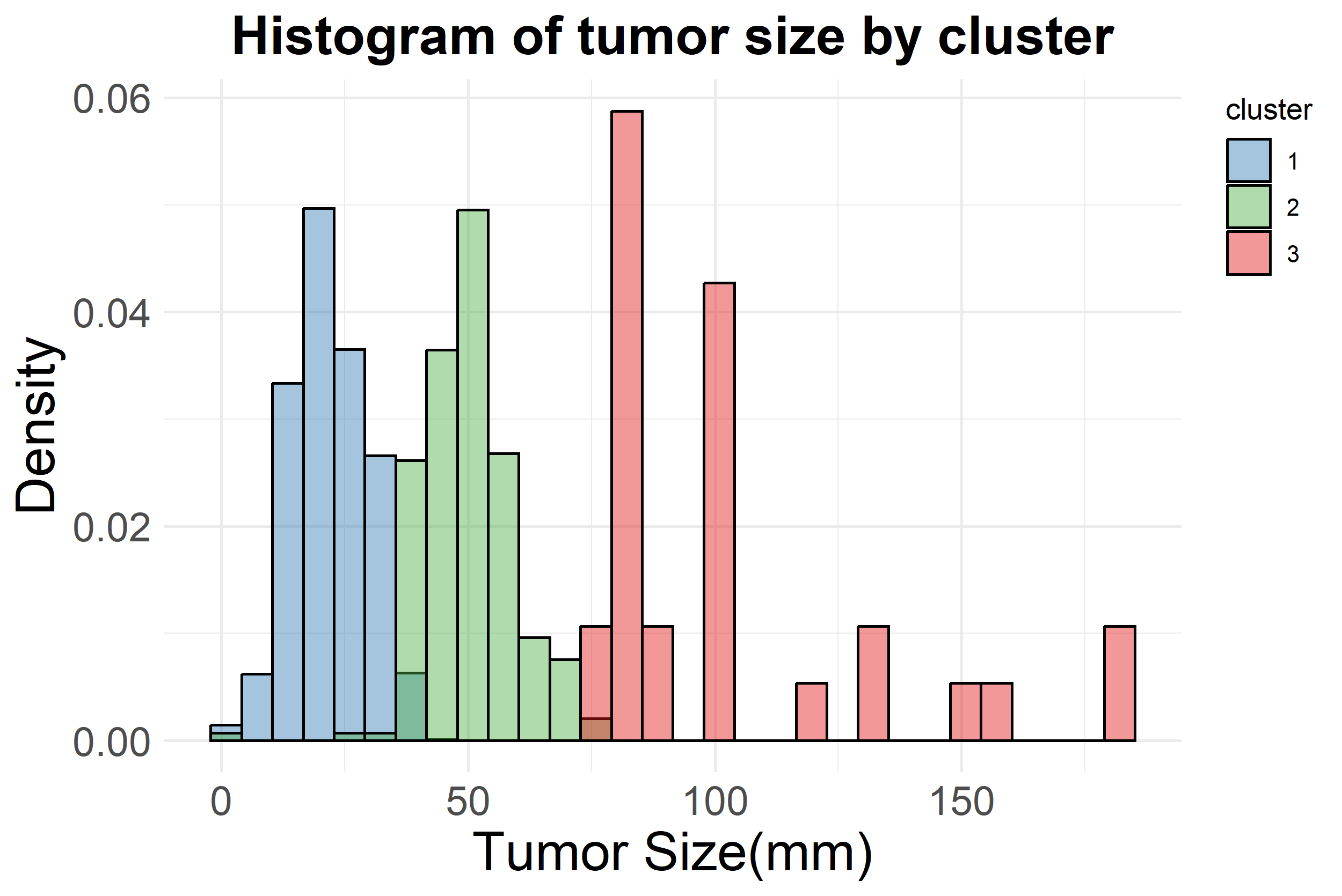}
}
\subfigure{
\centering
\includegraphics[width=0.38\textwidth]{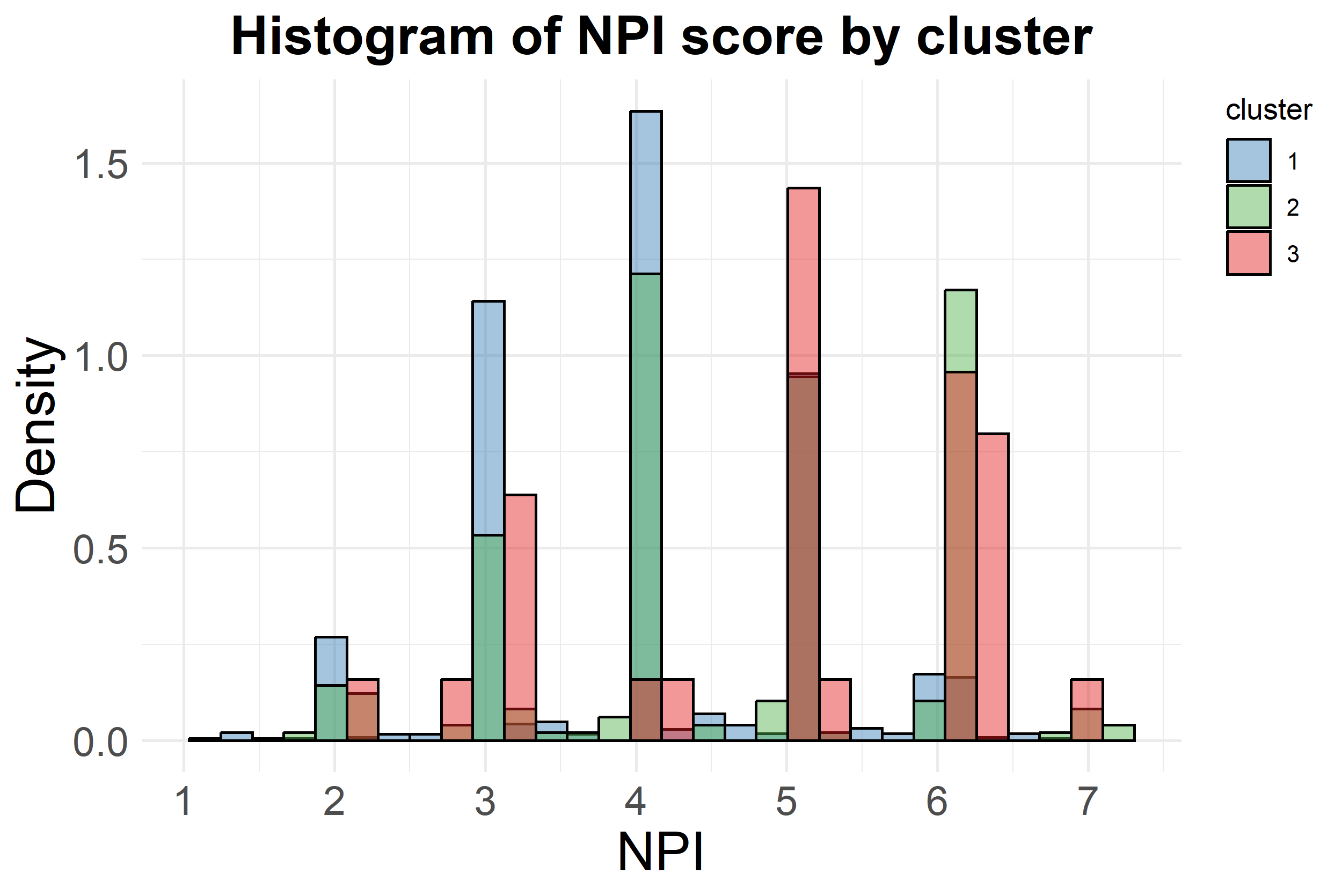}
}
\caption{Cluster profiles of breast cancer prognostic factors, based on the single representative partition.}
\label{cluster_chara_tumor}
\end{figure}

As shown in Figure \ref{cluster_chara_tumor}, cluster 1 exhibits the smallest tumor sizes, the lowest proportion of the most abnormal histological grade (grade 3), and the lowest proportion of advanced lymph node stage (stage 3); cluster 3 shows the opposite pattern. 
Therefore, cluster 1 corresponds to the most favorable prognosis and cluster 3 to the least favorable. These findings suggest that patients in cluster 3 may warrant consideration of more aggressive therapies.

We next quantify the dependence of the clustering on the two gene groups, \{MET, ESR1, ESR2\} and \{BRCA1,BECN1\}, via the UNL. Figure \ref{UNL_BC_gene_hist} displays the histograms of the estimated UNL for each gene groups. The UNL of \{MET, ESR1\} is larger than UNL of \{MET, ESR2\} with high probability ($\text{P}(\text{UNL}_{\text{MET\&ESR1}}>\text{UNL}_{\text{MET\&ESR2}})=0.99$), suggesting that the coexpression of MET and ESR1 is more informative for the clinicopathologic prognosis clusters than MET and ESR2. \cite{tang2015decreased} reported an association with prognosis for BECN1 but not BRCA1. Consistent with this, the estimated marginal underlap distribution for BECN1 is slightly shifted upward relative to that for BRCA1 (despite substantial overlap in their histograms), with $\text{P}(\text{UNL}_{\text{BECN1}}>\text{UNL}_{\text{BRCA1}}=0.71)$. It should be mentioned that, by themselves, the mRNA expression of the selected genes only accounts for a small fraction of the three prognosis groups, incorporating the expression of other gene groups and/or using multi-omics methods is suggested for further investigations.

\begin{figure}[!t]
\centering
\subfigure{
\centering
\includegraphics[width=0.38\textwidth]{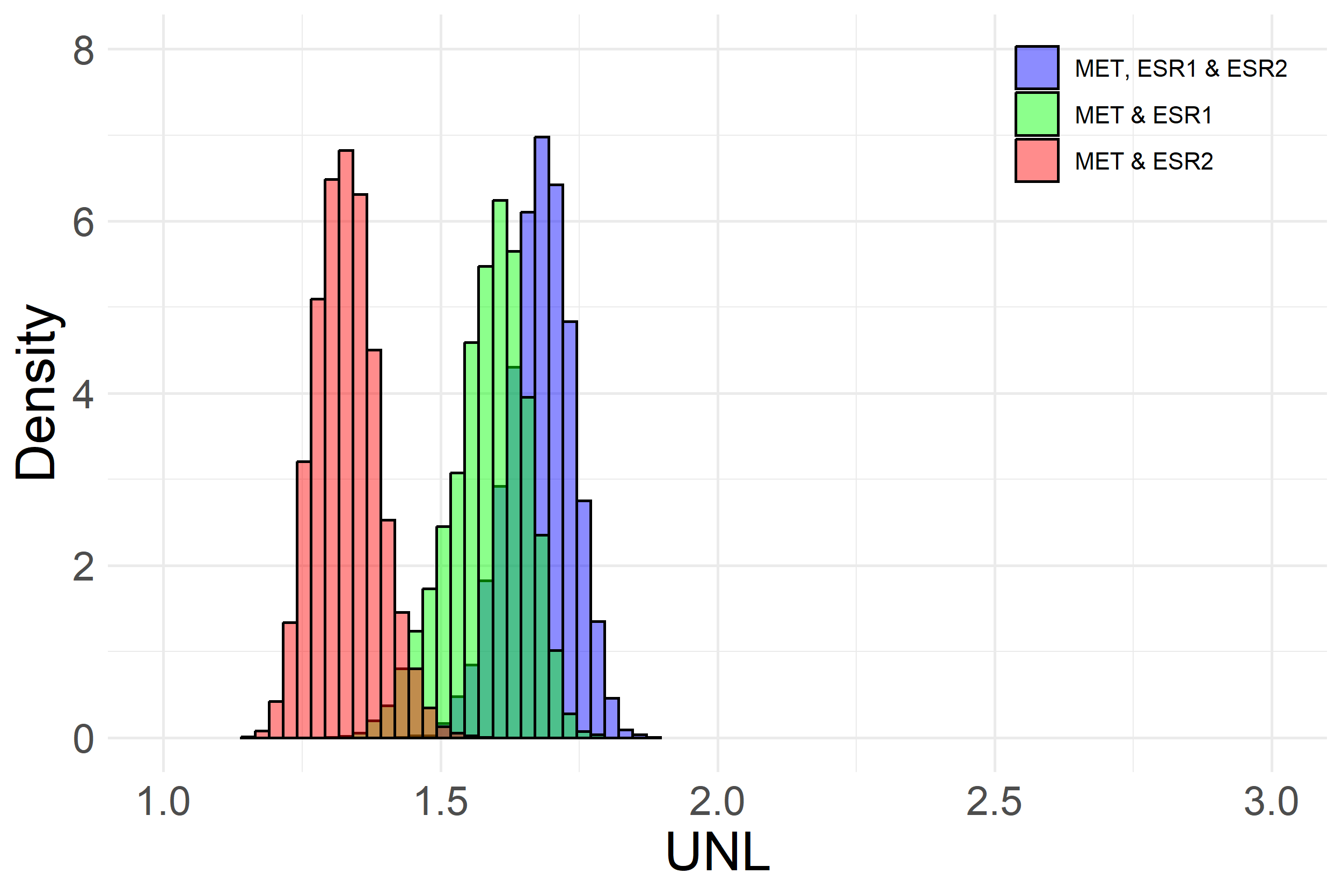}
}
\subfigure{
\centering
\includegraphics[width=0.38\textwidth]{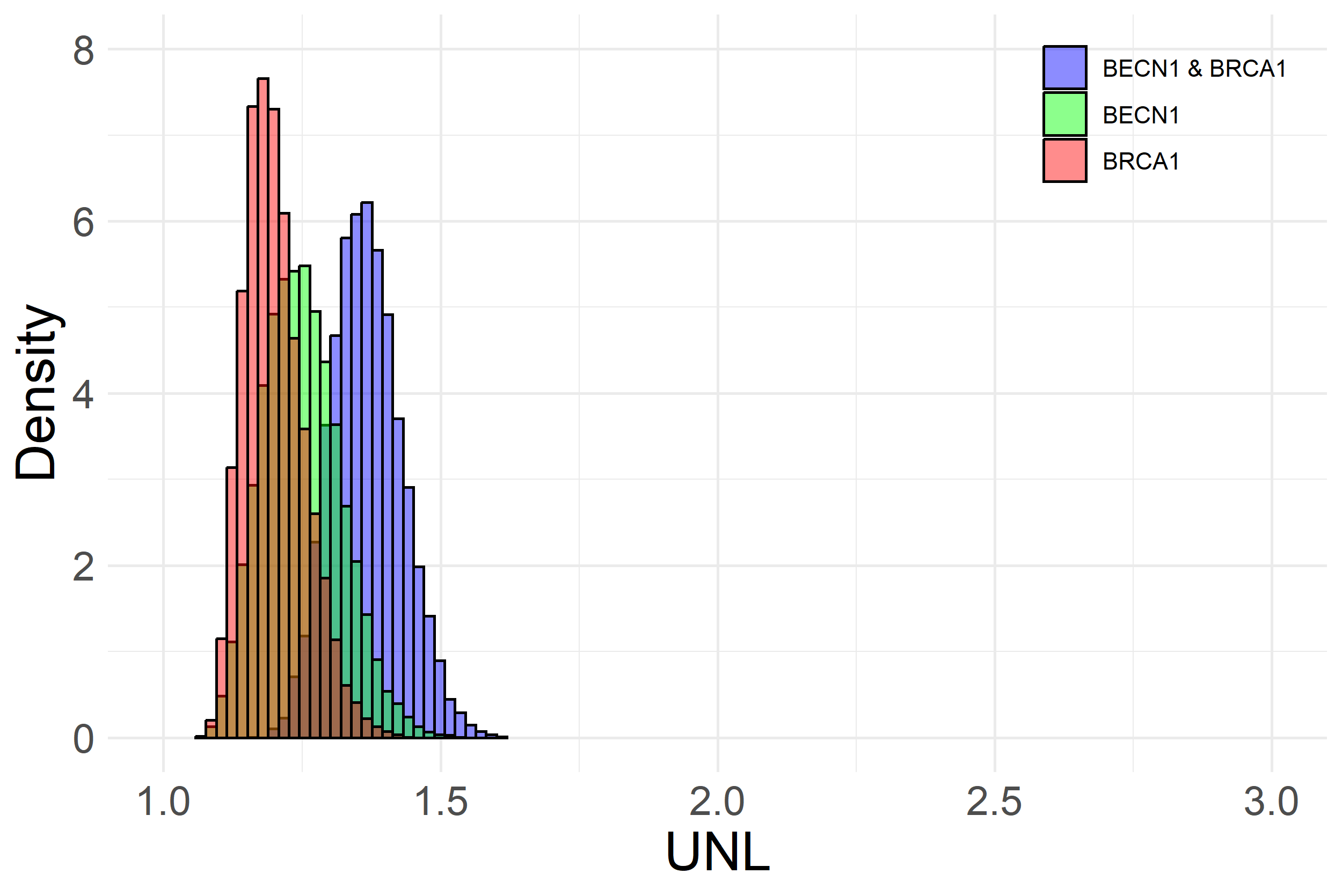}
}
\centering
\caption{Left panel: the histogram of estimated underlap for the \{MET, ESR1, ESR2\} gene group based on the single representative partition. Right panel: the histogram of estimated underlap for the \{BRCA1, BECN1\} gene group based on the single representative partition.}
\label{UNL_BC_gene_hist}
\end{figure}

\subsection{Pregnancy term toxicology analysis}
Our second application, drawn from epidemiology studies, has been used in numerous works for validating Bayesian nonparametric density regression models \citep[e.g.,][]{dunson2008kernel,rigon2021tractable,rodriguez2025density}. The primary objective is to assess how maternal exposure to dichlorodiphenyldichloroethylene (DDE), a metabolite of the pesticide DDT, relates to gestational age at delivery (GAD). Despite concerns about adverse health effects, DDT remains in use for malaria control in some regions, motivating careful investigation of potential preterm pregnancy risks \citep{rodriguez2025density}.

The dataset contains 2312 women with third–trimester maternal serum DDE concentrations and GAD recorded, and deliveries prior to 37 completed weeks are considered preterm while deliveries after 42 completed weeks are considered postterm. While many epidemiologic studies dichotomize GAD at the 37–week threshold to model preterm birth as a binary outcome, such dichotomization discards information about GAD tail behavior. In particular, morbidity and mortality risks increase sharply as GAD decreases within the preterm range, making the left tail of the GAD distribution especially salient \citep{rigon2021tractable}. Accordingly, we analyze how the entire conditional distribution of GAD varies with DDE, with particular focus on left–tail changes.

As shown in several studies of this dataset \citep[see e.g.,][]{dunson2008kernel,rigon2021tractable}, the probability that GAD falls below clinical thresholds varies with DDE. This naturally motivates mixtures with covariate-dependent weights, in which exposure can alter the prevalence of latent subpopulations (e.g., preterm/term/postterm). However, if these probability shifts arise primarily from within-group dependence, allowing highly flexible weight functions may be unnecessary and invites overfitting at substantial computational cost. We therefore adopt the LDDP model, which holds the weights constant across DDE while allowing component means to depend on DDE. The LDDP model is computationally simpler and yields stable inference for conditional densities when the signal of covariates in the composition of clusters is weak. 


We apply the LDDP model to the data $(y_i, x_i) = (\text{GAD}_i
, \text{DDE}_i)$, for $i = 1, \cdots, 2312$, and consider a simple linear structure for the mean of each mixture component: $\mu(x_i)=\beta_0+\beta_1 x_i.$ Here we focused on the single representative partition from the LDDP clustering shown in Figure \ref{dde_cluster_partition_dde_UNL_hist}. 91.8\% of women in Cluster 1 delivered between 37 and 42 weeks, 72.3\% in Cluster 2 delivered before 37 weeks, and everyone in Cluster 3 delivered after 42 weeks. We therefore refer to Clusters 1–3 as the term, preterm, and postterm groups, respectively. 

We assessed the dependence of the inferred partition on DDE using the UNL. Figure \ref{dde_cluster_partition_dde_UNL_hist} also displays the histogram of the estimated UNL for DDE. The posterior sampling distribution of the UNL concentrates near one (posterior mean \(=1.21\)), indicating little residual dependence of the partition on DDE. This is consistent with the single-weights assumption of the LDDP, under which cluster weights do not vary with DDE. Posterior predictive checks of several statistics and conditional densities (in the Appendices) further support an adequate fit of the LDDP model.


\begin{figure}[!t]
\centering
\subfigure{\includegraphics[width=0.38\textwidth]{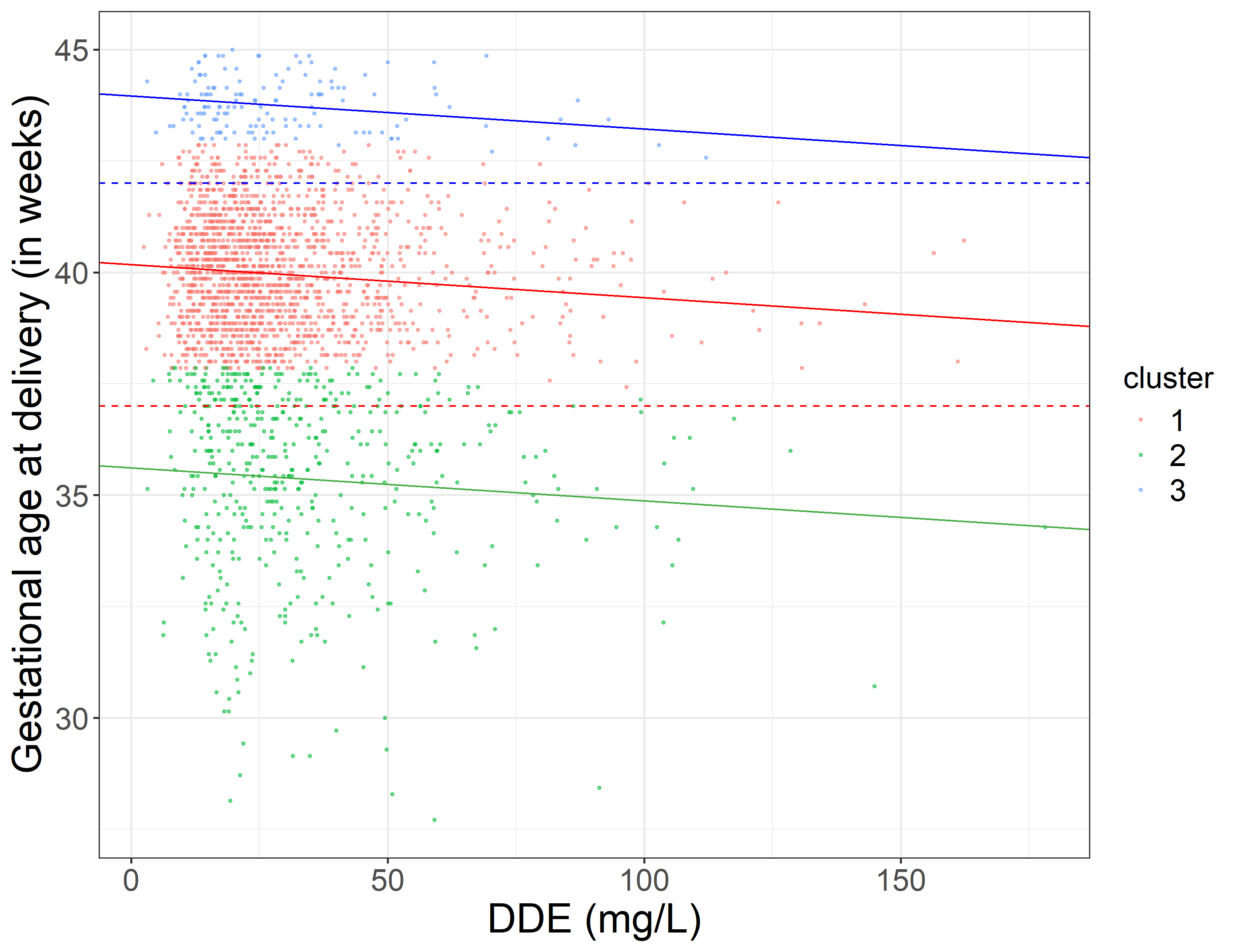}
}
\subfigure{\includegraphics[width=0.38\textwidth]{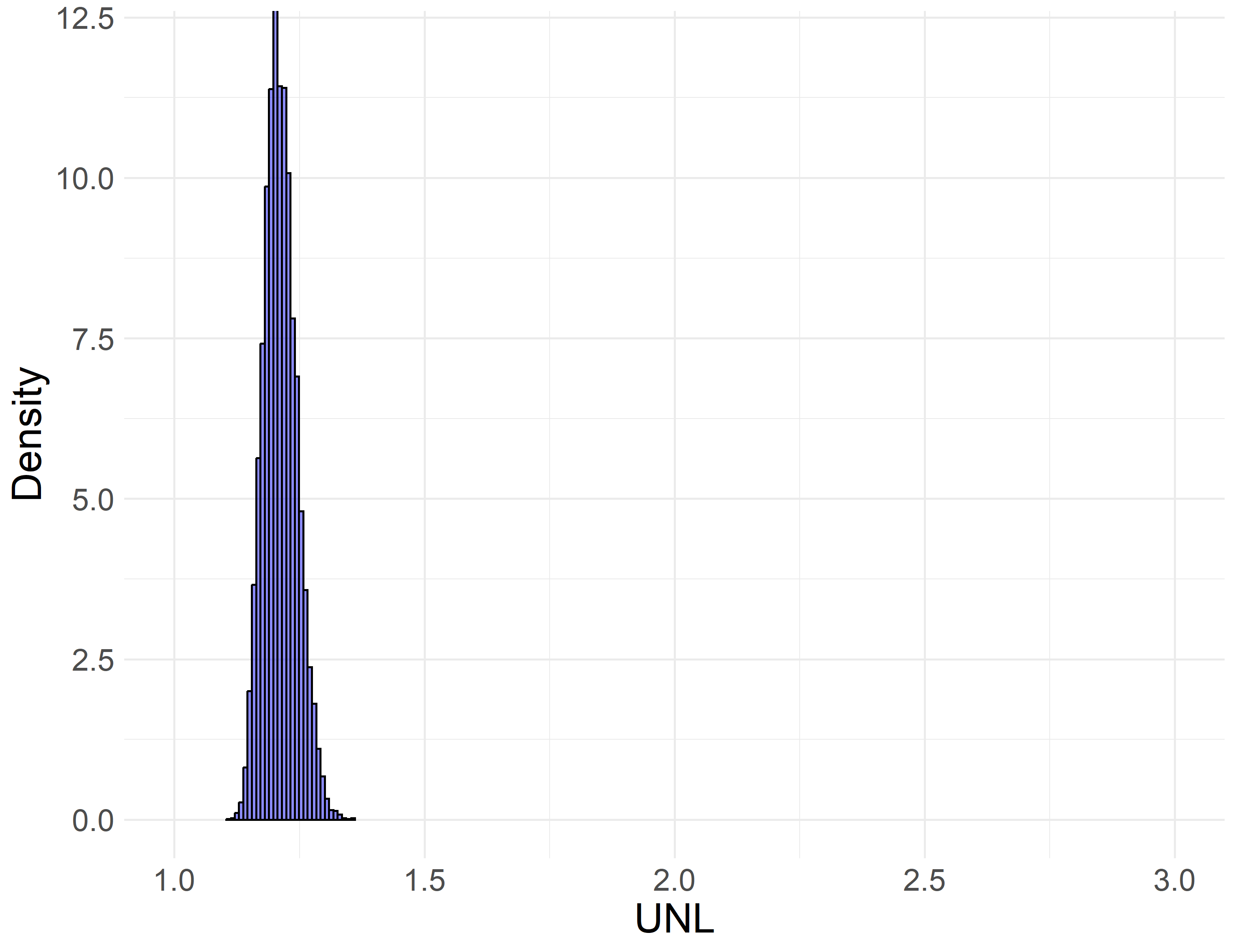}
}
\caption{Left: Single representative partition structure of the LDDP clustering of the pregnancy term toxicology analysis. The red and blue lines represent the 37-week preterm threshold and the 42-week postterm threshold used in clinical practice, respectively. Right: The histogram of estimated UNL for DDE based on the partition shown in the left.}
\label{dde_cluster_partition_dde_UNL_hist}
\end{figure}

Based on the UNL values, there is little evidence that subpopulation prevalences change with DDE levels, therefore letting the weights vary with DDE risks overfitting with negligible gain. We conduct a cross-check with a more flexible model, the logit stick-breaking process (LSBP) model \citep{rigon2021tractable} which allows a flexible structure of covariate-dependent weights. We employ, for the standardized data, the same hyperparameter values used in \cite{rigon2021tractable}. In Figure \ref{dde_conditional_dens}, the posterior conditional densities obtained using the LDDP model and the LSBP model shows great similarity, even at the extreme 0.99 quantile, indicating little practical gain from allowing DDE-dependent weights. 

\begin{figure}[!t]
\centering
\subfigure{
\centering
\includegraphics[width=0.22\textwidth]{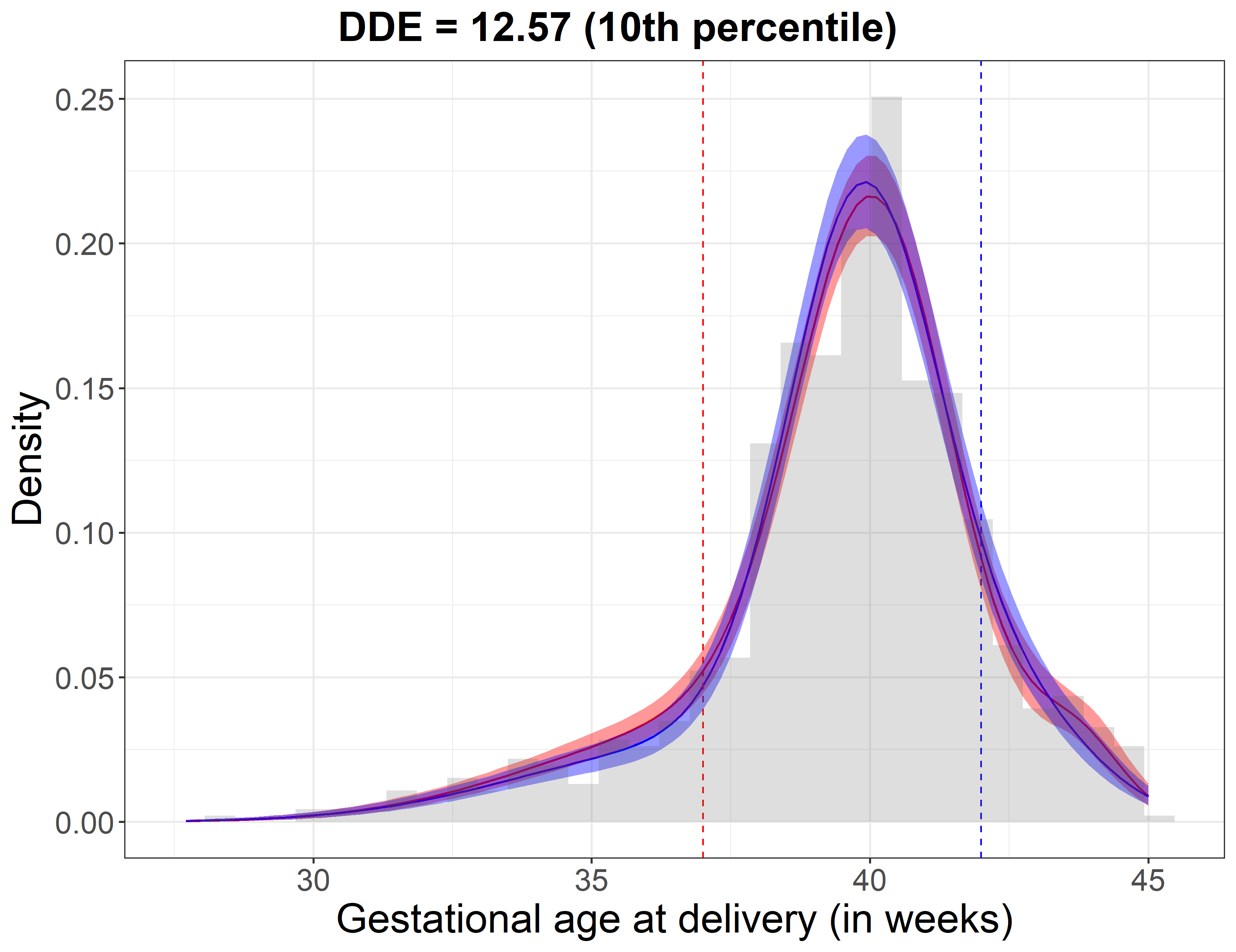}
}
\subfigure{
\centering
\includegraphics[width=0.22\textwidth]{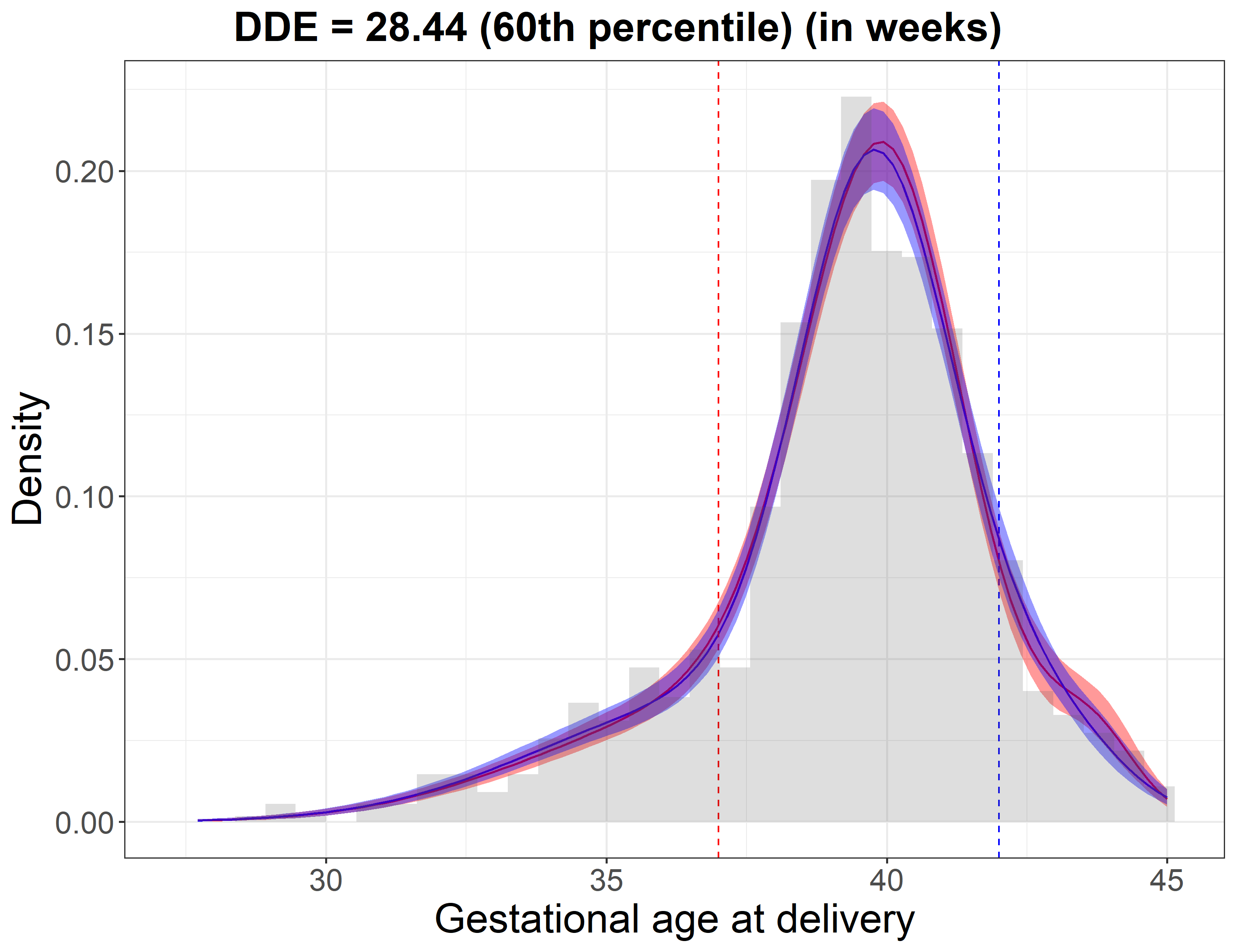}
}
\subfigure{
\centering
\includegraphics[width=0.22\textwidth]{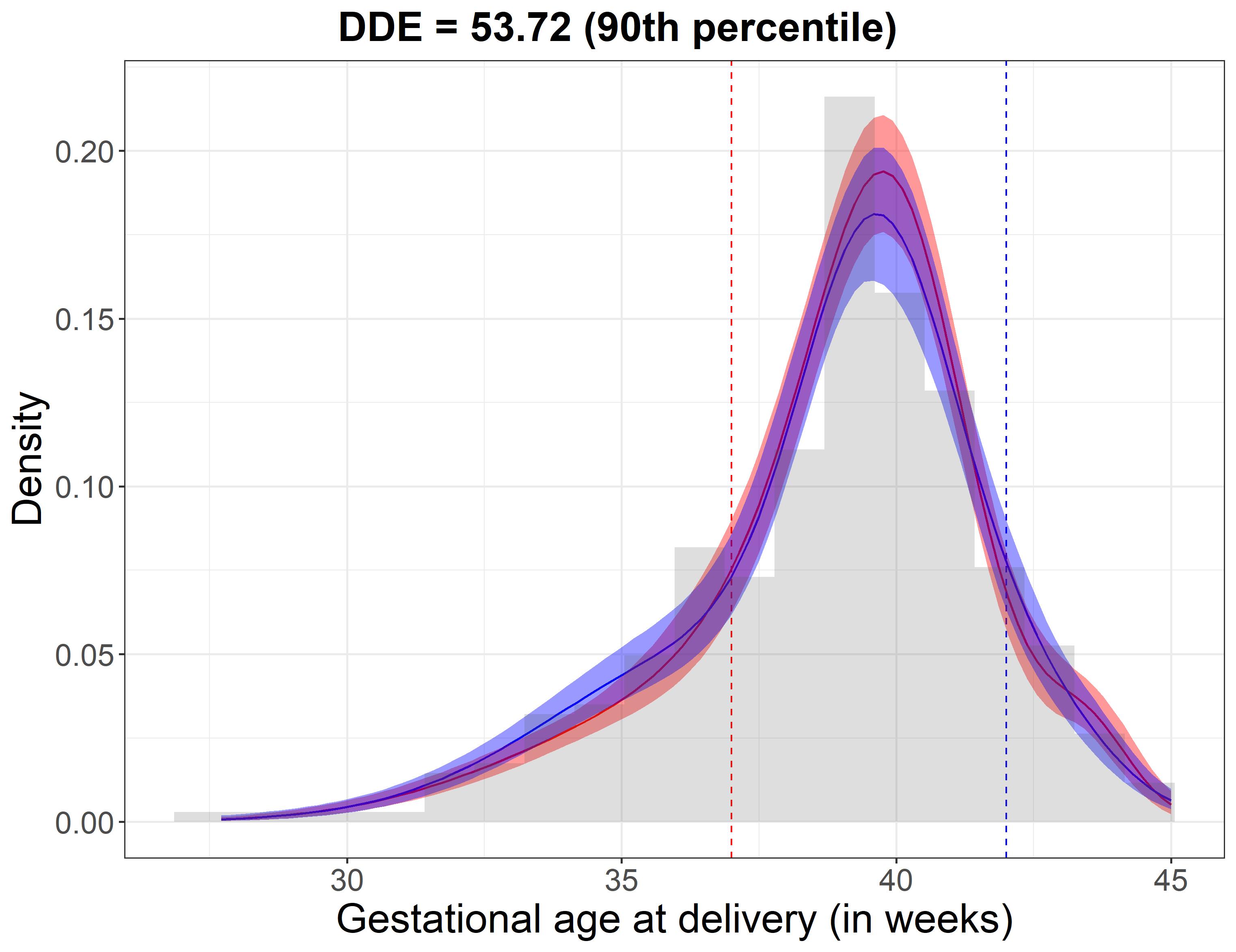}
}
\subfigure{
\centering
\includegraphics[width=0.22\textwidth]{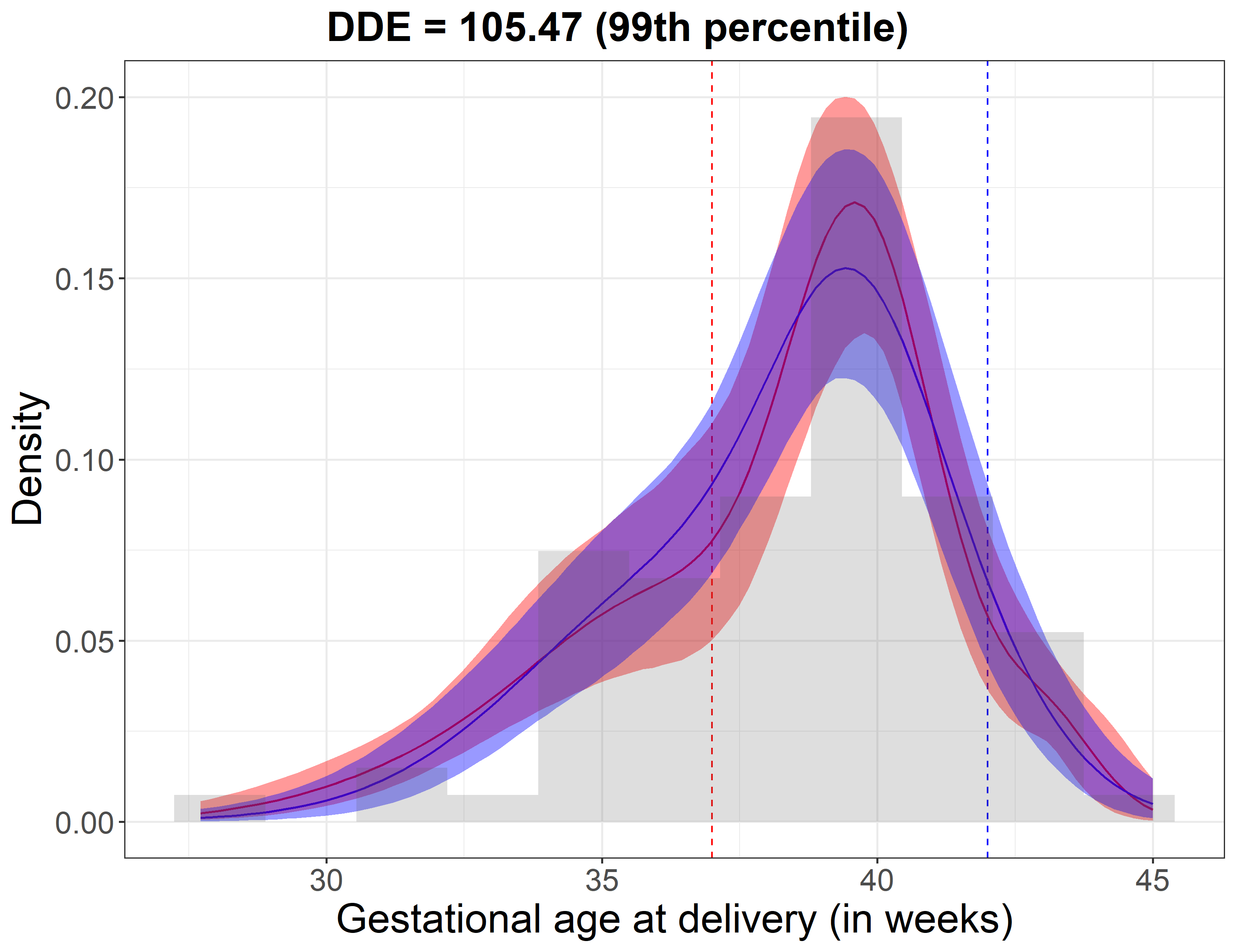}
}

\centering
\caption{Lines and ribbons: posterior mean and 95\% pointwise credible interval of
the conditional density for GAD for selected percentiles of DDE. Blue and red represents results from the LDDP and  
LSBP models, respectively. Grey area: conditional histograms obtained by grouping the GAD values according to a binning of the DDE with cutoffs at the central values
of subsequent quantiles \citep[see][for details]{rigon2021tractable}.}
\label{dde_conditional_dens}
\end{figure}

\section{Discussion}
We generalize the UNL as a multi-group measure of distributional separation for multivariate variables, applicable to various settings, and we further interpret it as a measure of statistical dependence between the group label and variables of interest. We provide a concrete application of the UNL in clustering: assessing partition-covariate dependence. We focus on model-based clustering for illustration. 
In particular, for mixture models with single weights, the UNL provides a targeted scalar diagnostic: it can flag when the inferred partition exhibits substantial dependence on covariates, thereby indicating that covariate-dependent mixing weights are warranted. While posterior predictive checks can diagnose general predictive inadequacy (with appropriately  chosen statistics, which is not always straightforward), the UNL directly highlights whether allowing the weights to depend on specific covariates is useful to adjust and improve the model.

Several extensions are of interest for future work. First, the variance bound derived in this paper is conditional on the estimated densities. In the Bayesian framework, uncertainty from density estimation is propagated naturally to the UNL estimate. However, propagation of partition uncertainty from Bayesian clustering to the UNL could be further explored, beyond the use of a single representative partition or the multiple representative partitions produced by the WASABI method \citep{balocchi2025understanding}. Methodologically, localized variants of the UNL could be developed to target separation over restricted regions of the variable space. Finally, it would be of interest to explore the use of UNL as a criterion for selecting the target dimensions in dimension reduction and for comparing competing dimension reduction methods, particularly in settings where preserving group separation is the primary objective.

The code necessary to reproduce the results 
in Section 3 and 
Section 4 is available at \url{https://github.com/Zhaoxi99/Generalized_UNL_clustering}.

\section*{Acknowledgements}
We acknowledge T Cannings, F Lindgren, V Elvira, and G Clart\'e for helpful discussions. SW was supported by the Engineering and Physical Sciences Research Council (EPSRC), grant no. EP/Y028783/1. 

\appendix

\section{Proof of the Bayes risk prior-shift bound}
For completeness, we restate all propositions and properties listed in the manuscript.

\noindent\textbf{Proposition \ref{prior-shift_bound}} (Bayes risk prior-shift bound).
Let \(\pi=(\pi_1,\ldots,\pi_K)\) and \(\rho=(1/K,\ldots,1/K)\) be the two group prevalence vectors, and let the associated Bayes risk be denoted by $\mathrm{BR}^*_{\pi}$ and $\mathrm{BR}^*_{\rho}$, respectively. Then
$
|\mathrm{BR}^*_{\pi}-\mathrm{BR}^*_{\rho}|
\le
\|\pi-\rho\|_1,
$
or equivalently, 
\[
\left|
\mathrm{BR}^*_{\pi}
-
\left\{
1-\frac1K\operatorname{UNL}(f_1,\ldots,f_K)
\right\}
\right|
\le
\left\|
\pi-\frac1K\mathbf 1
\right\|_1,
\]
where \(\mathbf 1=(1,\ldots,1)^\top\). Thus the difference between the
Bayes risk under group prevalences \(\pi\) and the balanced-prior Bayes
risk associated with the UNL is controlled by the size of the prior shift away from the uniform prior.

\begin{proof}
By the triangle inequality and the Lipschitz property of the maximum function,
\[
\begin{aligned}
|\mathrm{BR}^*_{\pi}-\mathrm{BR}^*_{\rho}|
&=
\left|
\int_{\mathcal X}
\left\{
\max_k \pi_k f_k(x)
-
\max_k \rho_k f_k(x)
\right\}
\,d\nu(x)
\right| \\
&\le
\int_{\mathcal X}
\left|
\max_k \pi_k f_k(x)
-
\max_k \rho_k f_k(x)
\right|
\,d\nu(x) \\
&\le
\int_{\mathcal X}
\max_k
|(\pi_k-\rho_k)f_k(x)|
\,d\nu(x) \\
&\le
\int_{\mathcal X}
\sum_{k=1}^K
|\pi_k-\rho_k|f_k(x)
\,d\nu(x)\le
\sum_{k=1}^K
|\pi_k-\rho_k|
=
\|\pi-\rho\|_1=\left\|
\pi-\frac1K\mathbf 1
\right\|_1.
\end{aligned}
\]

\end{proof}

\section{Connecting the underlap coefficient with total variation}

\noindent\textbf{Proposition~\ref{prop:UNL_TV_K2}} (UNL's relationship with total variation distance when $K=2$).
Suppose $P_1$ and $P_2$ are probability measures absolutely continuous with respect to $\nu$, with Radon-Nikodym derivatives $f_1$ and $f_2$, then $\text{UNL}(f_1,f_2)=1+\text{TV}(P_1,P_2)$

\begin{proof}
When $K=2$, the UNL has a direct analytical link with the Weitzman's overlap coefficient \citep{weitzman1970measures}:
\label{proof:UNL_TV_K2}
\begin{equation}
    \text{UNL}(f_1,f_2)=2-\text{OVL}(f_1,f_2)
    \label{UNL_OVL_2CLASS}
\end{equation}
According to \citep{schmid2006nonparametric}, the Weitzman's overlap could be expressed as 
\begin{equation}
    \text{OVL}(f_1,f_2)=1-\text{TV}(P_1,P_2)
    \label{ovl_tv}
\end{equation}
Substituting \eqref{ovl_tv} into \eqref{UNL_OVL_2CLASS}, we have proposition \ref{prop:UNL_TV_K2} proved.
\end{proof}

\noindent\textbf{Proposition \ref{prop:UNL_equals_TV}} (UNL equals total variation norm of a vector-valued measure consisted of K probability measures).
The UNL of the densities $f_1,\dots,f_K$ satisfies
\[
\mathrm{UNL}(f_1,\dots,f_K)
=
\int_{\mathcal X}\max_{1\le k\le K}f_k(x)d\nu(x)
=
\lVert\mu\rVert_{\mathrm{TV},\infty}.
\]

\begin{proof}
We prove the two inequalities and hence the equality.

\textbf{(i) Upper bound:}
Let $\pi=\{A_1,\dots,A_m\}$ be an arbitrary finite partition of $\mathcal X$.
For every block $A_j$ and every index $k$ we have the pointwise inequality
$\max_{r}f_r(x)\ge f_k(x)$, hence by monotonicity of the integral
\[
\int_{A_j}\max_{r}f_r d\nu
\ge
\int_{A_j} f_kd\nu.
\]
Taking the maximum over $k$ on the right hand side gives
\(
\int_{A_j}\max_{r}f_r d\nu \ge \max_k\int_{A_j} f_k d\nu.
\)
Summing over $j$ yields
\[
\int_{\mathcal X}\max_{r}f_r\,d\nu
=
\sum_{j=1}^{m}\int_{A_j}\max_{r}f_r\,d\nu
\;\ge\;
\sum_{j=1}^{m}\max_{k}\int_{A_j}f_k\,d\nu
=
\sum_{A\in\pi}\max_{k}P_k(A).
\]
Since the right--hand side is bounded above by the supremum over all partitions, we have
\begin{equation}
\int_{\mathcal X}\max_{r}f_r\,d\nu
\;\ge\;
\lVert\mu\rVert_{\mathrm{TV},\infty}.
\label{inequal_1_unl_tv}
\end{equation}

\textbf{(ii) Lower bound via a canonical partition:}
Define measurable sets
\[
A_k
=\Bigl\{\,x\in\mathcal X:
   f_k(x)\ge f_r(x)\text{ for all }r,
   \text{ and }k\text{ is the smallest such index}\Bigr\},
\quad k=1,\dots,K.
\]
The "smallest-index" rule breaks ties, so $\{A_1,\dots,A_K\}$ is a measurable partition of $\mathcal X$.  
On each $A_k$ we have
$\max_r f_r(x)=f_k(x)$, hence
\begin{equation*}
    \sum_{k=1}^{K} \max_r \int_{A_k} f_r\,d\nu
=
\sum_{k=1}^{K}\int_{A_k} f_k\,d\nu
=
\int_{\mathcal X}\max_r f_r\,d\nu.
\end{equation*}
Because this value is attained by the specific partition $\{A_k\}$, it is not smaller than the supremum taken over all partitions:
\begin{equation}
    \lVert\mu\rVert_{\mathrm{TV},\infty}
=
\sup_{\pi}\sum_{A\in\pi}\max_k P_k(A)
\;\ge\;
\int_{\mathcal X}\max_r f_r\,d\nu.
\label{inequal_2_unl_tv}
\end{equation}

Inequalities \eqref{inequal_1_unl_tv} and \eqref{inequal_2_unl_tv} together imply
\(
\int_{\mathcal X}\max_{r}f_r\,d\nu
=
\lVert\mu\rVert_{\mathrm{TV},\infty},
\)
proving the claim.
\end{proof}

\section{Properties of the underlap coefficient}

\noindent\textbf{Property \ref{marginal_monotonicity_UNL_property_continuous}}(Marginal monotonicity of UNL for continuous variables).
Let $I\subseteq \{1,2,\dots,p\}$ index a subset of variables and $I^c$ denote its complement. For each class $k=1,\cdots,K$, $X_k = (X_{kI},X_{kI^c})$ is partitioned into two sets, taking values in $R^{|I|}$ and $R^{p-|I|}$, respectively. 
Define the marginal probability density function of $X_{kI}$ as:
\[
g_k(x_{I})
= \int_{R^{p-|I|}}
f_k(x)d x_{I^c},\qquad k=1,\cdots,K. 
\]
The UNL has the property:
$
\mathrm{UNL}(f_1,\dots,f_K)
\ge
\mathrm{UNL}(g_1,\dots,g_K).
$

\begin{proof}
By definition of the underlap coefficient for continuous variables,
\[
  \mathrm{UNL}(f_1,\dots,f_K)
  =
  \int_{R^|I|}\int_{\mathrm{R}^{p-|I|}}
    \max_{1\le k\le K} f_k(x)\,dx_{I^c}\,dx_{I},
\]
where we have written $x = (x_{I},x_{I^c})$ according to the
decomposition $\mathrm{R}^p = \mathrm{R}^{|I|} \times \mathrm{R}^{p-|I|}$.
Similarly, the underlap coefficient based on the marginal densities
$g_k$ of $X_{kI}$ is
\[
  \mathrm{UNL}(g_1,\dots,g_K)
  =
  \int_{\mathrm{R}^{|I|}}
    \max_{1\le k\le K} g_k(x_{I})\,dx_{I}
  =
  \int_{\mathrm{R}^{|I|}}
    \max_{1\le k\le K}
      \int_{\mathrm{R}^{p-|I|}} f_k(x)\,dx_{I^c}\,dx_{I}.
\]
Fix an arbitrary value $x_{I}\in \mathrm{R}^{|I|}$. For each class $k=1,\dots,K$,
define
\[
  h_k(x_{I^c})
  =
  f_k(x_{I},x_{I^c}),
  \qquad x_{I^c} \in \mathrm{R}^{p-|I|}.
\]
For every $x_{I^c}$ we clearly have
\[
  \max_{1\le k\le K} h_k(x_{I^c})
  \ge
  h_k(x_{I^c})
  \quad\text{for each }k,
\]
and therefore, integrating over $x_{I^c}$ gives
\[
  \int_{\mathrm{R}^{p-|I|}}
    \max_{1\le k\le K} h_k(x_{I^c})\,dx_{I^c}
  \ge
  \int_{\mathrm{R}^{p-|I|}} h_k(x_{I^c})\,dx_{I^c}
  \quad\text{for each }k.
\]
Taking the maximum over $k$ on the right-hand side yields
\[
  \int_{\mathrm{R}^{p-|I|}}
    \max_{1\le k\le K} h_k(x_{I^c})\,dx_{I^c}
  \ge
  \max_{1\le k\le K}
    \int_{\mathrm{R}^{p-|I|}} h_k(x_{I^c})\,dx_{I^c}.
\]
Rewriting in terms of $f_k$ and $g_k$, we obtain, for every fixed
$x_{I}\in \mathrm{R}^{|I|}$,
\[
  \int_{\mathrm{R}^{p-|I|}}
    \max_{1\le k\le K} f_k(x)\,dx_{I^c}
  \ge
  \max_{1\le k\le K}
    \int_{\mathrm{R}^{p-|I|}} f_k(x)\,dx_{I^c}
  =
  \max_{1\le k\le K} g_k(x_{I}).
\]
Finally, integrate both sides with respect to $x_{I}\in \mathrm{R}^{|I|}$:
\[
\begin{aligned}
  \mathrm{UNL}(f_1,\dots,f_K)
  &= \int_{\mathrm{R}^{|I|}}\int_{\mathrm{R}^{p-|I|}}
       \max_{1\le k\le K} f_k(x)\,dx_{I^c}\,dx_{I} \\
  &\ge \int_{\mathrm{R}^{|I|}}
          \max_{1\le k\le K} g_k(x_{I})\,dx_{I}
   =
     \mathrm{UNL}(g_1,\dots,g_K).
\end{aligned}
\]
This proves the marginal monotonicity property of UNL in the continuous case.
\end{proof}

\begin{property}[Marginal monotonicity of UNL for discrete variables]
\label{marginal_monotonicity_UNL_property_discrete}
For each class $k=1,\dots,K$, let $X_k=(X_{k1},\dots,X_{kp})$ take values in the
finite or countably infinite product space
\(
S=S_1\times\cdots\times S_p
\),
with group-specific probability mass function $p_k$ as in
Definition~\ref{def:UNL-discrete}.
Fix an index set $J\subseteq\{1,\dots,p\}$ and write $J^c=\{1,\dots,p\}\setminus J$.
Let
\[
S_J=\prod_{j\in J} S_j,
\qquad
S_{J^c}=\prod_{j\in J^c} S_j,
\qquad
x=(x_J,x_{J^c})\in S_J\times S_{J^c}=S.
\]
Define the marginal probability mass function of the subvector $X_{kJ}=(X_{kj})_{j\in J}$ by
\[
g_k(x_J)
=
\sum_{x_{J^c}\in S_{J^c}} p_k(x_J,x_{J^c})
=
\Pr(X_{kJ}=x_J),
\qquad x_J\in S_J,\;k=1,\dots,K.
\]
Then
\[
\mathrm{UNL}(p_1,\dots,p_K)
\ge
\mathrm{UNL}(g_1,\dots,g_K).
\]
\end{property}

\begin{proof}
By Definition~\ref{def:UNL-discrete},
\[
\mathrm{UNL}(p_1,\dots,p_K)
=
\sum_{x_J\in S_J}\sum_{x_{J^c}\in S_{J^c}}
\max_{1\le k\le K} p_k(x_J,x_{J^c}).
\]
Similarly,
\[
\mathrm{UNL}(g_1,\dots,g_K)
=
\sum_{x_J\in S_J}\max_{1\le k\le K} g_k(x_J)
=
\sum_{x_J\in S_J}\max_{1\le k\le K}
\sum_{x_{J^c}\in S_{J^c}} p_k(x_J,x_{J^c}).
\]

Fix an arbitrary $x_J\in S_J$ and define, for each $k=1,\dots,K$,
\[
h_k(x_{J^c}) = p_k(x_J,x_{J^c}),
\qquad x_{J^c}\in S_{J^c}.
\]
For every $x_{J^c}\in S_{J^c}$ we have $\max_{1\le k\le K} h_k(x_{J^c})\ge h_k(x_{J^c})$ for each $k$.
Summing over $x_{J^c}$ gives
\[
\sum_{x_{J^c}\in S_{J^c}} \max_{1\le k\le K} h_k(x_{J^c})
\ge
\sum_{x_{J^c}\in S_{J^c}} h_k(x_{J^c})
\qquad\text{for each }k.
\]
Taking the maximum over $k$ on the right-hand side yields
\[
\sum_{x_{J^c}\in S_{J^c}} \max_{1\le k\le K} h_k(x_{J^c})
\ge
\max_{1\le k\le K}\sum_{x_{J^c}\in S_{J^c}} h_k(x_{J^c}).
\]
Rewriting in terms of $p_k$ and $g_k$, we obtain for each fixed $x_J\in S_J$,
\[
\sum_{x_{J^c}\in S_{J^c}}
\max_{1\le k\le K} p_k(x_J,x_{J^c})
\ge
\max_{1\le k\le K} g_k(x_J).
\]
Finally, summing both sides over $x_J\in S_J$ gives
\[
\begin{aligned}
\mathrm{UNL}(p_1,\dots,p_K)
&=
\sum_{x_J\in S_J}\sum_{x_{J^c}\in S_{J^c}}
\max_{1\le k\le K} p_k(x_J,x_{J^c}) \\
&\ge
\sum_{x_J\in S_J}\max_{1\le k\le K} g_k(x_J)
=
\mathrm{UNL}(g_1,\dots,g_K),
\end{aligned}
\]
which proves the claim.
\end{proof}

\begin{property}[Marginal monotonicity of UNL for mixed-type variables]
\label{prop:marginal-monotonicity-mixed}
For each class $k=1,\dots,K$, let
\[
X_k = \bigl(X_k^{c},X_k^{d}\bigr),
\qquad
X_k^{c}=(X_{k1}^{c},\dots,X_{kp^c}^{c})\in\mathrm{R}^{p^c},
\qquad
X_k^{d}=(X_{k1}^{d},\dots,X_{kp^d}^{d})\in S,
\]
where $S=S_1\times\cdots\times S_{p^d}$ and $S_j$ is the state space of the
$j$th categorical variable (Definition~\ref{def:UNL-discrete}).
Let $f_k(x^{c},x^{d})$ denote the joint density/mass of $(X_k^{c},X_k^{d})$
with respect to the product of the Lebesgue measure in $R^{p^c}$ and the counting measure in $S$.

Fix index sets $I\subseteq\{1,\dots,p^c\}$ and $J\subseteq\{1,\dots,p^d\}$,
and write $I^c=\{1,\dots,p^c\}\setminus I$ and $J^c=\{1,\dots,p^d\}\setminus J$.
Define
\[
x^{c}=(x_I^{c},x_{I^c}^{c})\in\mathrm{R}^{|I|}\times\mathrm{R}^{p^c-|I|},
\qquad
x^{d}=(x_J^{d},x_{J^c}^{d})\in S_J\times S_{J^c},
\]
where $S_J=\prod_{j\in J} S_j$ and $S_{J^c}=\prod_{j\in J^c} S_j$.

Define the marginal density of $(X_{kI}^{c},X_{kJ}^{d})$ by
\[
g_k(x_I^{c},x_J^{d})
=
\sum_{x_{J^c}^{d}\in S_{J^c}}
\int_{\mathrm{R}^{p^c-|I|}}
f_k(x^{c},x^{d})\,dx_{I^c}^{c},
\qquad (x_I^{c},x_J^{d})\in \mathrm{R}^{|I|}\times S_J.
\]
Then
\[
\mathrm{UNL}(f_1,\dots,f_K)
\ge
\mathrm{UNL}(g_1,\dots,g_K).
\]
\end{property}

\begin{proof}
By definition of UNL for mixed-type variables,
\[
\mathrm{UNL}(f_1,\dots,f_K)
=
\sum_{x_J^{d}\in S_J}\sum_{x_{J^c}^{d}\in S_{J^c}}
\int_{\mathrm{R}^{|I|}}\int_{\mathrm{R}^{p^c-|I|}}
\max_{1\le k\le K}
f_k(x_I^{c},x_J^{d})\,
dx_{I^c}^{c}\,dx_I^{c}.
\]
On the other hand,
\[
\mathrm{UNL}(g_1,\dots,g_K)
=
\sum_{x_J^{d}\in S_J}
\int_{\mathrm{R}^{|I|}}
\max_{1\le k\le K} g_k(x_I^{c},x_J^{d})\,dx_I^{c}
=
\sum_{x_J^{d}\in S_J}
\int_{\mathrm{R}^{|I|}}
\max_{1\le k\le K}
\sum_{x_{J^c}^{d}\in S_{J^c}}
\int_{\mathrm{R}^{p^c-|I|}}
f_k(\cdot)\,dx_{I^c}^{c}\,dx_I^{c}.
\]

Fix arbitrary $(x_I^{c},x_J^{d})\in\mathrm{R}^{|I|}\times S_J$ and define, for
each $k=1,\dots,K$,
\[
h_k(x_{I^c}^{c},x_{J^c}^{d})
=
f_k\bigl((x_I^{c},x_{I^c}^{c}),(x_J^{d},x_{J^c}^{d})\bigr),
\qquad
(x_{I^c}^{c},x_{J^c}^{d})\in \mathrm{R}^{p^c-|I|}\times S_{J^c}.
\]
For every $(x_{I^c}^{c},x_{J^c}^{d})$ we have
$\max_{1\le k\le K} h_k(x_{I^c}^{c},x_{J^c}^{d})\ge h_k(x_{I^c}^{c},x_{J^c}^{d})$
for each $k$. Hence,
\[
\sum_{x_{J^c}^{d}\in S_{J^c}}
\int_{\mathrm{R}^{p^c-|I|}}
\max_{1\le k\le K} h_k(x_{I^c}^{c},x_{J^c}^{d})\,dx_{I^c}^{c}
\ge
\sum_{x_{J^c}^{d}\in S_{J^c}}
\int_{\mathrm{R}^{p^c-|I|}}
h_k(x_{I^c}^{c},x_{J^c}^{d})\,dx_{I^c}^{c}
\qquad \text{for each }k.
\]
Taking the maximum over $k$ on the right-hand side yields
\[
\sum_{x_{J^c}^{d}\in S_{J^c}}
\int_{\mathrm{R}^{p^c-|I|}}
\max_{1\le k\le K} h_k(x_{I^c}^{c},x_{J^c}^{d})\,dx_{I^c}^{c}
\ge
\max_{1\le k\le K}
\sum_{x_{J^c}^{d}\in S_{J^c}}
\int_{\mathrm{R}^{p^c-|I|}}
h_k(x_{I^c}^{c},x_{J^c}^{d})\,dx_{I^c}^{c}.
\]
Rewriting in terms of $f_k$ and $g_k$, this becomes, for each fixed
$(x_I^{c},x_J^{d})$,
\[
\sum_{x_{J^c}^{d}\in S_{J^c}}
\int_{\mathrm{R}^{p^c-|I|}}
\max_{1\le k\le K}
f_k\bigl((x_I^{c},x_{I^c}^{c}),(x_J^{d},x_{J^c}^{d})\bigr)\,dx_{I^c}^{c}
\ge
\max_{1\le k\le K} g_k(x_I^{c},x_J^{d}).
\]
Finally, integrate both sides over $x_I^{c}\in\mathrm{R}^{|I|}$ and sum over
$x_J^{d}\in S_J$ to obtain
\[
\mathrm{UNL}(f_1,\dots,f_K)
\ge
\mathrm{UNL}(g_1,\dots,g_K),
\]
which establishes the marginal monotonicity property for mixed-type variables.
\end{proof}


\noindent\textbf{Property \ref{prop:UNL-continuous}}(Transformation invariance of UNL for continuous variables).
Let $X_1,\dots,X_k\in\mathrm{R}^p$ have probability density functions $f_1,\dots,f_k$. Consider a continuously differentiable, invertible map \(\psi:\mathrm{R}^p\to\mathrm{R}^p\) with everywhere–positive Jacobian determinant
\(|J\psi(x)|>0\).  
The UNL satisfies
\[
  \text{UNL}(f_1,\dots,f_k)
  =
  \text{UNL}(f_1^{\psi},\dots,f_k^{\psi}),
\]
where 
$
  f_i^{\psi}(u)
  =
  f_i\bigl(\psi^{-1}(u)\bigr)/|J\psi^{-1}(u)|
$ 
is the density of $U_i=\psi(X_i)$.

\begin{proof}
For \(U_i=\psi(X_i)\) the corresponding densities are  
\[
  f_i^{\psi}(u)
  =
  f_i\bigl(\psi^{-1}(u)\bigr)
  \bigl|J\psi^{-1}(u)\bigr|,
  \qquad u\in\mathrm{R}^p .
\]

The underlap of the transformed variables is  
\begin{align*}
  \text{UNL}\bigl(f_1^{\psi},\dots,f_k^{\psi}\bigr)
  &= \int_{\mathrm{R}^p}
       \max_{i}
         \Bigl\{f_i\bigl(\psi^{-1}(u)\bigr)
               |J\psi^{-1}(u)|\Bigr\}
       \,du
       \notag \\[4pt]
  &=
       \int_{\mathrm{R}^p}
         \frac{\max_{i}\{f_i(x)\}}{|J(\psi(x))|}
         |J\psi(x)|\,dx \quad (\text{using} \; \,u=\psi(x),\;du=|J\psi(x)|dx)
       \notag \\[4pt]
  &= \int_{\mathrm{R}^p}\max_{i}f_i(x)dx
     =
     \text{UNL}(f_1,\dots,f_k).
\end{align*}
\end{proof}

\begin{property}[Transformation invariance of UNL for discrete variables]
\label{prop:UNL-discrete}
Let $X_1,\dots,X_k$ take values in a countable set 
${S}$ and let 
$p_i(x)=\Pr\{X_i=x\}$ be the corresponding probability mass functions.
If $\phi:{S}\to {S}$ is a bijection, then
\[
  \text{UNL}(p_1,\dots,p_k)
  \;=\;
  \text{UNL}(p_1^{\phi},\dots,p_k^{\phi}),
\]
where $p_i^{\phi}(u)=p_i\bigl(\phi^{-1}(u)\bigr)$ is the pmf of
$U_i=\phi(X_i)$ and
\[
  \text{UNL}(p_1,\dots,p_k)
  =\sum_{s\in\mathcal{S}}\max_{1\le i\le k} p_i(s).
\]
\end{property}

\begin{proof}
\[
\begin{aligned}
\text{UNL}\bigl(p_1^{\phi},\dots,p_k^{\phi}\bigr)
  &=\sum_{u\in{S}}
      \max_i p_i^{\phi}(u)
   =\sum_{u\in{S}}
      \max_i p_i\bigl(\phi^{-1}(u)\bigr)                \\[4pt]
  &=\sum_{s\in{S}}
      \max_i p_i(x)
   =\text{UNL}\bigl(p_1,\dots,p_k\bigr),
\end{aligned}
\]
since $\phi$ is a bijection and thus yields a one–to–one
reindexing of the summation over $\mathcal{S}$.
\end{proof}

\begin{property}[Transformation invariance of UNL for mixed type variables]
\label{prop:UNL-mixed}
For group $k = 1,\cdots K$, $X_k=(X_k^{c},X_k^{d})$ with
$X_k^{c}\in\mathrm{R}^{p}$ and
$X_k^{d}\in S$, set the joint probability density functions as
\(
  f_k(x^{c},x^{d})
  = f_k^{\,c}(x^{c}\mid x^{d})\,
    p_k^{\,d}(x^{d}).
\)
Let
\(
    \Psi(x^c,x^d)=\bigl(\psi(x^c),\phi(x^d)\bigr)
\)
with  \(\psi:\mathrm R^{p}\to\mathrm R^{p}\)
a continuously differentiable, invertible map with everywhere–positive Jacobian determinant \(| J\psi(z)|>0\)
and \(\phi: S \to S\) a bijection. Then
\[
  \text{UNL}(f_1,\dots,f_k)
  =
  \text{UNL}(f_1^{\Psi},\dots,f_k^{\Psi}),
\]
where \(
  f_i^{\Psi}(u,s')
  =f_i(\psi^{-1}(u),\phi^{-1}(s'))/| J\psi(u)|
\) is the joint density of $(U_i,S'_i)=(\psi(X_i),\phi(X_i))$
\end{property}
\begin{proof}
\[
\begin{aligned}
\text{UNL}(f_1^{\Psi},\dots,f_k^{\Psi})
  &=\sum_{s'\in S}
     \int_{\mathrm R^{p}}
       \max_{i}
       \Bigl\{
         f_i(\psi^{-1}(u),\phi^{-1}(s'))
         | J\psi^{-1}(u)|
       \Bigr\}
     du                                            \\[4pt]
  &=
     \sum_{x^d\in S}
     \int_{\mathrm R^{p}}
       \frac{\max_{i}f_i(x^c,x^d)}{ |J\psi(x^c)|}
       | J\psi(x^c)|
     dx^c   \quad (\text{using} \; u=\psi(x^c),\,du=|J\psi(x^c)|dx^c)                                       \\[4pt]
  &=\sum_{x^d\in S}
     \int_{\mathrm R^{p}}
       \max_{i}f_i(x^c,x^d)dz
   =\text{UNL}(f_1,\dots,f_k).
\end{aligned}
\]
\end{proof}

\noindent\textbf{Property \ref{UNL-linear-reduction}}(Monotonicity of UNL for continuous variables under linear dimension reduction).
Let $A\in\mathrm{R}^{q\times p}$ have full row rank $q\le p$. For each $k=1,\dots,K$, let $f_k$ be the probability density
of $X_k$ on $\mathrm{R}^p$, and let $g_k$ be the probability density
of $AX_k$ on $\mathrm{R}^q$. Then
\[
  \mathrm{UNL}(f_1,\dots,f_K)\ge\mathrm{UNL}(g_1,\dots,g_K).
\]

\begin{proof}
Choose an invertible matrix
$T\in\mathrm{R}^{p\times p}$ whose first $q$ rows is identical of $A$. Then we could write $A=BT,\;B=[\,I_q\;\;0\,]$, where $I_q$ is the $q\times q$ identity matrix.

By the transformation invariance of UNL in Property \ref{prop:UNL-continuous}, then
\begin{align}
\label{eq:dimension_reduct1}
\mathrm{UNL}(f_1,\dots,f_K)=\mathrm{UNL}(f_1^T,\dots,f_K^T).    
\end{align}

For $k=1,\dots,K$, $TX_k = (U_{1k},U_{2k})$ where $U_{1k}\in\mathrm{R}^q$ denotes the first $q$ coordinates
and $U_{2k}\in\mathrm{R}^{p-q}$ denotes the remaining coordinates, $g_k$ is then the probability density
of $U_{1k}$ on $\mathrm{R}^q$. By the marginal monotonicity of UNL in Property \ref{marginal_monotonicity_UNL_property_continuous}, 
\begin{align}
\label{eq:dimension_reduct2}
\mathrm{UNL}(f_1^T,\dots,f_K^T)\le \mathrm{UNL}(g_1,\dots,g_K).    
\end{align}

Combining \eqref{eq:dimension_reduct1} and \eqref{eq:dimension_reduct2} yields
\[
  \mathrm{UNL}(f_1,\dots,f_K)
  \;\ge\;
  \mathrm{UNL}(g_1,\dots,g_K),
\]
as claimed.
\end{proof}

\begin{property}[Invariance of the UNL upon adding new groups built by mixing original groups]
Let $X_1,\dots,X_k\in\mathrm{R}^p$ have probability density functions 
$f_1,\dots,f_k$. Consider a new density constructed as an arbitrary mixture of the existing ones,
\[
  f_{K+1}(x)
  =
  \sum_{k=1}^{K} w_k f_k(x),
  \quad
  w_k\ge 0,
  \quad
  \sum_{k=1}^{K} w_k = 1.
\]
\[
  \text{UNL}\!\bigl(f_{1},\dots,f_{K},f_{K+1}\bigr)
  =
  \text{UNL}\!\bigl(f_{1},\dots,f_{K}\bigr)
  \le K.
\]

\end{property}

\begin{proof}
For every \(x\in\mathcal X\) the inequality
\[
  f_{K+1}(x)
  =
  \sum_{k=1}^{K} w_k\,f_k(x)
  \le
  \max_{1\le k\le K} f_k(x)
  = M(x)
\]
holds because each weight \(w_k\) is non–negative and bounded by 1.  Consequently,
\[
  \max\bigl\{f_1(x),\dots,f_K(x),f_{K+1}(x)\bigr\}=M(x)
  \quad\text{for all }x\in\mathcal X.
\]
Integrating both sides over~\(\mathcal X\) shows that adding the mixture leaves the underlap unchanged,
\[
  \text{UNL}\!\bigl(f_{1},\dots,f_{K},f_{K+1}\bigr)
  =
  \text{UNL}\!\bigl(f_{1},\dots,f_{K}\bigr)
  \le K.
\]
\end{proof}

Analogous invariance results for discrete and mixed type variables can be obtained by straightforward adaptations of the same argument and are therefore omitted. Introducing a new group whose distribution is merely a resampling of the existing distributions can never create a region in \(\mathcal X\) where it dominates the original pointwise maximum. Hence the overall degree of separation, as quantified by the UNL, remains exactly the same.  Observing no increase in the UNL after adding an alleged new group indicates that the added group is not statistically distinct from the existing groups. Conversely, only a genuinely new distribution could raise the UNL above its previous value.

\section{Properties of the importance sampling estimator of UNL}

\noindent\textbf{Property~\ref{dense_unl_error}} (Error bound from density estimation and plug-in consistency of the UNL).
Let \(f_1,\ldots,f_K\) be densities with respect to \(\nu\). For sample size \(n\), let
\(\widehat f_{1,n},\ldots,\widehat f_{K,n}\) be estimated densities with respect
to the same dominating measure \(\nu\). 

Then
\[
\left|
\operatorname{UNL}(\widehat f_{1,n},\ldots,\widehat f_{K,n})
-
\operatorname{UNL}(f_1,\ldots,f_K)
\right|
\le
\sum_{k=1}^K
\|\widehat f_{k,n}-f_k\|_1,
\]
where
\[
\|\widehat f_{k,n}-f_k\|_1
=
\int_{\mathcal X}|\widehat f_{k,n}(x)-f_k(x)|\,d\nu(x).
\]

Consequently, if $\widehat f_{k,n}$ is a consistent estimator of $f_k$ for $k=1,\cdots,K$: $\|\widehat f_{k,n}-f_k\|_1\xrightarrow{p}0$,
then $\mathrm{UNL}(\widehat f_{1,n},\ldots,\widehat f_{K,n})$ is a consistent estimator of $\mathrm{UNL}(f_1,\ldots,f_K)$: $\mathrm{UNL}(\widehat f_{1,n},\ldots,\widehat f_{K,n})
\xrightarrow{p}
\mathrm{UNL}(f_1,\ldots,f_K).$

\begin{proof}
By the definition of the UNL and the triangle inequality,

\begin{align*}
\left|
\operatorname{UNL}(\widehat f_{1,n},\ldots,\widehat f_{K,n})
-
\operatorname{UNL}(f_1,\ldots,f_K)
\right| & =
\left|
\int_{\mathcal X}
\left\{
\max_{1\le k\le K}\widehat f_{k,n}(x)
-
\max_{1\le k\le K} f_k(x)
\right\}
\,d\nu(x)
\right| \\
&\le
\int_{\mathcal X}
\left|
\max_{1\le k\le K}\widehat f_{k,n}(x)
-
\max_{1\le k\le K} f_k(x)
\right|
\,d\nu(x).
\end{align*}

As $\left|
\max_{1\le k\le K}\widehat f_{k,n}(x)
-
\max_{1\le k\le K}f_k(x)
\right|
\le
\max_{1\le k\le K}\left|
\widehat f_{k,n}(x)-\widehat f_k(x)
\right|
\le
\sum_{k=1}^K |\widehat f_{k,n}(x)-f_k(x)|,$
we obtain:
\[
\begin{aligned}
\left|
\operatorname{UNL}(\widehat f_{1,n},\ldots,\widehat f_{K,n})
-
\operatorname{UNL}(f_1,\ldots,f_K)
\right| \le
\int_{\mathcal X}
\sum_{k=1}^K |\widehat f_{k,n}(x)-f_k(x)|\,d\nu(x)
=
\sum_{k=1}^K
\|\widehat f_{k,n}-f_k\|_1.
\end{aligned}
\]
If
\[
\|\widehat f_{k,n}-f_k\|_1\xrightarrow{p}0 \quad \text{as } n \rightarrow \infty,
\quad
\text{for each } k=1,\ldots,K, 
\]
then, since \(K\) is fixed,
\[
\sum_{k=1}^K
\|\widehat f_{k,n}-f_k\|_1
\xrightarrow{p}0 \quad \text{as } n \rightarrow \infty.
\]
Hence, for every \(\varepsilon>0\),
\[
\Pr\left(
\left|
\operatorname{UNL}(\widehat f_{1,n},\ldots,\widehat f_{K,n})
-
\operatorname{UNL}(f_1,\ldots,f_K)
\right|
>\varepsilon
\right)
\le
\Pr\left(
\sum_{k=1}^K
\|\widehat f_{k,n}-f_k\|_1
>\varepsilon
\right)
\to 0.
\]
Therefore,
\[
\operatorname{UNL}(\widehat f_{1,n},\ldots,\widehat f_{K,n})
\xrightarrow{p}
\operatorname{UNL}(f_1,\ldots,f_K) \quad \text{as } n \rightarrow \infty.
\]
\end{proof}

\noindent\textbf{Property~\ref{imp_unl_error}} (Monte Carlo error bound and consistency of the importance-sampling estimator).
Let \(f_{1},\ldots, f_{K}\) be densities with respect to \(\nu\). Based on the equal weighted mixture proposal, $q(x)=1/K\sum_{k=1}^K  f_{k}(x)$, let
\(X_1,\ldots,X_M\) be iid samples from \(q\), the importance sampling estimator of UNL is:
\[\widehat{\operatorname{UNL}}_M
=
\frac1M
\sum_{i=1}^M
\frac{
\max_{1\le k\le K} f_{k}( X_i)
}{
\ q( X_i)
}.\]

\(\widehat{\operatorname{UNL}}_M\) is an unbiased estimator of
\(\operatorname{UNL}(f_{1},\cdots,f_{K})\). Moreover, for every \(\varepsilon>0\),
\[
\Pr\left(
\left|
\widehat{\operatorname{UNL}}_M
-
\operatorname{UNL}(f_{1},\cdots,f_{K})
\right|
\ge \varepsilon
\right)
\le
2\exp\left\{
-\frac{2M\varepsilon^2}{(K-1)^2}
\right\}.
\]
Equivalently, with probability at least \(1-\delta\),
\[
\left|
\widehat{\operatorname{UNL}}_M
-
\operatorname{UNL}(f_{1},\cdots,f_{K})
\right|
\le
(K-1)
\sqrt{\frac{\log(2/\delta)}{2M}}.
\]
Consequently, $\widehat{\operatorname{UNL}}_M$ is a consistent estimator of $\operatorname{UNL}(f_{1},\cdots,f_{K})$: $\widehat{\operatorname{UNL}}_M \xrightarrow{p} \operatorname{UNL}(f_{1},\cdots,f_{K})$ as $M\rightarrow \infty$.

\begin{proof}
First, by the nature of the importance sampling estimator,
\[
\mathrm E_q\left[
\frac{\max_{1\le k\le K}f_{k}(X)}{q(X)}
\right]
=
\int_{\mathcal X}
\frac{\max_{1\le k\le K}f_{k}(x)}{q(x)}
q(x)\,d\nu(x)
=
\operatorname{UNL}(f_{1},\cdots,f_{K}),
\]
\(\widehat{\operatorname{UNL}}_M\) is unbiased.

We denote $W_i={\max_{1\le k\le K} f_{k}(X_i)}/{q(X_i)}$ for $i = 1,2,\cdots , M$. Under the equally weighted mixture proposal, $q(x)=1/K \sum_{k=1}^K f_{k}(x)$, we have \(1\le W_i\le K\). Since $W_i$ are independent and bounded, Hoeffding's inequality gives
\[
\Pr\left(
\left|
\frac1M\sum_{i=1}^M W_i-\mathrm E(W_i)
\right|
\ge \varepsilon
\right)=
\Pr\left(
\left|
\widehat{\operatorname{UNL}}_M
-
\operatorname{UNL}(f_{1},\cdots,f_{K})
\right|
\ge \varepsilon
\right)
\le
2\exp\left\{
-\frac{2M\varepsilon^2}{(K-1)^2}
\right\}
.
\]
Solving the inequality
$2\exp\left\{
-{2M\varepsilon^2}/{(K-1)^2}
\right\}
\le \delta$ for \(\varepsilon\) gives
\[\Pr\left(\left|
\widehat{\operatorname{UNL}}_M
-
\operatorname{UNL}(f_{1},\cdots,f_{K})
\right|
\le
(K-1)
\sqrt{\frac{\log(2/\delta)}{2M}}\right)\ge 1-\delta.\]

Finally, for any fixed \(\varepsilon>0\), $2\exp\left\{
-{2M\varepsilon^2}/{(K-1)^2}
\right\}
\to 0,
\quad \text{as } M\to\infty.$ Therefore, $\widehat{\operatorname{UNL}}_M
\xrightarrow{p}
\operatorname{UNL}(f_{1},\cdots,f_{K})$, as $ M\to\infty$.

\end{proof}

\noindent\textbf{Property~\ref{variance_bound_imp}} (Monte Carlo variance bound for the importance sampling estimator).
Let \(f_{1},\ldots, f_{K}\) be densities with respect to \(\nu\). Based on the equal weighted mixture proposal, $q(x)=1/K\sum_{k=1}^K  f_{k}(x)$, let
\(X_1,\ldots,X_M\) be iid samples from \(q\), the importance sampling estimator of UNL is:
\[\widehat{\operatorname{UNL}}_M
=
\frac1M
\sum_{i=1}^M
\frac{
\max_{1\le k\le K} f_{k}( X_i)
}{
\ q( X_i)
}.\]
It gives the variance bound:
\[
\operatorname{Var}\left(\widehat{\operatorname{UNL}}_M\right)
\le
\frac{\operatorname{UNL}(K-\operatorname{UNL})}{M}.
\]

\begin{proof}
The variance of the importance estimator of UNL could be written as:
\begin{align}
\label{exact_variance_imp1}
\mathrm{Var}\Bigl(\widehat{\text{UNL}}_M\Bigr)&=\frac{1}{M}\mathrm{Var} \Bigl(\frac{\max_{1\le k\le K} f_k(x_i)}{q(x_i)}\Bigr) \nonumber\\
&=\frac{1}{M}\Bigl[\mathrm{E}_{q} \Bigl(\frac{(\max_{1\le k\le K} f_k(x_i))^2}{(q(x_i))^2}\Bigr)-\Bigl(\mathrm{E}_{q} \Bigl(\frac{\max_{1\le k\le K} f_k(x_i)}{q(x_i)}\Bigr)\Bigr)^2\Bigr]  \nonumber\\
&=\frac{1}{M}\Bigl[\int_{\mathcal X} \frac{(\max_{1\le k\le K} f_k(x))^2}{q(x)}d\nu(x)-{\text{UNL}}^2\Bigr].
\end{align}

Given that the proposal takes the form $q(x)=1/K\sum_{k=1}^{K} f_k(x)$, the exact variance of $\mathrm{Var}\Bigl(\widehat{\text{UNL}}\Bigr)$ in \eqref{exact_variance_imp1} could be written as:
\begin{align}
\label{exact_variance_imp2}
\mathrm{Var}\Bigl(\widehat{\text{UNL}}_M\Bigr)&=\frac{1}{M}\Bigl[\int_{\mathcal X} \frac{(\max_{1\le k\le K} f_k(x))^2}{\frac1K\sum_{k=1}^{K} f_k(x)}d\nu(x)-{\text{UNL}}^2\Bigr].
\end{align}

As ${\max_{1\le k\le K}f_k(x)}/
               {(K^{-1}\sum_{k=1}^{K} f_k(x))}
        \le K$, thus
\begin{align}
\label{bound_imp1}
\int_{\mathcal X} \frac{(\max_{1\le k\le K} f_k(x))^2}{\frac1K\sum_{k=1}^{K} f_k(x)}d\nu(x)\le K\int_{\mathcal X} \max_{1\le k\le K} f_k(x)d\nu(x)=K {\text{UNL}}.
\end{align}

Substituting \eqref{bound_imp1} back to \eqref{exact_variance_imp2}, we have:
\begin{align*}
\label{bound_imp2}
\mathrm{Var}\Bigl(\widehat{\text{UNL}}_M\Bigr)\le\Bigl(K {\text{UNL}}-{\text{UNL}}^2\Bigr)/M={\text{UNL}\,(K-\text{UNL})}/{M}.
\end{align*}
\end{proof}

\noindent\textbf{Property \ref{combine_imp_error}} (Combined density estimation and Monte Carlo error bound of UNL).
Let \(f_1,\ldots,f_K\) be densities with respect to \(\nu\). For sample size \(n\), let
\(\widehat f_{1,n},\ldots,\widehat f_{K,n}\) be estimated densities with respect to the same dominating measure \(\nu\). Based on the equal weighted mixture proposal, $q(x)=1/K\sum_{k=1}^K \widehat f_{k,n}(x)$, let
\(X_1,\ldots,X_M\) be iid samples from \(q\), the importance sampling estimator of UNL is:
\[\widehat {\widehat{\operatorname{UNL}}}_M
=
\frac1M
\sum_{i=1}^M
\frac{
\max_{1\le k\le K}\widehat f_{k,n}( X_i)
}{
\ q( X_i)
}.\]
Then, with probability at least \(1-\delta\),
\[
\left|
\widehat {\widehat{\operatorname{UNL}}}_M
-
\operatorname{UNL}(f_1,\ldots,f_K)
\right|
\le
(K-1)
\sqrt{\frac{\log(2/\delta)}{2M}}
+
\sum_{k=1}^K
\|\widehat f_{k,n}-f_k\|_1.
\]

Consequently, $\widehat {\widehat{\operatorname{UNL}}}_M$ is a consistent estimator of $\operatorname{UNL}(f_1,\ldots,f_K)$: $\widehat {\widehat{\operatorname{UNL}}}_M \xrightarrow{p} \operatorname{UNL}(f_1,\ldots,f_K)$ as $M \rightarrow \infty$ and $n \rightarrow \infty$.

\begin{proof}
According to the triangle inequality, we have
\[
\left|
\widehat {\widehat{\operatorname{UNL}}}_M
-
\operatorname{UNL}(f_1,\ldots,f_K)
\right| \le
\left|
\widehat {\widehat{\operatorname{UNL}}}_M
-
\operatorname{UNL}(\widehat f_{1,n},\ldots,\widehat f_{K,n})
\right|
+
\left|
\operatorname{UNL}(\widehat f_{1,n},\ldots,\widehat f_{K,n})
-
\operatorname{UNL}(f_1,\ldots,f_K)
\right|.
\]
Combining Property \ref{dense_unl_error} and Property \ref{imp_unl_error} gives the result.
\end{proof}

\begin{figure}[!t]
\centering
\includegraphics[width=0.42\textwidth]{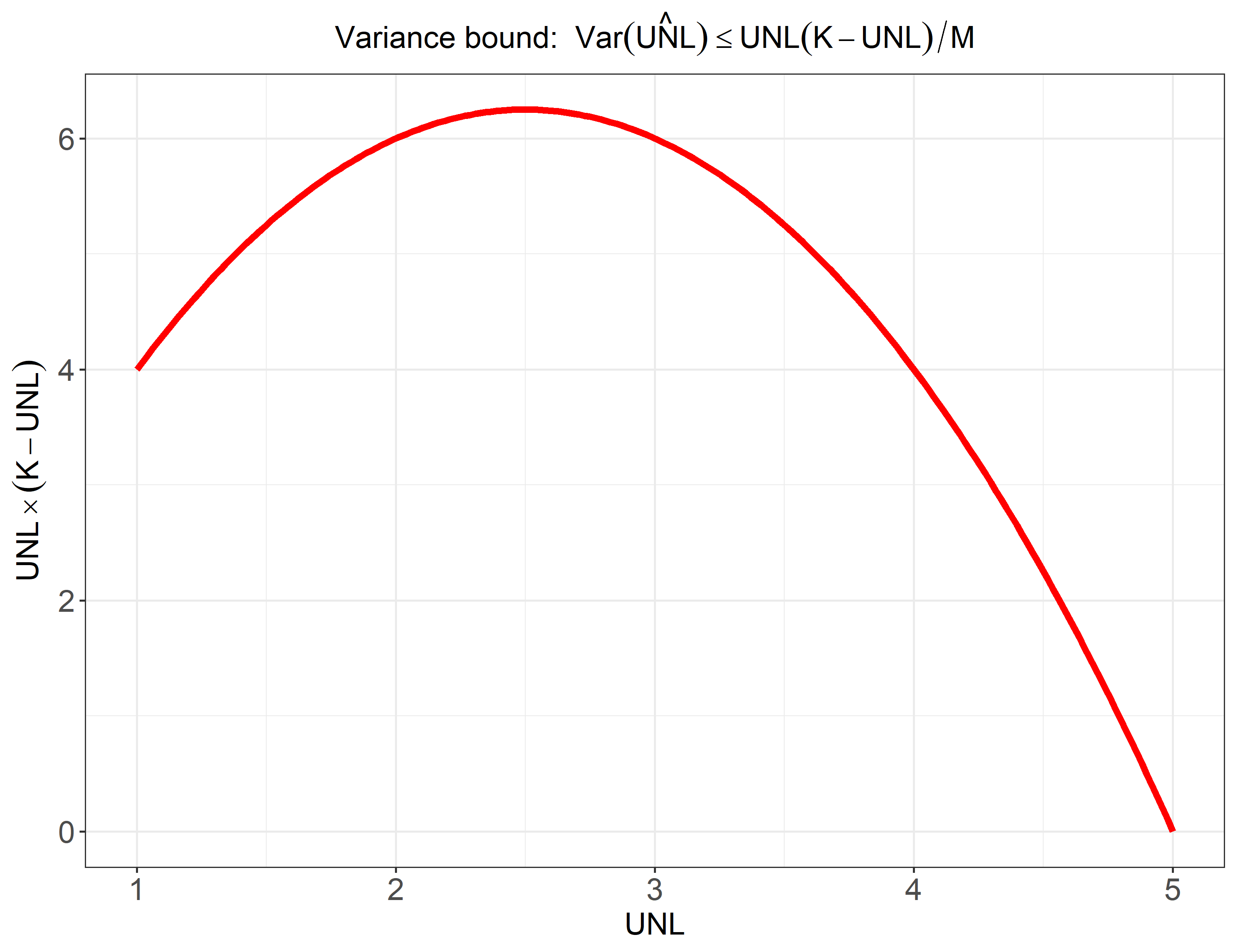}
\centering
\caption{Visual illustration of the variance bound of the importance sampling UNL estimator when $K=5$.}
\label{variance_bound_plot}
\end{figure}

\section{Prior specification of the DPM model}

We model observations $\{y_i\}_{i=1}^{N}=\{(y_i^{c},\,y^{d}_i)\}_{i=1}^N$ with a truncated approximation of DPM model of $L$ components. In this paper, we fit the DPM model using truncation level $L=10$. We denote $p^c$ as the number of dimension of $y_i^{c}$ and $p^d$ as the number of dimension of $y^{d}$. For
$i=1,\ldots,N$ and $l=1,\ldots,L$,
\begin{align*}
y_i^{c} | z_i=l,\mu_l,\Sigma_l &\iidsim \text{N}_p(\mu_l,\Sigma_l), \\
y^{d}_i | z_i=l,\{\boldsymbol\pi^{(k)}_l\}_{k=1}^{p^d}
&\iidsim \prod_{k=1}^{p^d} \mathrm{Categorical}\left(\boldsymbol\pi^{(k)}_l\right), \\
\Pr(z_i=l)&=w_l,\qquad
w_l=v_l\!\!\prod_{h<l}(1-v_h),\\
v_l \mid \alpha &\stackrel{\text{iid}}{\sim} \mathrm{Beta}(1,\alpha) \quad \text{for} \; l<L,\qquad v_L= 1.
\end{align*}

We adopt the normal-inverse Wishart conjugate prior
\begin{align*}
\mu_l &\sim \text{N}_p\!\left(m_0,\;L_0\right), \\
\Sigma_l &\sim \text{IW}\!\left(\nu_0,\;S_0\right),
\end{align*}
with $m_0$ and $L_0$ as the mean and covariance matrix of the kernel means and $\nu_0$ and $S_0$ as the degrees of freedom and scale matrix of the kernel covariance matrices.

We set $(m_0,L_0,\nu_0,S_0)$ in a data-adaptive yet weakly informative manner
using an initial K-means partition of the continuous variables $\{y_i^{c}\}$
(with small $K$, e.g., $K\in[3,10]$). Let $\widehat c_i\in\{1,\ldots,K\}$ denote cluster labels inferred by the K-means algorithm and $\widehat\mu_k$ denote the centroids of clusters inferred by the K-means algorithm.

The prior for component means is set to be centered at the sample mean,
\[
m_0= \frac{1}{N}\sum_{i=1}^N y_i^{c},
\]
and the prior covariance of the component means is set to the empirical covariance of the K-means centroids,
\[
L_0=\mathrm{Cov}\big(\widehat\mu_1,\ldots,\widehat\mu_K\big).
\]

The degree of freedom is set to $\nu_0=p+2$, unless otherwise specified.
Let
\[
\widehat\Sigma_{\mathrm{w}}=\frac{1}{K}\sum_{k=1}^K \mathrm{Cov}\left(\{y_i^{d}:\widehat c_i=k\}\right)
\]
be the average within-cluster covariance from K-means algorithm. Choose $S_0=(\nu_0+p+1)\,\widehat\Sigma_{\mathrm{w}}$ so that the
mode of the inverse-Wishart prior matches $\widehat\Sigma_{\mathrm{w}}$.

For each categorical variable $k=1,\cdots,p^d$ and $l=1,\cdots,L$,
we assign independent Dirichlet priors
\[
\boldsymbol\pi^{(k)}_l \sim \mathrm{Dirichlet}\!\big(\boldsymbol\eta^{(k)}\big),
\qquad \boldsymbol\eta^{(k)}=\tau_k\,\widehat{\boldsymbol p}^{(k)},
\]
where $\widehat{\boldsymbol p}^{(k)}$ are the empirical marginal proportions of the
$k$th categorical variable. We take $\tau_k=10$ (a prior "pseudo-count" of 10),
which is weak relative to most $N$ values.

The DP precision $\alpha$ controls the expected number of occupied components.
We use a weakly informative gamma prior
\[
\alpha \sim \mathrm{Gamma}(a_\alpha,b_\alpha),
\]
with defaults $a_\alpha=b_\alpha=2$, giving $\text{E}(\alpha)=1$ and
$\mathrm{Var}(\alpha)=0.5$, which is also used in \cite{RJ-2021-066}.

\section{Prior specification of the LDDP model}

We model $y_i \in \mathrm{R}$ with covariate vector $x_i$ of length $p$ through a truncated approximation of LDDP model of $L$ components. In this paper, we fit the DPM model using truncation level $L=20$. For $i=1,\dots,N$ and $l=1,\dots,L$,
\begin{align*}
&y_i | z_i=l, \beta_l, \tau_l \sim \text{N}\big(x_i^\top \beta_l,\tau_l^{-1}\big)\\
&\Pr(z_i=l)=w_l,\qquad
w_l=v_l\prod_{h<l}(1-v_h),\\
&v_l | \alpha \iidsim \mathrm{Beta}(1,\alpha) \quad \text{for} \; l<L,\qquad v_L= 1.\\
\end{align*}

We set a Gaussian base measure with unknown mean and covariance on component-specific regression coefficients:
\begin{align*}
\beta_l | \mu, \Sigma &\sim \text{N}_p(\mu, \Sigma), \qquad l=1,\dots,L,
\end{align*}
and we place conjugate hyperpriors on $(\mu,\Sigma)$ in order to allow for some more flexibility:
\begin{align*}
\mu &\sim \mathrm{N}_p(m_0,\ S_0),\\
\Sigma &\sim \mathrm{IW}\!\big(\nu,\ (\nu \Psi)^{-1}\big),
\end{align*}
with $m_0$ and $S_0$ as the mean and covariance matrix of normal prior and $\nu$ and $(\nu \Psi)^{-1}$ as the degrees of freedom and scale matrix of the inverse-Wishart prior.


The conjugate gamma priors are also placed on component precisions and the DP concentration parameter:
\begin{align*}
\tau_l &\sim \mathrm{Gamma}(a,b), \qquad l=1,\dots,L,\\
\alpha &\sim \mathrm{Gamma}(a_\alpha,b_\alpha).
\end{align*}

We set the hyperparameter values in a data-adaptive yet weakly informative manner using the results of ordinary least squares (OLS) regression. Let $X$ denotes the $N \times p$ design matrix of the OLS regression, $\widehat\beta_{\text{OLS}}$ and $\widehat\sigma^2$ denote the estimated coefficients and residual variance from the OLS regression. We set the following values (which are also used in \citep{RJ-2021-066}) unless otherwise specified:
\begin{align*}
&m_0 = \widehat\beta_{\text{OLS}}, \qquad S_0 = \widehat\sigma^{\,2}\,(X^\top X)^{-1},\\
&\nu = p+2, \qquad \Psi = 30\,\widehat\sigma^{\,2}\,(X^\top X)^{-1}, \\
&a = 2, \qquad b = \frac{\widehat\sigma^{\,2}}{2}, \\
&a_\alpha = 2, \qquad b_\alpha = 2. \\
\end{align*}

\clearpage

\section{Some plots for the examples in Section 3}

\begin{figure}[htbp]
\centering
\subfigure{
\centering
\includegraphics[width=0.42\textwidth]{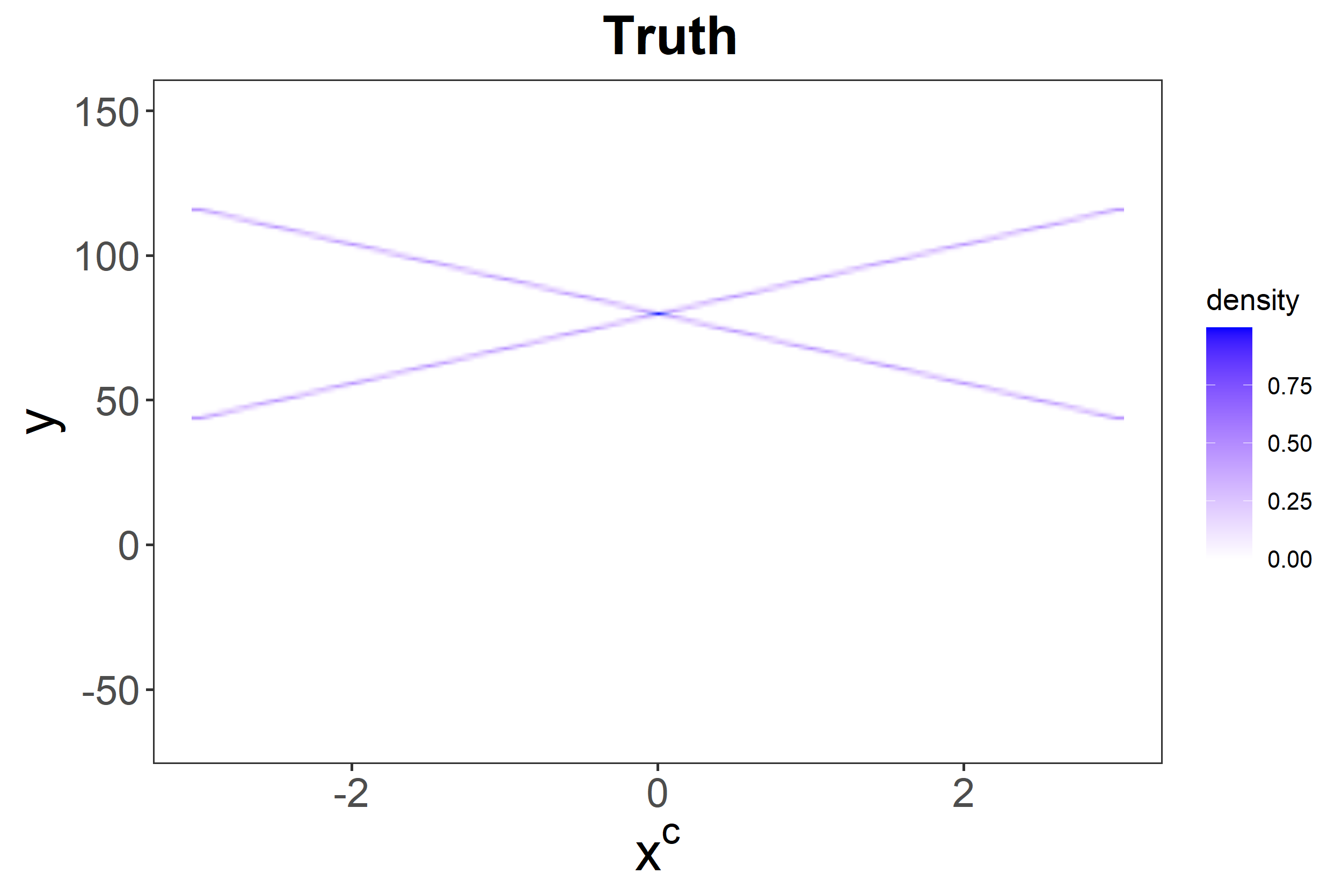}
}
\subfigure{
\centering
\includegraphics[width=0.42\textwidth]{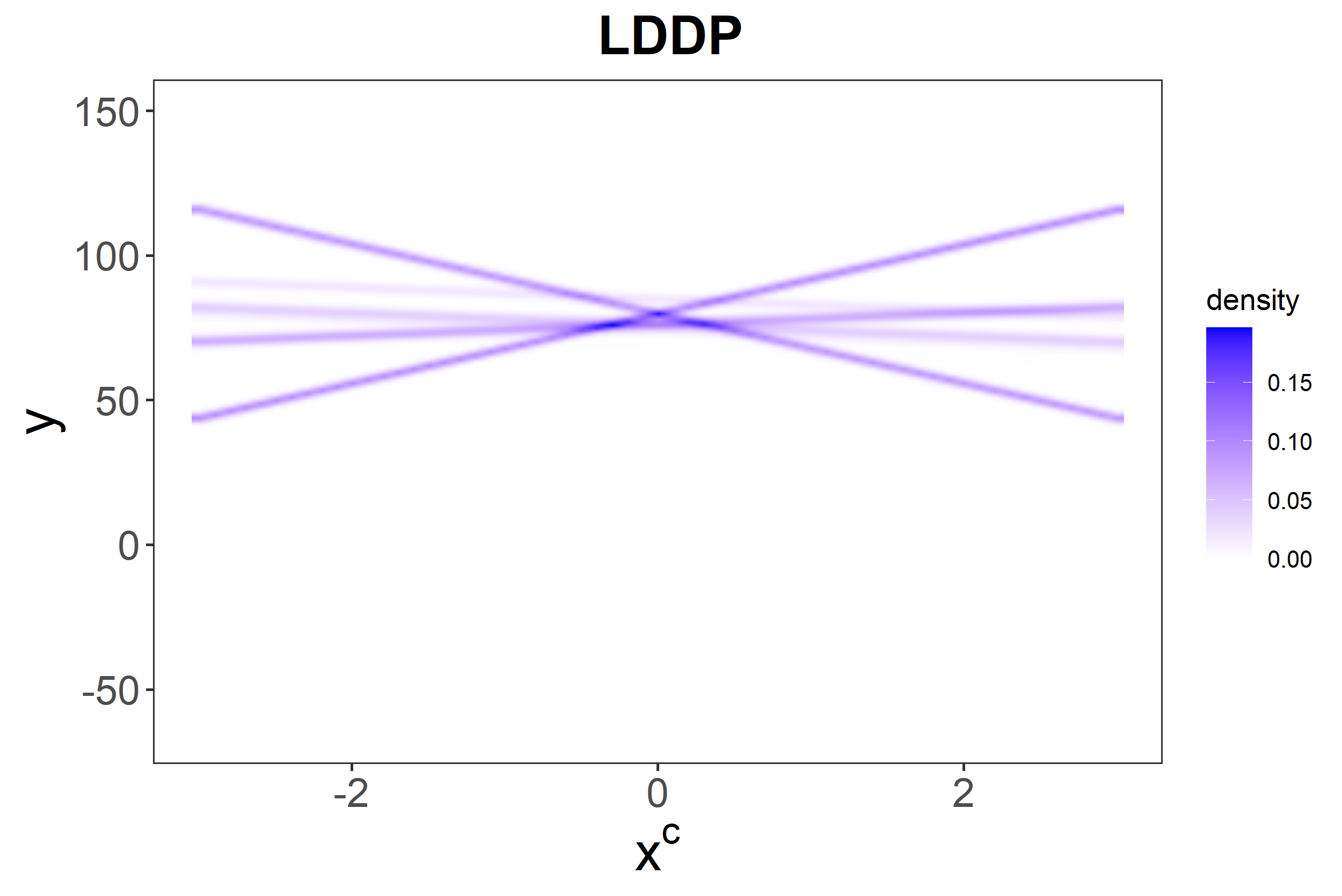}
}
\\
\subfigure{
\centering
\includegraphics[width=0.42\textwidth]{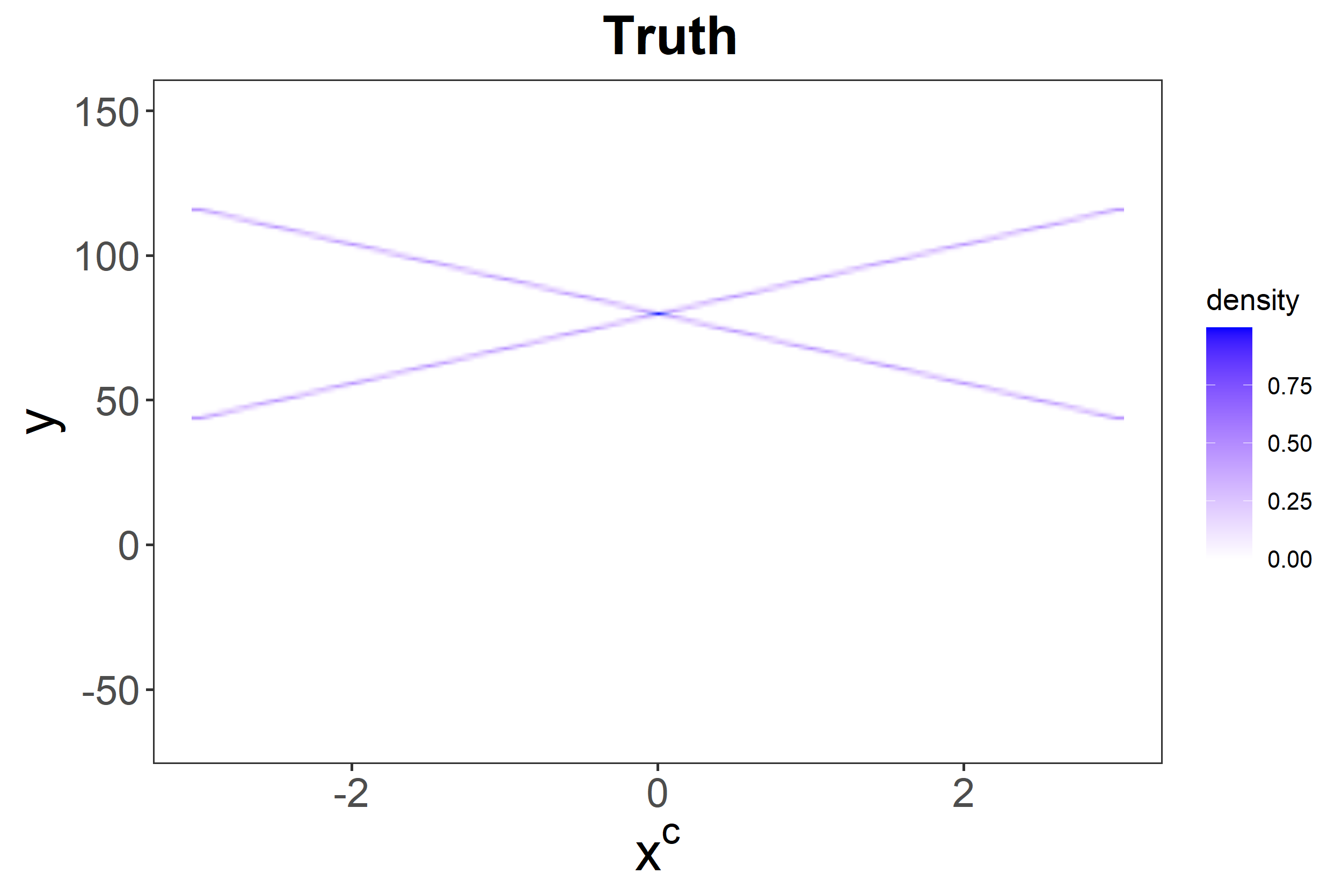}
}
\subfigure{
\centering
\includegraphics[width=0.42\textwidth]{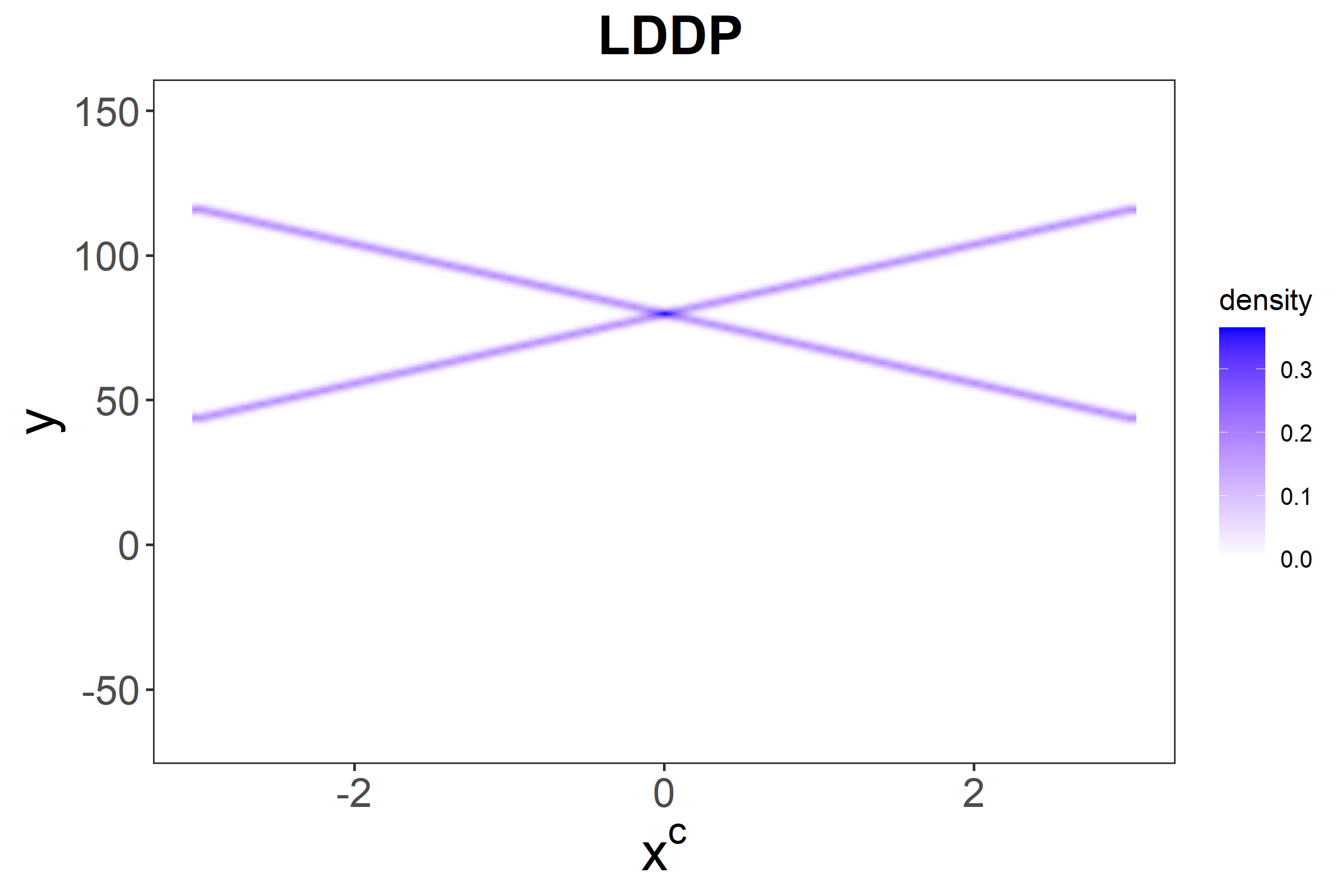}
}

\centering
\caption{Heatmap of the true and estimated density regression functions of Examples C1 and C2 conditioned on $x^d=2$. Top row: Example C1. Bottom row: Example C2.}
\label{example_C12_heatmap_plots_cate2}
\end{figure}

In the LDDP mixture model, predictions are averaged with mixing weights that do not vary with the covariates. In Example C1, the LDDP-induced partition exhibits moderate dependence on $x^d$; with covariate-constant weights, this leads to poor predictive performance. In Example C2, the partition shows little dependence on either covariate, thus covariate-constant weights do not degrade prediction.

\begin{figure}[htbp]
\centering
\subfigure{
\centering
\includegraphics[width=0.31\textwidth]{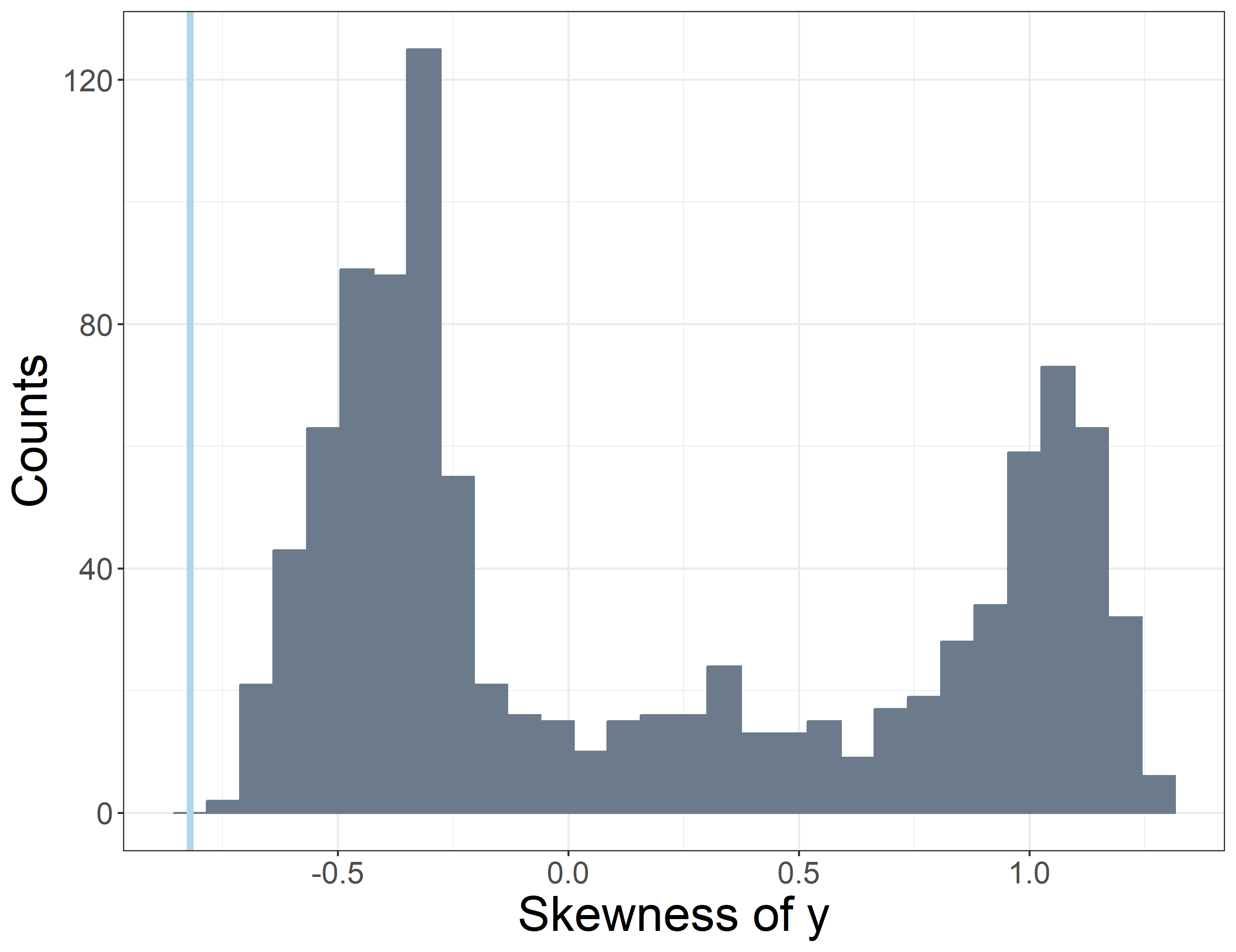}
}
\subfigure{
\centering
\includegraphics[width=0.31\textwidth]{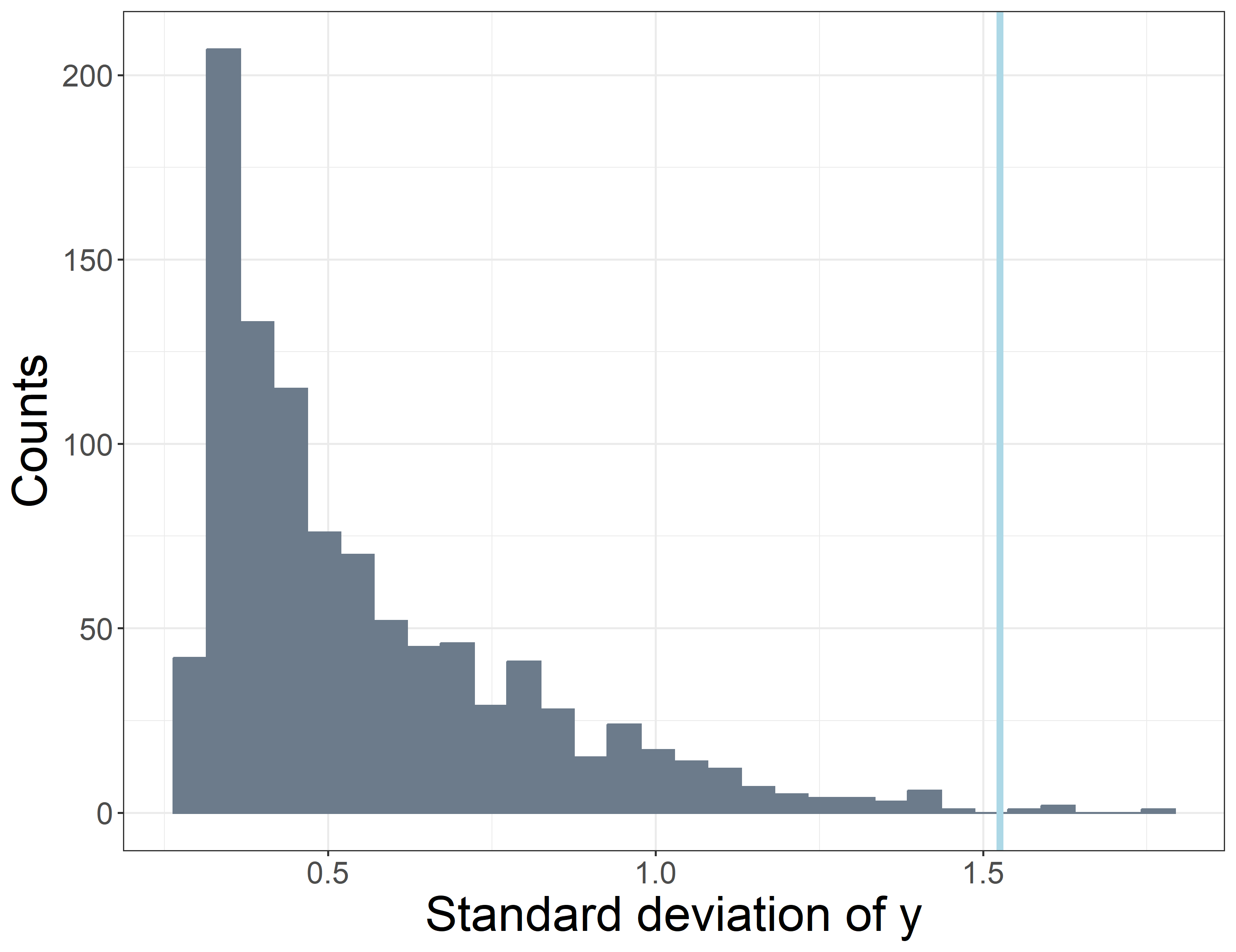}
}
\subfigure{
\centering
\includegraphics[width=0.31\textwidth]{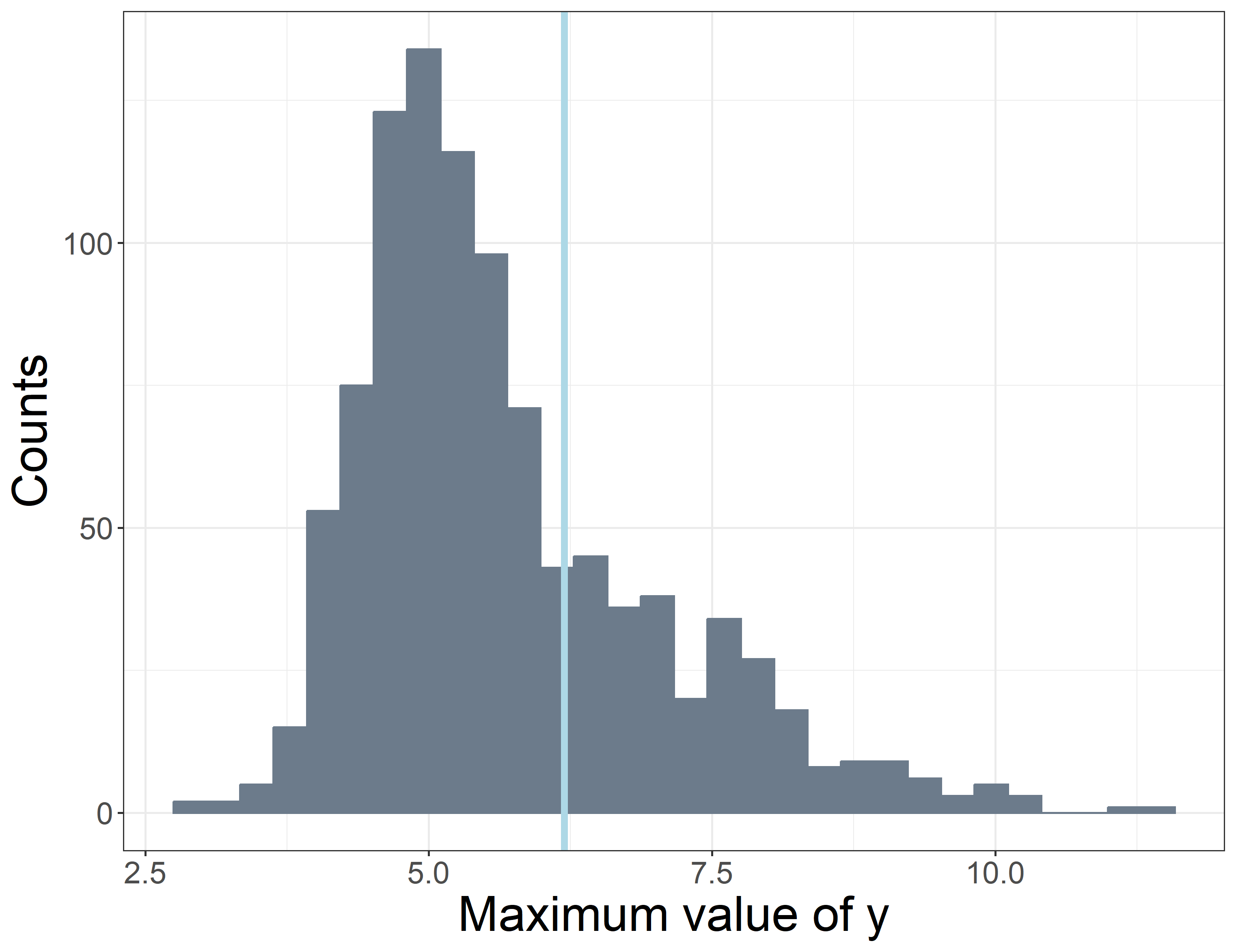}
}
\\
\subfigure{
\centering
\includegraphics[width=0.31\textwidth]{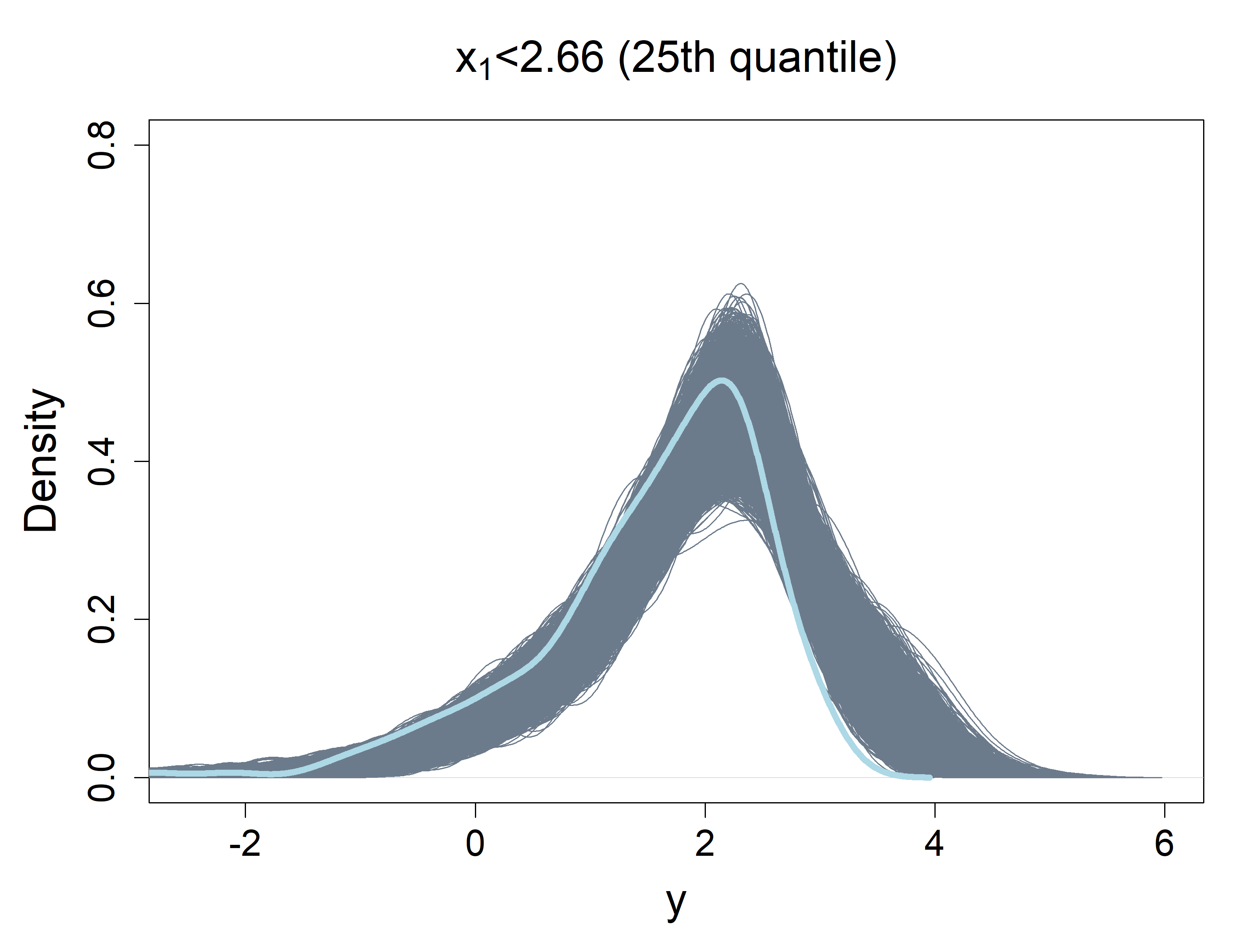}
}
\subfigure{
\centering
\includegraphics[width=0.31\textwidth]{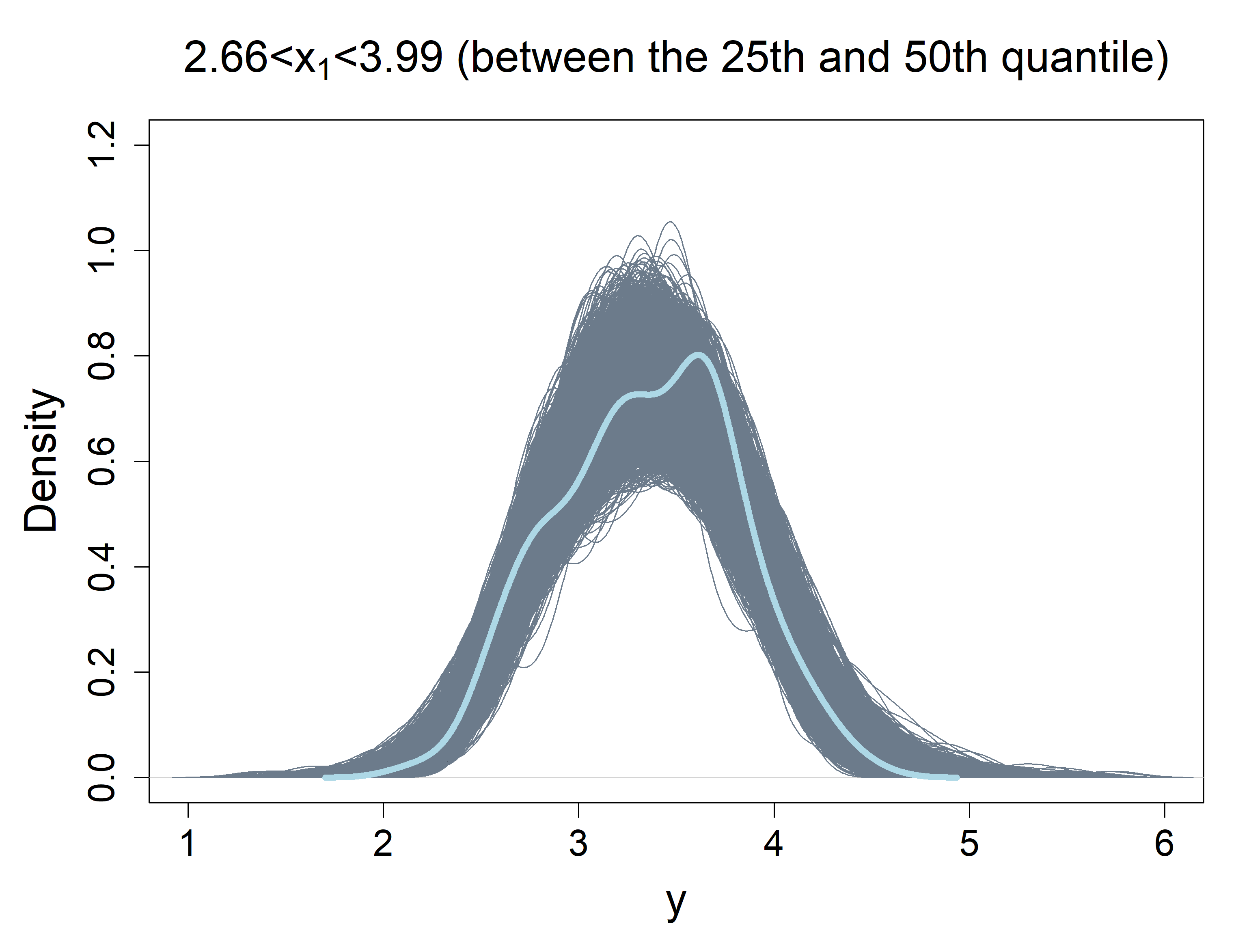}
}
\subfigure{
\centering
\includegraphics[width=0.31\textwidth]{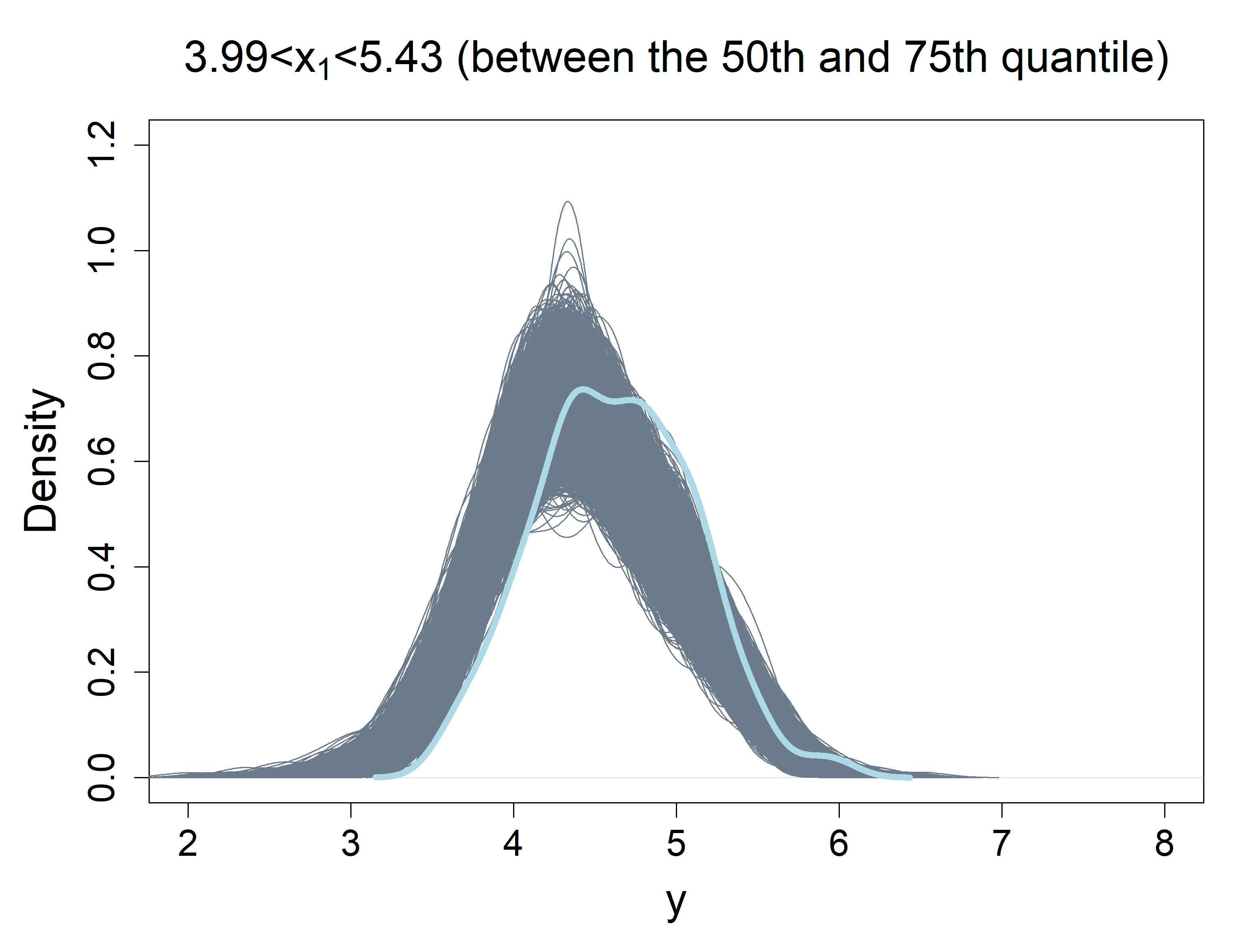}
}
\\
\subfigure{
\centering
\includegraphics[width=0.31\textwidth]{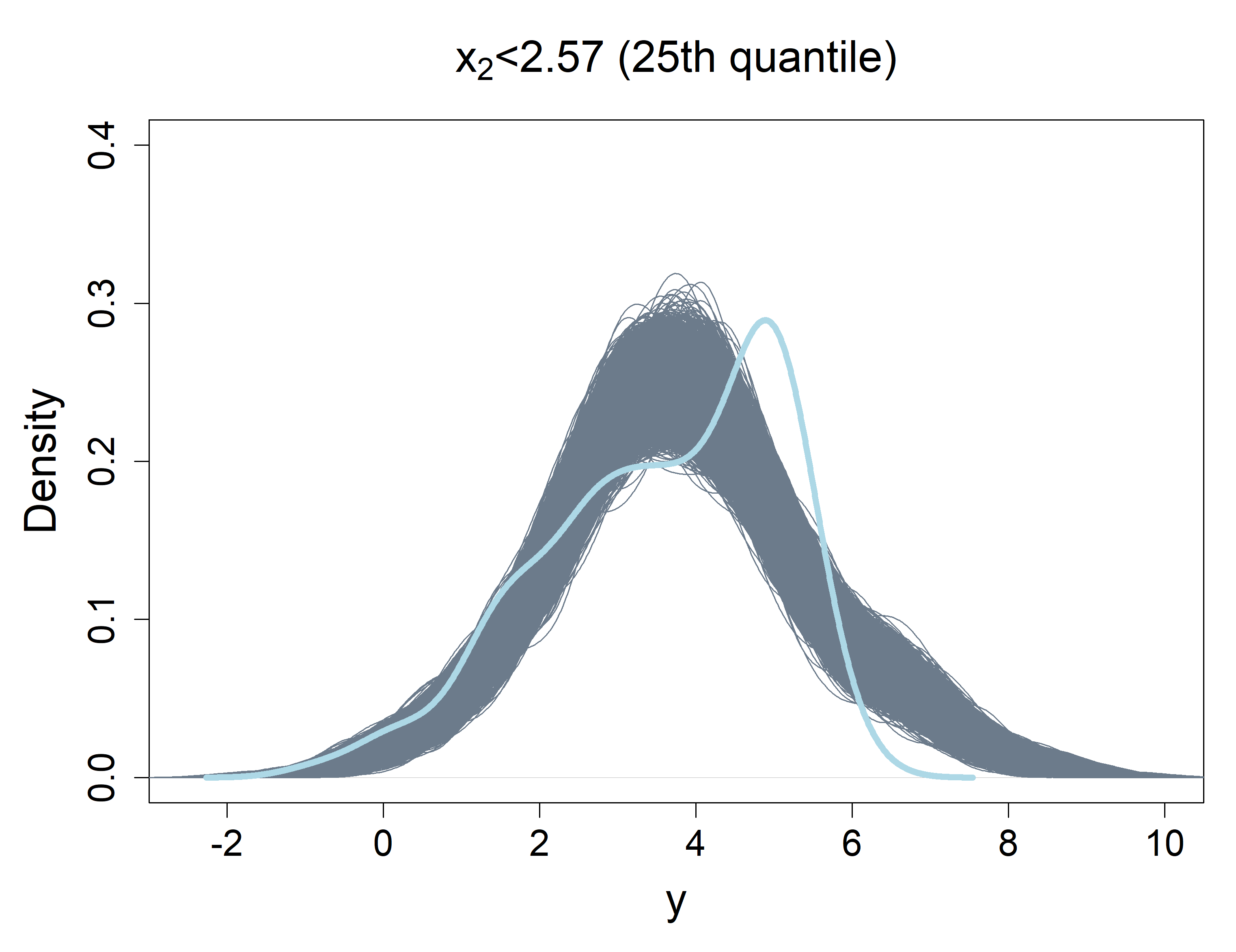}
}
\subfigure{
\centering
\includegraphics[width=0.31\textwidth]{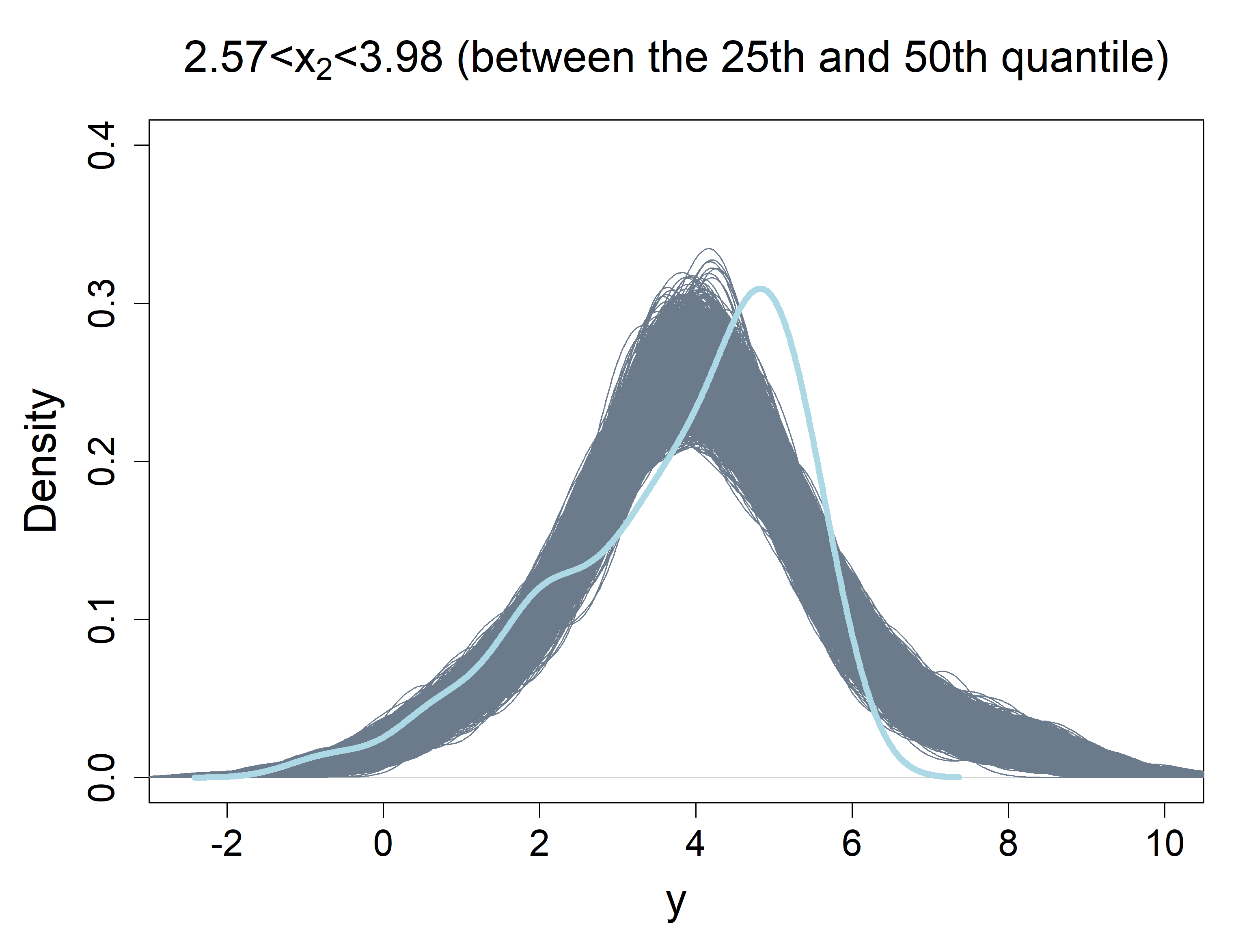}
}
\subfigure{
\centering
\includegraphics[width=0.31\textwidth]{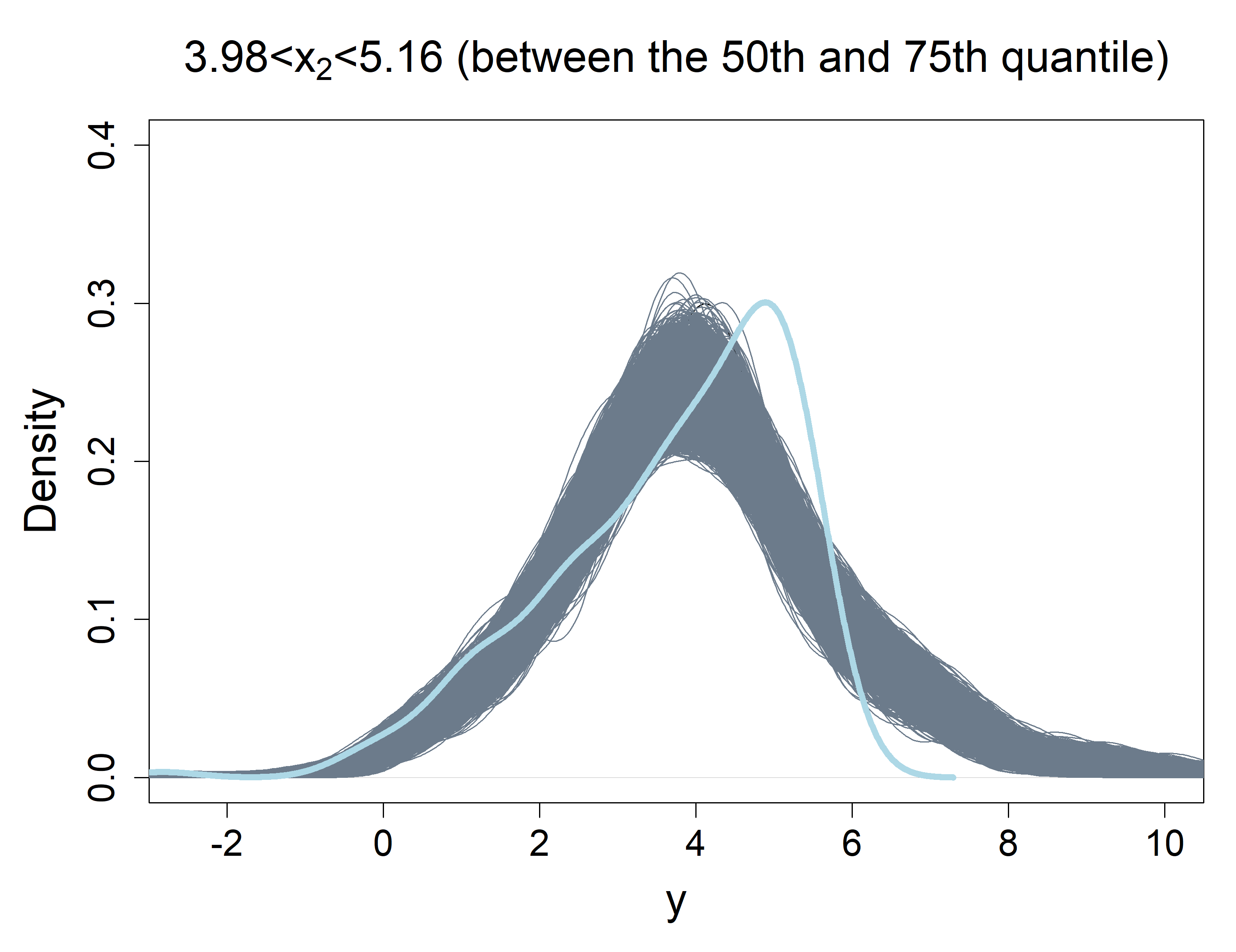}
}
\\
\subfigure{
\centering
\includegraphics[width=0.31\textwidth]{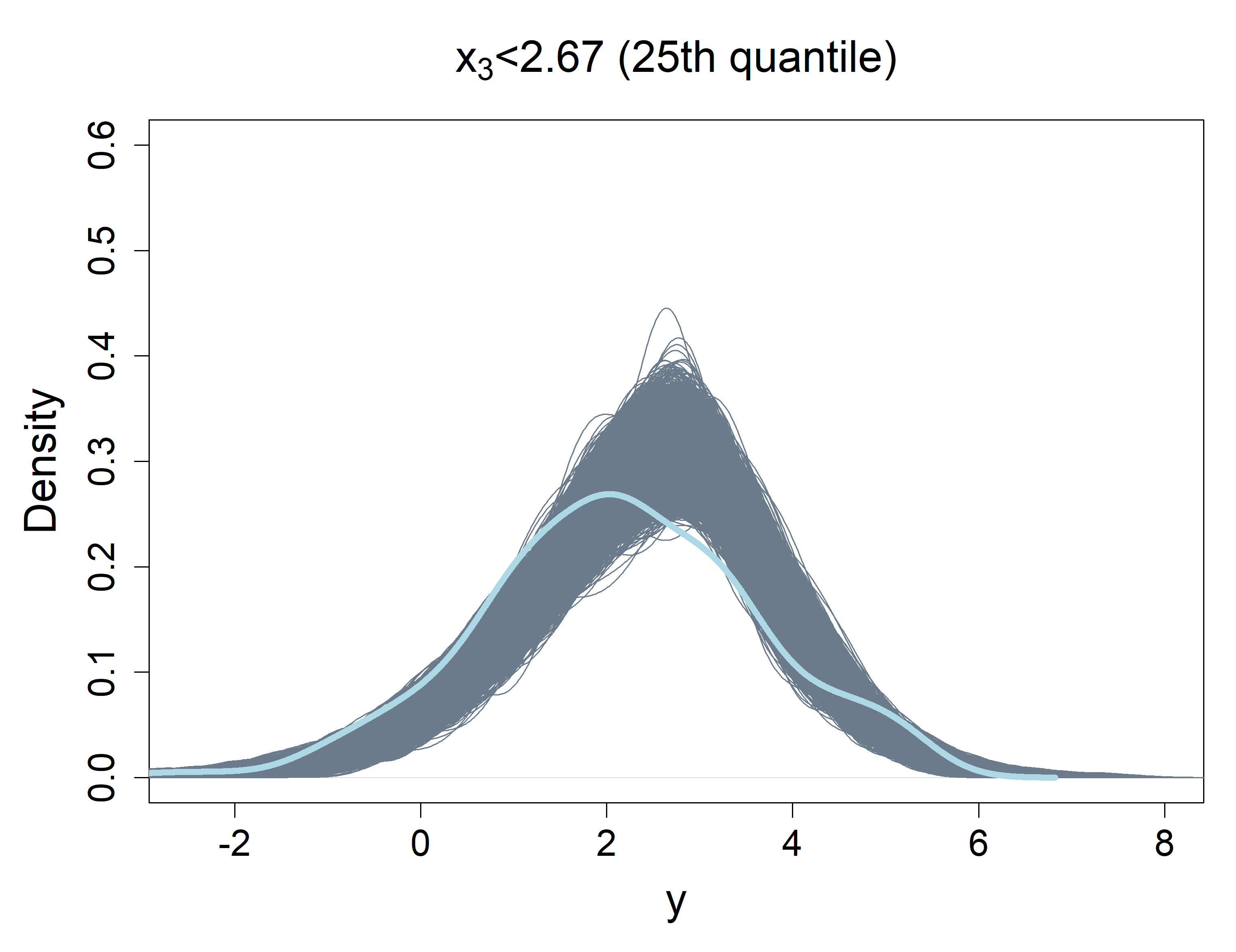}
}
\subfigure{
\centering
\includegraphics[width=0.31\textwidth]{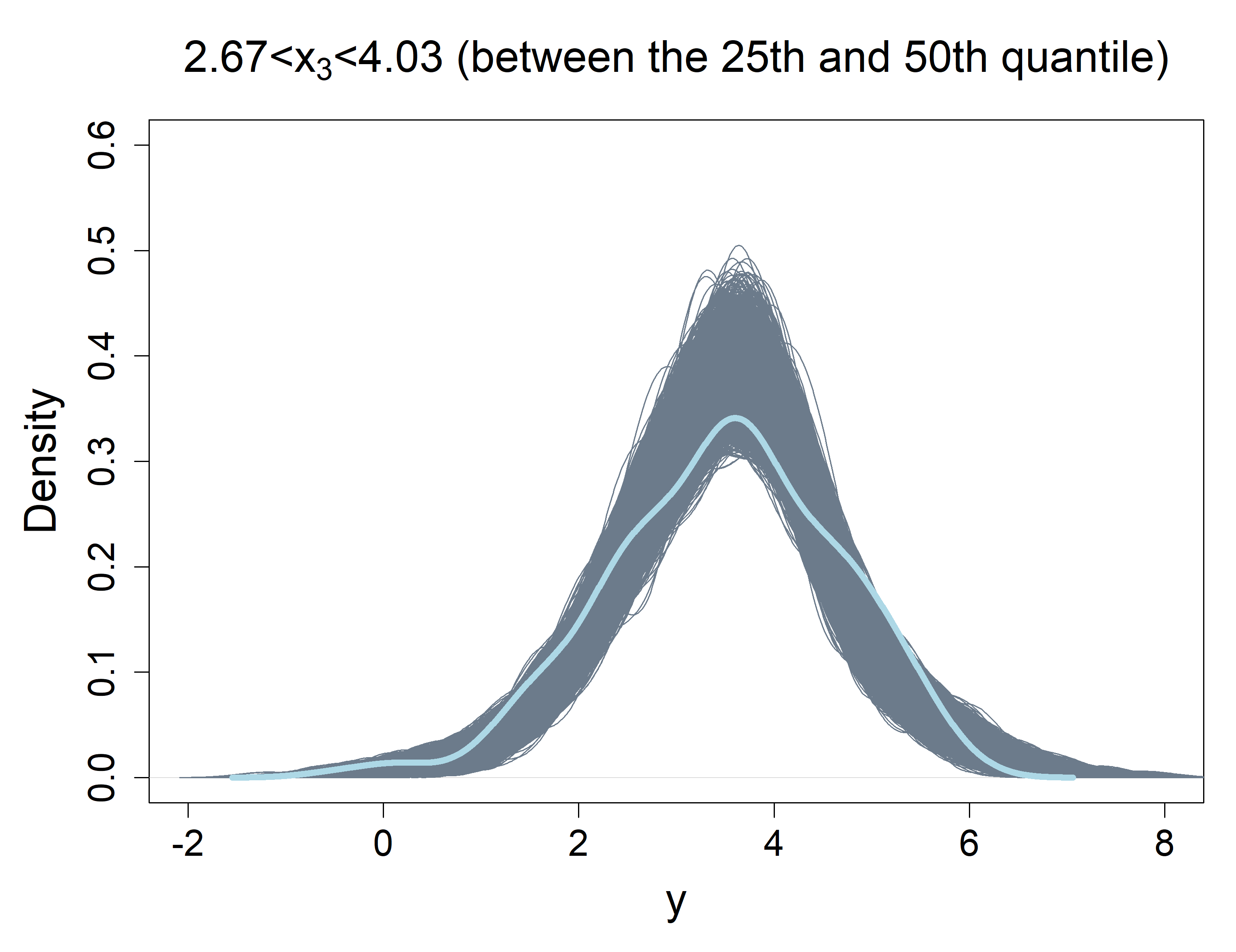}
}
\subfigure{
\centering
\includegraphics[width=0.31\textwidth]{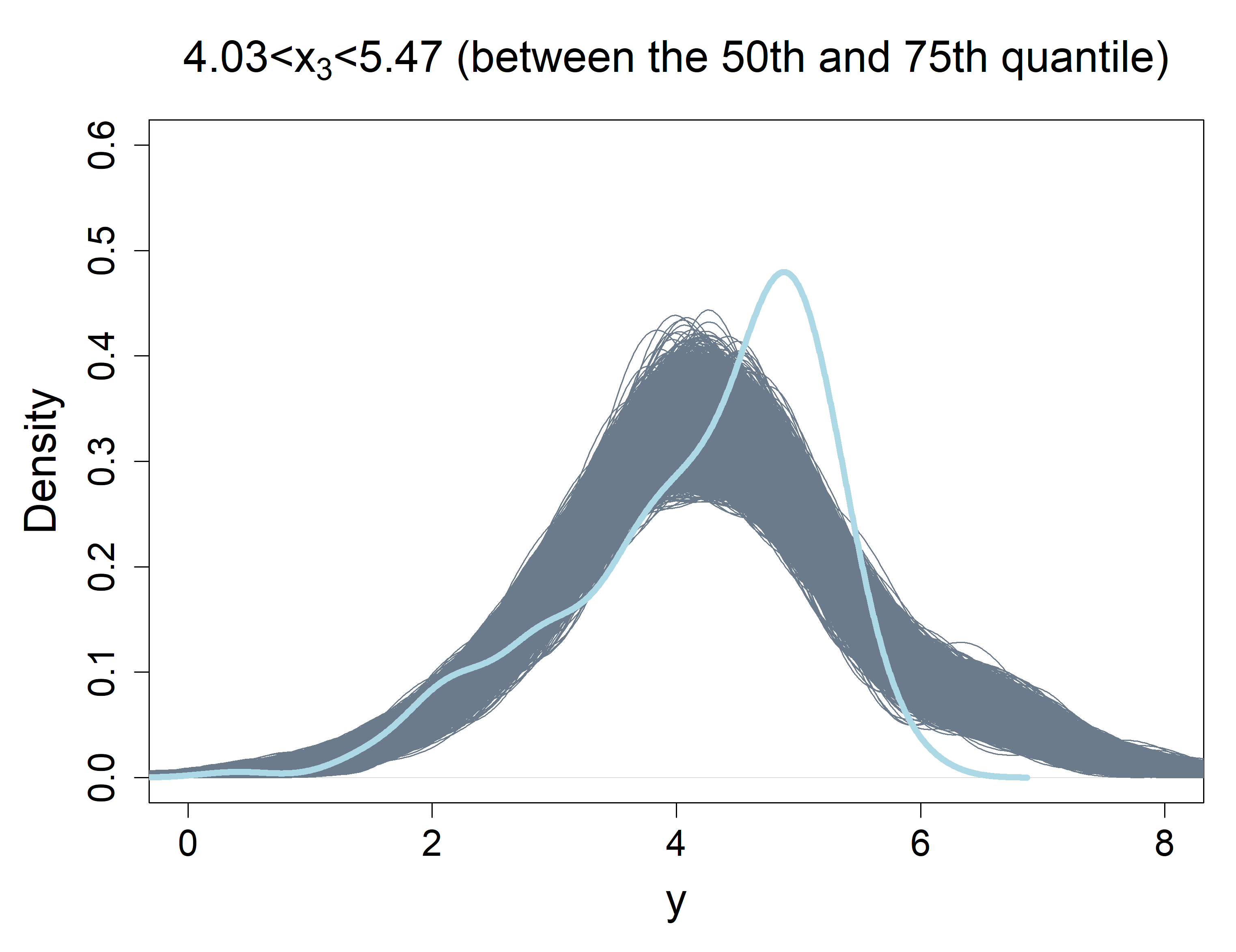}
}

\centering
\caption{Posterior predictive checks of Example D. Top row: estimated kurtosis, skewness, standard deviation of the response with the observed maximum value (light blue), shown alongside estimates from the 5000 datasets drawn from the posterior predictive distribution (grey). Rest rows: kernel density estimate of the observed response (light blue), shown alongside kernel density estimates from the 5000 datasets drawn from the posterior predictive distribution conditional on several covariate intervals (grey).}
\label{example_D_post_checks2}
\end{figure}

\begin{figure}[htbp]
\centering
\subfigure{
\centering
\includegraphics[width=0.42\textwidth]{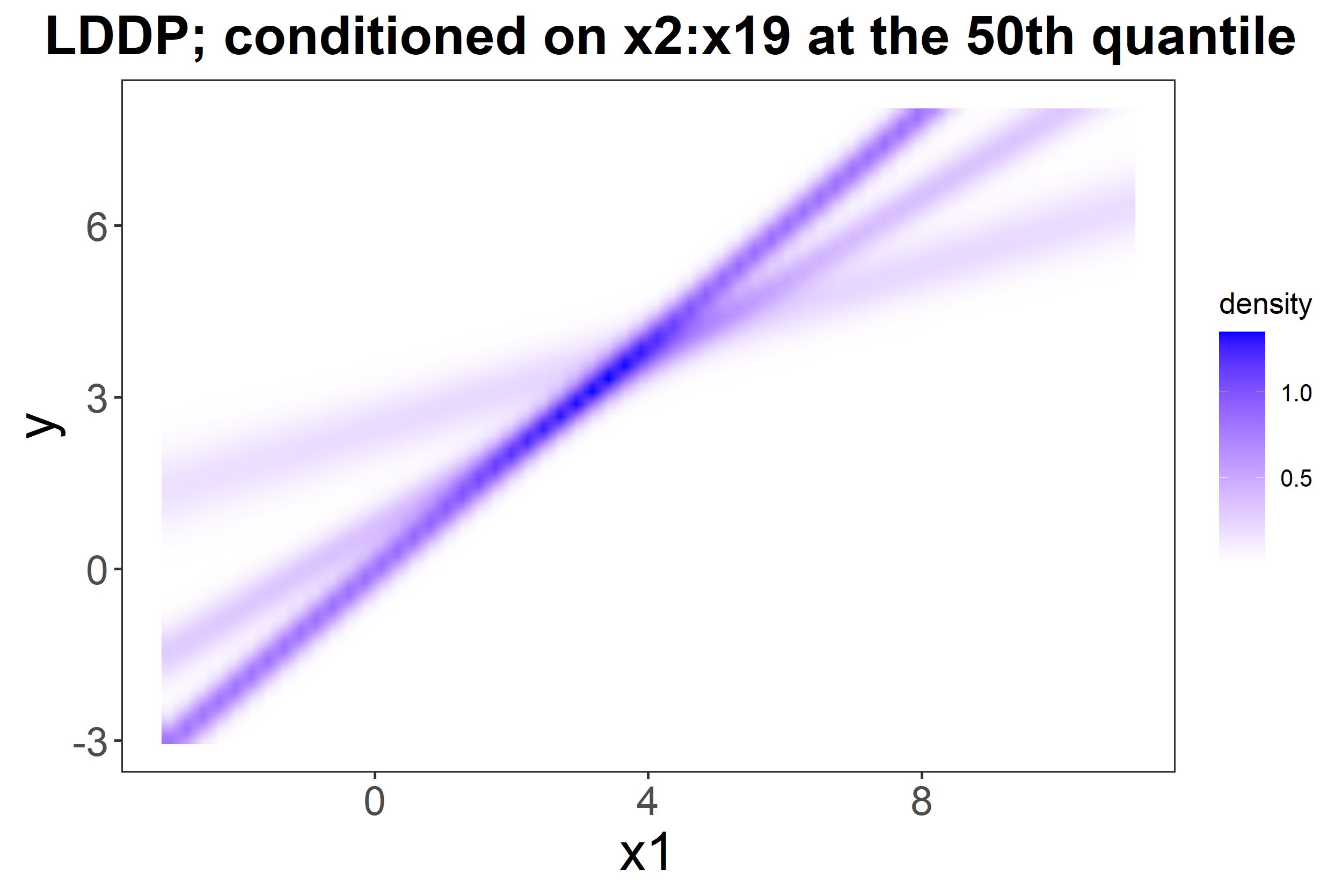}
}
\subfigure{
\centering
\includegraphics[width=0.42\textwidth]{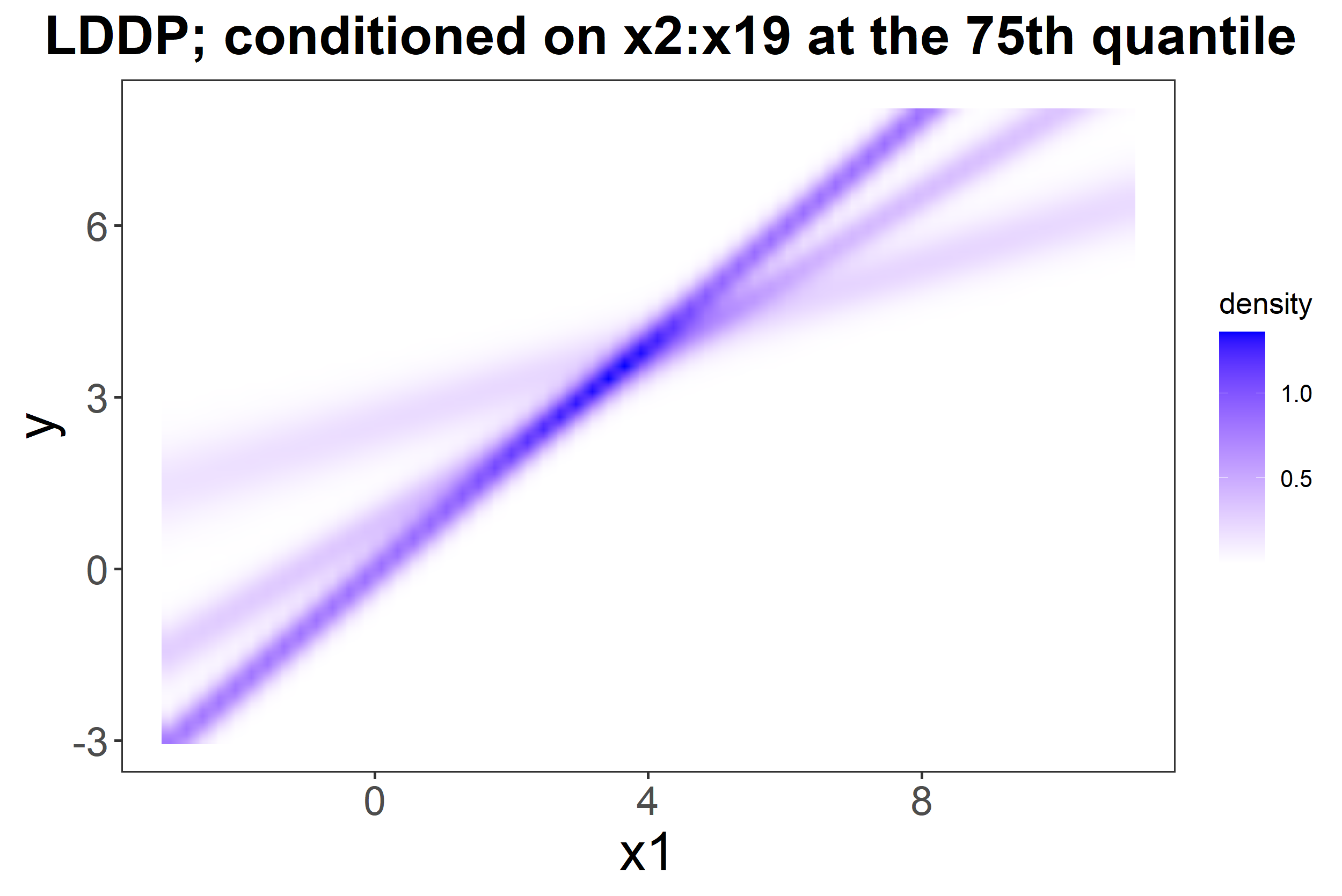}
}

\centering
\caption{Heatmap of estimated density regression functions conditioned on $x_2$ to $x_{19}$ at the $50\%$ and $75\%$ quantiles of Example D.}
\label{example_D_heatmap_plots}
\end{figure}

\newpage

\section{UNL analysis based on WASABI solutions for the real data applications}

\subsection{Breast cancer genomic analysis}
To choose the number of particles in WASABI \citep{balocchi2025understanding}, we construct an elbow plot in Figure \ref{UNL_BC_gene_elbow_summary}, which suggests a four-particle solution, achieving a balance between parsimony and minimizing the objective. And from the summary plots in Figure \ref{UNL_BC_gene_elbow_summary}, we can see that the third and the fourth particle of the WASABI solution has negligible weights, so the following analysis is only based on the first two particles which both contain three clusters. As shown in Figure \ref{cluster_chara_tumor_elbowp1} and Figure \ref{cluster_chara_tumor_elbowp2}, in both particles, cluster 1 corresponds to the most favorable prognosis and cluster 3 to the least favorable.

\begin{figure}[htbp]
\centering
\subfigure{
\centering
\includegraphics[width=0.46\textwidth]{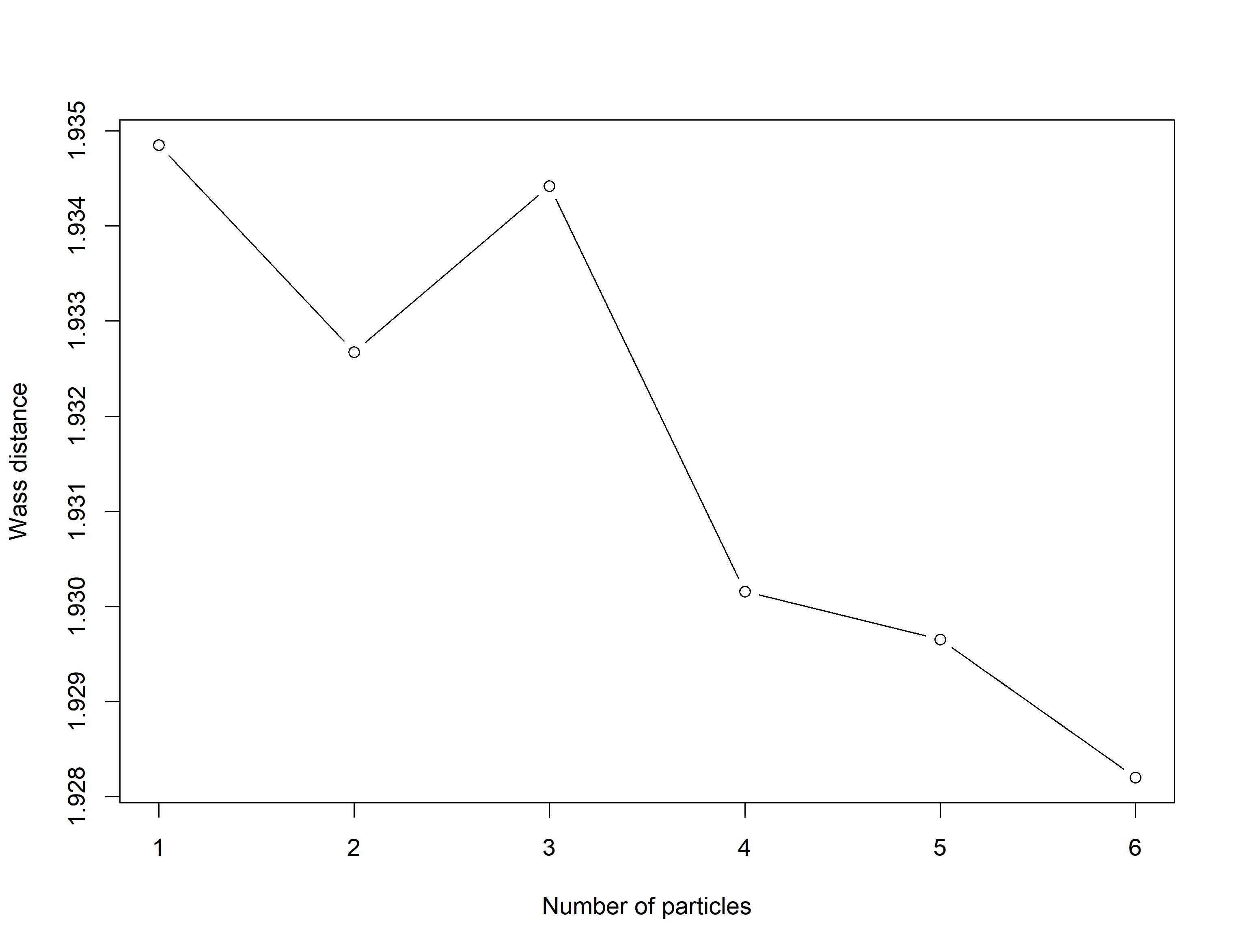}
}
\subfigure{
\centering
\includegraphics[width=0.46\textwidth]{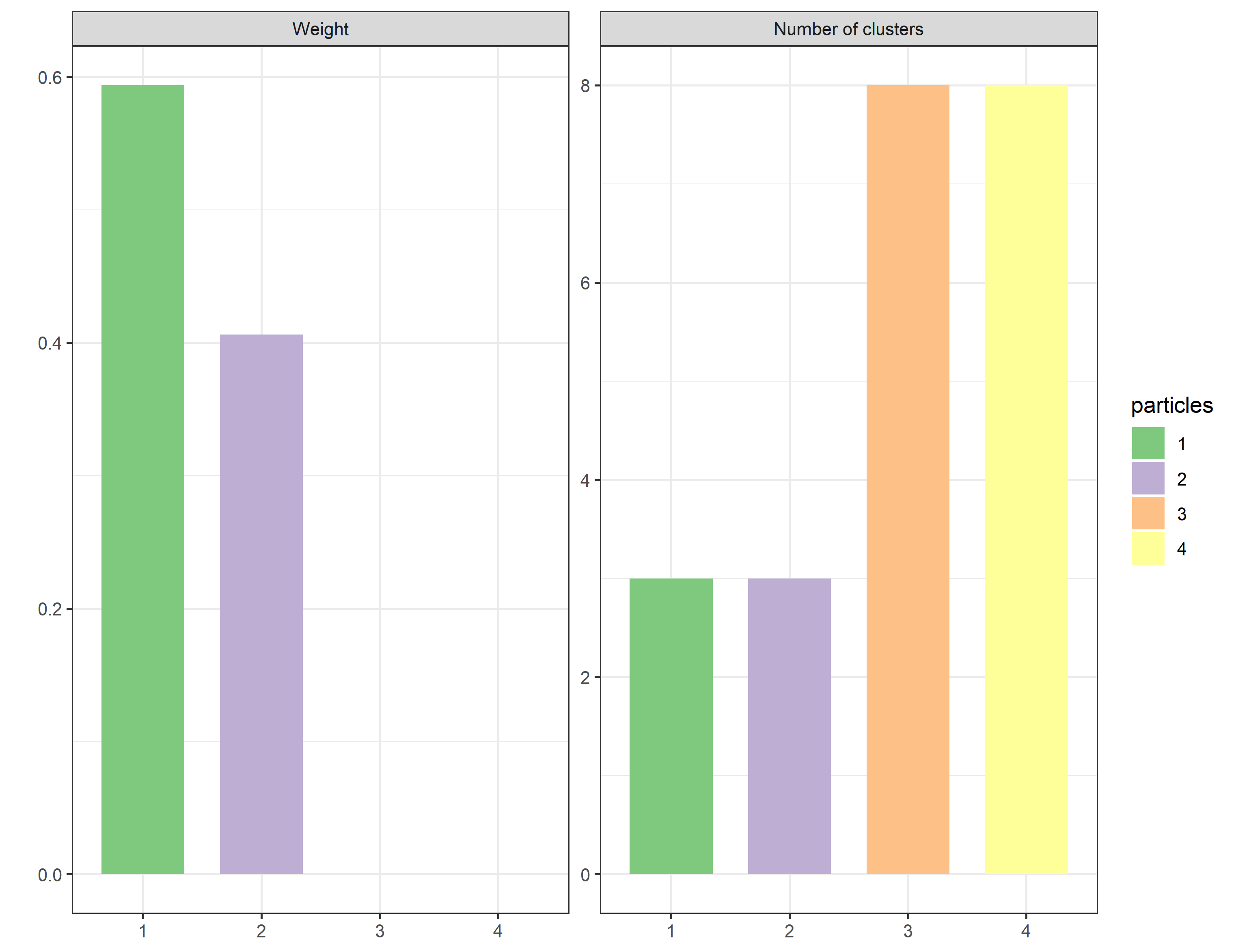}
}
\centering
\caption{Left panel: Elbow plots containing the Wasserstein distances achieved by the WASABI approximation for the breast cancer genomic analysis. Right panel: WASABI summaries of the four-particle clustering solution for the breast cancer genomic analysis.}
\label{UNL_BC_gene_elbow_summary}
\end{figure}

\begin{figure}[!t]
\centering
\subfigure{
\centering
\includegraphics[width=0.46\textwidth]{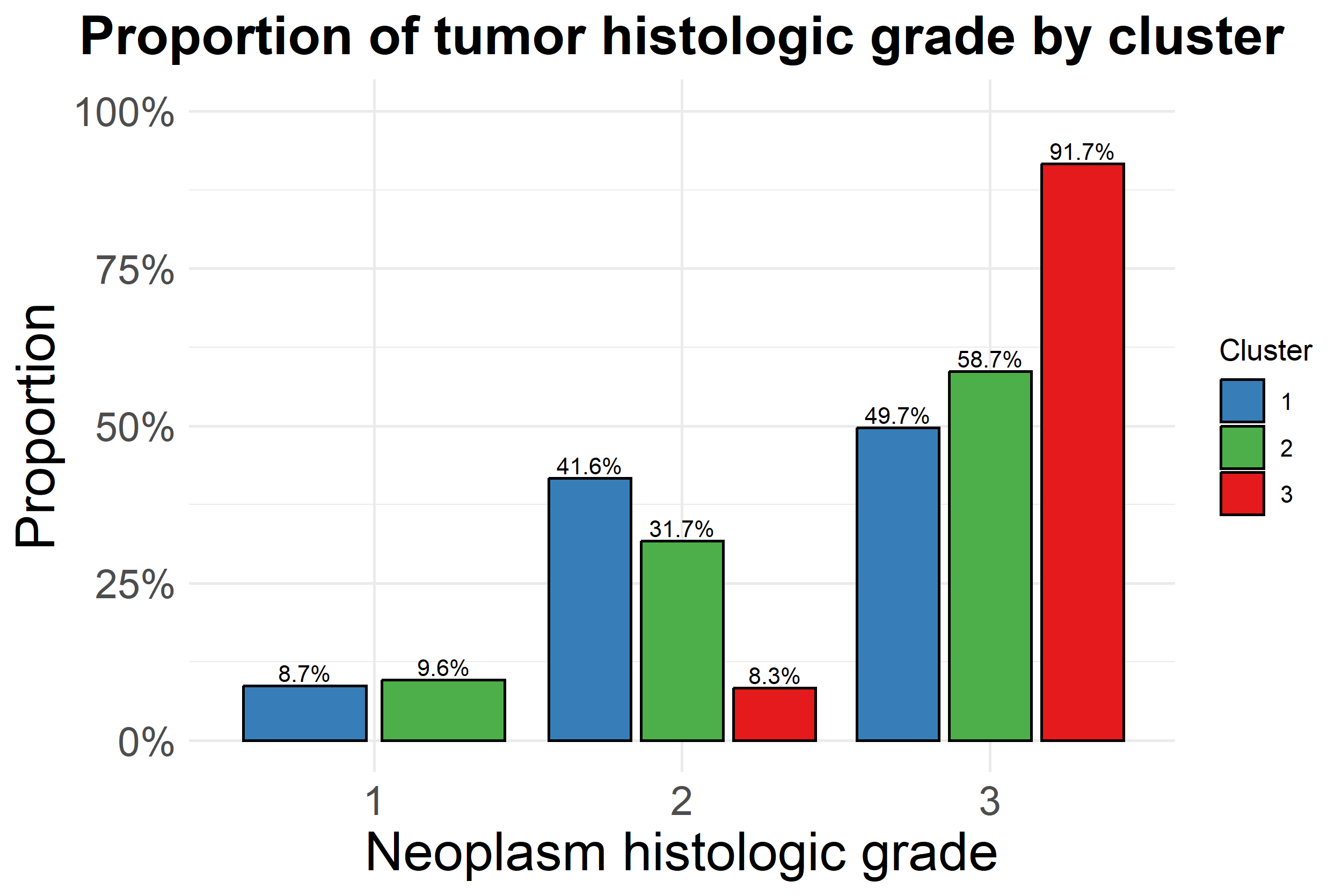}
}
\subfigure{
\centering
\includegraphics[width=0.46\textwidth]{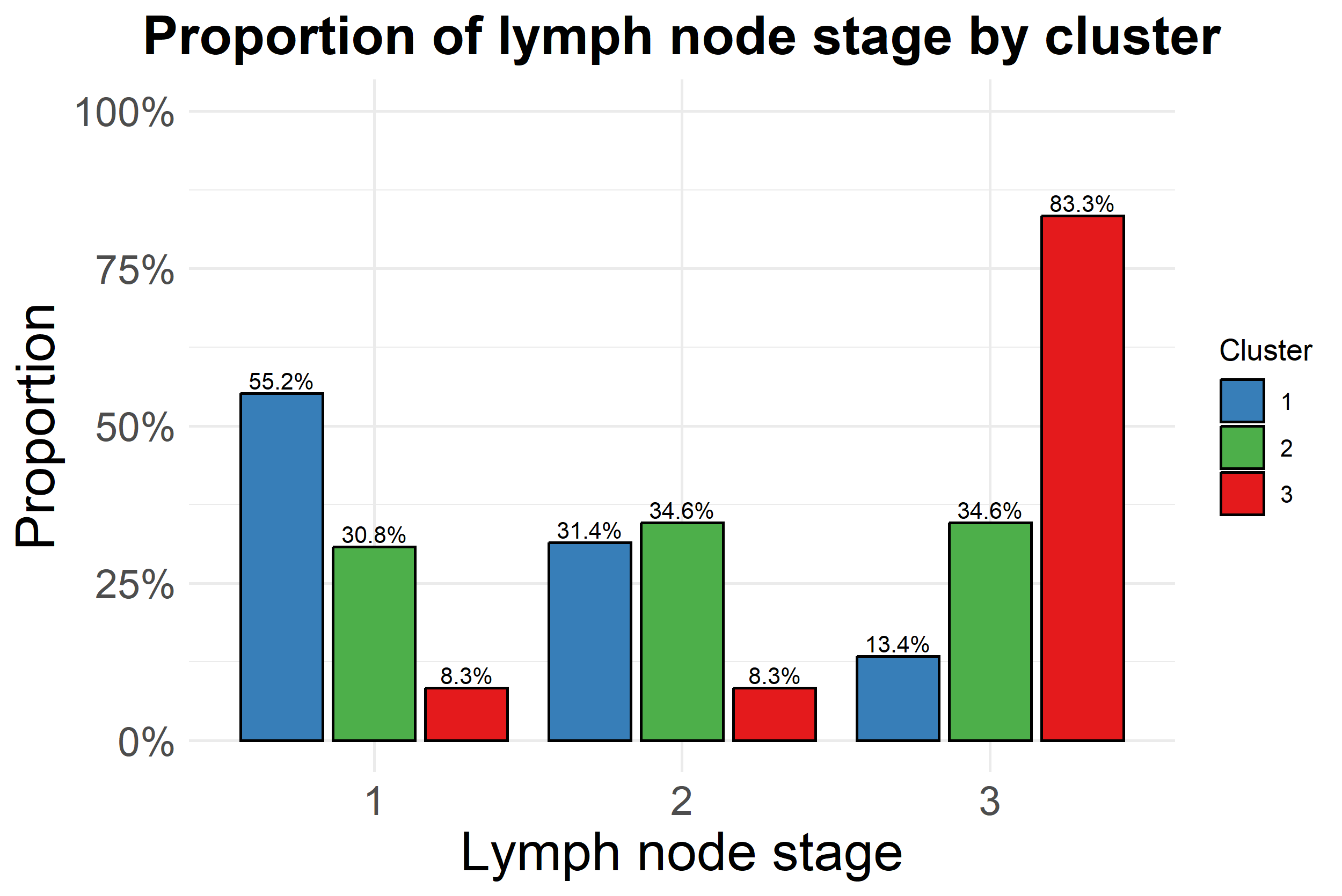}
}\\
\centering
\subfigure{
\centering
\includegraphics[width=0.46\textwidth]{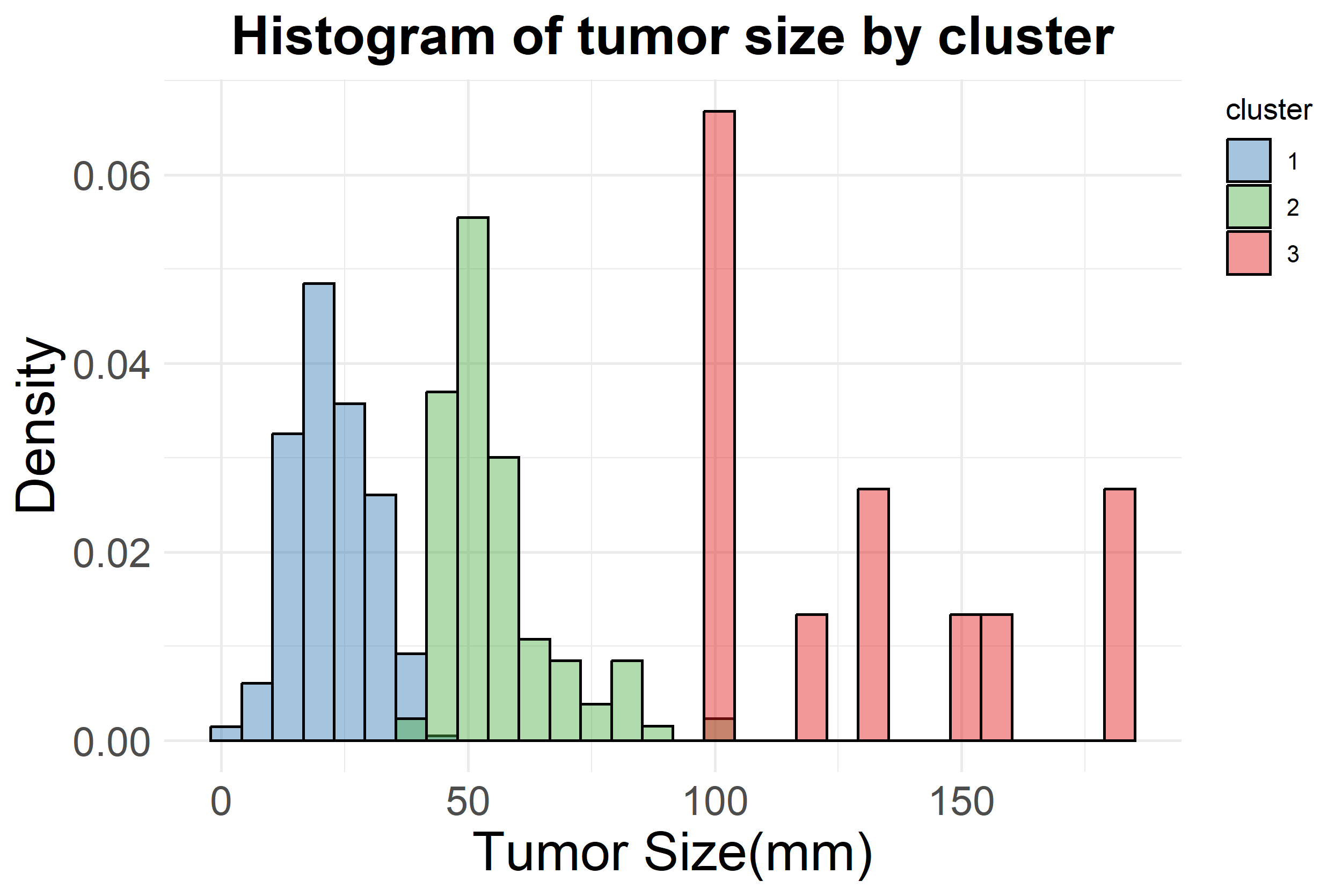}
}
\subfigure{
\centering
\includegraphics[width=0.46\textwidth]{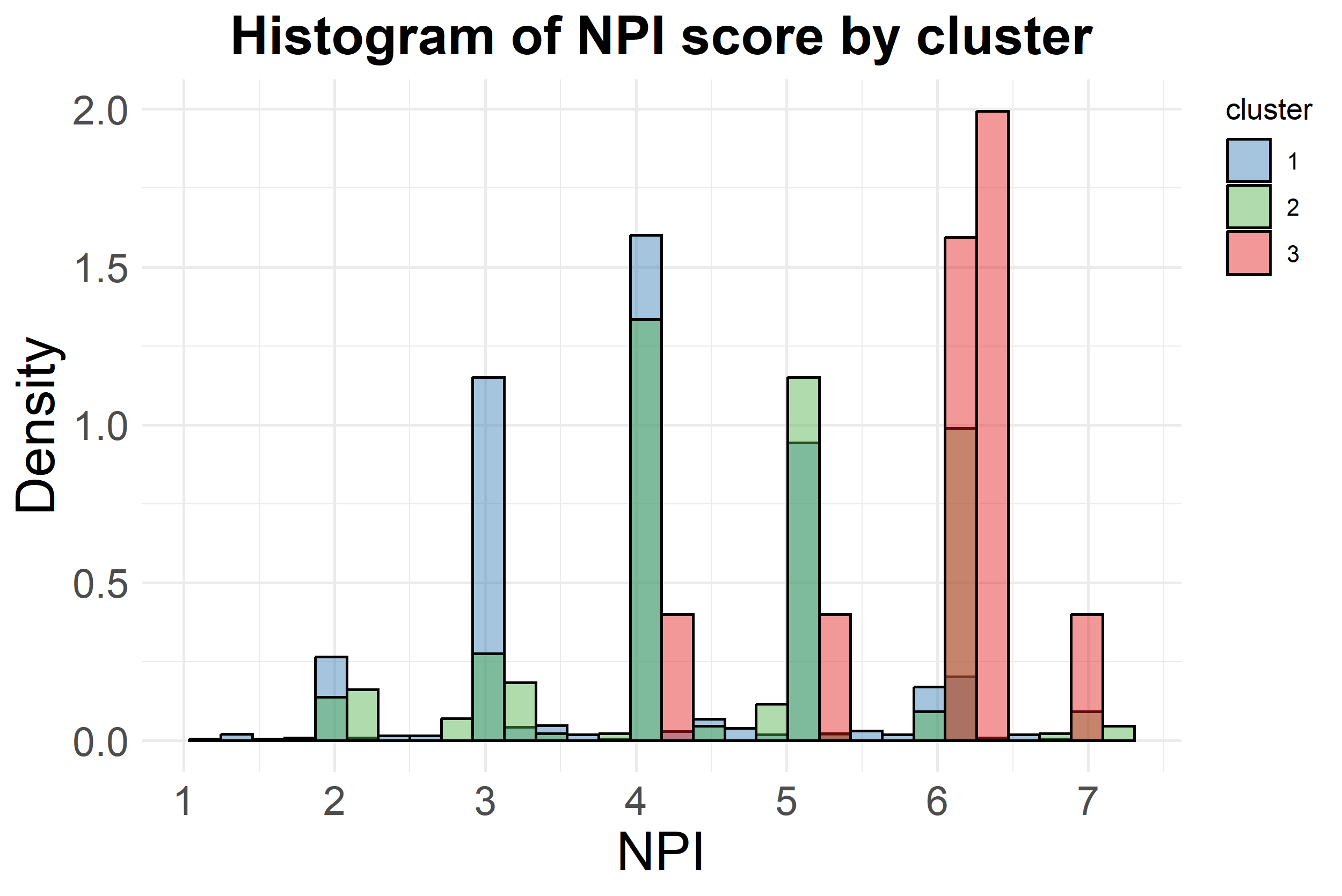}
}
\caption{Cluster profiles of breast cancer prognostic factors, based on the first particle from the four-particle WASABI solution.}
\label{cluster_chara_tumor_elbowp1}
\end{figure}

\begin{figure}[!t]
\centering
\subfigure{
\centering
\includegraphics[width=0.46\textwidth]{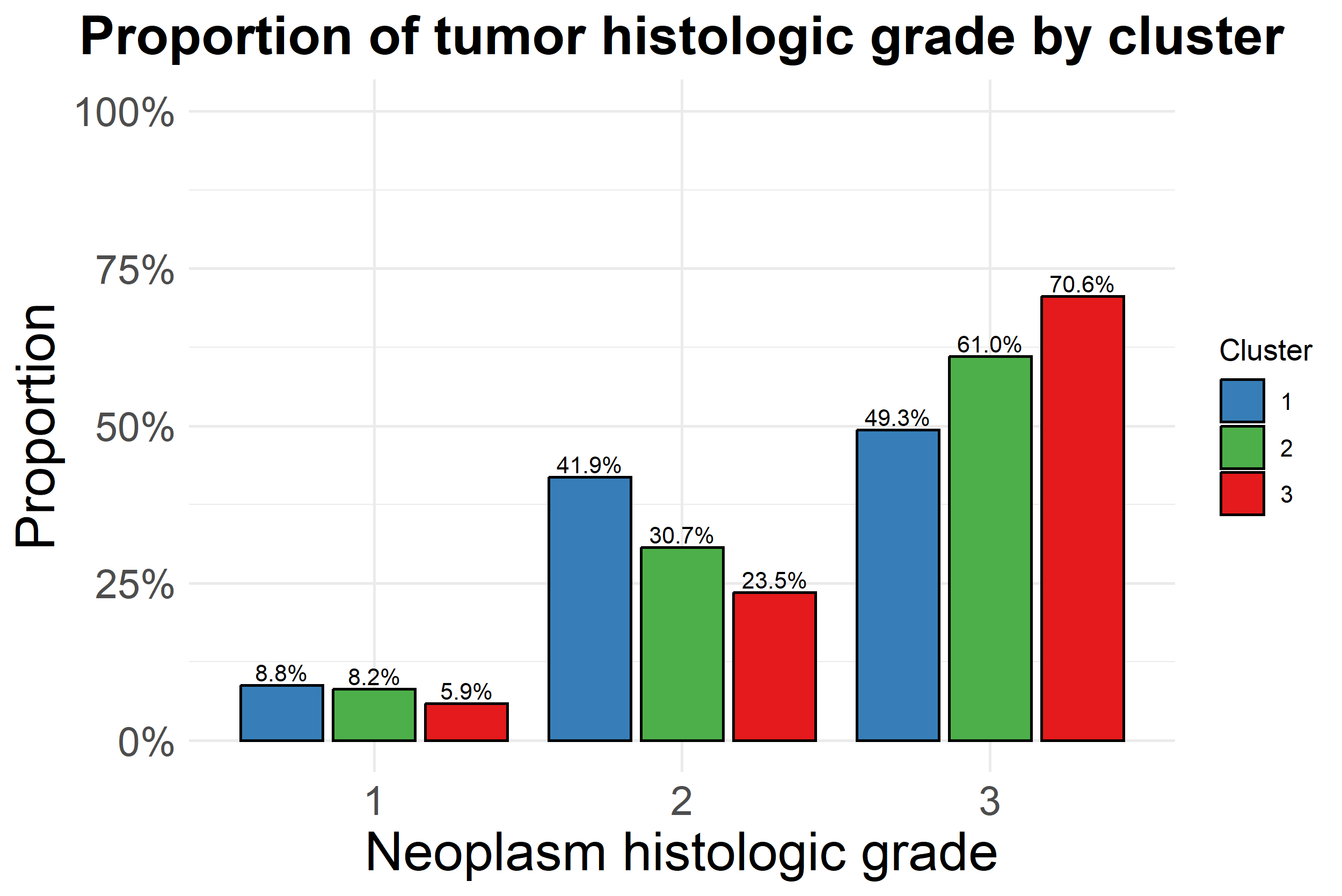}
}
\subfigure{
\centering
\includegraphics[width=0.46\textwidth]{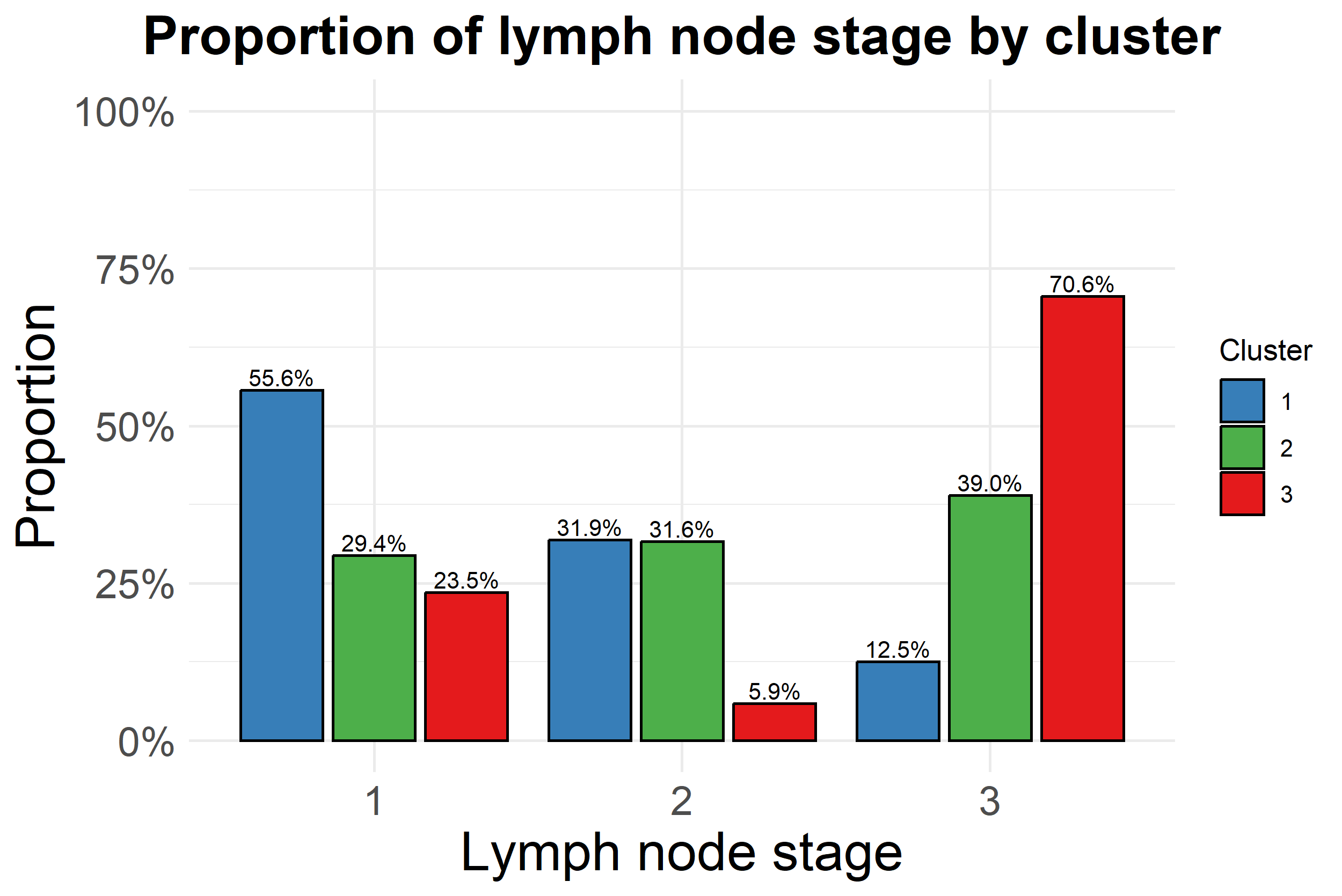}
}\\
\centering
\subfigure{
\centering
\includegraphics[width=0.46\textwidth]{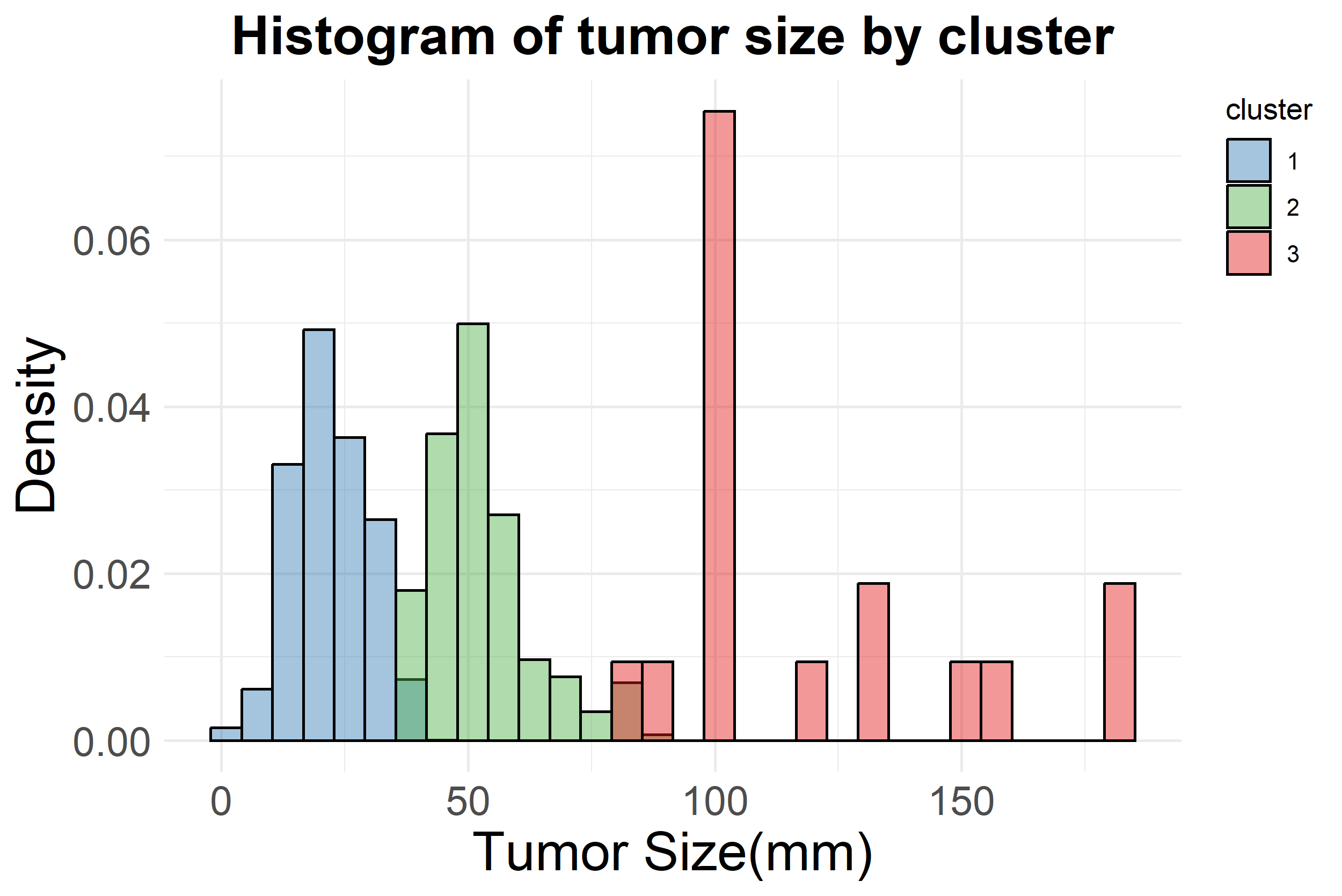}
}
\subfigure{
\centering
\includegraphics[width=0.46\textwidth]{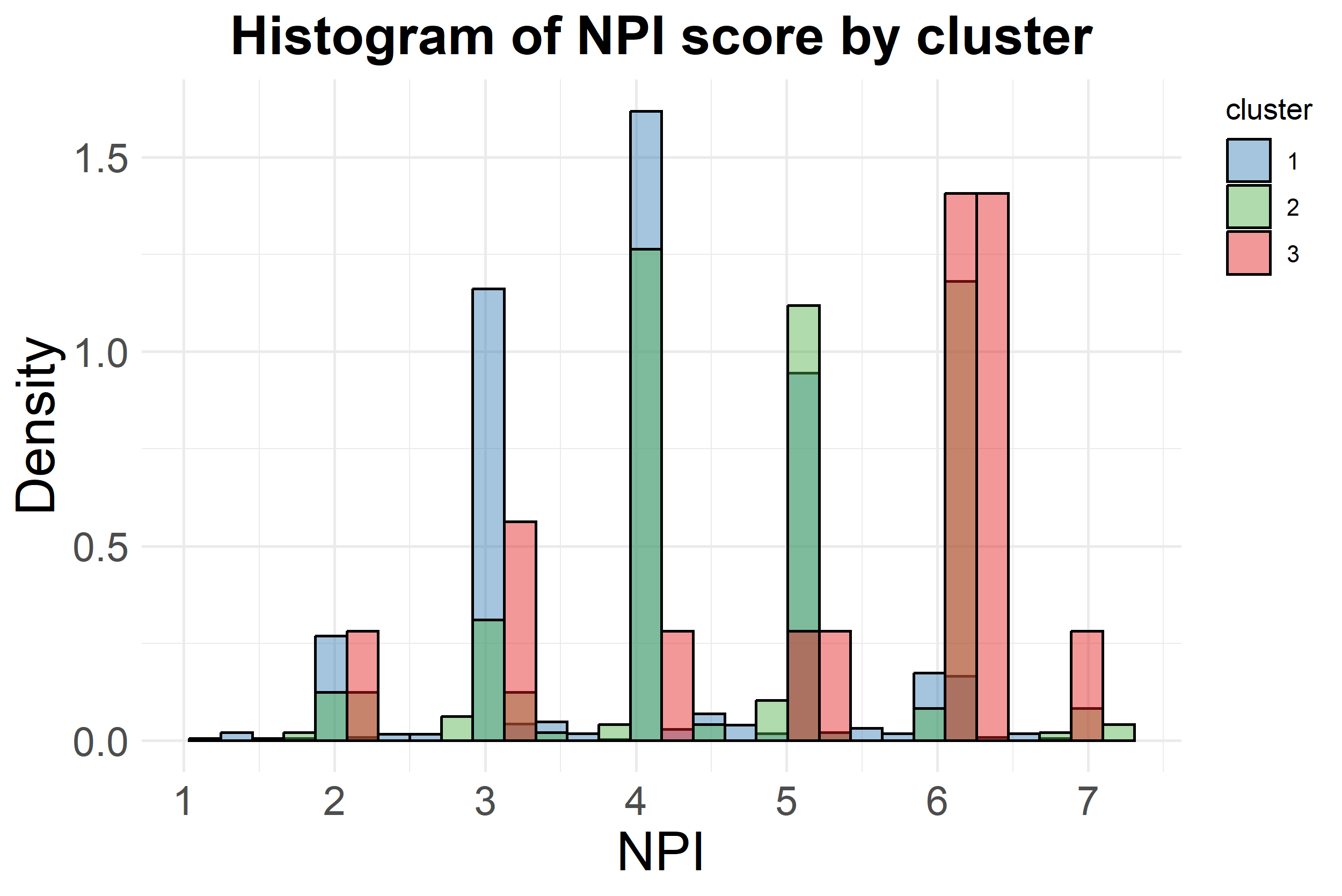}
}
\caption{Cluster profiles of breast cancer prognostic factors, based on the second particle from the four-particle WASABI solution.}
\label{cluster_chara_tumor_elbowp2}
\end{figure}

\begin{figure}[!t]
\centering
\subfigure{
\centering
\includegraphics[width=0.46\textwidth]{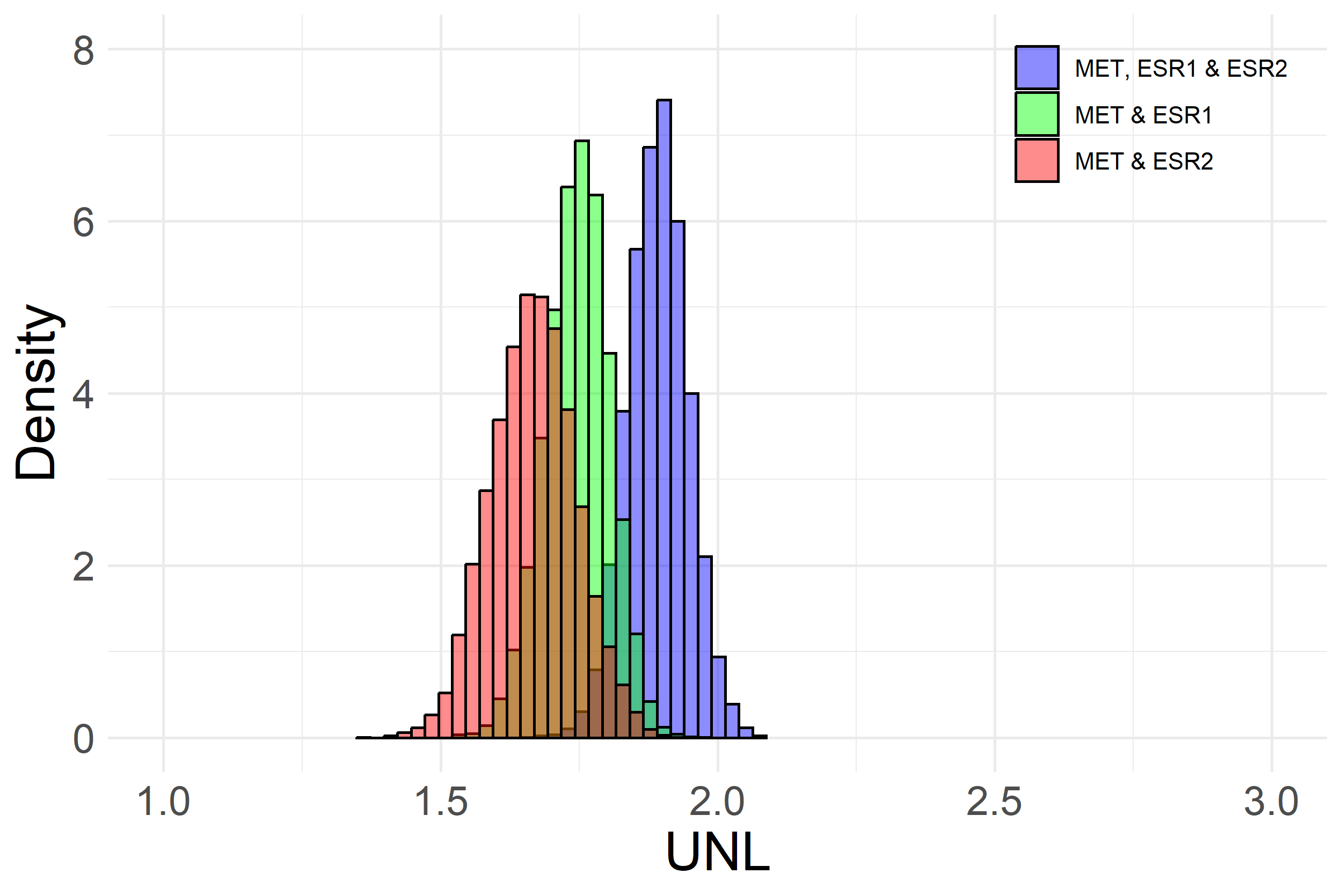}
}
\subfigure{
\centering
\includegraphics[width=0.46\textwidth]{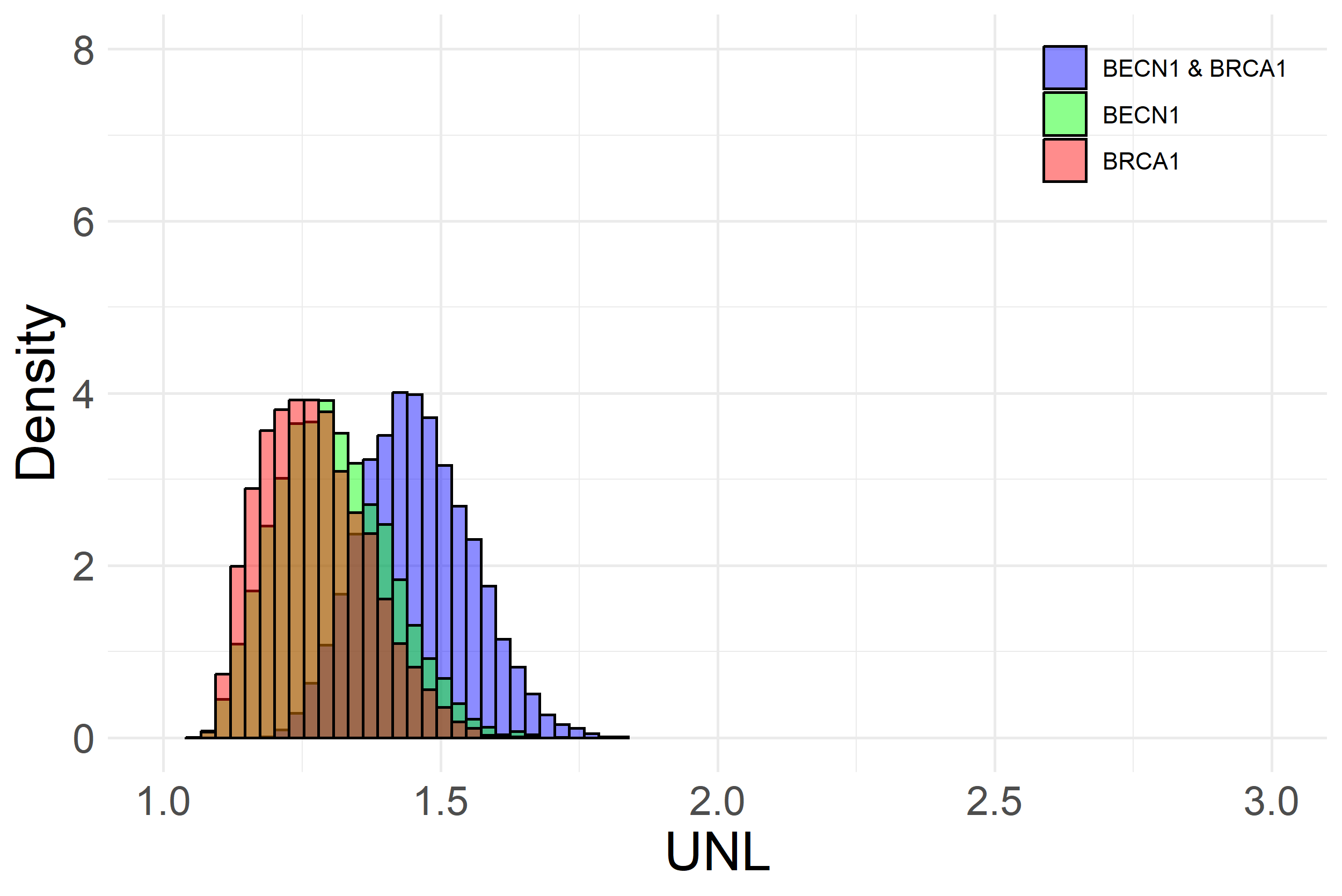}
}
\centering
\caption{Left panel: the histogram of estimated underlap for the \{MET, ESR1, ESR2\} gene group based on the first particle from the four-particle WASABI solution. Right panel: the histogram of estimated underlap for the \{BRCA1, BECN1\} gene group based on the first particle from the four-particle WASABI solution.}
\label{UNL_BC_gene_hist_elbowp1}
\end{figure}

We next quantify the dependence of the clustering on the two gene groups, \{MET, ESR1, ESR2\} and \{BRCA1,BECN1\}, via the underlap coefficient. Densities of mRNA log-expression and the underlap coefficient are estimated in the same way as described for the single representative partition solution. Figure \ref{UNL_BC_gene_hist_elbowp1} and Figure \ref{UNL_BC_gene_hist_elbowp2} display the histograms of the estimated underlap coefficient for each gene groups in both particles. we can arrive at a similar conclusion to the single representative partition solution that the coexpression of MET and ESR1 is more informative for the clinicopathologic prognosis clusters than MET and ESR2, and BECN1 is slightly shifted more informative than BRCA1.

\begin{figure}[!t]
\centering
\subfigure{
\centering
\includegraphics[width=0.46\textwidth]{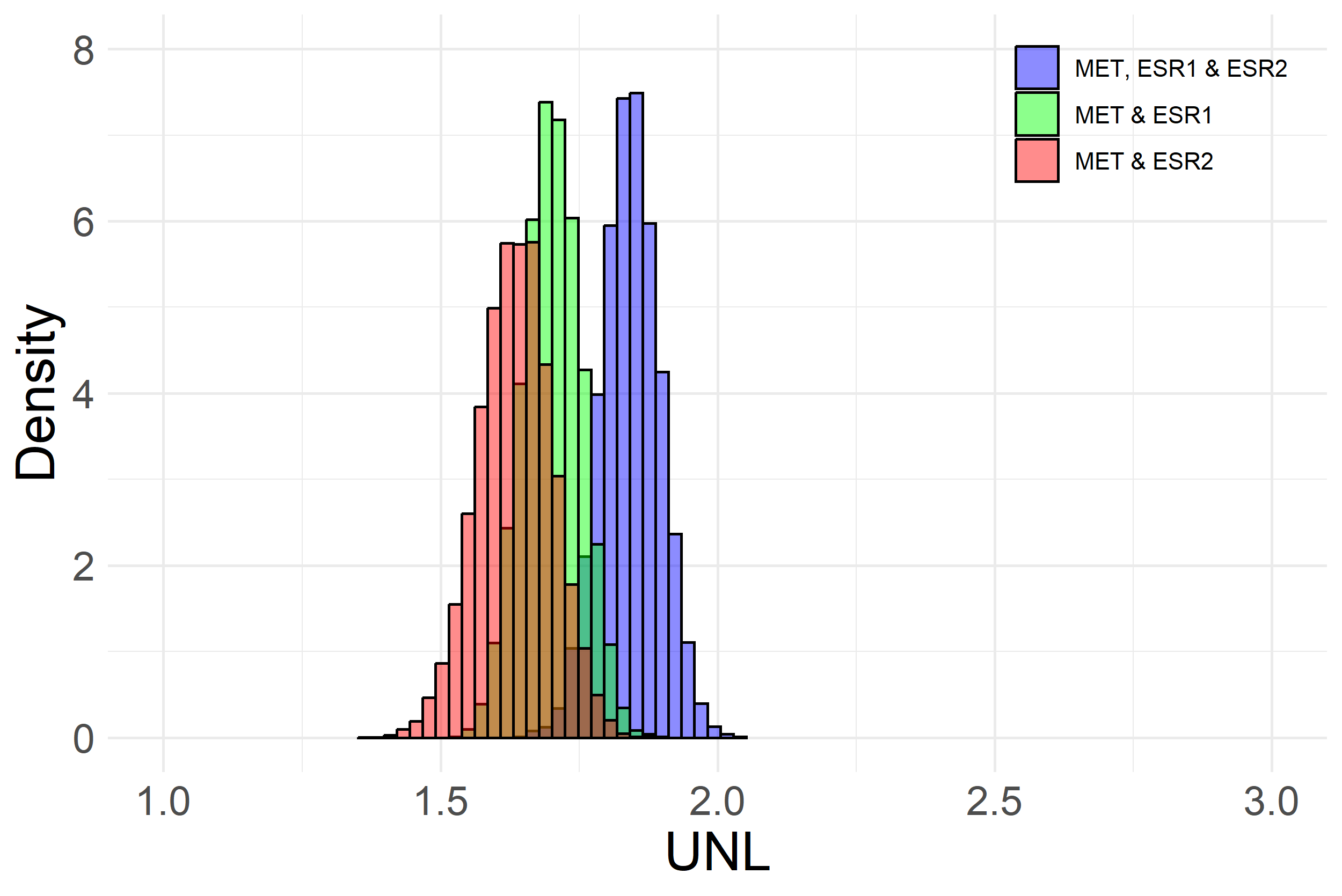}
}
\subfigure{
\centering
\includegraphics[width=0.46\textwidth]{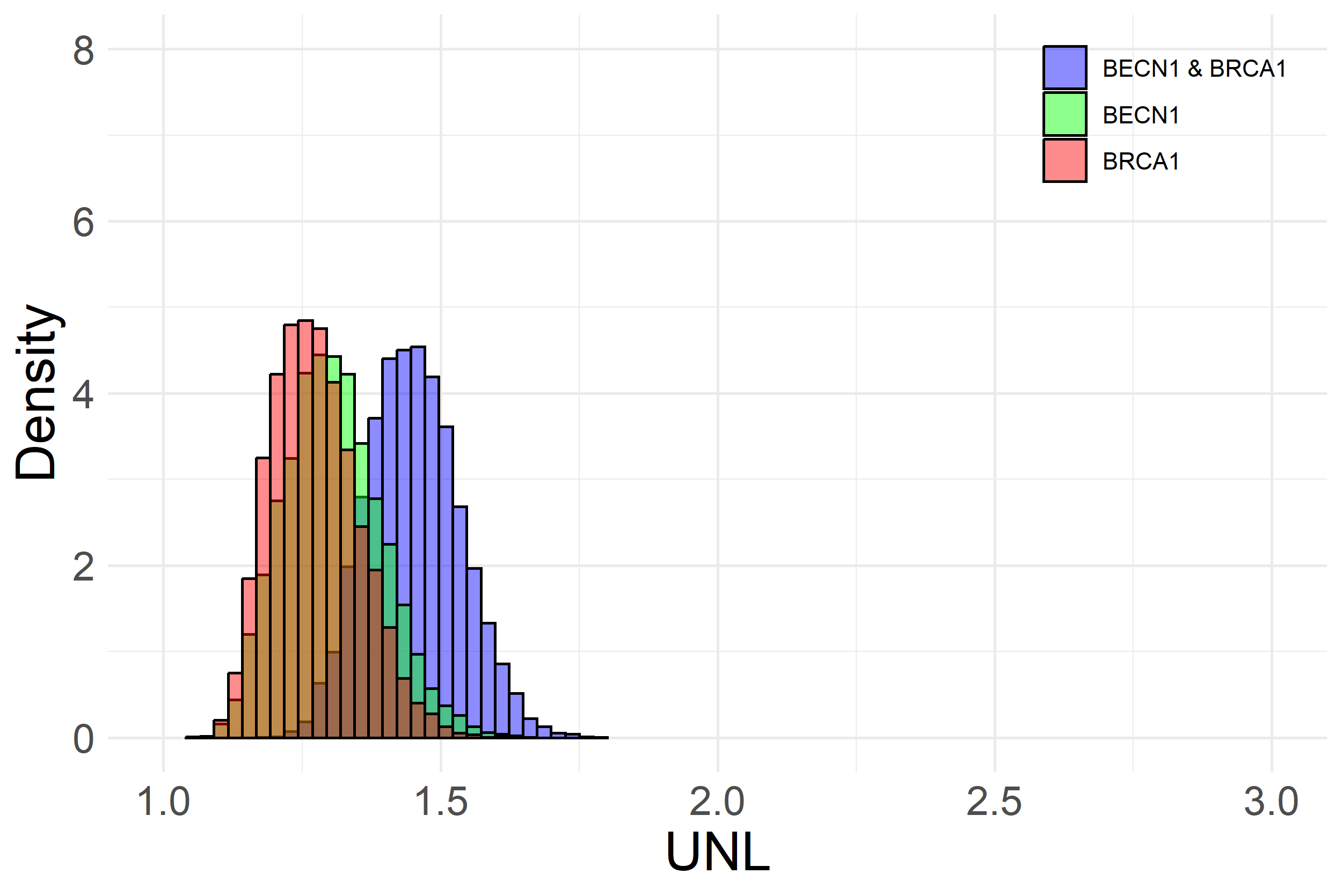}
}
\centering
\caption{Left panel: the histogram of estimated underlap for the \{MET, ESR1, ESR2\} gene group based on the second particle from the four-particle WASABI solution. Right panel: the histogram of estimated underlap for the \{BRCA1, BECN1\} gene group based on the second particle from the four-particle WASABI solution.}
\label{UNL_BC_gene_hist_elbowp2}
\end{figure}

\clearpage
\subsection{Pregnancy term toxicology analysis}
To choose the number of particles in WASABI \citep{balocchi2025understanding}, we construct an elbow plot in Figure \ref{UNL_dde_elbow_summary}, which suggests a four-particle solution, achieving a balance between parsimony and minimizing the objective. And from the summary plots in Figure \ref{UNL_dde_elbow_summary}, we can see that the third and the fourth particle of the WASABI solution has negligible weights, so the following analysis is only based on the first two particles which both contain three clusters. As shown in Figures \ref{dde_cluster_partition_dde_UNL_hist_elbowp1} and \ref{dde_cluster_partition_dde_UNL_hist_elbowp2}, the first particle contains two clusters, roughly corresponding to the postterm and non-postterm groups, whereas the second particle contains three clusters, roughly corresponding to the preterm, term, and postterm groups.

We assessed the dependence of the inferred partition on DDE using the underlap coefficient. Densities of DDE and the underlap coefficient are estimated in the same way as described for the single representative partition solution. Figure \ref{dde_cluster_partition_dde_UNL_hist_elbowp1} and Figure \ref{dde_cluster_partition_dde_UNL_hist_elbowp1} also displays the histogram of the estimated underlap coefficient for DDE in the first two particles. The posterior sampling distribution of the UNL concentrates near one in both particles, indicating little residual dependence of the partition on DDE.

\begin{figure}[!t]
\centering
\subfigure{
\centering
\includegraphics[width=0.46\textwidth]{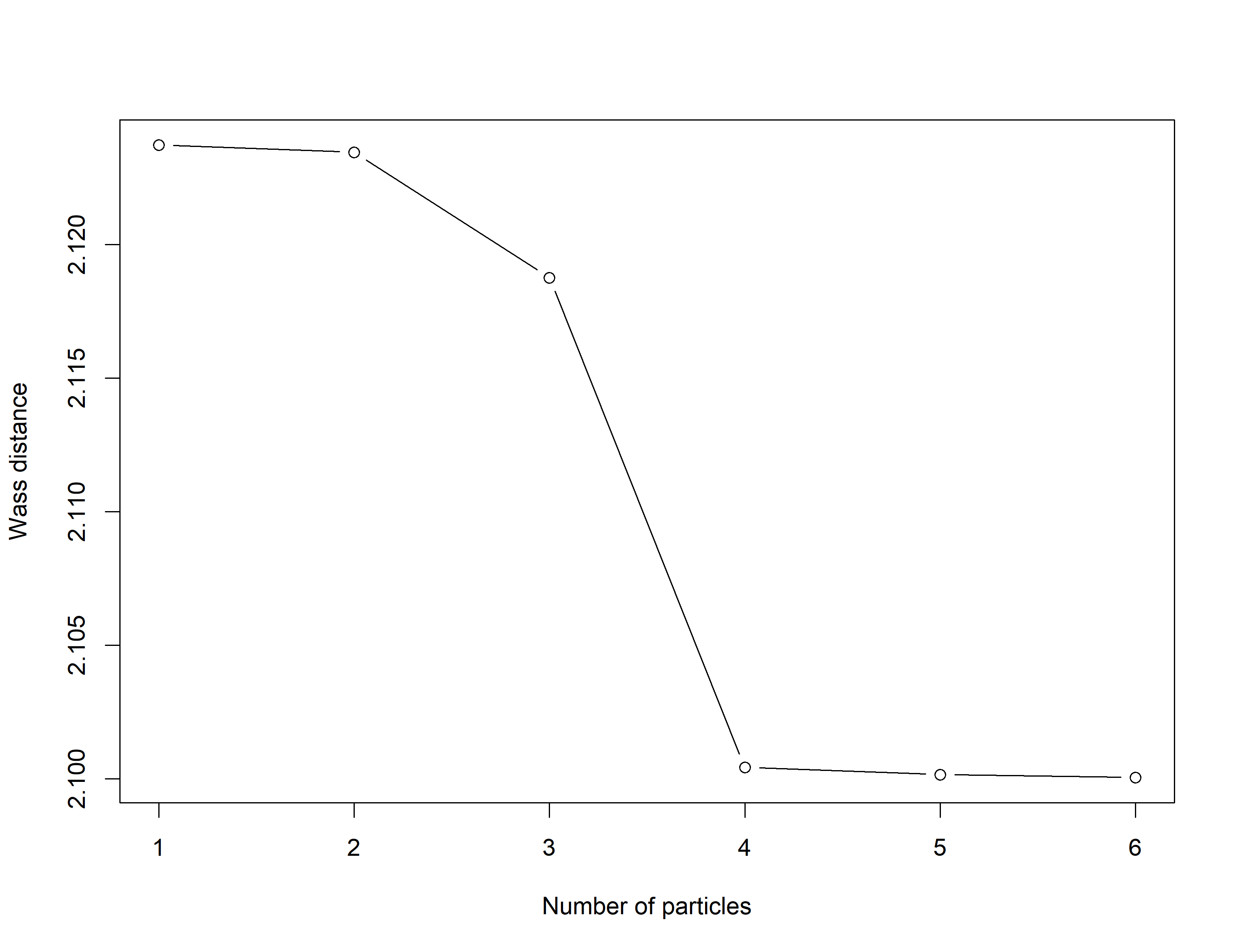}
}
\subfigure{
\centering
\includegraphics[width=0.46\textwidth]{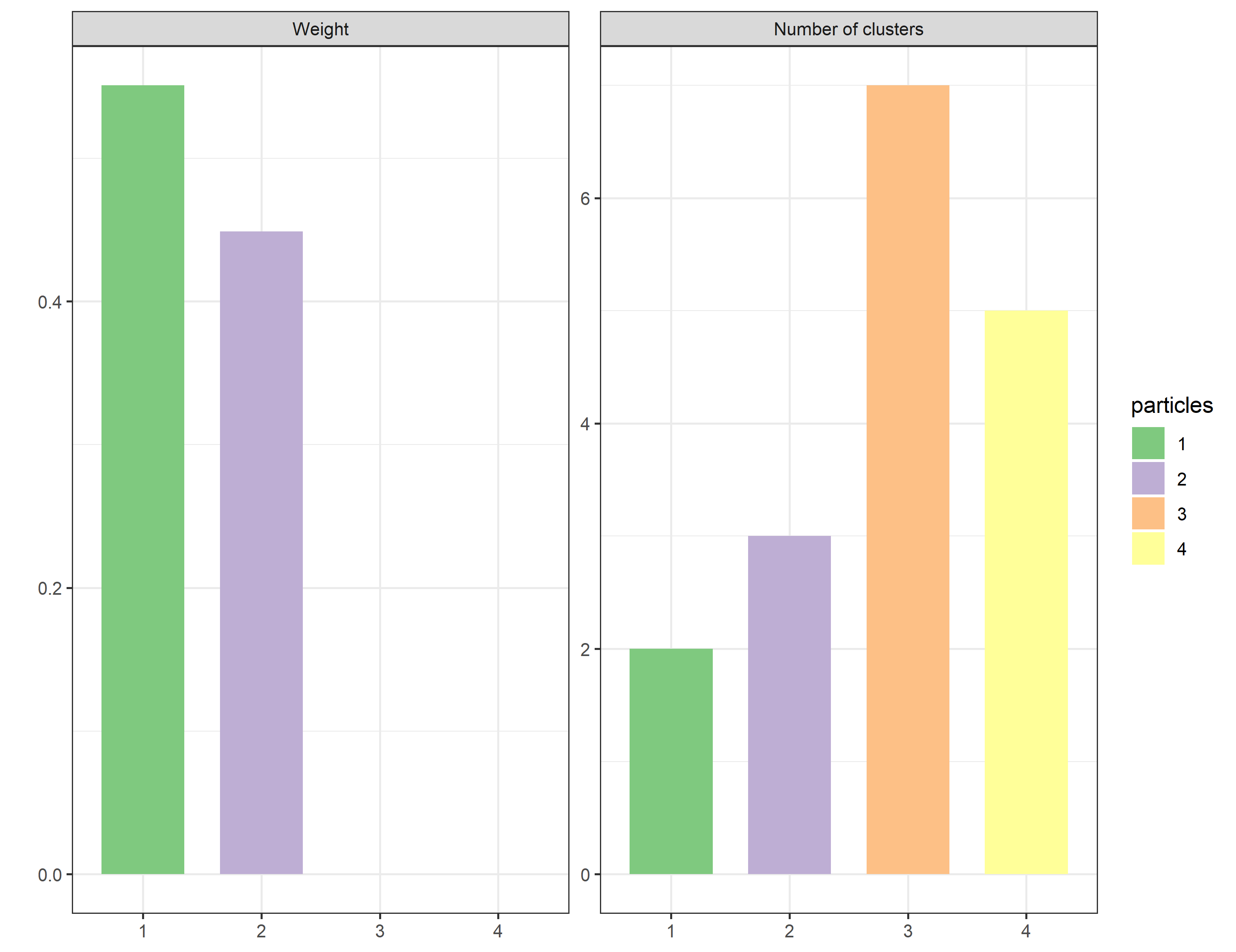}
}
\centering
\caption{Left panel: Elbow plots containing the Wasserstein distances achieved by the WASABI approximation for the pregnancy term toxicology analysis. Right panel: WASABI summaries of the four-particle clustering solution for the pregnancy term toxicology analysis.}
\label{UNL_dde_elbow_summary}
\end{figure}

\begin{figure}[!t]
\centering
\subfigure{
\centering
\includegraphics[width=0.46\textwidth]{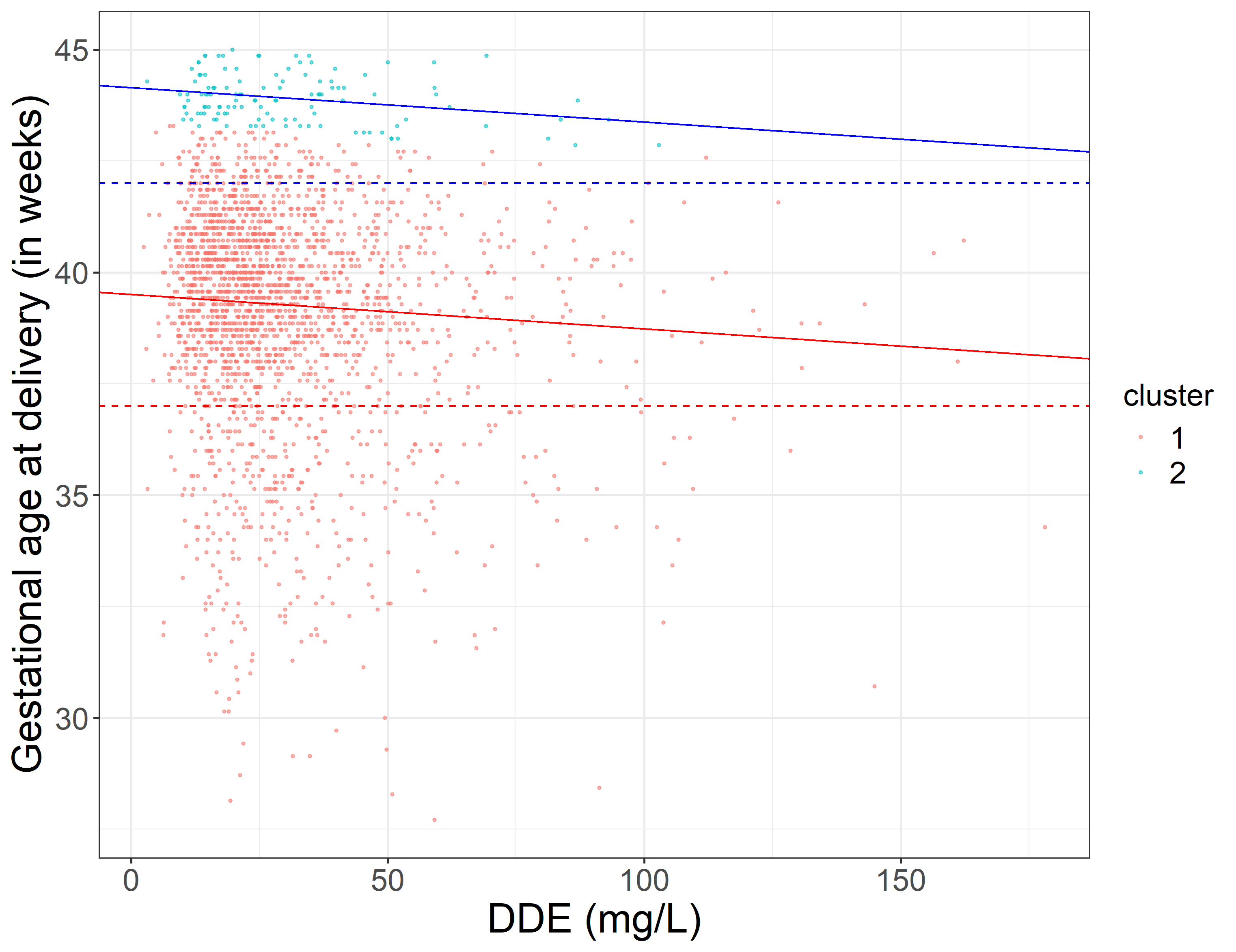}
}
\subfigure{
\centering
\includegraphics[width=0.46\textwidth]{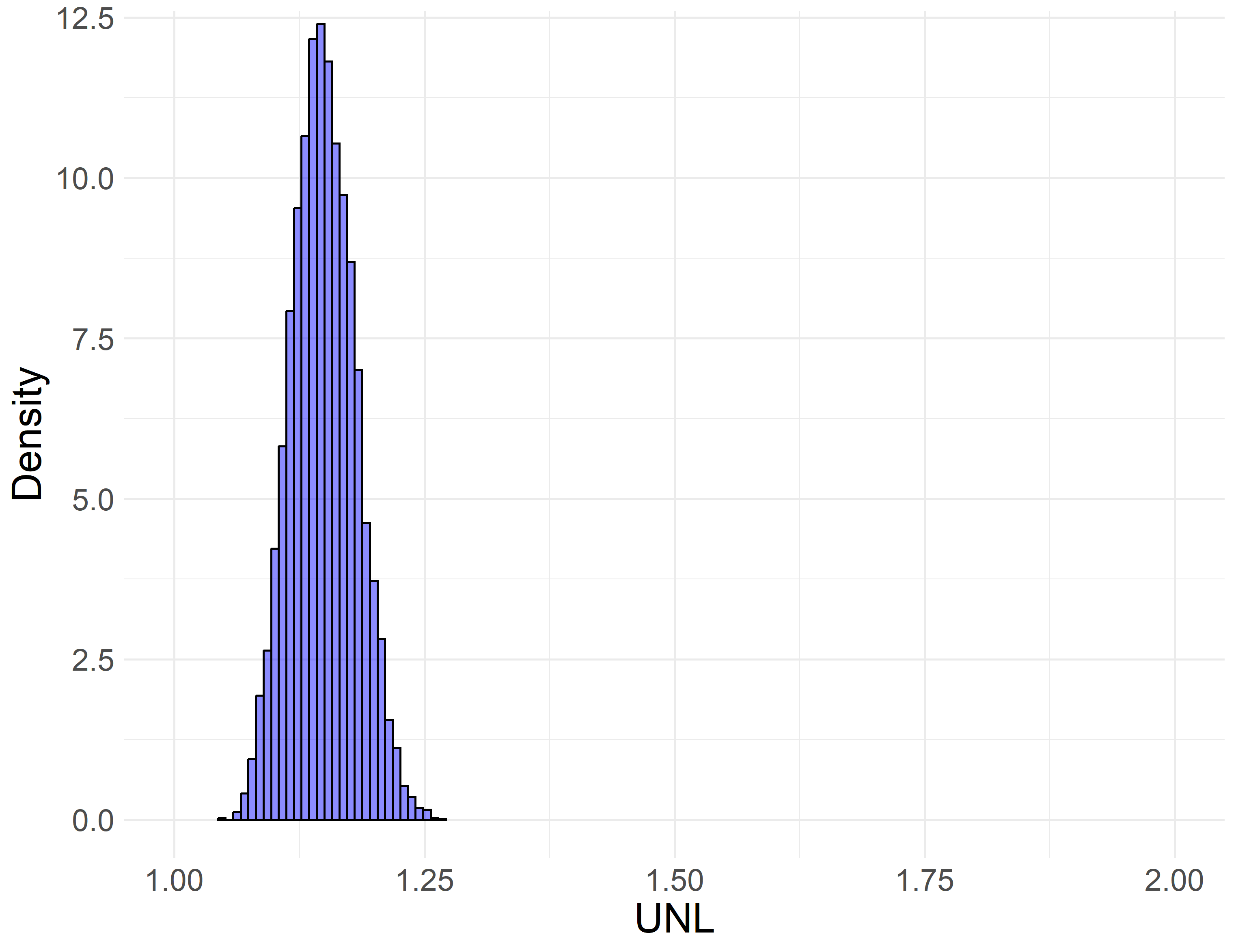}
}
\centering
\caption{Left panel: Partition structure of the first particle of the four-particle WASABI solution of the LDDP clustering of the pregnancy term toxicology analysis. The red line represents the 37-week preterm threshold used in clinical practice. The blue line represents the 42-week postterm threshold used in clinical practice. Right panel: The histogram of estimated underlap coefficient for DDE based on the partition structure of the first particle of the four-particle WASABI solution of LDDP.}
\label{dde_cluster_partition_dde_UNL_hist_elbowp1}
\end{figure}

\begin{figure}[!t]
\centering
\subfigure{
\centering
\includegraphics[width=0.46\textwidth]{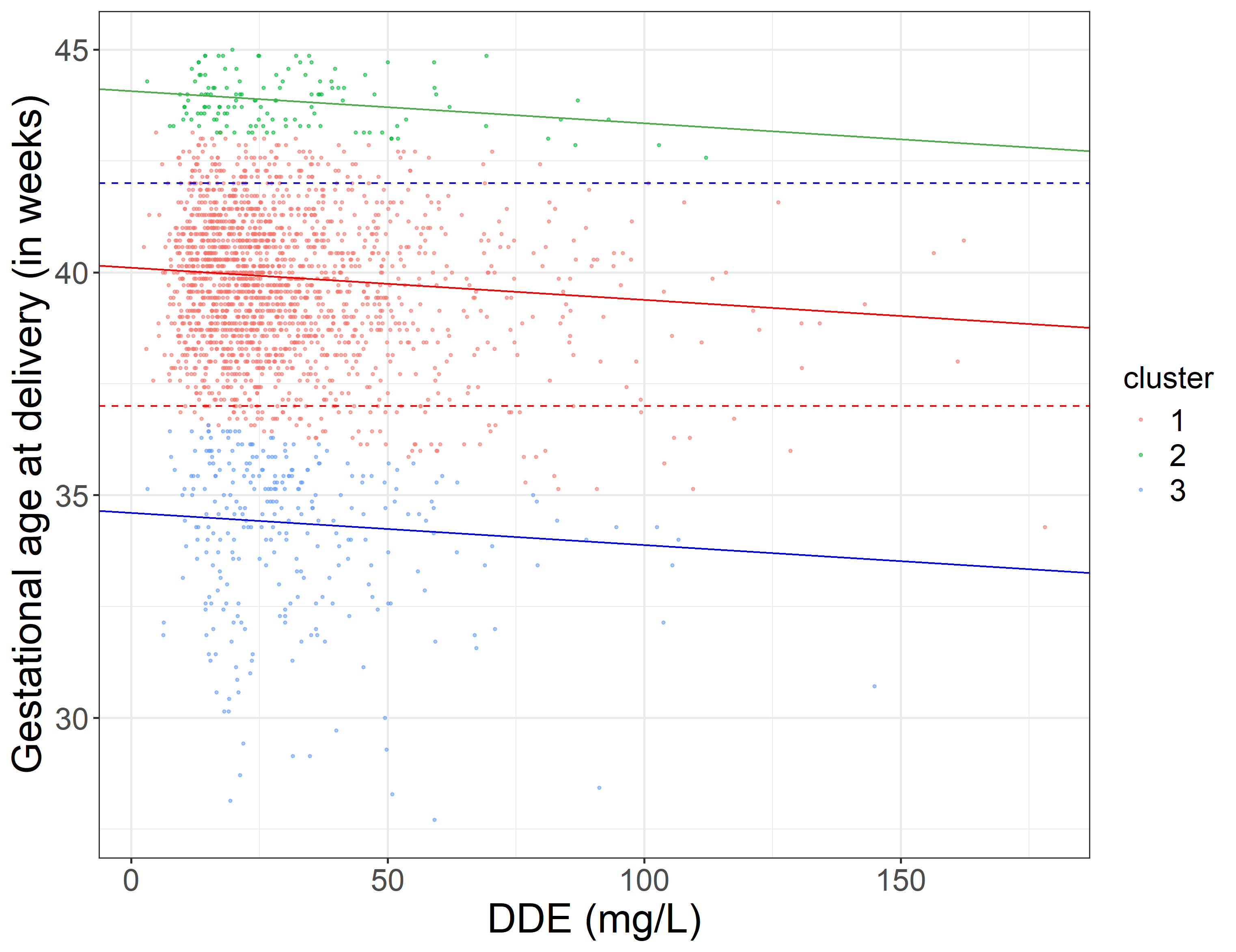}
}
\subfigure{
\centering
\includegraphics[width=0.46\textwidth]{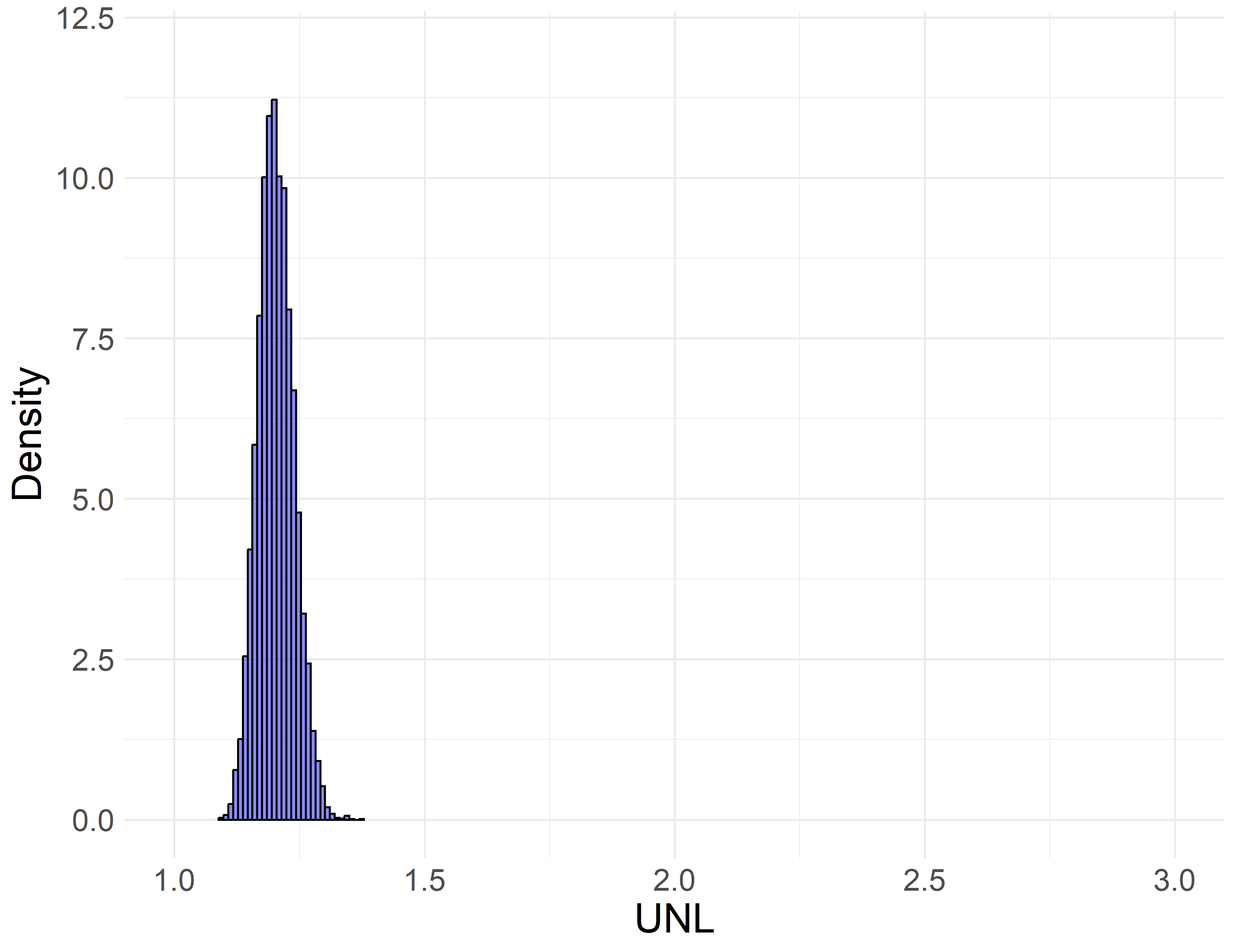}
}
\centering
\caption{Left panel: Partition structure of the second particle of the four-particle WASABI solution of the LDDP clustering of the pregnancy term toxicology analysis. The red line represents the 37-week preterm threshold used in clinical practice. The blue line represents the 42-week postterm threshold used in clinical practice. Right panel: The histogram of estimated underlap coefficient for DDE based on the partition structure of the second particle of the four-particle WASABI solution of LDDP.}
\label{dde_cluster_partition_dde_UNL_hist_elbowp2}
\end{figure}

\clearpage
\section{Some additional plots in Section 4}
\begin{figure}[htbp]
\centering
\subfigure{
\centering
\includegraphics[width=0.22\textwidth]{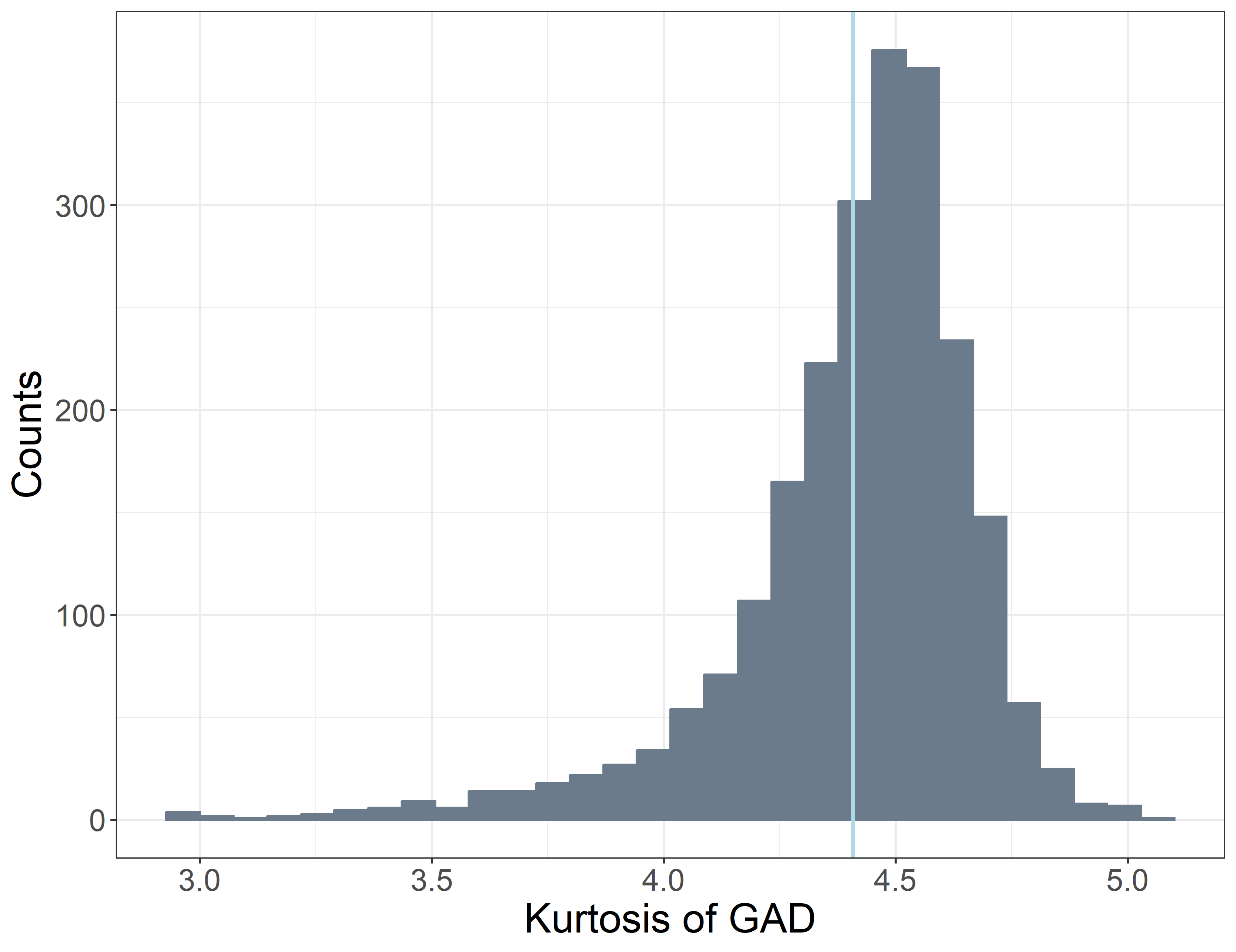}
}
\subfigure{
\centering
\includegraphics[width=0.22\textwidth]{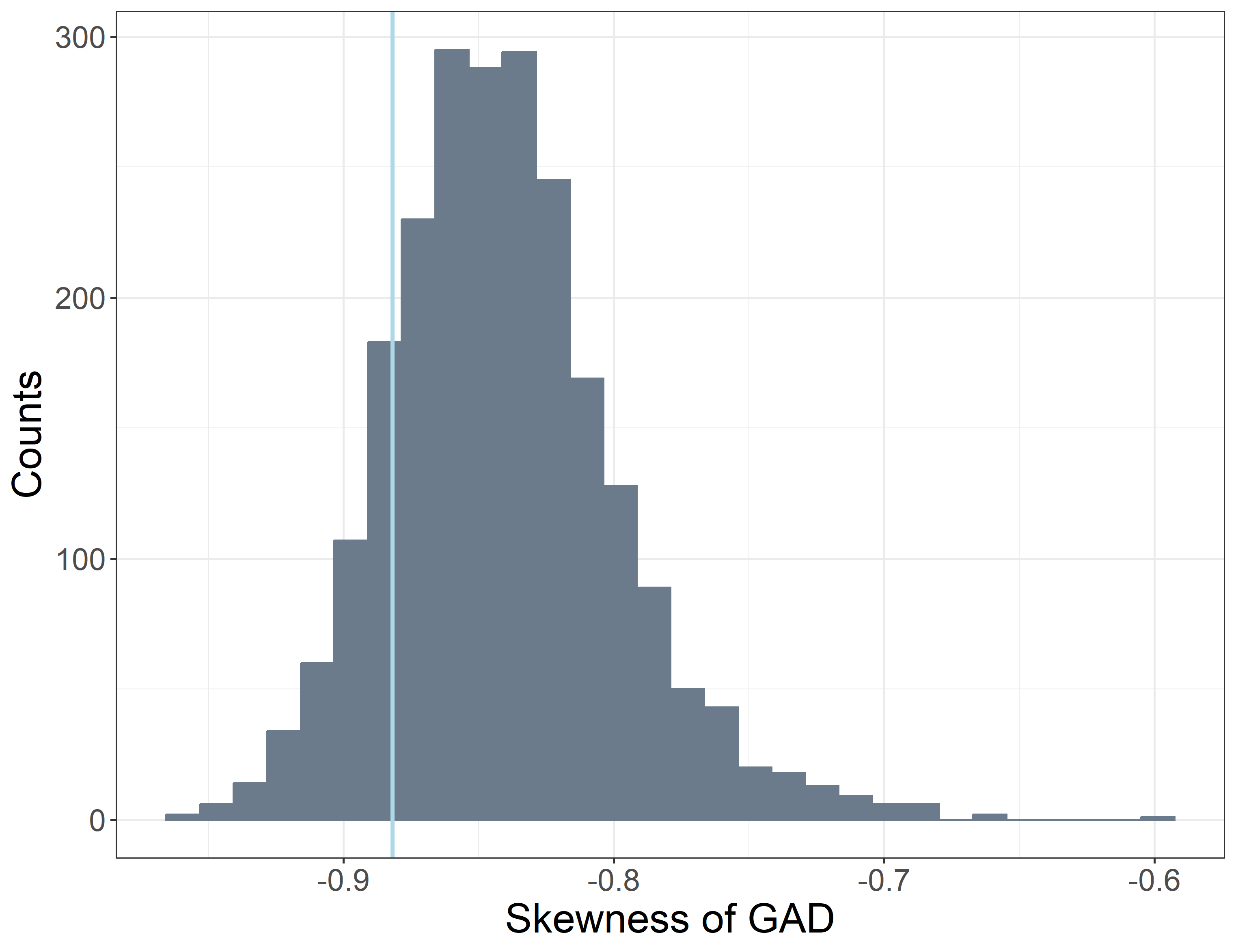}
}
\subfigure{
\centering
\includegraphics[width=0.22\textwidth]{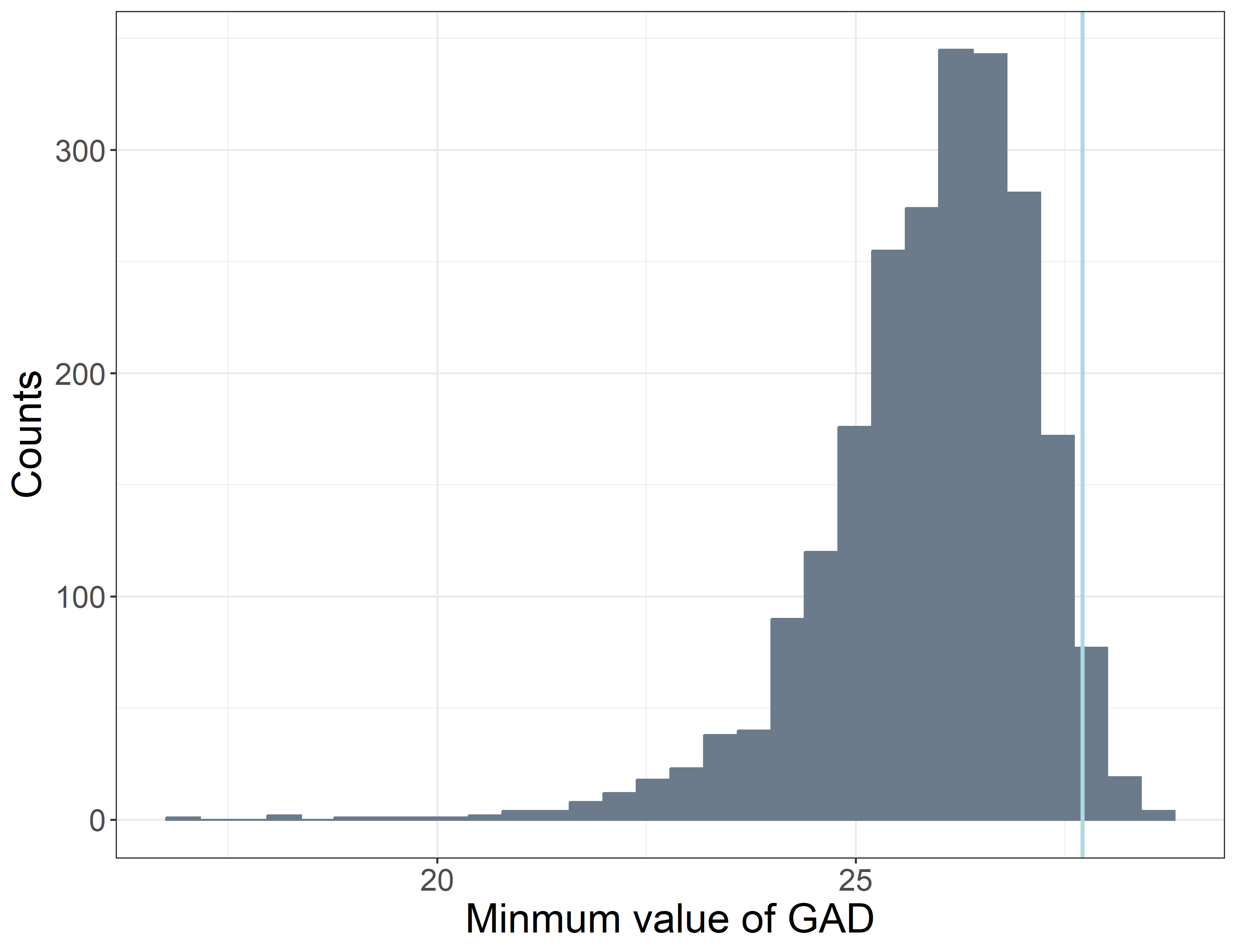}
}
\subfigure{
\centering
\includegraphics[width=0.22\textwidth]{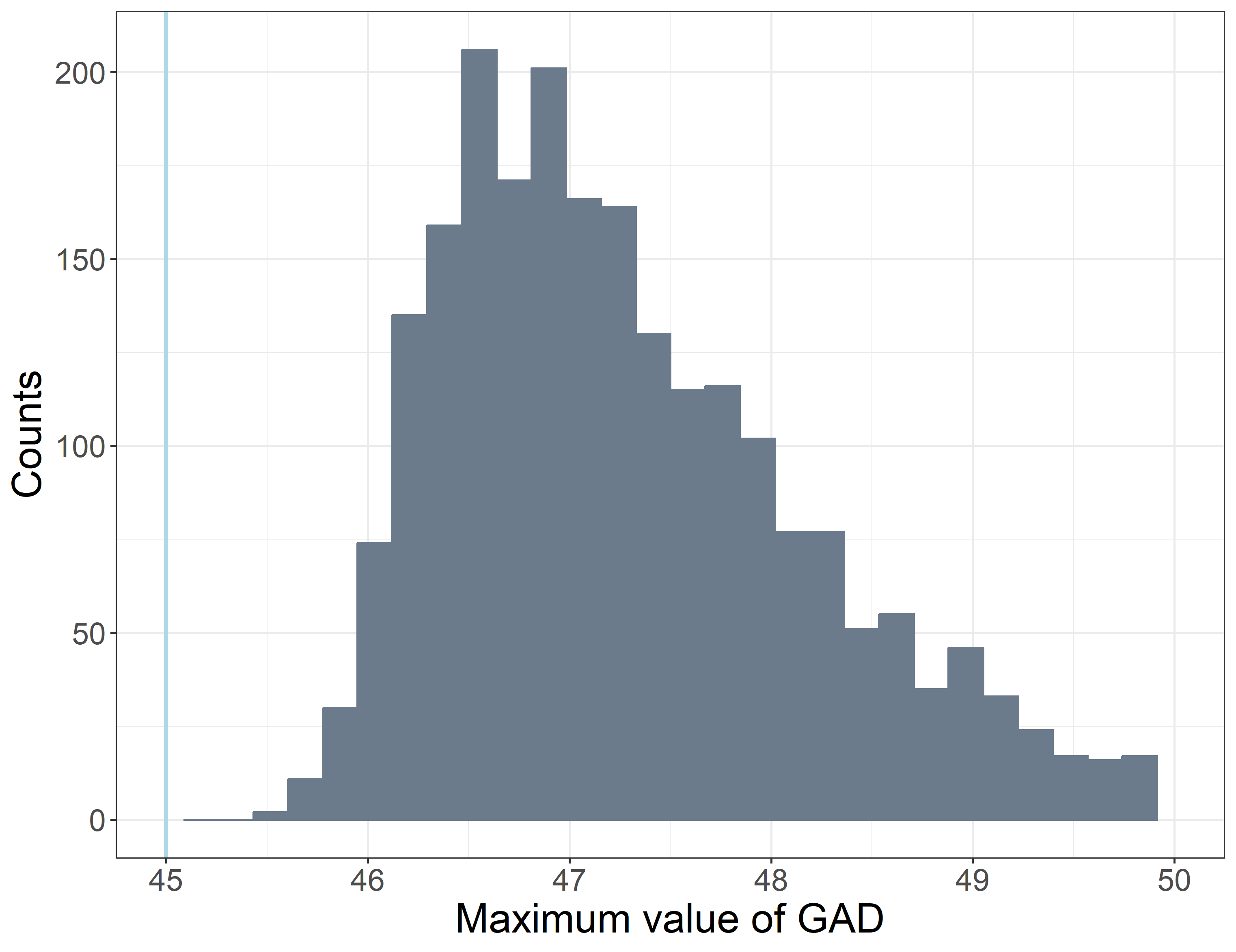}
}
\\
\subfigure{
\centering
\includegraphics[width=0.22\textwidth]{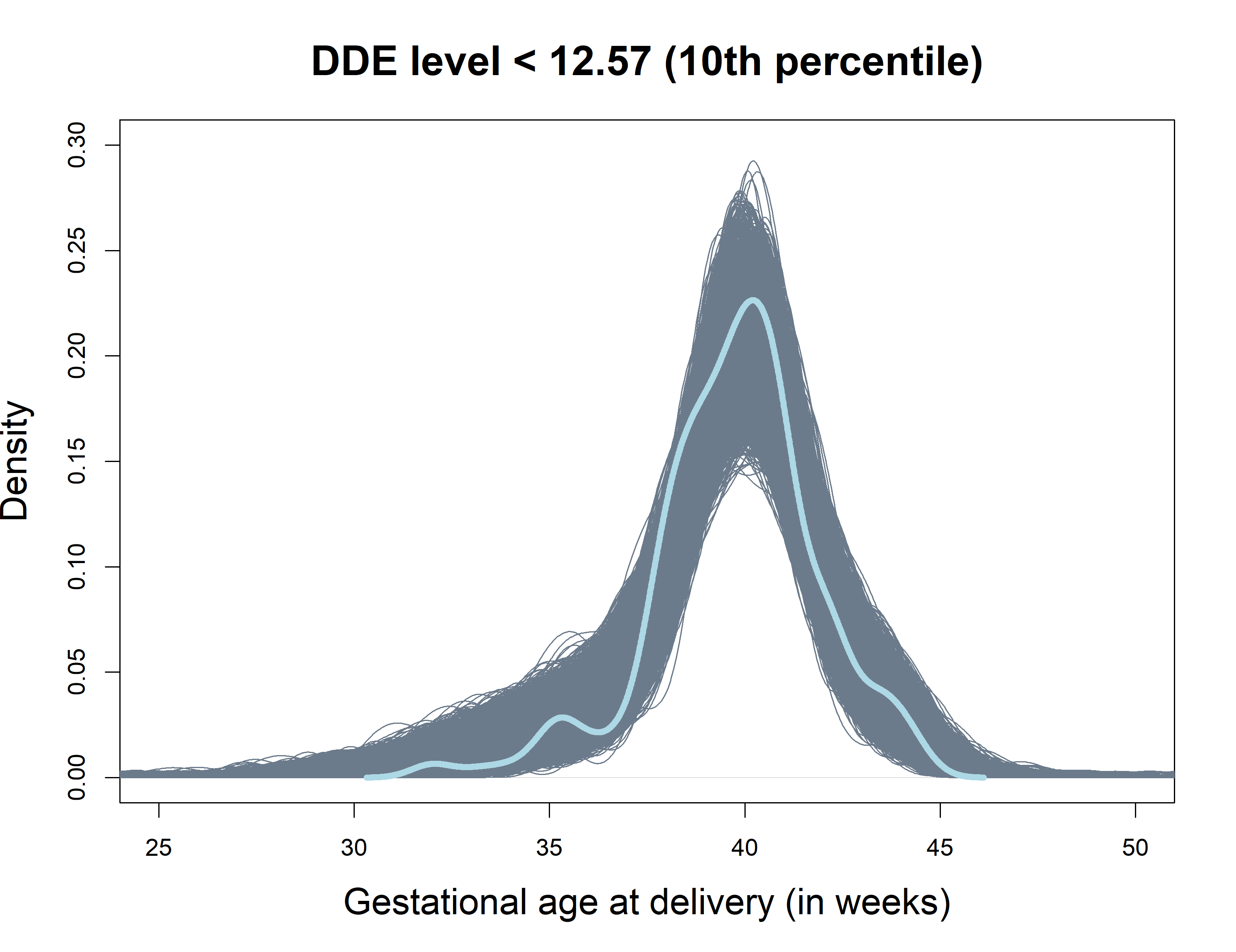}
}
\subfigure{
\centering
\includegraphics[width=0.22\textwidth]{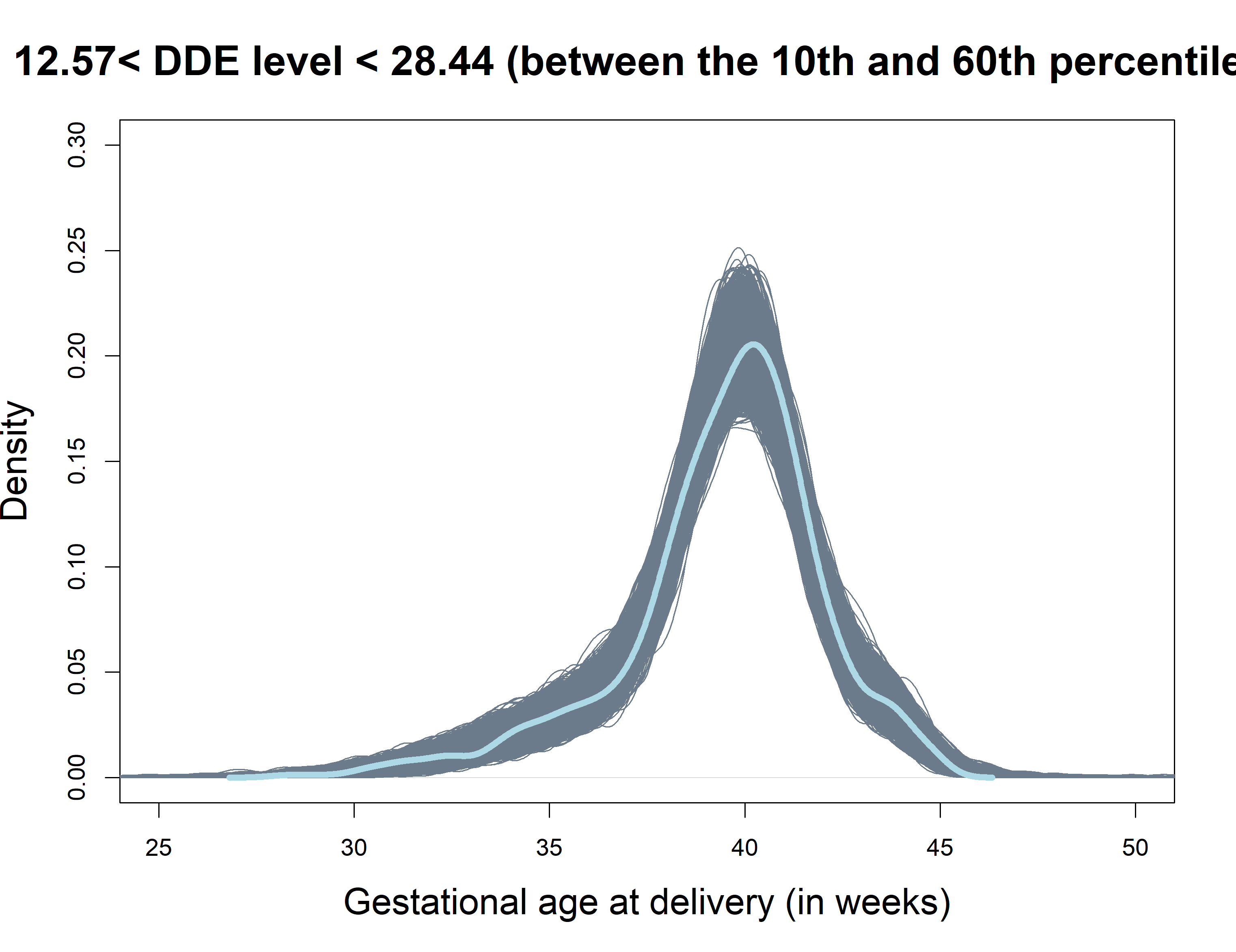}
}
\subfigure{
\centering
\includegraphics[width=0.22\textwidth]{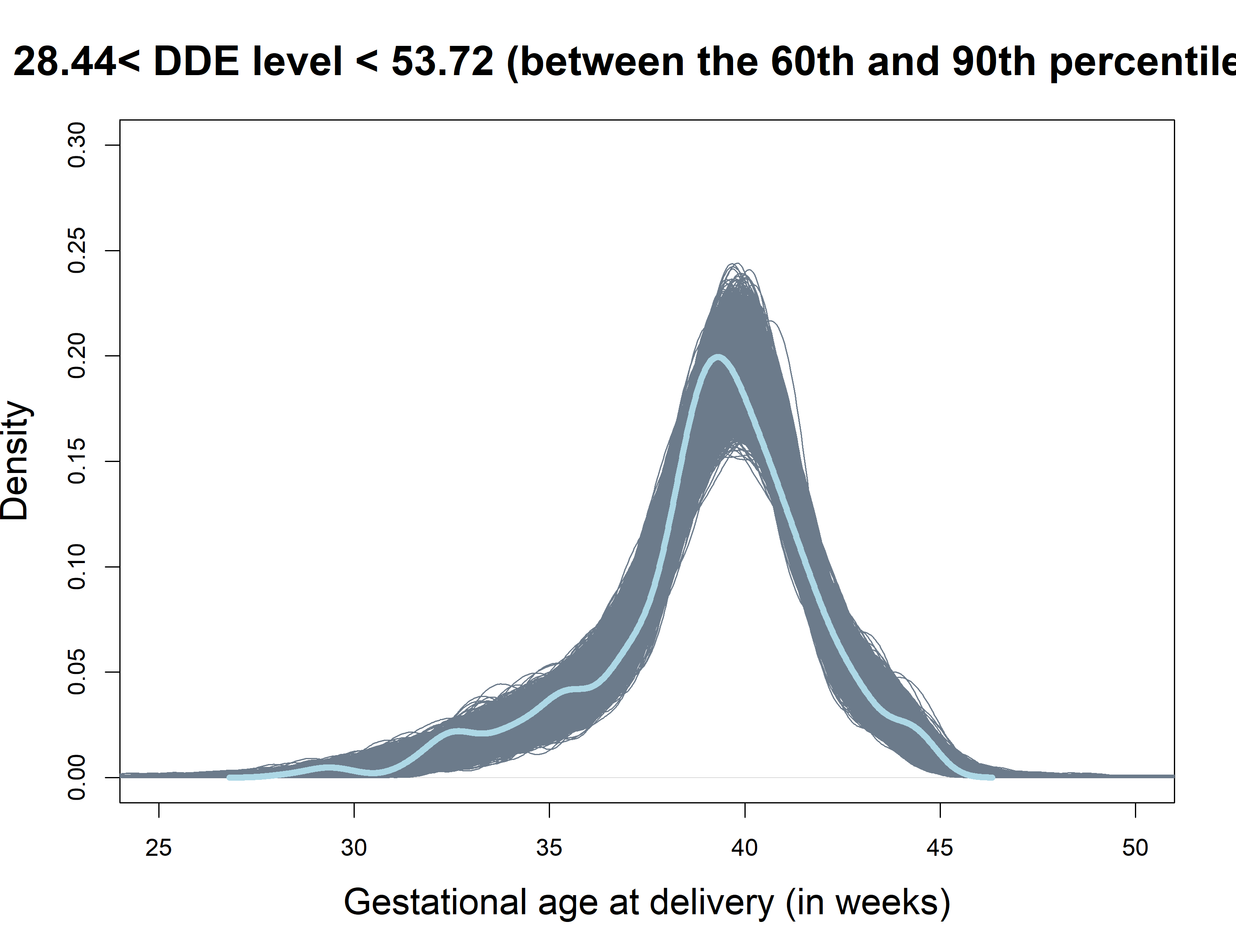}
}
\subfigure{
\centering
\includegraphics[width=0.22\textwidth]{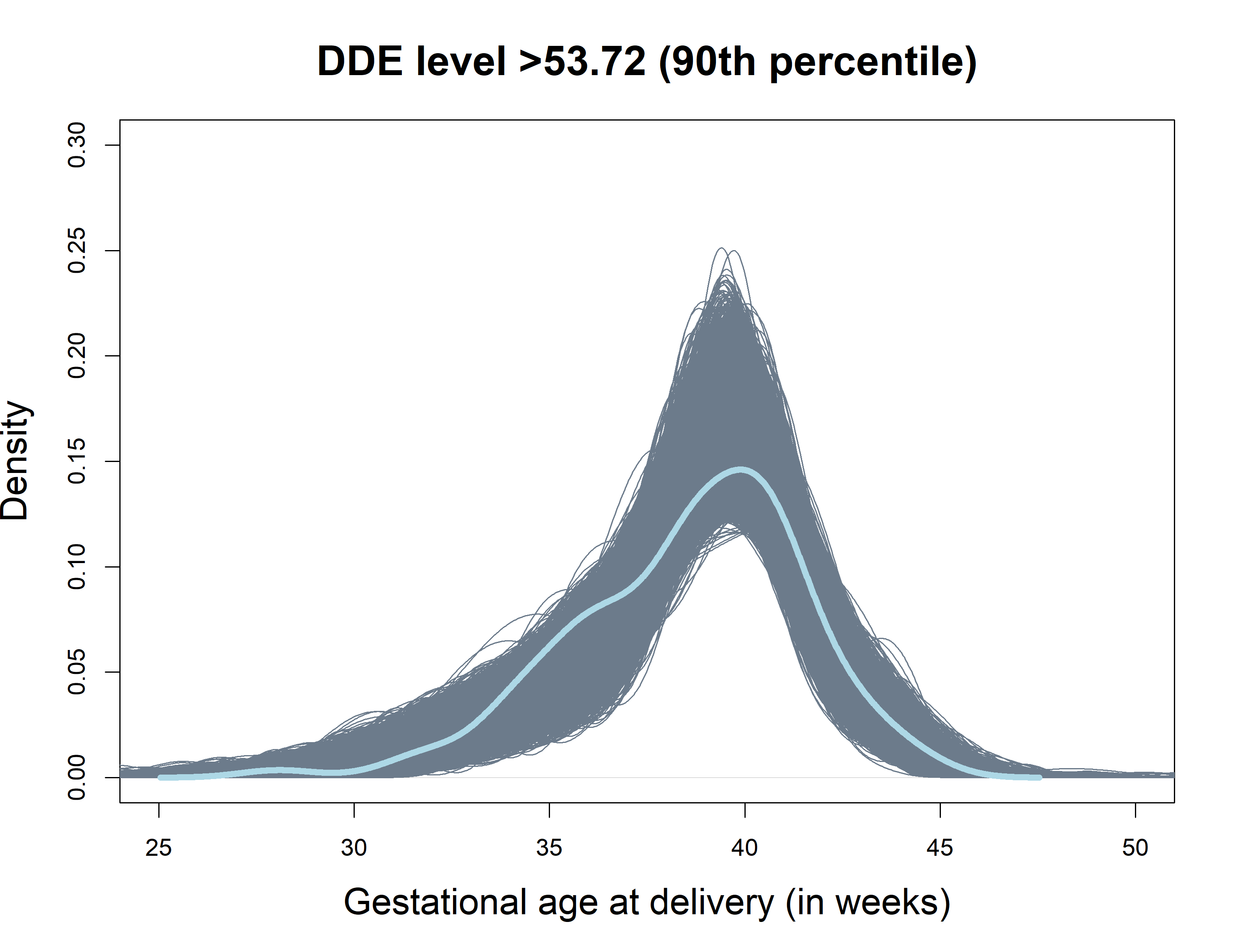}
}

\centering
\caption{Posterior predictive checks. Top row: Estimated kurtosis and skewness of the DDE with the observed minmum and maximum value (light blue), shown alongside estimates from the 5000 datasets drawn from the posterior predictive distribution (grey). Bottom row: Kernel density estimate of the observed DDE (light blue), shown alongside kernel density estimates from the 5000 datasets drawn from the posterior predictive distribution (grey).}
\label{dde_post_checks}
\end{figure}

\bibliography{reference}   
\bibliographystyle{chicago}      

\end{document}